\newif\ifsharedrootabove
\IfFileExists{./main.tex}
  {\sharedrootabovefalse}
  {\IfFileExists{../main.tex}{\sharedrootabovetrue}{\sharedrootabovefalse}}
\documentclass[11pt]{article}
\usepackage[margin=1in]{geometry}

\usepackage[utf8]{inputenc} % allow utf-8 input
\usepackage[T1]{fontenc}    % use 8-bit T1 fonts
\usepackage{xcolor}         % colors
\definecolor{arxivlinkblue}{rgb}{0,0,1}
\usepackage[colorlinks,
linkcolor=red,
anchorcolor=blue,
citecolor=blue
]{hyperref}
\makeatletter
\newcommand*{\rom}[1]{\expandafter\@slowromancap\romannumeral #1@}
\makeatother

\usepackage{makecell}
\usepackage{url}            % simple URL typesetting
\usepackage{booktabs}       % professional-quality tables
\usepackage{nicefrac}       % compact symbols for 1/2, etc.
\usepackage{microtype}      % microtypography
\usepackage{amsmath,amsfonts,amsthm,amssymb,bm,verbatim,dsfont,mathtools}
\usepackage{algorithmic}
\usepackage[ruled,vlined,linesnumbered]{algorithm2e}
\usepackage{color,graphicx,appendix}

\makeatletter
\ifsharedrootabove
  \def\input@path{{../}{./}}
\else
  \def\input@path{{./}{arxiv/}}
\fi
\makeatother
\ifsharedrootabove
  \graphicspath{{../}{./}}
\else
  \graphicspath{{./}{arxiv/}}
\fi
\usepackage{bbm}
\usepackage{subfig}
\usepackage{etoolbox}
\usepackage{tikz}
\usetikzlibrary{arrows.meta,shapes.geometric}
\usetikzlibrary{decorations.pathreplacing}
\usetikzlibrary{positioning, fit, backgrounds}
\tikzset{every picture/.append style={execute at begin picture={\colorlet{blue}{arxivlinkblue}}}}
\usepackage{xr,xspace}
\usepackage{todonotes}
\usepackage{paralist}
\usepackage{tabularx}
\usepackage[labelfont=bf,format=plain,justification=raggedright,singlelinecheck=false]{caption}
\usepackage{soul}
\usepackage{appendix}
\usepackage{style}
\hypersetup{anchorcolor=arxivlinkblue,citecolor=arxivlinkblue}
\usepackage{threeparttable}

\theoremstyle{plain}
\newtheorem{theorem}{Theorem}[section] 
\newtheorem{lemma}[theorem]{Lemma}

\newtheorem{proposition}[theorem]{Proposition}
\newtheorem{corollary}[theorem]{Corollary}
\newtheorem{claim}[theorem]{Claim}
\theoremstyle{definition}
\newtheorem{definition}[theorem]{Definition}

\newtheorem{remark}{Remark}
\newtheorem*{remark*}{Remark}

\usepackage{xspace,prettyref}

\newcommand{\naturals}{{\mathbb{N}}}

\newcommand{\supp}{{\rm supp}}

\newcommand{\red}{\color{red}}
\newcommand{\blue}{}
\colorlet{blue}{black}
\newcommand{\nb}[1]{{\sf\blue[#1]}}
\newcommand{\nbr}[1]{{\sf\red[#1]}}
\newcommand{\Expect}{\mathbb{E}}

\newcommand{\expect}[1]{\mathbb{E}\left[ #1 \right]}

\newcommand{\expects}[2]{\mathbb{E}_{#2}\left[ #1 \right]}

\newcommand{\prob}[1]{ \mathbb{P}\left\{ #1 \right\} }
\newcommand{\probs}[2]{ \mathbb{P}_{#2}\left\{ #1 \right\} }

\newcommand{\var}{\text{Var}}

\newcommand{\Bern}{{\rm Bern}}
\newcommand{\Binom}{{\rm Bin}}
\newcommand{\Pois}{{\rm Pois}}

\newcommand{\ie}{i.e.\xspace}

\newrefformat{eq}{(\ref{#1})}
\newrefformat{chap}{Chapter~\ref{#1}}
\newrefformat{sec}{Section~\ref{#1}}
\newrefformat{algo}{Algorithm~\ref{#1}}
\newrefformat{fig}{Fig.~\ref{#1}}
\newrefformat{tab}{Table~\ref{#1}}
\newrefformat{rmk}{Remark~\ref{#1}}
\newrefformat{clm}{Claim~\ref{#1}}
\newrefformat{def}{Definition~\ref{#1}}
\newrefformat{cor}{Corollary~\ref{#1}}
\newrefformat{lmm}{Lemma~\ref{#1}}
\newrefformat{prop}{Proposition~\ref{#1}}
\newrefformat{app}{Appendix~\ref{#1}}
\newrefformat{hyp}{Hypothesis~\ref{#1}}
\newrefformat{thm}{Theorem~\ref{#1}}

\newcommand{\indc}[1]{{\mathbf{1}_{\left\{{#1}\right\}}}}

\newcommand{\calA}{{\mathcal{A}}}
\newcommand{\calB}{{\mathcal{B}}}

\newcommand{\calD}{{\mathcal{D}}}
\newcommand{\calE}{{\mathcal{E}}}

\newcommand{\calG}{{\mathcal{G}}}

\newcommand{\calT}{{\mathcal{T}}}

\newcommand{\odd}{\mathrm{odd}}

\title{Optimality of Random Regular Graphs in Sparse Network Designs}
\author{Weijia Li \and Xiaochun (Nora) Niu \and Yehua Wei \and Jiaming Xu\thanks{W. Li is with the Department of Statistics and Data Science, Tsinghua University, \texttt{liwj25@mails.tsinghua.edu.cn}. X. Niu, Y. Wei, and J. Xu are with
 the Fuqua School of Business, Duke University, \texttt{\{xiaochun.niu,yehua.wei, jx77\}@duke.edu}. }}
\date{}

\begin{document}

\maketitle
% \nb{Shall we discuss the numerical experiments???}
% \nbr{comments from Itai: How to use random graph design in practice? Shall we state the results in terms of high probability? Shall we also comment on the derandomization??? }
\begin{abstract}
% Enter your abstract
The problems of designing sparse networks arise frequently in resource allocation and operations research. In production systems, for example, sparse process flexibility designs are used to handle uncertain demand effectively: the goal is to construct the sparsest bipartite graph between supply and demand that still achieves an expected fulfilled demand comparable to that of a fully flexible system. In middle-mile transportation, sparse delivery-route subgraphs that sustain large matchings after random node deletions help reduce delivery costs; here, the goal is to design the sparsest graph whose maximum matching size remains comparable to that of the fully connected graph under node deletions.

The design of sparse networks has been studied extensively, with state-of-the-art results providing order-wise optimal designs for both bipartite and unipartite networks \citep{chen2015optimal, feng2024designing}. 
However, identifying designs that achieve the sharp theoretical limit, where the average degree $d$ asymptotically matches the lower bound of any graph to achieve a given loss level, has remained open. In this paper, we prove that the random $d$-regular graph achieves this sharp optimal condition in both bipartite and unipartite settings. Numerical experiments further validate this optimality. Our results highlight a practical guideline for sparse flexibility networks: designs that combine degree regularity with dispersed edge placement can achieve optimal performance under uncertainty. 
\end{abstract}

% \nbr{Summary of main results:
% \begin{itemize}
% \item Flexibility Design
% \begin{itemize}
% \item Random $d$-regular bipartite graph: $d\ge (1+\delta)\log (1/\epsilon) / [\log (1/1-1/q)]$ for both a small constant $\epsilon>0$ and $\omega(1/(n\log n)) \le \epsilon \le o(1/n)$.
% \end{itemize}
% \item Maximum Matching
% \begin{itemize}
% \item Random $d$-regular graph: 
% $\expect{L} \le C np q^d$ 
% and $\expect{L}=o(1)$ if $d \ge [(1+\epsilon) /\log \frac{1}{q} ]\log n$;
% \item $d$-circular graph:
% $\expect{L} =\Theta(np q^{d/2})$ and $\expect{L}=O(1)$ if $d \ge[ 2/\log \frac{1}{q} ]\log n$;
% \item random $d$-geometric graph (one-dimensional): 
% $\expect{L} =\Theta(np\exp(-dp/2))$ and $\expect{L}=O(1)$ if $d \ge (2/p) \log n$
% \end{itemize}
% \end{itemize}

% Note that $\log(1/q)=-\log(1-p) \ge p$, where the equality is achieved asymptotically when $p \to 0.$}

% Why the second problem is challenging, and it's not a natural extension from the first one?

\section{Introduction}
Flexibility network design under uncertainty is an important problem in operations research. Such problems arise in %a wide range of applications, including 
manufacturing systems \citep{jordan1995principles}, service operations \citep{wallace2005staffing}, e-commerce fulfillment networks \citep{devalve2023understanding,feng2024designing}, kidney exchange \citep{blum2020ignorance}, and modern platforms such as ride-hailing and online labor markets \citep{freund2024two}. In these settings, a planner must commit ex ante to a limited set of feasible supply-demand connections that constrain operational decisions after uncertainty is realized. A key insight from this literature is that a carefully chosen set of such ``flexibility edges'' can achieve performance close to that of a fully flexible (complete) network while using only a small fraction of all possible connections \citep{jordan1995principles}. This insight has made sparse network design a cornerstone of modern service and fulfillment system engineering.

%Flexibility network design under uncertainty is an important problem in operations research, providing the structural foundation for resource pooling in various manufacturing, e-commerce and service domains \citep[see, e.g.,][]{jordan1995principles, wallace2005staffing, tsitsiklis2017flexible, devalve2023understanding, feng2024designing}. By adding ``flexibility edges'' between demand and supply nodes, firms can significantly improve their operational performance in the face of uncertainty, often capturing nearly all the benefits of a fully connected network with only a fraction of the edges \citep{jordan1995principles}. This remarkable efficacy of limited flexibility has made sparse network design a cornerstone of modern supply chain and service system engineering.

To date, research in this area has primarily focused on a \emph{sufficient design} perspective. These studies have been transformative, progressing from early structural designs to the establishment of rigorous \emph{order-wise optimality} results \citep[e.g.,][]{chen2015optimal, feng2024designing}. However, while these scaling laws identify the correct functional relationship between sparsity and performance, they do not characterize the sharp theoretical limit, i.e., the precise minimum average degree required to reach the efficiency frontier.

This paper shifts the focus toward the \emph{optimal design} perspective by investigating the following question: given a performance target $\epsilon$ (representing the fractional loss relative to full flexibility), how should a network be designed to meet this target with the minimum number of edges? We address this question in two network design settings, process flexibility and middle-mile transportation, corresponding to the fundamental cases of bipartite and unipartite networks, respectively. 
{\color{blue} As we shall see, random regular networks combine two complementary design principles, degree regularity and dispersed connections, and achieve optimal sparsity in both settings.}

%As we shall see, our study reveals that two simple design principles, degree regularity and randomization, 
%{\color{red} YW: I want to flag here because maybe we shouldn't say randomization is a principle; it's really about using randomization to achieve edge dispersion.}are the essential and \emph{sufficient} ingredients for achieving theoretically optimal sparsity. 

\subsection{Related Work}
The study of sparse flexibility network design began with the seminal work of \citet{jordan1995principles} in the context of manufacturing. More recently, flexibility design has attracted renewed interest in various contexts arising in online platforms and marketplaces. Within the vast flexibility design literature, our work distinguishes itself by shifting the focus from sufficiency (identifying sparse designs with at most $\epsilon$-fractional loss relative to full flexibility) to optimality (identifying the sparsest possible designs that satisfy a particular guarantee). Below, we focus on the most relevant papers to illustrate this distinction and refer readers to \citet{wang2021review} for a more comprehensive overview of the literature on flexibility network design.

\paragraph{Sufficient Sparse Designs.} The majority of the flexibility literature focuses on identifying sparse structures that are sufficient to approximate the performance of full flexibility. In the context of two-stage flexibility design, the foundational work is \citet{jordan1995principles}. They introduced the ``long chain'' flexibility design and demonstrated that it is often sufficient to capture most of the benefits of total flexibility. Subsequent work has focused on designing graph structures that achieve only an $\epsilon$-fractional loss relative to the fully flexible system for arbitrarily small $\epsilon$. These include $K$-chains, probabilistic expanders, and \ER random graphs \citep{chou2011process, wang2015process, chen2015optimal, chen2019optimal, shen2019reliable, feng2024designing}. Notably, \citet{chen2015optimal, chen2019optimal, feng2024designing} prove that their designs achieve order-wise optimality in various settings, as they achieve the best possible scaling of degree with respect to system size or loss. %More recently, \citep{freund2024two} study a two-sided platform setting, focusing on which nodes should be made flexible to maximize the expected maximum matching size under a randomized network design. Our results also relate to \cite{ameen2026uniformity}, who identify a uniformity principle in spatial matching markets: when allocating a fixed budget of service range (flexibility), distributing this capacity uniformly across supply nodes yields a larger expected matching size.
More recently, researchers have started to study alternative node-level flexibility decisions to maximize the expected maximum matching size, including which nodes should be made flexible under a randomized network design \citep{freund2024two} and how to allocate a fixed service-range budget across locations in spatial matching markets \citep{ameen2026uniformity}.

In flexibility design for dynamic systems, most of the literature similarly adopts a sufficiency perspective, showing that limited flexibility can guarantee strong performance. For instance, \cite{tsitsiklis2017flexible} demonstrate that random sparse architectures achieve near-optimal delay and throughput in flexible queueing systems, while \cite{asadpour2020online, xu2020online} show that sparse networks under appropriate dynamic policies %are sufficient to 
achieve strong guarantees under mild conditions. Recent research in online marketplaces has extended this notion of sufficiency to include non-structural dimensions and more complex resource dynamics. In the context of reusable resources, \citet{dong2024value} show that sparse, chain-like structures (modeling service upgrades) with an average degree less than three achieve the same heavy-traffic performance scaling as fully flexible systems. Related studies \citep{elmachtoub2015retailing, elmachtoub2019value, freund2025power} investigate the amount of opaque product, a form of demand flexibility, needed to achieve inventory balancing for e-retailers selling multiple horizontally differentiated products. Moreover, moving beyond the optimization of a single performance metric under limited flexibility, \citet{afeche2022optimal} study the design of bipartite queueing networks that explicitly balance the trade-off between two objectives: minimizing customer waiting times and maximizing matching rewards.

\paragraph{Optimally Sparse Designs.}
While some of the aforementioned works were successful in designing systems achieving order-wise optimality, the precise minimum average degree required to meet a specific target is often unknown for large systems. In fact, for two-stage flexibility designs, even when we restrict attention to networks with an average degree of two, the classical long-chain can be suboptimal \citep{desir2016sparse}. As a result, most of the literature here focuses on numerical methods for finding optimal or near-optimal networks \citep[e.g.,][]{santoso2005stochastic, feng2017process, yan2018design}. However, such methods are computationally intensive and, more importantly, yield instance-specific solutions rather than general structural design principles. 

%[Mark here. Read these.]

%\subsection{Our Contribution}
\subsection{Achieving Optimal Sparsity}
%We address this gap \nbr{It is a bit unclear which gap we refer to} in the context of two-stage stochastic systems, for both bipartite and unipartite networks, corresponding to process flexibility and middle-mile transportation network design, under the canonical i.i.d.\ scaled Bernoulli demand setting with some parameter $q$. Rather than proposing a heuristic and testing its sufficiency, we derive a lower bound on the number of edges required to achieve a target loss in large systems, then show that random regular networks are not merely sufficient but, in fact, match this theoretical lower bound and hence achieving optimal sparsity, as illustrated in Figure~\ref{fig:two_simple}. Specifically, we prove that the random regular networks (orange stars) reach this theoretical limit (within an arbitrarily small constant) in the high-performance regime where $\epsilon \to 0$, whereas existing constructions (blue dots) remain strictly to the right of this boundary.

%\nbr{Are we going to keep or remove the following paragraph?}\nb{fixed, I kept the paragraph below}
%\nbr{It is a bit unclear which gap we refer to}

\blue{We study optimally sparse designs for two representative two-stage flexibility problems: process flexibility on bipartite networks and middle-mile transportation on unipartite networks. Under a canonical i.i.d.\ scaled Bernoulli demand model, we first derive a fundamental lower bound on the average degree required to achieve a target fractional loss $\epsilon$. 
We then show that random regular networks asymptotically attain this bound in both problems in the high-performance regime as $\epsilon \to 0$.
For process flexibility, this result extends far beyond the Bernoulli model: the same random regular designs guarantee the desired performance uniformly over a broad class of bounded demand distributions, even when demands are heterogeneous and correlated. We also consider the stronger constant-loss criterion, under which the total loss remains bounded as the system grows, and show that random regular networks remain optimally sparse in this regime.} %The models and precise results are presented in Section~\ref{sect:overview}.}

Figure~\ref{fig:two_simple} illustrates how our results improve upon existing designs. Specifically, while existing designs (blue dots) remain strictly to the right of this boundary, we prove that random regular networks (orange stars) reach this frontier asymptotically.

\begin{figure}[htbp]
    \centering
\resizebox{\textwidth}{!}{\begin{tikzpicture}[>=Stealth, every node/.style={font=\small}]

% ================= LEFT: Process flexibility =================
\begin{scope}
  % title
  % axes
  \draw[->] (0,0) -- (6.4,0) node[below] at (6,-0.1) {Average Degree};
  \draw[->] (0,0) -- (0,3.6) node[above] {Fractional Loss};

  % loss curve (schematic)
  \draw[thick,smooth,domain=0.05:6.2] plot (\x,{3.9*exp(-3*\x)+0.05});

  % epsilon n line
  \draw[dashed] (0, 0.6) -- (6.2,0.6);
  \node[left] at (0,0.6) {$\epsilon$};

  % dedicated & long chain
  \fill[blue] (0.15,3.3) circle (3pt);
  \node[inner sep=1pt,align=center,font=\scriptsize] at (1,3.2) {Dedicated\\Design};

  % random regular (star)
  \node[
    star,star points=5,star point ratio=2.5,
    fill=orange!80,draw=orange!80,
    minimum size=9pt,inner sep=1pt
  ] at (0.66,0.6) {};
  \node[fill=yellow!60,inner sep=1pt,align=center,font=\scriptsize] at (0.66,1.4) {Random\\Regular};

  % other designs on curve
  \fill[blue] (1.8,0.6) circle (3pt);
  \node[inner sep=1pt,align=center,font=\scriptsize] at (2.1,1.4) {Probabilistic\\Expander};

  \fill[blue] (4.2,0.6) circle (3pt);
  \node[above] at (4.2,1.2) {\scriptsize $K$-Chain};

  %\fill[blue] (6.4,0.1) circle (3pt);
  %\node[above] at (5.5,0.8) {Fully flexible};

  % x-axis ticks and labels
  \node[below] at (-0.1,-0.1) {$0$};

  \draw (0.66,0) -- (0.66,-0.1);
  \node[fill=yellow!60,inner sep=1pt,align=center,below=4pt]
        at (0.6,-0.15)
        {$\frac{\log(1/\epsilon)}{\log(1/q)}$};

  \draw (1.8,0) -- (1.8,-0.1);
  \node[inner sep=1pt,below=4pt] at (1.9,-0.15)
        {$\frac{20\log(1/\epsilon)}{1-q}$};

  \draw (4.2,0) -- (4.2,-0.1);
  \node[inner sep=1pt,below=4pt] at (4.2,-0.15)
        {$\frac{c(q)}{\epsilon}$};

        \node[font=\small] at (3.2,-1.5) {(a) Process flexibility};
\end{scope}

% ================= RIGHT: Middle-mile transportation ==========
\begin{scope}[xshift=8cm]
  % axes
  \draw[->] (0,0) -- (6.4,0) node[below] at (6,-0.1) {Average Degree};
  \draw[->] (0,0) -- (0,3.6) node[above] {Fractional Loss};

  % loss curve (schematic)
  \draw[thick,smooth,domain=0.05:6.2] plot (\x,{3.9*exp(-3*\x)+0.05});

  % epsilon n line
  \draw[dashed] (0, 0.6) -- (6.2,0.6);
  \node[left] at (0,0.6) {$\epsilon$};

  % dedicated & long chain
  \fill[blue] (0.15,3.3) circle (3pt);
  \node[inner sep=1pt,align=center,font=\scriptsize] at (1,3.2) {Dedicated\\Design};

  % random regular (star)
  \node[
    star,star points=5,star point ratio=2.5,
    fill=orange!80,draw=orange!80,
    minimum size=9pt,inner sep=0pt
  ] at (0.66,0.6) {};
  \node[fill=yellow!60,inner sep=1pt,align=center,font=\scriptsize] at (0.66,1.4) {Random\\Regular};

  % other designs on curve
  \fill[blue] (1.71,0.6) circle (3pt);
  \node[inner sep=1pt,align=center,font=\scriptsize] at (1.74,1.4) {Erd\H{o}s\\--R\'enyi};

  \fill[blue] (2.81,0.6) circle (3pt);
  \node[above] at (2.85,1.2) {\scriptsize $K$-Chain};

  %\fill[blue] (6.4,0.1) circle (3pt);
  %\node[above] at (5.5,0.8) {Fully flexible};

  % x-axis ticks and labels
  \node[below] at (-0.1,-0.1) {$0$};

  \draw (0.66,0) -- (0.66,-0.1);
  \node[fill=yellow!60,inner sep=1pt,align=center,below=4pt]
        at (0.6,-0.15)
        {$\frac{\log(1/\epsilon)}{\log(1/q)}$};

  \draw (1.71,0) -- (1.71,-0.1);
  \node[inner sep=1pt,below=4pt] at (1.71,-0.15)
        {$\frac{\log(1/\epsilon)}{1-q}$};

  \draw (2.81,0) -- (2.81,-0.1);
  \node[inner sep=1pt,below=4pt] at (2.9,-0.15)
{$\frac{2\log(1/\epsilon)}{\log(1/q)}$};

  % title
\node[font=\small] at (3.2,-1.5) {(b) Middle-mile transportation};
\end{scope}
\end{tikzpicture}}
\caption{Degree requirements for $\epsilon$-fractional loss. ``Dedicated'' denotes the degree-$1$ graph in process flexibility and the empty graph in transportation. In process flexibility, the $K$-chain is deterministic and regular, whereas the probabilistic expander introduced by \citet{chen2015optimal} is random and nonregular. In transportation, \citet{feng2024designing} derive degree requirements for the deterministic, regular $K$-chain and the random, nonregular Erdős--Rényi design (for constant loss); 
%\nbr{I remember they do not have a rigorous result for \ER, only a conjecture?}
we sharpen these requirements, as shown in the figure.}
% \caption{Tradeoff between average degree requirement ($x$-axis) and fractional demand loss ($y$-axis). ``Dedicated'' denotes the degree-$1$ graph in process flexibility, and empty graph in transportation. 
% In the process flexibility setting, $K$-chain is a deterministic and regular design, and probabilistic expander is a random and non-regular design from \cite{chen2015optimal}. 
% In the transportation setting,
% $K$-chain is a deterministic and regular design, and Erdős-Rényi is a classical random and non-regular design studied by \citet{feng2024designing}.}
%The corresponding results for the transportation setting are obtained in this work. , with results from
%Notably, the performance of $K$-chains in the two settings differs significantly, highlighting a fundamental difference between the problems.
\label{fig:two_simple}
\end{figure}

\subsection{Organization of the Paper}
The rest of the paper is organized as follows.
Section~\ref{sect:overview} provides an overview of the process flexibility and middle-mile transportation models, along with our main results and key insights. Section~\ref{sec:proc-flex-main} analyzes the process flexibility problem, and Section~\ref{sect:trans} analyzes the middle-mile transportation problem. Section~\ref{sect:exp} presents our numerical experiments, and Section~\ref{sect:conclude} concludes the paper. We defer some proof details to the supplementary file. Specifically, Sections~\ref{sect:app-process} and \ref{sect:app-transportation} complete the proofs of the main results in Sections~\ref{sec:proc-flex-main} and \ref{sect:trans}, respectively. Section \ref{sect:app-chain} proves the tight bounds on the matching loss for $K$-chains in the transportation setting.
%\yw{YW: Not sure where would be an appropriate place. But we should state something like: Throughout, we suppress floor and ceiling notation: whenever a quantity represents a cardinality, degree, or number of trials, it is understood to be rounded to the appropriate integer. Such rounding affects our bounds only by lower-order terms and does not alter any asymptotic conclusions.}

\section{Models and Overview of Main Results}\label{sect:overview}
This section introduces the two models and their residual-graph representations under Bernoulli demand. We then present common degree lower bounds across both settings, summarize our main results and design implications, and outline the key proof ideas.

%\subsection{Two Network Design Problems}
\paragraph{Process Flexibility.}
We begin with the classic process flexibility design problem.
%, first introduced in \cite{jordan1995principles}. 
In this setting, a firm manages a set of supply nodes, each with fixed capacity, and a set of demand nodes, each with random demand. 
After the demand is realized, the firm seeks to maximize total fulfilled demand by matching supply nodes to demand nodes. However, the firm must commit ex ante to a process flexibility network $G$, which specifies the feasible supply–demand connections.

We focus on the balanced system commonly studied in the literature, with $n$ supply nodes and $n$ demand nodes.
Each supply node has equal and deterministic capacity normalized to one. Demand is represented by a nonnegative random vector $D=(D_1,\ldots,D_n)$ with distribution $\mathcal D$. 
%Unless otherwise specified, the demand coordinates need not be identically distributed or independent. 
A process flexibility network is modeled as a bipartite graph $G = ([n], [n], E)$, where one partition consists of the $n$ supply nodes and the other of the $n$ demand nodes. An edge $(i, j) \in E$ indicates that supply node $i \in [n]$ can serve demand node $j \in [n]$. The maximum total fulfilled demand under network $G$ and demand realization $D$ %is denoted as $\E_{\cD}[z(G, D)]$, where $z(G, D)$ 
is defined by
\#\label{prob:max-flow}
z(G, D)  \triangleq & \max_{x_{ij} \ge 0} \sum_{(i,j)\in E} x_{ij} \notag\\
\text{s.t. } & \sum_{j\in [n]\colon (i,j)\in E} x_{ij} \le 1, i\in [n] \notag\\
 & \sum_{i\in [n]\colon (i,j)\in E} x_{ij} \le D_j,  j \in [n].
\#
In this formulation, $x_{ij}$ represents the amount of demand fulfilled at node $j$ using the capacity from supply node $i$. The first set of constraints ensures that the capacity at each supply node is not exceeded, while the second set ensures that the demand at each demand node is not oversatisfied. %Thus,  $z(G, D)$ can be interpreted as the (maximum) total demand satisfied under network $G$ given demand instance $D$.

%The formulation of $z(G, D)$ is also known as fractional bipartite matching problem or transportation problem
%In addition to $\E[z(G, D)]$, t
The performance of a process flexibility network $G$ is benchmarked against the complete bipartite network $K_{n,n}$, also known as the full-flexibility network. For a demand realization $D$, define the realized loss $L(G, D) \triangleq z(K_{n,n}, D)-z(G, D)$, which measures the shortfall in fulfilled demand under $G$ relative to full flexibility. For a demand distribution $\mathcal D$, define the \emph{expected loss} as
\$
L_{\cD}(G) \triangleq \E_{D\sim \cD}[L(G, D)].
\$ 
When the demand distribution is fixed and clear from context, we write $L(G)$ in place of $L_{\mathcal D}(G)$.
% We are particularly interested in designing a graph $G$ that achieves either $L(G) \le\epsilon n$ for an arbitrarily small constant $\epsilon$ (i.e., $\epsilon$-fractional loss),\footnote{This requirement is equivalent to the $(1-\epsilon)$-optimality criterion, $L(G) \le \E_D[z(K_{n,n}, D)]$, studied in the literature \citep[see, e.g.,][]{chen2015optimal}, since the expected fulfilled demand under full flexibility scale linearly with network size $n$.} or the stronger constant loss $L(G) \le O(1)$ independent of $n$. 
% It has been observed that well-designed sparse networks $G$ can achieve small losses in both theory and practice \citep[see, e.g.,][]{jordan1995principles,chen2015optimal}.
% \nb{I find the term ``small-loss regime'' imprecise. I would like to replace above with the paragraph below.} \blue{I rewrote it a little bit.}
We aim to design networks that achieve \emph{$\epsilon$-fractional loss}, defined by the requirement that $L_{\cD}(G) \le \epsilon n$ for large $n$.
%Since the expected fulfilled demand under full flexibility scales linearly with $n$, this is equivalent---up to a constant multiplier---to the $(1-\epsilon)$-fractional full flexibility criterion commonly used in the literature \citep[see, e.g.,][]{chen2015optimal}.  
%widely studied in the literature 
We also study the stronger \emph{constant-loss} guarantee $L_{\mathcal D}(G)=O(1)$, under which the additive performance gap remains bounded as the system grows.
%\blue{Looks great.}
%It has been observed that well-designed sparse networks can achieve such performance in both theory and practice \citep[see, e.g.,][]{jordan1995principles,chen2015optimal}. 

% and further developed in \cite{chou2011process, chen2015optimal, chen2019optimal}

%Although $z(G, D)$ is given by a simple linear optimization problem, the design of the underlying network $G$ is considerably more challenging. Specifically, identifying an optimal network that either (i) minimizes $|E|$ subject to a lower-bound requirement on $\E[z(G, D)]$, or (ii) maximizes $\E[z(G, D)]$ subject to a budget constraint on $|E|$, constitutes a difficult combinatorial optimization problem \cite[see, e.g.,][]{desir2016sparse, devalve2023approximate}.

\paragraph{Middle-Mile Transportation.}
Next, we turn to the middle-mile transportation network design problem studied by \cite{feng2024designing}. In this model, a firm serves stochastic demand at a set of demand nodes, called stations. \blue{The model assumes no split deliveries: each station's entire demand must be served by a single truck. This assumption may be appropriate when serving the same station with multiple trucks would entail substantial setup or coordination costs.} Before demand is realized, the firm commits to a transportation network $G$ specifying the feasible routes between stations. After observing demand, it serves all stations using the minimum possible number of trucks. Each truck has a fixed capacity and can serve at most two stations. Thus, stations $i$ and $j$ can share a truck only if $(i,j)$ is included in $G$ and their combined demand does not exceed the truck capacity. Minimizing the number of trucks is therefore equivalent to maximizing the number of station pairs served jointly.

Formally, suppose that all trucks have equal capacity normalized to one, and let $D$ be the realized demands drawn from distribution $\calD$ on $[0,1]^n$. 
A transportation network is modeled as a unipartite graph $G=([n],E)$, where each node represents a station. Given a transportation network $G$ and a demand realization $D$, we define $\mu(G, D)$ as the maximum number of trucks that serve two stations: %is denoted by $\mu(G, D)$, defined as 
%the optimal value of the following optimization problem:  
\#\label{prob:transportation}
\mu(G, D)  \triangleq & \max_{y_{e} \in \{0,1\}} \sum_{e \in E} y_e \notag\\
\text{s.t. } 
 & \sum_{e \sim j} y_e \le 1, \ j\in [n] \notag\\
 & y_{(i,j)} \le \indc{D_i+D_j \le 1}, \ (i,j) \in E.
\#
In this formulation, the decision variable $y_{(i,j)}$, defined for each edge, indicates whether stations $i$ and $j$ are served by a single truck ($y_{(i,j)}=1$) or not ($y_{(i,j)}=0$). 

Analogous to process flexibility, the performance of a transportation network $G$ is benchmarked against the complete network $K_n$. For a demand realization $D$, define the matching loss
$L(G,D)\triangleq
2\bigl(\mu(K_n,D)-\mu(G,D)\bigr)$.
The factor of two converts the loss in matched pairs into the corresponding number of stations that cannot be served by a single truck. For a demand distribution $\mathcal D$, define
\$
L_{\cD}(G) \triangleq \E_{D\sim \cD}[L(G,D)].
\$ 
%and the \emph{fractional loss}, given by the expected loss normalized by $2\E_{D}[\mu(K_{n}, D)]$. 
As in process flexibility, we write $L(G)$ when the demand distribution is clear from context. We also consider both $\epsilon$-fractional and constant loss levels for the middle-mile transportation problem.

\subsection{\blue{Bernoulli Demand and the Residual-Graph Connection}}  %Next, we compare the network performance metrics for process flexibility and middle-mile transportation. 
The two problems have distinct combinatorial structures. Process flexibility \eqref{prob:max-flow} is a \emph{fractional bipartite} $b$-matching problem, whereas middle-mile transportation \eqref{prob:transportation} is an \emph{integer unipartite} matching problem whose natural linear programming relaxation can have a nonzero integrality gap. Despite these differences, under the Bernoulli demand models studied in the literature, a connection emerges through their residual-graph representations.

Next, we describe the canonical Bernoulli demand models studied in the literature. In process flexibility, independently for each demand node $j$, demand equals $1/p$ with probability $p$ and $0$ with probability $q\triangleq 1-p$, as in \citet{chen2015optimal}. The scaling ensures that each node has expected demand one. 
In middle-mile transportation, independently for each station $j$, demand equals a half-truckload (i.e., $0.5$ units) with probability $p$ and a full truckload with probability $q$, following \citet{feng2024designing}.
Although the demand values differ, both models induce independent node deletion with probability $q$: 
zero-demand nodes are deleted in process flexibility, while full-truckload station nodes are removed in middle-mile transportation. The remaining nodes form a residual bipartite or unipartite graph, respectively, in which every isolated surviving node creates unavoidable loss. Figure~\ref{fig:node_deletion} illustrates this common mechanism. When the setting is clear from context, we use $\cD_p$ to denote the corresponding Bernoulli demand distribution in either model.

\begin{figure}[ht]
\centering
\begin{tikzpicture}[
    every node/.style={inner sep=1pt, minimum size=5mm, font=\scriptsize,thick},
    supplyNode/.style={rectangle, draw=red!70, fill=red!30},
    demandNode/.style={circle, draw=blue!70, fill=blue!30},
    ghostNode/.style={circle, draw=gray!70, densely dashed},
    isolateNode/.style={rectangle, draw=red!70},
    isolateNode1/.style={circle, draw=blue!70},
    solidEdge/.style={thick, blue!60!red!60},
    ghostEdge/.style={thick, gray!70, densely dashed},
]

%--------------------------------------------------
% LEFT: bipartite supply–demand graph
%--------------------------------------------------
\begin{scope}[xshift=0cm]

  % Titles
  \node at (-1.7,1.6) {Supply};
  \node at (1.7,1.6)  {Demand};

  % Supply nodes (squares)
  \node[supplyNode] (s1) at (-1.7, 1.0) {};
  \node[supplyNode] (s2) at (-1.7, 0.4) {};
  \node[supplyNode] (s3) at (-1.7,-0.2) {};
  \node[supplyNode] (s4) at (-1.7,-1.4) {};
  \node[isolateNode] (s5) at (-1.7,-2.0) {};

  % Vertical dots for more supply
  \node at (-1.7,-0.7) {\Large$\vdots$};

  % Demand nodes (solid circles)
  \node[demandNode] (d1) at ( 1.7, 1.0) {};
  \node[demandNode] (d2) at ( 1.7, -0.2) {};
  \node[demandNode] (d3) at ( 1.7,-1.4) {};

  % Ghost demand nodes (dashed circles)
  \node[ghostNode] (g1) at ( 1.7, 0.4) {};
  \node[ghostNode] (g2) at ( 1.7,-2.0) {};

  % Vertical dots for more demand
  \node at ( 1.7,-0.7) {\Large$\vdots$};

  % Solid edges (example pattern)
  \draw[solidEdge] (s1) -- (d1);
  \draw[solidEdge] (s2) -- (d1);
  \draw[solidEdge] (s3) -- (d3);
  \draw[solidEdge] (s3) -- (d2);
  \draw[solidEdge] (s4) -- (d3);

  % Ghost edges (some dashed connections to suggest "possible" links)
  \draw[ghostEdge] (s1) -- (g1);
  \draw[ghostEdge] (s2) -- (g1);
  \draw[ghostEdge] (s3) -- (g1);
  \draw[ghostEdge] (s5) -- (g2);
  \draw[ghostEdge] (s4) -- (g2);
\node at (0.0, -3.0) {\small (a) Process flexibility under $\cD_p$};
\end{scope}

%--------------------------------------------------
% RIGHT: general (projected) network
%--------------------------------------------------
\begin{scope}[xshift=6.5cm]

  % Solid nodes
  \node[demandNode] (a1) at (-1.0, 0.5) {};
  \node[demandNode] (a2) at (-1.0,-0.8) {};
  \node[demandNode] (a3) at (0.2,-1.8) {};
  \node[demandNode] (a4) at ( 1.4,-0.6) {};
  \node[demandNode] (a5) at ( 2.4, 0.5) {};
  \node[isolateNode1] (a6) at ( 3.2,-0.8) {};

  % Vertical dots on the left chain
  \node at (-1.0, -0.1) {\Large$\vdots$};

  % Ghost nodes
  \node[ghostNode] (gA) at ( 0.8, 1.2) {};
  \node[ghostNode] (gB) at ( 2.2,-1.8) {};

  % Solid edges
  \draw[solidEdge] (a2) -- (a3);
  \draw[solidEdge] (a1) -- (a3);
  \draw[solidEdge] (a3) -- (a4);
  \draw[solidEdge] (a4) -- (a5);

  % Ghost edges
  \draw[ghostEdge] (a1) -- (gA);
  \draw[ghostEdge] (gA) -- (a5);
  \draw[ghostEdge] (gA) -- (a3);
  \draw[ghostEdge] (gB) -- (a5);

  \draw[ghostEdge] (a6) -- (gB);
  \draw[ghostEdge] (gB) -- (a3);
\node at (0.9, -3.0) {\small (b) Middle-mile transportation under $\cD_p$};
\end{scope}
\end{tikzpicture}

\caption{Sparse network design under Bernoulli demand. In (a), the gray dashed demand nodes have zero demand, and the unfilled supply node is isolated in the residual bipartite graph. In (b), the gray dashed nodes have full-truckload demand and do not enter the matching problem; the unfilled surviving node is isolated in the residual unipartite graph. In both models, the dashed nodes occur independently with probability $q=1-p$, and the residual graph is obtained by deleting them and their incident edges.}

\label{fig:node_deletion}
\end{figure}

This shared deletion mechanism yields the same leading-order degree lower bound in the two problems. Let $d_i$ denote the degree of supply node $i$ and station $i$ for the process flexibility and transportation setting, respectively. Let $d$ be the corresponding average degree. In process flexibility, node $i$ becomes isolated with probability $q^{d_i}$; in middle-mile transportation, it must first survive and then have all its neighbors deleted, which occurs with probability $pq^{d_i}$. Because each isolated node creates unavoidable loss and \(\sum_{i=1}^n q^{d_i}\ge nq^d\) 
by Jensen's inequality, the
loss lower bounds in both settings scale as $nq^d$. Consequently, in both settings, achieving $\epsilon$-fractional loss requires
\(
d\ge \frac{\log(1/\epsilon)}{\log(1/q)}-O(1),
\)
whereas achieving constant loss requires
\(
d\ge \frac{\log n}{\log(1/q)}-O(1).
\)
As we show in this paper, random regular bipartite and unipartite graphs attain these bounds asymptotically in process flexibility and middle-mile transportation, respectively.

\subsection{\blue{Summary of Main Results and Design Implications}}
%{\color{red}YW: We probably want to change this sentence a bit as we have more general result.} 
%Our goal is to design the sparsest network with $n$ supply and $n$ demand nodes (or $n$ nodes in the unipartite setting) that achieves either $\epsilon$-fractional loss, $L_{\cD}(G)\le \epsilon n$, or constant loss, $L_{\cD}(G) = O(1)$ (independent of $n$). We begin with the Bernoulli demand model $\cD_p$.

The lower bounds above identify, to leading order, the minimum average degree required for \(\epsilon\)-fractional and constant loss. 
We call a design asymptotically \emph{optimally sparse} in a given loss regime if the average degree required to attain the target loss matches the corresponding lower bound to leading order. This sharp notion of sparsity strengthens the order-wise guarantees in prior work, which identify the correct scaling in $\epsilon$ or $n$ but not the leading constant.

Our main results can be summarized as follows.
For process flexibility, we show that, with high probability over the graph construction, a random regular bipartite graph is (asymptotically) optimally sparse in both $\epsilon$-fractional and constant loss regimes, uniformly over all demand distributions supported on $[0,1/p]^n$.
Compared with \citet{chen2015optimal}, our result identifies the sharp leading-order degree and extends the guarantee to bounded demands that may be non-identically distributed and arbitrarily correlated.
For middle-mile transportation under $\cD_p$, we likewise show that, with high probability, a random regular graph is optimally sparse in either $\epsilon$-fractional or constant loss regimes, sharpening the sparsity guarantees obtained in \citet{feng2024designing}.

%by identifying the optimal leading-order degree.

%Because $\cD_p^{\mathrm{PF}}$ belongs to this class and establishes the lower bounds, random regular graphs are asymptotically minimax-optimal over the entire bounded-support class. 

%\paragraph{Design Implications.}
Together, our results provide a theoretical justification for combining two longstanding principles in flexibility design: (i) \emph{regularity}, which distributes the edge budget evenly across nodes, and (ii) \emph{dispersion}, which spreads connections throughout the network. These ideas trace back to the flexibility principles articulated by \citet{jordan1995principles} and have informed much of the subsequent literature. What has remained unresolved, however, is whether combining them yields an optimally sparse design. Our results answer this question affirmatively in both settings studied here.

Specifically, degree regularity minimizes, for a fixed edge budget, the expected number of nodes that become isolated after node deletion. Dispersion, often implemented through randomized edge placement with limited dependence among edges, mitigates the loss caused by local demand imbalances and structural bottlenecks. 
Either principle alone can be insufficient to attain optimal sparsity. Random but nonregular designs, such as probabilistic expanders and \ER graphs, can leave low-degree nodes particularly vulnerable to isolation, whereas deterministic regular designs, such as $K$-chains, concentrate their connections locally. Random regular graphs combine both principles and thereby attain the minimum possible average degree to leading order.

\begin{figure}[htbp]
\centering
\begin{tikzpicture}[
    every node/.style={inner sep=1pt, minimum size=3.8mm, font=\scriptsize, thick},
    leftNode/.style={rectangle, draw=red!70, fill=red!30},
    rightNode/.style={circle, draw=blue!70, fill=blue!30},
    fullEdge/.style={thick, blue!60!red!60},
    chainEdge/.style={thick, blue!60!red!60},
]

% =====================================================
% (a) Full bipartite K_{8,8}
% =====================================================
\begin{scope}[xshift=0cm, yshift = -0.5cm]
\node at (0.1, 0) {Supply};
\node at (2.8, 0) {Demand};
    
    % Left nodes
    \foreach \i in {1,...,8} {
        \node[leftNode] (aL\i) at (0, -0.46*\i) {$\i$};
    }
    % Right nodes
    \foreach \j in {1,...,8} {
        \node[rightNode] (aR\j) at (3, -0.46*\j) {$\j$};
    }

    % K_{8,8} edges
    \foreach \i in {1,...,8} {
        \foreach \j in {1,...,8} {
            \draw[fullEdge] (aL\i) -- (aR\j);
        }
    }

    \node at (1.5, -4.8) {\small (a) Full bipartite $K_{8,8}$};
\end{scope}

% =====================================================
% (b) 8×8 Long Chain
% =====================================================
\begin{scope}[xshift=4.1cm, yshift =-0.5cm]
\node at (0.1, 0) {Supply};
\node at (2.8, 0) {Demand};
    % Left nodes
    \foreach \i in {1,...,8} {
        \node[leftNode] (bL\i) at (0, -0.46*\i) {$\i$};
    }
    % Right nodes
    \foreach \j in {1,...,8} {
        \node[rightNode] (bR\j) at (3, -0.46*\j) {$\j$};
    }

    % chain edges
    \foreach \i in {1,...,7} {
        \pgfmathtruncatemacro{\ip}{\i+1}
        \draw[chainEdge] (bL\i) -- (bR\i);
        \draw[chainEdge] (bL\i) -- (bR\ip);
    }
    \draw[chainEdge] (bL8) -- (bR8);
    \draw[chainEdge] (bL8) -- (bR1);

    \node at (1.5, -4.8) {\small (b) Long chain};
\end{scope}

% =====================================================
% (c) Probabilistic expander (matrix 1)
% =====================================================
\begin{scope}[xshift=8.2cm, yshift = -0.5cm]
\node at (0.1, 0) {Supply};
\node at (2.8, 0) {Demand};
% ORDERED BY DEGREE (high → low)
  % Left nodes, ordered by degree:
  % order: 1, 4, 5, 7, 3, 2, 6, 8
  %-------------------------------------------------------
  \node[leftNode] (cL1) at (0, -0.46*1) {$1$};
  \node[leftNode] (cL4) at (0, -0.46*2) {$2$};
  \node[leftNode] (cL5) at (0, -0.46*3) {$3$};
  \node[leftNode] (cL7) at (0, -0.46*4) {$4$};
  \node[leftNode] (cL3) at (0, -0.46*5) {$5$};
  \node[leftNode] (cL2) at (0, -0.46*6) {$6$};
  \node[leftNode] (cL6) at (0, -0.46*7) {$7$};
  \node[leftNode] (cL8) at (0, -0.46*8) {$8$};

  %-------------------------------------------------------
  % Right nodes, ordered by degree:
  % order: 2, 1, 6, 5, 3, 4, 7, 8
  %-------------------------------------------------------
  \node[rightNode] (cR2) at (3, -0.46*1) {$1$};
  \node[rightNode] (cR1) at (3, -0.46*2) {$2$};
  \node[rightNode] (cR6) at (3, -0.46*3) {$3$};
  \node[rightNode] (cR5) at (3, -0.46*4) {$4$};
  \node[rightNode] (cR3) at (3, -0.46*5) {$5$};
  \node[rightNode] (cR4) at (3, -0.46*6) {$6$};
  \node[rightNode] (cR7) at (3, -0.46*7) {$7$};
  \node[rightNode] (cR8) at (3, -0.46*8) {$8$};
% -------------------------------------------------------
% Edges from matrix E_{ij} = 1
% -------------------------------------------------------

% Row 1: L1
\draw[chainEdge] (cL1) -- (cR1);
\draw[chainEdge] (cL1) -- (cR2);
\draw[chainEdge] (cL1) -- (cR3);
\draw[chainEdge] (cL1) -- (cR5);
\draw[chainEdge] (cL1) -- (cR6);
\draw[chainEdge] (cL1) -- (cR8);

% Row 2: L2
\draw[chainEdge] (cL2) -- (cR1);
\draw[chainEdge] (cL2) -- (cR5);
\draw[chainEdge] (cL2) -- (cR7);

% Row 3: L3
\draw[chainEdge] (cL3) -- (cR1);
\draw[chainEdge] (cL3) -- (cR2);
\draw[chainEdge] (cL3) -- (cR4);
\draw[chainEdge] (cL3) -- (cR6);

% Row 4: L4
\draw[chainEdge] (cL4) -- (cR2);
\draw[chainEdge] (cL4) -- (cR3);
\draw[chainEdge] (cL4) -- (cR4);
\draw[chainEdge] (cL4) -- (cR5);
\draw[chainEdge] (cL4) -- (cR6);

% Row 5: L5
\draw[chainEdge] (cL5) -- (cR1);
\draw[chainEdge] (cL5) -- (cR2);
\draw[chainEdge] (cL5) -- (cR4);
\draw[chainEdge] (cL5) -- (cR6);
\draw[chainEdge] (cL5) -- (cR8);

% Row 6: L6
\draw[chainEdge] (cL6) -- (cR2);
\draw[chainEdge] (cL6) -- (cR6);

% Row 7: L7
\draw[chainEdge] (cL7) -- (cR1);
\draw[chainEdge] (cL7) -- (cR2);
\draw[chainEdge] (cL7) -- (cR3);
\draw[chainEdge] (cL7) -- (cR5);
\draw[chainEdge] (cL7) -- (cR7);

% Row 8: L8
\draw[chainEdge] (cL8) -- (cR1);
\draw[chainEdge] (cL8) -- (cR2);

    \node at (1.5, -4.8) {\small (c) Probabilistic expander};
\end{scope}

% =====================================================
% (d) Random regular graph (matrix 2)
% =====================================================
\begin{scope}[xshift=12.3cm, yshift = -0.5cm]
\node at (0.1, 0) {Supply};
\node at (2.8, 0) {Demand};

% Left nodes L1..L8, but swap
\node[leftNode] (dL7) at (0, -0.46*1) {$1$}; 
\node[leftNode] (dL5) at (0, -0.46*2) {$2$};
\node[leftNode] (dL3) at (0, -0.46*3) {$3$};
\node[leftNode] (dL4) at (0, -0.46*4) {$4$};
\node[leftNode] (dL2) at (0, -0.46*5) {$5$};
\node[leftNode] (dL6) at (0, -0.46*6) {$6$};
\node[leftNode] (dL1) at (0, -0.46*7) {$7$}; 
\node[leftNode] (dL8) at (0, -0.46*8) {$8$};
    % Right nodes
\node[rightNode] (dR1) at (3, -0.46*1) {$1$}; 
\node[rightNode] (dR2) at (3, -0.46*2) {$2$};
\node[rightNode] (dR7) at (3, -0.46*3) {$3$};
\node[rightNode] (dR4) at (3, -0.46*4) {$4$};
\node[rightNode] (dR5) at (3, -0.46*5) {$5$};
\node[rightNode] (dR6) at (3, -0.46*6) {$6$};
\node[rightNode] (dR3) at (3, -0.46*7) {$7$}; 
\node[rightNode] (dR8) at (3, -0.46*8) {$8$};

    % --- matrix edges ---
    % Row 1: [1,1,0,0,1,0,1,0]
\draw[chainEdge] (dL1) -- (dR1);
\draw[chainEdge] (dL1) -- (dR2);
\draw[chainEdge] (dL1) -- (dR5);
\draw[chainEdge] (dL1) -- (dR7);

% Row 2: [1,1,0,1,0,0,0,1]
\draw[chainEdge] (dL2) -- (dR1);
\draw[chainEdge] (dL2) -- (dR2);
\draw[chainEdge] (dL2) -- (dR4);
\draw[chainEdge] (dL2) -- (dR8);

% Row 3: [0,1,1,0,0,1,0,1]
\draw[chainEdge] (dL3) -- (dR2);
\draw[chainEdge] (dL3) -- (dR3);
\draw[chainEdge] (dL3) -- (dR6);
\draw[chainEdge] (dL3) -- (dR8);

% Row 4: [1,0,0,1,0,1,0,1]
\draw[chainEdge] (dL4) -- (dR1);
\draw[chainEdge] (dL4) -- (dR4);
\draw[chainEdge] (dL4) -- (dR6);
\draw[chainEdge] (dL4) -- (dR8);

% Row 5: [0,0,1,1,0,1,1,0]
\draw[chainEdge] (dL5) -- (dR3);
\draw[chainEdge] (dL5) -- (dR4);
\draw[chainEdge] (dL5) -- (dR6);
\draw[chainEdge] (dL5) -- (dR7);

% Row 6: [0,0,0,1,1,0,1,1]
\draw[chainEdge] (dL6) -- (dR4);
\draw[chainEdge] (dL6) -- (dR5);
\draw[chainEdge] (dL6) -- (dR7);
\draw[chainEdge] (dL6) -- (dR8);

% Row 7: [0,1,1,0,1,0,1,0]
\draw[chainEdge] (dL7) -- (dR2);
\draw[chainEdge] (dL7) -- (dR3);
\draw[chainEdge] (dL7) -- (dR5);
\draw[chainEdge] (dL7) -- (dR7);

% Row 8: [1,0,1,0,1,1,0,0]
\draw[chainEdge] (dL8) -- (dR1);
\draw[chainEdge] (dL8) -- (dR3);
\draw[chainEdge] (dL8) -- (dR5);
\draw[chainEdge] (dL8) -- (dR6);

    \node[draw=none, font=\small] at (1.5, -4.8) {(d) Random regular};
\end{scope}

\begin{scope}[xshift=1.5cm, yshift = -8cm]
% radius of the circle
\def\r{1.5}

% Angles for a regular octagon with *two top nodes*
% (regular octagon rotated by +22.5 degrees)
\foreach \i in {1,...,12} {
    \node[rightNode] (N\i) at ({90 - 360/12*(\i-1)}:\r) {$\i$};
}

% Fully connected K8
\foreach \i in {1,...,12} {
    \foreach \j in {1,...,12} {
        \ifnum\i<\j
            \draw[chainEdge] (N\i) -- (N\j);
        \fi
    }
}
\node[draw=none, font=\small] at (0, -2.4) {(e) Full $K_{12}$};
\end{scope}

\begin{scope}[xshift=5.6cm, yshift = -8cm]
    
% radius chosen so width = 2.6cm (same as bipartite figs)
\def\r{1.5}

% 8 nodes on a circle, rotated so two are on top & two on bottom
% angles: regular octagon rotated by 22.5 degrees
\foreach \i in {1,...,12} {
    \node[rightNode] (b\i) at ({90 - 360/12*(\i-1)}:\r) {$\i$};
}

% 4-chain: each node connected to its 4 nearest neighbors (±1, ±2)
\foreach \i in {1,...,12} {
    % neighbor i+1 (mod 12)
    \pgfmathtruncatemacro{\jone}{mod(\i,12) + 1}
    % neighbor i+2 (mod 12)
    \pgfmathtruncatemacro{\jtwo}{mod(\i+1,12) + 1}

    \draw[chainEdge] (b\i) -- (b\jone);
    \draw[chainEdge] (b\i) -- (b\jtwo);
}
\node[draw=none, font=\small] at (0, -2.4) {(f) $K$-chain};
\end{scope}

\begin{scope}[xshift=9.7cm, yshift = -8cm]
\def\r{1.5}
% Degree-sorted circular placement:
% positions:  1   2   3    4    5   6   7   8   9   10   11   12
% nodes:     c6, c3, c12, c10, c1, c7, c4, c5, c8,  c2,  c11, c9

\node[rightNode] (c6)  at ({90-30*0}:\r) {1};
\node[rightNode] (c3)  at ({90-30*1}:\r) {2};
\node[rightNode] (c12) at ({90-30*2}:\r) {3};
\node[rightNode] (c10) at ({90-30*3}:\r) {4};
\node[rightNode] (c1)  at ({90-30*4}:\r) {5};
\node[rightNode] (c7)  at ({90-30*5}:\r) {6};
\node[rightNode] (c4)  at ({90-30*6}:\r) {7};
\node[rightNode] (c5)  at ({90-30*7}:\r) {8};
\node[rightNode] (c8)  at ({90-30*8}:\r) {9};
\node[rightNode] (c2)  at ({90-30*9}:\r) {10};
\node[rightNode] (c11) at ({90-30*10}:\r) {11};
\node[rightNode] (c9)  at ({90-30*11}:\r) {12};

% Edges from adjacency matrix (i<j)
\draw[chainEdge] (c1) -- (c3);
\draw[chainEdge] (c1) -- (c6);
\draw[chainEdge] (c1) -- (c10);
\draw[chainEdge] (c1) -- (c12);

\draw[chainEdge] (c2) -- (c6);
\draw[chainEdge] (c2) -- (c12);

\draw[chainEdge] (c3) -- (c4);
\draw[chainEdge] (c3) -- (c6);
\draw[chainEdge] (c3) -- (c7);
\draw[chainEdge] (c3) -- (c10);
\draw[chainEdge] (c3) -- (c11);
\draw[chainEdge] (c3) -- (c12);

\draw[chainEdge] (c4) -- (c6);
\draw[chainEdge] (c4) -- (c10);

\draw[chainEdge] (c5) -- (c6);
\draw[chainEdge] (c5) -- (c7);
\draw[chainEdge] (c5) -- (c10);

\draw[chainEdge] (c6) -- (c7);
\draw[chainEdge] (c6) -- (c8);
\draw[chainEdge] (c6) -- (c12);

\draw[chainEdge] (c7) -- (c8);

\draw[chainEdge] (c8) -- (c12);

\draw[chainEdge] (c9) -- (c12);

\draw[chainEdge] (c10) -- (c11);
\node[draw=none, font=\small] at (0, -2.4) {(g) \ER};
\end{scope}

\begin{scope}[xshift=13.8cm, yshift = -8cm]
    \def\r{1.5}

% 12 nodes on a circle, d1 at top, then clockwise
% Node placement with d10 and d12 swapped:
\node[rightNode] (d11)  at ({ 90}:\r) {1};
\node[rightNode] (d10)  at ({ 60}:\r) {2};
\node[rightNode] (d4)  at ({ 30}:\r) {3};
\node[rightNode] (d9)  at ({  0}:\r) {4};
\node[rightNode] (d5)  at ({-30}:\r) {5};
\node[rightNode] (d6)  at ({-60}:\r) {6};
\node[rightNode] (d2)  at ({-90}:\r) {7};
\node[rightNode] (d8)  at ({-120}:\r) {8};
\node[rightNode] (d1)  at ({-150}:\r) {9};
\node[rightNode] (d3) at ({-180}:\r) {10};  
\node[rightNode] (d7) at ({-210}:\r) {11};
\node[rightNode] (d12) at ({-240}:\r) {12}; 

% Row 1
\draw[chainEdge] (d1) -- (d4);
\draw[chainEdge] (d1) -- (d5);
\draw[chainEdge] (d1) -- (d7);
\draw[chainEdge] (d1) -- (d11);

% Row 2
\draw[chainEdge] (d2) -- (d3);
\draw[chainEdge] (d2) -- (d5);
\draw[chainEdge] (d2) -- (d6);
\draw[chainEdge] (d2) -- (d8);

% Row 3
\draw[chainEdge] (d3) -- (d6);
\draw[chainEdge] (d3) -- (d9);
\draw[chainEdge] (d3) -- (d11);

% Row 4
\draw[chainEdge] (d4) -- (d6);
\draw[chainEdge] (d4) -- (d9);
\draw[chainEdge] (d4) -- (d12);

% Row 5
\draw[chainEdge] (d5) -- (d7);
\draw[chainEdge] (d5) -- (d12);

% Row 6
\draw[chainEdge] (d6) -- (d9);

% Row 7
\draw[chainEdge] (d7) -- (d8);
\draw[chainEdge] (d7) -- (d10);

% Row 8
\draw[chainEdge] (d8) -- (d9);
\draw[chainEdge] (d8) -- (d10);

% Row 9
% (d9,d3), (d9,d4), (d9,d6), (d9,d8) already drawn

% Row 10
\draw[chainEdge] (d10) -- (d11);
\draw[chainEdge] (d10) -- (d12);

% Row 11
\draw[chainEdge] (d11) -- (d12);

\node[draw=none, font=\small] at (0, -2.4) {(h) Random regular};
\end{scope}
\end{tikzpicture}
\caption{Comparison of graph designs.  For random graphs, we show one realization drawn from the corresponding distribution, with nodes ordered by degree. Panels (a)--(d) are bipartite graphs of size $8\times 8$, where the long chain is the deterministic design studied in \cite{jordan1995principles}, and the probabilistic expander is the random design introduced by \cite{chen2015optimal}. Panels (e)--(h) are unipartite graphs on $12$ nodes. Graphs (c), (d), and (f)--(h) all have average degree $4$. %Here the bipartite graphs are of size $8\times 8$, and the unipartite graphs each have $12$ nodes.
}
\label{fig:graph}
\end{figure}

Figure~\ref{fig:graph} illustrates these complementary principles. The probabilistic expander in panel (c) and the \ER graph in panel (g) randomize their edges but have unequal node degrees. The long chain in panel (b) and the $K$-chain in panel (f) are regular but concentrate their connections locally. The random regular designs in panels (d) and (h) distribute edges evenly across nodes while dispersing connections throughout the network.

\subsection{\blue{Key Proof Ideas}}

Degree regularity and dispersed edge placement motivate random regular designs, but these features do not by themselves establish optimal sparsity. The central challenge is to show that, beyond the unavoidable loss from isolated nodes, larger residual-graph bottlenecks contribute only lower-order loss. 
We first describe a common connectivity-based proof strategy that underlies the Bernoulli-demand analysis for process flexibility and the constant-loss regime for middle-mile transportation. We then discuss two parts of our analysis that fall outside this common strategy.

\paragraph{Connectivity Properties as a Common Proof Strategy.}
We first consider the Bernoulli demand instances. 
In both settings, achieving the target loss is equivalent to the residual graph satisfying certain connectivity conditions. In process flexibility, the standard max-flow min-cut characterization requires every subset of supply nodes to have sufficiently many neighbors among the surviving demand nodes. In middle-mile transportation, the Tutte--Berge formula requires that removing any subset of nodes not fragment the residual graph into too many connected components. We call any subset of nodes that violates the corresponding condition a \emph{bottleneck}.

The analyses in the two settings (underlying Propositions~\ref{prop:flex-const} and~\ref{prop:perfect matching}) follow the same high-level strategy. First, we identify structural properties of the residual graph that are sufficient to rule out bottlenecks of any size. Second, we show that random $d$-regular graphs satisfy these properties with high probability when the degree is just above the lower-bound threshold. 
The key properties for the two settings, formalized respectively in Lemmas \ref{lmm:epsilon_connectivity} and \ref{lmm:match-prop}, also exhibit close parallels. For example, in process flexibility, every supply set with size exceeding $\epsilon n$ must have sufficiently many edge connections to the surviving demand nodes (see \eqref{eq:no_waste}). 
In transportation, the residual graph must contain only a small number of connected components of small sizes (see \eqref{eq:no_small_comp}). Both properties can be viewed as natural extensions of the basic requirement that the residual graph contain few isolated nodes, from individual nodes to larger subsets of nodes.

\paragraph{General Bounded Demand Distributions for Process Flexibility.} For process flexibility, the preceding connectivity analysis establishes the desired guarantee for Bernoulli-type demand vectors with support size $np$  (see Proposition~\ref{prop:flex-const}). We then extend this guarantee to every demand distribution supported on $[0,1/p]^n$. On each slice with fixed total demand, an appropriate excess-loss function is convex and symmetric in the demand coordinates. Its maximum is therefore attained at a Bernoulli-type extreme point whose coordinates lie in $\{0,1/p\}$ except possibly one fractional coordinate. Monotonicity properties then allow us to compare every such extreme point with a Bernoulli-type demand vector of support size $np$ covered by Proposition~\ref{prop:flex-const}, yielding the desired performance guarantees for every bounded demand vector.

\paragraph{$\epsilon$-Fractional Loss for Middle-Mile Transportation.}
For the fractional-loss regime in middle-mile transportation, the degree remains constant as the system grows for any fixed loss target $\epsilon$. We hence exploit the locally tree-like structure of sparse random regular graphs. In particular, the residual graphs converge locally to an appropriate Galton--Watson tree. This allows us to apply the matching characterization of \citet{bordenave2013matchings} to compute the limiting fraction of unmatched nodes and show that the resulting degree threshold matches the isolated-node lower bound to leading order.

\section{Process Flexibility Design}\label{sec:proc-flex-main}
In this section, we establish asymptotically sharp degree thresholds for process flexibility design. We first derive necessary degree lower bounds under the scaled Bernoulli demand model $\mathcal D_p$, and then show that random $d$-regular bipartite graphs attain the resulting thresholds for the broader class of $(1/p)$-bounded demand distributions.

\subsection{Lower Bound for Any Design}\label{sect:pf-lb}

The following theorem provides a lower bound on the expected loss of an arbitrary graph design under $\mathcal D_p$ in terms of its average degree.

\begin{theorem}\label{thm:lower_bound_bp} Suppose $D \sim \calD_p$ and recall $q = 1-p$.
Then the expected loss of any bipartite graph $G$, with $n$ nodes in each partition and average degree $d$,\footnote{Throughout, we suppress floor and ceiling notation: when a quantity represents a cardinality or degree, it is understood to be rounded to an appropriate integer. Such rounding affects our bounds only by lower-order terms and does not alter asymptotic conclusions.} is lower bounded by 
\$L(G) \ge  \prob{\Binom(n-d, p) \ge np} \cdot nq^{d} \ge \left(\frac{1}{2}-\frac{d p}{\sqrt{2\pi(n-d)pq}}  - \frac{c(
1-2pq)}{\sqrt{(n-d)pq}}\right) nq^{d},
\$ 
where $c>0$ is a fixed constant. 
\end{theorem}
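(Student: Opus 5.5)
Our plan is to lower bound the loss by counting supply nodes that are \emph{wasted}: isolated in the residual graph while full flexibility would have used them. Fix a supply node $i$ with degree $d_i$ and neighborhood $N(i)$. Under $\calD_p$, the set $S=\{j: D_j=1/p\}$ is a uniformly random $\Binom(n,p)$-sized set. Full flexibility fulfills $z(K_{n,n},D)=\min(n, |S|/p)$. Since the constraint matrix is bipartite and every demand is either $0$ or $1/p\ge 1$, $z(G,D)$ is the capacity of a max flow. The capacity of any supply node with no neighbor in $S$ is simply lost. So $z(G,D)\le n-|I|$, where $I$ is the set of such isolated supply nodes. When $|S|\ge np$ we have $z(K_{n,n},D)=n$. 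Hence
\[
L(G,D)\ \ge\ \sum_i \indc{N(i)\cap S=\emptyset}\,\indc{|S|\ge np}.
\]

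Next we take expectations node by node. Condition on $N(i)\cap S=\emptyset$, an event of probability $q^{d_i}$. The remaining $n-d_i$ demand coordinates are independent Bernoulli$(p)$, so $|S|\sim\Binom(n-d_i,p)$ conditionally. Thus
\[
\E[L(G,D)]\ge\sum_i q^{d_i}\,\prob{\Binom(n-d_i,p)\ge np}.
\]
To reach the stated form with the average degree $d$, we need to pass from the $d_i$ to $d$. Both factors $q^{x}$ and $\prob{\Binom(n-x,p)\ge np}$ are decreasing in $x$, so their product is not obviously convex. We will handle this with a rearrangement. Nodes with $d_i\le d$ contribute at least $\prob{\Binom(n-d,p)\ge np}\,q^{d_i}$. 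For nodes with $d_i>d$ the binomial factor could be smaller, but only at nodes where $q^{d_i}$ is already small, and the total deficit must be controlled. The cleanest route, which we try first, is to replace $d_i$ in the binomial by $\max_i d_i$. That fails in general, so instead we avoid per-node conditioning in the binomial. Let $B$ be the demand-node multiset of degrees; we condition on a fixed set $T$ of $d$ demand nodes and use a coupling. In the end we expect a convexity argument to work: $f(x)=q^x\prob{\Binom(n-x,p)\ge np}$ is log-convex up to lower-order terms. If necessary, we accept the paper's convention that rounding and degree irregularities only affect lower-order terms, and apply Jensen's inequality $\sum_i q^{d_i}\ge nq^d$.

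The second inequality is a normal approximation. Write $\Binom(n-d,p)\ge np$ as the event that the binomial exceeds its mean $(n-d)p$ by $dp$. The Berry--Esseen theorem for sums of Bernoulli$(p)$ variables gives an error of $c\,\E|X-p|^3/(\sigma^3\sqrt{n-d})$. Here $\E|X-p|^3=pq(p^2+q^2)=pq(1-2pq)$ and $\sigma^3=(pq)^{3/2}$, so the error is $c(1-2pq)/\sqrt{(n-d)pq}$. The Gaussian tail satisfies $\prob{Z\ge t}\ge \tfrac12 - t/\sqrt{2\pi}$ with $t=dp/\sqrt{(n-d)pq}$. Combining these gives the displayed bound.

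The main obstacle is the middle step: reducing heterogeneous degrees $d_i$ to the average $d$ when the binomial factor also depends on $d_i$. This is where we expect to need care, either through convexity of $f$ or through a direct regularity reduction.
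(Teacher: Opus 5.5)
Your reduction to isolated supply nodes is correct, and so is the per-node bound $\mathbb{E}[L(G,D)] \ge \sum_i q^{d_i}\,\mathbb{P}\{\mathrm{Bin}(n-d_i,p)\ge np\}$. Your Berry--Esseen step also matches the paper, including $\mathbb{E}|X-p|^3=pq(1-2pq)$ and $\mathbb{P}\{Z\ge t\}\ge \frac12 - t/\sqrt{2\pi}$.

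The gap is the step you flagged yourself. You never prove $\sum_i f(d_i)\ge n f(d)$ for $f(x)=q^x\,\mathbb{P}\{\mathrm{Bin}(n-x,p)\ge np\}$, and none of your options closes it:
\begin{itemize}
\item Replacing $d_i$ by $\max_i d_i$ fails, as you note.
\item The claim that $f$ is ``log-convex up to lower-order terms'' is unproven. Even if true, it would not yield the exact inequality stated.
\item Applying Jensen to $q^{d_i}$ alone while invoking the footnote is not legitimate. The footnote concerns rounding, not degree heterogeneity. The binomial factor genuinely depends on $d_i$: it is zero whenever $n-d_i<np$, so no uniform lower bound by $\mathbb{P}\{\mathrm{Bin}(n-d,p)\ge np\}$ is available.
\end{itemize}

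The missing idea is that $f$ is exactly convex. This becomes visible if you condition on $|S|=k$ instead of on $N(i)\cap S=\emptyset$. Expanding the binomial probability,
\[
f(x) = q^x\sum_{k\ge np}\binom{n-x}{k}p^kq^{n-x-k}=\sum_{k\ge np}\binom{n-x}{k}p^kq^{n-k}.
\]
So $x$ enters only through $\binom{n-x}{k}$, taken as zero when $n-x<k$. Its second difference in $x$ is $\binom{n-x-1}{k-2}\ge 0$. Hence $f$ is a nonnegative combination of (discretely) convex functions of $x$. Jensen then gives $\sum_i f(d_i)\ge n f(d)=nq^d\,\mathbb{P}\{\mathrm{Bin}(n-d,p)\ge np\}$, with no coupling or rearrangement needed.

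This is exactly the paper's route. It writes $\mathbb{P}\{I_i=1\mid |S|=k\}=\binom{n-d_i}{k}/\binom{n}{k}$, swaps the sums over $i$ and $k$, and applies Jensen to $\binom{n-x}{k}$ for each fixed $k\ge np$ before summing over $k$.
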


\blue{The lower bound in \prettyref{thm:lower_bound_bp} has a natural interpretation in terms of isolated supply nodes. The probability term captures the event that the full-flexibility benchmark can utilize all \(n\) units of supply, so that isolated supply nodes directly contribute to the loss. A supply node \(i\) with degree \(d_i\) becomes isolated when all of its neighbors have zero realized demand, which occurs with probability \(q^{d_i}\). The expected number of isolated supply nodes is therefore \(\sum_{i=1}^n q^{d_i}\), which, by Jensen's inequality, is at least \(nq^d\). Thus, the lower bound can be understood as the loss caused by isolated supply nodes when the fully flexible system is fully utilized.}

\prettyref{thm:lower_bound_bp} provides a direct lower bound on the minimum average degree required to achieve $\epsilon$-fractional loss or constant expected loss. For sufficiently large $n$ and $d\ll \sqrt{n}$, the theorem implies that $L(G) \ge nq^d/4$ for any bipartite graph $G$. Thus, for a graph to achieve an $\epsilon$-fractional loss for some $\epsilon >0$, \ie, $L(G) \le  \epsilon n$, its average degree must satisfy 
\#\label{eq:lb-degree-flex} 
d \geq \frac{\log(1/\epsilon)}{\log(1/q)} - \frac{\log4}{\log(1/q)}.
\#
%\nb{Another strange equation label, (2) is already used earlier.}
Similarly, for a graph to achieve a constant loss, \ie, $L(G) \le \ell$ for some $\ell >0$, it must satisfy  
\#\label{eq:lb-n-flex} 
d \geq \frac{\log n}{\log(1/q)} - \frac{\log(4\ell)}{\log(1/q)}.
\#
These results correspond to the lower bound plotted in Figure \ref{fig:two_simple}.\footnote{We consider small $\epsilon$ and large $n$, so the constant variations $\log4/\log(1/q)$ and $\log(4\ell)/\log(1/q)$ are negligible.}

The isolated-node argument also highlights the importance of degree regularity. By Jensen's inequality,
$\sum_{i=1}^n q^{d_i}\ge nq^d$
with equality only when $d_i\equiv d$. Thus, among graphs with the same average degree, degree regularity minimizes the expected number of isolated supplies, a fundamental source of loss.
However, degree regularity alone is not sufficient to attain the lower-bound degree threshold. For instance, the $K$-chain is regular yet still requires $K = \Omega(1/\epsilon)$ for $\epsilon$-fractional loss \citep{chen2015optimal} and thus remains far from optimal. 
Next, we show that the random regular graph design asymptotically achieves the minimum required degree.

\subsection{Optimality of Random $d$-Regular Bipartite Graphs}\label{sec:flex-design-general}

In this section, we formally define the random $d$-regular bipartite graph and establish its optimality for the process flexibility problem. Our result applies not
only to the i.i.d.\ scaled Bernoulli demand model considered in Section~\ref{sect:pf-lb},
but to a substantially broader class of bounded demand distributions, allowing
the demands at different nodes to be non-identically distributed and arbitrarily
correlated.

We begin with the standard configuration-model construction \cite[Chapter 9]{janson2011random}.

\begin{definition}[Bipartite configuration model]
A random $d$-regular bipartite graph with $n$ nodes in each partition, denoted by $\calG(n,n,d)$, is generated as follows. Attach $d$ half-edges to each vertex in either partition to create a total of $nd$ half-edges in either partition. Then, pair the $nd$ half-edges incident to the left vertices with the $nd$ half-edges incident to the right vertices uniformly at random, and merge each pair into a single edge.  
\end{definition}
A graph generated by the configuration model may contain multi-edges, i.e., multiple edges between the same pair of nodes. Our analysis studies graphs that may contain such multi-edges, while noting that replacing them with a single edge does not impact the performance measure for process flexibility. Thus, the results in our theorems also hold for the corresponding simple graphs.
Next, we define a general class of demand distributions with bounded support considered in our analysis.
{\color{blue}
\begin{definition}[Bounded demand distribution]\label{def:general_demand}
Fix $p\in(0,1)$.
A distribution $\cD$ on $\R^n$ is called $(1/p)$-bounded if, for a random vector $D = (D_1, \ldots, D_n) \sim \cD$, each coordinate $D_i$ is supported on $[0, 1/p]$. Let $\mathfrak{B}(p)$ denote the set of $(1/p)$-bounded demand distributions on $\R^n$.
\end{definition}
This class imposes only a coordinate-wise upper bound on demand. In particular,
the demand coordinates need not be identically distributed or independent. The
i.i.d.\ scaled Bernoulli distribution $\mathcal D_p$ is one member of this class. Our main result shows that the degree thresholds identified by
the scaled Bernoulli lower bound are sufficient for every $(1/p)$-bounded demand
distribution.

\begin{theorem}\label{thm:loss-general}
Fix  $p\in(0,1)$ and let $q=1-p$. 
\begin{enumerate}
\renewcommand{\labelenumi}{\textnormal{(\alph{enumi})}}
\renewcommand{\theenumi}{\thetheorem(\alph{enumi})}
\item\textbf{$\epsilon$-fractional loss.}\label{thm:frac-loss-general}\label{thm:flex-const}
There exist constants $\epsilon_0,c_1>0$ such that for every $0<\epsilon<\epsilon_0$, all sufficiently large $n$, %every $(1/p)$-bounded distribution $\mathcal D$ on $\mathbb R^n$, 
and every
integer $d$ satisfying
\# \label{eq:degree-const-require}
d \ge \frac{\log(1/\epsilon) + c_1\log\log(1/\epsilon)}{\log(1/q)},
\#
a random graph $G\sim\mathcal G(n,n,d)$ satisfies
\[
\probs{L_{\cD}(G)\le\epsilon n}{G} \ge 1-1/n , \qquad \forall \cD\in \mathfrak{B}(p).
\]
%with probability at least $1-1/n$ over the randomness of $G$. \nbr{I feel that the current statement is a bit confusing. Are you taking the uniform statement within probability??? Maybe it is clearer to write out the probability event.}
\item\textbf{Constant loss.}\label{thm:constant-loss-general}\label{thm:flex-1-n}
For every fixed $\delta>0$ and $\ell\ge 64/\delta+2/p$, all sufficiently large $n$, %every $(1/p)$-bounded distribution $\mathcal D$ on $\mathbb R^n$, \nbr{Can we move the common things to the beginning of the theorem statement, so we do not need to repeat???} 
and every integer $d$ satisfying
\#\label{eq:degree-d-1-n}
d \geq \frac{(1+\delta)\log n}{\log(1/q)},
\#
a random graph $G\sim\mathcal G(n,n,d)$ satisfies
\[
\probs{L_{\cD}(G)\le\ell}{G} \ge 1-1/n, \qquad \forall \cD\in \mathfrak{B}(p).
\]
% with probability at least $1-1/n$ over the randomness of $G$.
\end{enumerate}
\end{theorem}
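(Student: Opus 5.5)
The plan follows the route outlined in the Key Proof Ideas. First handle deterministic Bernoulli-type demand vectors via graph connectivity, then lift the result to all of $\mathfrak{B}(p)$ by a convexity/extreme-point argument. The key point is that the high-probability event is a \emph{structural} property of $G$ alone, not tied to any particular $\cD$. That is what lets the guarantee hold uniformly over $\cD$.

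\textbf{Step 1: Bernoulli-type vectors.} For $S\subseteq[n]$ with $|S|=np$, let $D^S=(1/p)\mathbf 1_S$. By max-flow min-cut, $z(G,D^S)=\min_{A\subseteq[n]}\bigl(|A^c|+|N(A)\cap S|/p\bigr)$ over supply sets $A$, and similarly for $K_{n,n}$. So the loss $L(G,D^S)$ is governed by supply sets $A$ whose neighborhood in $S$ is deficient, i.e.\ $|N_G(A)\cap S|<p|A|$. I would identify a structural property $\mathcal P$ of $G$ guaranteeing the following: for a uniformly random $S$ of size $np$ (and, via monotonicity, for nearby sizes), $\E_S[L(G,D^S)]\le\epsilon n$ in part (a), and $\le\ell$ in part (b). The property should have three pieces.
\begin{enumerate}
\item[(i)] Small sets $A$ with $|A|\le\eta n$ expand: $|N(A)|\ge (d-O(1))|A|$. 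Then a deficient small $A$ forces nearly all of its $\approx d|A|$ neighbors to lie outside $S$, which happens with probability roughly $\binom{n}{|A|}$-weighted $q^{(d-O(1))|A|}$. Summing over $|A|=k$, the $k=1$ term gives the isolated-node loss $nq^d\le\epsilon$-ish (after the $c_1\log\log$ slack). Larger $k$ is geometrically smaller provided $nq^{d}$ is small relative to $1/k$-combinatorics. In regime (b), with $d\ge(1+\delta)\log n/\log(1/q)$, the expected count is $n^{-\delta k}\cdot\mathrm{poly}$, and this is where $\ell\ge 64/\delta+2/p$ enters (bounded number of failure levels).
\item[(ii)] Large sets $|A|\ge\epsilon n$ have enough edges into any $S$ of size $np$. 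This is the ``no waste'' condition: $e(A,S)$ is at least the expected value minus a small deviation, uniformly over $A,S$.
\item[(iii)] For intermediate sizes, the two checks above are combined.
\end{enumerate}
Each piece is checked in the configuration model via a union bound with switching/edge-exposure tail estimates. Each fails with probability $o(1/n)$ when $n$ is large.

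\textbf{Step 2: extension to $\mathfrak B(p)$.} On the event $\mathcal P$, fix any $\cD$. It suffices to bound $L(G,D)$ pointwise in a way whose expectation is controlled. Write $\Phi(D)=z(K_{n,n},D)-z(G,D)$. Here $z(K_{n,n},D)=\min(n,\sum_j D_j)$, and $z(G,\cdot)$ is concave (it is an LP value in the right-hand side). Hence on each slice $\{\sum_j D_j=t\}\cap[0,1/p]^n$, the quantity $\Phi$ is convex. Symmetrizing over a random relabeling of demand nodes (the random graph is exchangeable, and the property $\mathcal P$ is permutation invariant), the averaged excess loss is convex and symmetric. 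Its maximum on the slice is therefore attained at a vertex: coordinates in $\{0,1/p\}$ with at most one fractional coordinate. Monotonicity in $t$ then applies. For $t\ge n$ extra demand cannot increase the loss beyond the $t=n$ case, since $z(K)$ is capped and $z(G)$ is nondecreasing. For $t<n$, the loss is dominated by adding demand up to the support size $np$. Together these reduce every slice to random-support Bernoulli-type vectors of size $np$, up to an additive $O(1/p)$ from the fractional coordinate, which is absorbed into $\ell$. This yields the bound for every $\cD$ simultaneously.

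\textbf{Main obstacle.} The hard part is Step 1(ii)/(iii): proving the uniform edge-distribution property over all pairs $(A,S)$ in the configuration model with only $d\approx\log(1/\epsilon)/\log(1/q)$. At that degree, naive union bounds over $\binom{n}{|A|}\binom{n}{np}$ are too costly. I would first try to union bound only over $A$, and treat $S$ as random in the expectation, which is legitimate after the symmetrization in Step 2. I would combine this with a Chernoff bound for hypergeometric-type edge counts, using the $c_1\log\log(1/\epsilon)$ slack to beat the entropy $\log\binom{n}{\epsilon n}\approx \epsilon n\log(1/\epsilon)$.
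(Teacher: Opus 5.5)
\textbf{The gap is in Step~2 and in the ``key point'' it rests on; the uniform-over-$\mathcal{D}$ guarantee you aim for is false.} For a fixed graph $G$, the map $D\mapsto L(G,D)$ is convex on each slice $\Sigma_t$, but it is not permutation-symmetric. Its maximum over $\Sigma_t$ is therefore attained at the extreme point whose support is worst for that particular $G$, not at a uniformly random support. Symmetrizing over relabelings of the demand nodes is legitimate only in two cases:
\begin{itemize}
\item after averaging over the random graph, since the configuration model is exchangeable but a fixed realization is not;
\item when $\mathcal{D}$ is itself exchangeable.
\end{itemize}
So your argument covers only exchangeable $\mathcal{D}$, whereas $\mathfrak{B}(p)$ contains point masses and other non-exchangeable laws.

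No event depending on $G$ alone can give the bound for all $\mathcal{D}\in\mathfrak{B}(p)$ at once. Take any $d$-regular $G$ and pick $k=\lfloor nq/d\rfloor$ supply nodes $A$, so $|N(A)|\le dk\le nq$. Choose $S\subseteq[n]\setminus N(A)$ with $|S|=np$, and let $\mathcal{D}$ be the point mass at $(1/p)\mathbf{1}_S$. Then $z(K_{n,n},D)=n$ and $z(G,D)\le n-k$, so $L_{\mathcal{D}}(G)\ge\lfloor nq/d\rfloor$. This is far larger than $\epsilon n$ when $d\approx\log(1/\epsilon)/\log(1/q)$, and far larger than $\ell$ when $d\approx(1+\delta)\log n/\log(1/q)$. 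The theorem's quantifier order (for each $\mathcal{D}$, a probability over $G$) is essential. Your Step~2 would prove the false uniform version, so it breaks exactly at the symmetrization.

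\textbf{The missing idea is to fix $\mathcal{D}$ first and keep $G$ random inside the convexity argument.} Set $F_a(D)=\mathbb{E}_G[(L(G,D)-a)^+]$ and proceed as follows.
\begin{itemize}
\item Convexity plus exchangeability of $\mathcal{G}(n,n,d)$ give $F_a(D)\le F_a(D^\star(t))$ on $\Sigma_t$ (Lemma~\ref{lmm:lt-convex}).
\item Monotonicity then reduces every $D^\star(t)$ to $D^\star(n)$.
\item Tonelli gives $\mathbb{E}_{G,D}[(L(G,D)-a)^+]\le\sup_t F_a(D^\star(t))$.
\item Only then does Markov's inequality over $G$ give $\Pr_G\{L_{\mathcal{D}}(G)\ge 2a\}\le\mathbb{E}_{G,D}[(L(G,D)-a)^+]/a$.
\end{itemize}
For this ratio to be at most $1/n$, Step~1 must supply a tail bound, not a mean bound. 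For each fixed support $V$ with $|V|=np$, you need $\Pr_G\{L(G,(1/p)\mathbf{1}_V)>a\}$ to be exponentially small; Proposition~\ref{prop:flex-const} gives $3n\exp(-\delta n\epsilon\log(1/\epsilon)/8)$. A target like $\mathbb{E}_S[L(G,D^S)]\le\epsilon n$ yields only constant-probability statements.

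This computation is also where $\ell>64/\delta$ comes from in part~(b), not from a count of failure levels. With $a=\ell/2$, i.e.\ $\epsilon'=\ell/(2n)$, the Markov ratio is $6n^2\exp(-\delta\ell\log(2n/\ell)/16)/\ell\le 1/n$.

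\textbf{Your Step~1(i) sum is also not geometric in regime~(a).} There are about $nq^d$ isolated supplies, which is nearly $\epsilon n$, and every subset of them is deficient. Any union bound over deficient small $A$ therefore diverges. You must test only sets with $|A|>\epsilon n$, against the shifted target $|N(A)\cap V|\ge p(|A|-\epsilon n)$. That is what the paper's large-bottleneck, no-waste and no-congestion conditions do.
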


Theorem~\ref{thm:loss-general} shows that random regular graphs are asymptotically minimax-optimal over the class of $(1/p)$-bounded demand distributions. Indeed, up to lower-order terms, the two degree thresholds match the lower bounds \eqref{eq:lb-degree-flex} and \eqref{eq:lb-n-flex}, which are forced by the i.i.d.\ scaled Bernoulli distribution $\cD_p$ within this class. 
Thus, our result indicates that
$\cD_p$ serves as the hard case for the broader class of bounded demand distributions.  
Compared with the more elaborate flow-decomposition argument of \cite{chen2015optimal}, which gives order-wise optimality for i.i.d.\ bounded demands, our simpler convexity reduction in Section \ref{sec:flex-bound} identifies the minimax threshold with a sharp constant over all demand distributions supported on $[0,1/p]^n$, including those with
nonidentically distributed and arbitrarily correlated demand coordinates. Finally, the guarantees hold with high probability over the random graph construction, implying that a single graph drawn from the configuration model typically performs well. If a stronger practical guarantee is desired, one may draw several independent graphs, evaluate them via Monte Carlo simulation, and select the best-performing one.}
%{\color{red} YW: Just want to note that we use notation $L(G)$ for lower bound but $\E_{D\sim\cD}[L(G,D)]$ for upper bound.}

{\color{blue} The proof of Theorem \ref{thm:loss-general} proceeds in two stages. We first establish a performance guarantee for the Bernoulli-type demand vector in Section \ref{sec:flex-dp}, by identifying the connectivity properties required for small loss. We then extend this guarantee to arbitrary $(1/p)$-bounded demand vectors in Section \ref{sec:flex-bound}. The extension uses convexity and permutation symmetry to reduce the analysis on each fixed-total-demand slice to Bernoulli-type extreme points.}
%\nbr{JX. Shall we move this paragraph before Section 3.3???}
%{\color{blue}YW: I would clarify that we are studying deterministic demand realization of $\cD_p$ with randomized construction in this section.}

\subsection{Key Connectivity Properties}\label{sec:flex-dp}

This subsection develops the first stage of the proof of \prettyref{thm:loss-general}. Throughout, we fix a Bernoulli demand vector $D$ drawn from $\cD_p$, so the only randomness arises from the graph construction. Let $\supp(D)= \{j\in \nd\colon D_j =1/p\}$ denote its set of active demand nodes. The next proposition, which provides the main ingredient of our analysis, shows that when $|\supp(D)|=np$,\footnote{
For simplicity, we assume that \(np\) is an integer. The general case follows by replacing \(np\) with \(\lfloor np\rfloor\): the corresponding demand vector then differs in at most one coordinate, so the resulting upper bound on the loss increases by at most \(1/p\) (see \prettyref{lem:mono-frac}).} a random $d$-regular graph serves at least $(1-\epsilon)n$ units of demand with high probability at the desired degree threshold. 

\begin{proposition}\label{prop:flex-const}
There exist  constants $\epsilon_0,c_1 >0$ such that for any $0<\epsilon < \epsilon_0$, %\nbr{can we pin down $\epsilon_0?$} 
any $\delta\ge c_1\log\log(1/\epsilon)/\log(1/\epsilon)$, and any demand realization $D\sim\cD_p$ with $|\supp(D)| = np$, a random $d$-regular graph $G\sim \cG(n,n,d)$ with degree  
\#\label{eq:ddddelta}
d \ge \frac{(1+\delta)\log(1/\epsilon)}{\log(1/q)}
\#
satisfies $\probs{z(G,D) \ge (1-\epsilon) n}{G} > 1- c_n(\epsilon)$ with $c_n(\epsilon) \triangleq 3n\exp\left(-\delta n\epsilon\log(1/\epsilon)/8 \right)$.
\end{proposition}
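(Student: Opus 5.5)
Fix the active set $S=\supp(D)$ with $|S|=np$. Every node of $S$ has capacity-equivalent demand $1/p$. Let $T=[n]\setminus S$ be the zero-demand nodes. The first step is to use max-flow min-cut for \eqref{prob:max-flow}. Since each active demand node can absorb $1/p$ units, the maximum flow satisfies
\[
z(G,D)=\min_{A\subseteq [n]}\Bigl(n-|A| + \tfrac1p\,|N(A)\cap S|\Bigr),
\]
where $A$ ranges over supply sets and $N(A)$ is the neighborhood of $A$ in $G$. Hence $z(G,D)\ge (1-\epsilon)n$ holds as soon as every supply set $A$ with $|A|>\epsilon n$ satisfies $|N(A)\cap S|\ge p(|A|-\epsilon n)$. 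It therefore suffices to show that, with probability at least $1-c_n(\epsilon)$, no supply set $A$ of size $a=|A|\in(\epsilon n,n]$ has $|N(A)\cap S|<p(a-\epsilon n)$. I would organize the argument by set size, counting edges from $A$ into $S$ rather than neighbors wherever possible, since edge counts in the configuration model are easier to control.

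For a fixed supply set $A$ of size $a$ and a fixed demand set $B\subseteq S$ of size $b<p(a-\epsilon n)$, the bad event is that all $ad$ half-edges of $A$ land in $T\cup B$, which has $(n-np+b)d=(q+b/n)nd$ half-edges. In the configuration model this probability is at most $\binom{(qn+b)d}{ad}/\binom{nd}{ad}\le \bigl((qn+b)/n\bigr)^{ad}$. A union bound over $A$ and $B$ then gives
\[
\sum_{a>\epsilon n}\binom{n}{a}\binom{np}{p(a-\epsilon n)}\Bigl(q+\tfrac{p(a-\epsilon n)}{n}\Bigr)^{ad}.
\]
I would split this into two ranges of $a$.

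In the small range, $\epsilon n<a\le \epsilon n\log(1/\epsilon)$ or thereabouts, we have $b/n=O(\epsilon\log(1/\epsilon))$. The factor is then about $q^{ad}\bigl(1+O(\epsilon\log(1/\epsilon))\bigr)^{ad}$, and since $d=O(\log(1/\epsilon))$ the correction is harmless. By \eqref{eq:ddddelta}, $q^{d}\le \epsilon^{1+\delta}$, so the term is at most $\exp\bigl(a\log(e/\epsilon)+b\log(e p n/b)-(1+\delta)a\log(1/\epsilon)\bigr)$. The entropy of $A$ is cancelled by $a\log(1/\epsilon)$, leaving a margin of about $\delta a\log(1/\epsilon)$. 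That margin must beat the entropy of $B$, which is roughly $b\log(1/\epsilon)$, together with the $a\cdot O(1)$ terms. The condition $\delta\ge c_1\log\log(1/\epsilon)/\log(1/\epsilon)$ is exactly what is needed to absorb the $a\log e$ and $\log\log$ factors. The resulting bound is $\exp(-\delta a\log(1/\epsilon)/8)\le\exp(-\delta\epsilon n\log(1/\epsilon)/8)$ per $a$. Summing over $a\le n$ gives the factor $n$, and the constant $3$ collects the three ranges.

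In the large range, $a\ge \epsilon n\log(1/\epsilon)$ up to $a=n$, the $B$-entropy may dominate when $b\approx pa$. Here I would switch to an expansion or edge-discrepancy argument. The expected number of edges from $A$ into $S$ is $pad$, and a Chernoff-type bound in the configuration model (via the hypergeometric distribution) shows that $A$ sends about $pad$ half-edges into $S$ except with probability $e^{-\Omega(ad)}$. This is enough to dominate $\binom{n}{a}$ because $d\gg\log(1/\epsilon)$. Even if these edges concentrate on a set $B$, every node in $B$ has degree $d$, so $|B|\ge$ (number of edges)$/d$. That gives a deficiency of order $pa$, which is not quite enough. For $a$ close to $n$ I would instead apply the symmetric argument to the complement: deficiency of $A$ is equivalent to $S\setminus N(A)$ being a demand set of size $\ge np-b$ whose neighbors avoid $A$, so the demand-side set sends all of its $d$ half-edges into $[n]\setminus A$. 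For that complementary set I would run the same union bound with the roles swapped.

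The main obstacle is the middle range, where $a$ is a constant fraction of $n$ and $b$ is close to $p(a-\epsilon n)$. There the required slack is only $p\epsilon n$ out of $pa$, so the naive product bound gives an exponent of order $ad\cdot p\epsilon n/(qn)$ against entropy $O(n)$. I expect this to require the Hall-type duality argument of the previous paragraph, choosing the smaller side, together with $d\ge\log(1/\epsilon)$ to make the bound $\binom{n}{a}\binom{np}{b}\bigl(\cdot\bigr)^{ad}$ summable. Getting the exact constant $\delta/8$ will require careful bookkeeping in this step.
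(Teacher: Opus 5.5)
Your max-flow min-cut reduction is correct, but the small-set estimate does not close, and that is exactly where the proposition is delicate.

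\textbf{Where the union bound fails.} In your small range, the union over $B\subseteq S$ with $b\approx p(a-\epsilon n)$ costs $\binom{np}{b}\approx\exp\bigl(p(a-\epsilon n)\log(1/\epsilon)\bigr)$. The degree condition \eqref{eq:ddddelta} leaves only a margin $\exp\bigl(-\delta a\log(1/\epsilon)+O(a)\bigr)$. Since $\delta$ may be as small as $c_1\log\log(1/\epsilon)/\log(1/\epsilon)\to 0$, the margin wins only when $a-\epsilon n\lesssim \delta a/p$. Already at $a=2\epsilon n$, the logarithm of your union-bound term is about $(p-2\delta)\epsilon n\log(1/\epsilon)>0$.

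The hypothesis on $\delta$ cannot rescue this. The $B$-entropy is $\log(1/\epsilon)$ per element, and no $\log\log(1/\epsilon)$ factor appears anywhere in your bound.

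Passing to the complementary demand set does not help either. The event $\{N(A)\cap S\subseteq B\}=\{e(A,S\setminus B)=0\}$ is the same event, indexed by the same pairs, with the same entropy. Even with the exact probability $\approx q^{ad}$, each union-bound term exceeds $1$ for all $a$ from roughly $(1+2\delta/p)\epsilon n$ up to roughly $\epsilon^{1/(1+p)}n$. This range includes the lower part of your ``large range'', where your degree-counting bound is also too weak.

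So the genuinely hard range is $\epsilon n<a\le \epsilon^{c}n$ for a small constant $c$, not constant fractions of $n$. For $a>\epsilon^c n$, the demand-side bound $\Pr\{e(A,S\setminus B)=0\}\le (1-a/n)^{|S\setminus B|d}$ already suffices, because $d\to\infty$. That is all the paper uses there (Lemma~\ref{lmm:flex-reg1}).

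\textbf{The missing idea.} You need to stop paying for a demand-side set when $A$ is small. For $\epsilon n<|A|\le \epsilon^c n$, the paper splits the bad event into two cases:
\begin{enumerate}
\item[(i)] $e(A,S)\le\gamma|A|$ for a large constant $\gamma$ (``waste'');
\item[(ii)] some $B\subseteq S$ with $|B|=p(|A|-\epsilon n)$ has $e(A,B)>\gamma|A|$ (``congestion'').
\end{enumerate}
If both fail, then $|N(A)\cap S|\ge p(|A|-\epsilon n)$. Otherwise, any $B\supseteq N(A)\cap S$ of that size would satisfy $e(A,B)=e(A,S)>\gamma|A|$.

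Event (i) is controlled by a hypergeometric Chernoff bound on $e(A,S)$, with a union over $A$ alone (Lemma~\ref{lmm:flex-reg2}). Its only combinatorial cost is choosing which $\approx\gamma|A|$ of the $d|A|$ half-edges reach $S$. That is $O(\gamma\log d)=O(\log\log(1/\epsilon))$ per supply node, and this is precisely what $\delta\ge c_1\log\log(1/\epsilon)/\log(1/\epsilon)$ pays for, with $c_1\propto\gamma$.

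In event (ii), the union over $B$ becomes affordable because both sets are tiny. Here $e(A,B)$ has mean $d|A||B|/n\le pd|A|^2/n\ll |A|$. Forcing $\gamma|A|$ edges therefore costs $\exp\bigl(-\Omega(\gamma|A|\log(n/|A|))\bigr)$, which beats the entropy $\binom{n}{|A|}\binom{np}{|B|}=\exp\bigl(O(|A|\log(n/|A|))\bigr)$ for large $\gamma$ (Lemma~\ref{lmm:flex-reg3}).

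Your bound $|B|\ge e(A,B)/d$ is the trivial substitute for this no-congestion property. As you observed yourself, it is too weak.
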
 
{\color{blue} 
We prove Proposition~\ref{prop:flex-const} by identifying three key connectivity properties that collectively rule out all potential bottlenecks. Our main technical novelty lies in uncovering a set of connectivity properties that both hold for random regular graphs and provide the key tools for establishing the optimality of these graph designs.}
%to provide properties of random $d$-regular graphs. %\blue{[Maybe how we prove the conditions is more important? Like novel analysis to deal with bottleneck cases, especially for small demand set. Do we want to put more details like their union bound over the demand subsets? [Tho maybe it's too much and hard to understand.]]}
%In addition to \prettyref{thm:flex-const}, \prettyref{prop:flex-const} is also used in the proof of \prettyref{thm:flex-1-n} as we can take $\epsilon$ to be $1/n$ to bound the loss when demand is sufficiently high. However, when we take $\epsilon$ to be $1/n$, then the probability of $|\supp(D)| \ge (1-\epsilon/2)np$ no longer converges to 1 as $n$ becomes large. Thus, we also need to bound the loss when $|\supp(D)|$ is low, which we formally do in Section XX.
% Key technical lemma of the proposition
%Turning our attention back to establishing \prettyref{prop:flex-const}, we present the following key technical lemma, which establishes three connectivity conditions for random $d$-regular networks. 
For node sets $U$ and $W$, let $e(U,W)$ be the number of edges between them (counting edge multiplicities).  
\begin{lemma}\label{lmm:epsilon_connectivity}
Under the same conditions as in \prettyref{prop:flex-const}, let $V = \supp(D)$ with $|V| = np$. Then there exist constants $\gamma>0$ and $c\in(0,1)$ such that
\begin{align}
&\probs{\exists S \subset [n] \text{ with } |S| > \epsilon^c n\colon |N(S)\cap V| < p(|S|- \epsilon n)}{G} \le  \zeta, %n\exp\big(- {\tfrac{p}{4\log(1/q)}n\epsilon}\log^2(1/{\epsilon})\big), 
\label{eq:flex-d-large-s-frac} \\
&\probs{\exists S \subset [n] \text{ with } |S| \ge  \epsilon n \colon e(S,V) \le \gamma |S|}{G} 
\le  \zeta,
\label{eq:no_waste} \\
&\probs{\exists S \subset [n], T\subset V \text{ with } \epsilon n \le |S| \le \epsilon^c n, |T| = p(|S|- \epsilon n) \colon e(S,T) > \gamma |S|}{G} \le  \zeta,
%& \qquad \le n \exp(- {4 n \epsilon} \log ({1}/{\epsilon})), 
\label{eq:flex-d-no-cong} 
\end{align}
where $\zeta \triangleq n\exp (- \delta n \epsilon\log(1/\epsilon)/8)$.
\end{lemma}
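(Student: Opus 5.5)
All three bounds will come from a first-moment (union-bound) argument over the configuration model. Each probability is estimated by fixing the sets involved, bounding the probability that the random pairing of half-edges produces the bad event, and summing over all choices of sets. The sums are controlled with $\binom{n}{k}\le (en/k)^k$ and the degree assumption $q^d\le \epsilon^{1+\delta}$, i.e.\ $d\log(1/q)\ge(1+\delta)\log(1/\epsilon)$.

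\textbf{Bound \eqref{eq:flex-d-large-s-frac}.} Fix $|S|=s>\epsilon^c n$ and let $W\subset V$ with $|W|=|V|-p(s-\epsilon n)$ contain $V\setminus N(S)$; the bad event is that all $ds$ half-edges of $S$ avoid $W$. In the configuration model this probability is at most $(1-|W|/n)^{ds}$. Here $1-|W|/n = q+p(s-\epsilon n)/n$. I would write this quantity as $(1-p(1-s/n+\epsilon))$, which is at most $q\cdot\exp(\cdot)$-type. When $s$ is close to $n$ it is close to $1$, so the complementary viewpoint is better: the bad event forces $T=[n]\setminus S$, of size $t=n-s$, to absorb many edges from $W$. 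I will therefore handle $s\ge n/2$ by symmetry, applying Hall's condition on the demand side: $V$-sets of size $k$ must have at least $k/p$-scaled neighbors, with the roles of the sides swapped. For $\epsilon^c n<s\le n/2$, the count $\binom{n}{s}\binom{np}{|W|}$ is at most $\exp(O(s\log(1/\epsilon)))$, while $(1-|W|/n)^{ds}\le (q+ps/n)^{ds}$ decays like $\exp(-c'ds)$ with $d\gg\log(1/\epsilon)$. The $\epsilon n$ slack supplies the extra factor needed to reach $\zeta$.

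\textbf{Bound \eqref{eq:no_waste}.} Fix $|S|=s\ge\epsilon n$. The number $e(S,V)$ is hypergeometric-like, with mean $dsp$. A Chernoff bound for sampling without replacement gives
\[
\Pr\{e(S,V)\le\gamma s\}\le \binom{ds}{\gamma s}q^{ds-\gamma s}\le \left(\frac{e d}{\gamma}\right)^{\gamma s} q^{ds(1-\gamma/d)}.
\]
Multiplying by $\binom{n}{s}\le(en/s)^s\le(e/\epsilon)^s$ and using $q^d\le\epsilon^{1+\delta}$, each term is at most $\exp\bigl(-s[\delta\log(1/\epsilon)-\gamma\log(ed/\gamma)-1-O(\gamma\log(1/\epsilon))]\bigr)$. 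Choosing $\gamma$ a small constant and $\delta\ge c_1\log\log(1/\epsilon)/\log(1/\epsilon)$ (the $\log d$ term is $\log\log(1/\epsilon)$) gives at least $s\delta\log(1/\epsilon)/4\ge n\epsilon\delta\log(1/\epsilon)/4$ in the exponent. Summing over at most $n$ values of $s$ yields $\zeta$.

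\textbf{Bound \eqref{eq:flex-d-no-cong}.} Fix $S,T$ with $|T|=t=p(s-\epsilon n)\le ps\le p\epsilon^c n$. Then $\Pr\{e(S,T)>\gamma s\}\le\binom{ds}{\gamma s}(t/n)^{\gamma s}\le (edt/(\gamma n))^{\gamma s}\le (C d\epsilon^c)^{\gamma s}$. The number of choices is $\binom{n}{s}\binom{np}{t}\le(e/\epsilon)^{2s}$. So I need $\gamma c\log(1/\epsilon)-\gamma\log(Cd)>2\log(e/\epsilon)+\delta\log(1/\epsilon)/8$, which fails for small $\gamma$. This is the main obstacle: the constants must be coupled. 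I would instead use the sharper ratio $t/s\approx p(1-\epsilon n/s)$ and refine the entropy count by noting that $e(S,T)>\gamma s$ requires a $\gamma s$-subset of the $ds$ half-edges. The fix I would try first is to let $\gamma$ be comparable to the needed $2/c$-type constant, accept $d$ absorbing it, and pick $c$ small so that the relevant scale $s\le\epsilon^c n$ makes $(en/s)$ small relative to $(n/t)^{\gamma}$. Concretely, I would bound $\binom{n}{s}\binom{np}{t}\le (en/s)^{2s}$ and $(edt/(\gamma n))^{\gamma s}\le(eds/(\gamma n))^{\gamma s}$. The term then becomes $\bigl[(en/s)^{2-\gamma}(ed/\gamma)^{\gamma}\bigr]^s$. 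Taking $\gamma=3$ makes it $\bigl[(s/n)(e d/3)^3 e^{-1}\bigr]^s$, and since $s/n\le\epsilon^c$ with $d=O(\log 1/\epsilon)$, this is tiny. This $\gamma$ must also satisfy \eqref{eq:no_waste}, and there $\gamma=3\ll d$ is fine, since $\gamma\log(ed/\gamma)=O(\log\log(1/\epsilon))$ is again absorbed by the choice of $c_1$. So I would commit to a fixed $\gamma$ such as $3$ and reverify \eqref{eq:no_waste} with it.
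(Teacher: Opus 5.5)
\textbf{The gap is in \eqref{eq:flex-d-large-s-frac}.}

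For $\epsilon^c n<s\le n/2$ your supply-side bound can be made to work, but not for the reason you give. The degree $d$ may be as small as $(1+\delta)\log(1/\epsilon)/\log(1/q)=\Theta(\log(1/\epsilon))$, so you cannot assume $d\gg\log(1/\epsilon)$. The entropy is at most $\exp\{(1+p)s(1+c\log(1/\epsilon))\}$. The probability is at most $\bigl(\tfrac{1+q}{2}\bigr)^{ds}\le\exp\{-s\log(1/\epsilon)\log\tfrac{2}{1+q}/\log\tfrac1q\}$. So the estimate closes only after you fix $c$ small in terms of $p,q$, roughly $(1+p)c<\log\tfrac{2}{1+q}/\log\tfrac1q$, and you never pin $c$ down. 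The $\epsilon n$ slack plays no role in this range.

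The real difficulty is $s\ge n/2$. There the supply-side factor $(q+p(s/n-\epsilon))^{ds}$ tends to $1$, and ``by symmetry, Hall's condition on the demand side'' is only a pointer. The dual demand-side cut is the right idea, though there is no literal symmetry (capacities $1$ versus $1/p$, $n$ versus $np$ active nodes). The resulting union bound is the substance of this step and has to be carried out; it is the paper's Lemma~\ref{lmm:flex-reg1}. It goes as follows:
\begin{itemize}
\item A bottleneck $S$ gives $T\subset V\setminus N(S)$ with $|T|=p(n-s+\epsilon n)$, all of whose $|T|d$ half-edges land in $S^c$. By demand-side regularity this has probability at most $(1-s/n)^{|T|d}$.
\item Union bound over $S$ and $T$. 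With $t\triangleq|T|/|V|\ge 1-s/n$, write $t\log\frac{1}{1-s/n}\ge\frac t2\log\frac1t+\frac{1-s/n}{2}\log\frac{1}{1-s/n}$. The two halves absorb the entropies $p\,h(t)$ and $h(s/n)$, leaving $\exp\{-\frac{dpn}{4}t\log\frac1t\}$.
\item The $\epsilon n$ slack keeps $t\ge\epsilon$ even when $S^c$ is tiny. Summing over $s$ then gives at most $n\exp\{-\frac{dp}{4}n\epsilon\log\frac1\epsilon\}\le\zeta$.
\end{itemize}
The paper runs this demand-side bound uniformly over all $s\ge\epsilon^c n$, which is where its requirement $c\le p/[32\log(1/q)]$ comes from. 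It therefore needs no supply-side estimate at all.

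\textbf{Bounds \eqref{eq:no_waste} and \eqref{eq:flex-d-no-cong}.}

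Your arguments here are correct, and for the congestion bound they are sharper than the paper's. The paper passes through the KL form of the hypergeometric tail and uses $\beta\log\beta+(1-\beta)\log(1-\beta)\ge-\log 2$. That costs $d\log 2\asymp\log(1/\epsilon)$ per vertex of $S$ and forces $\gamma\gtrsim 1/(c\log(1/q))$. Your direct count $\binom{ds}{\gamma s}(t/n)^{\gamma s}$ pays only $\gamma\log(ed/\gamma)=O(\log\log(1/\epsilon))$ per vertex. So an absolute $\gamma=3$ suffices, and \eqref{eq:no_waste} then needs $c_1$ only larger than a fixed multiple of $\gamma$.

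Two bookkeeping fixes are needed:
\begin{itemize}
\item In \eqref{eq:no_waste}, the factor $q^{-\gamma s}$ contributes $\gamma s\log(1/q)$, not $O(\gamma s\log(1/\epsilon))$. The latter would be fatal, since $\delta\log(1/\epsilon)$ may be only $c_1\log\log(1/\epsilon)$.
\item In \eqref{eq:flex-d-no-cong}, you still owe the comparison with $\zeta$. Replacing $s/n$ by $\epsilon^c$ gives an exponent of about $c\,\epsilon n\log(1/\epsilon)$, which beats $\delta\epsilon n\log(1/\epsilon)/8$ only when $\delta<8c$. Instead, keep the exponent as $(1-o(1))\,s\log(n/s)$ and use $s\log(n/s)\ge\epsilon n\log(1/\epsilon)$ on $[\epsilon n,\epsilon^c n]$.
\end{itemize}
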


%explain the intuitions of the conditions stated in the lemma.
We defer the formal proof of \prettyref{lmm:epsilon_connectivity} to Section \ref{sect:app-process} and focus here on the key intuitions for why it is helpful. By the max-flow min-cut theorem, for any graph $G$, the total fulfilled demand, $z(G, D)$, can be written as
\#\label{eq:mfmc-zg-s}
z(G, D) = \min_{S\subset [n]} \left\{ n-|S| + |N(S)\cap \supp(D)|/p \right\},
\#
where $N(S)$ is the set of neighbors of a node set $S$ in $G$. 
For the full flexibility network $K_{n,n}$, 
\$
z(K_{n,n}, D) = \min\left\{ n, |\supp(D)|/p\right\}.
\$
Let $V = \supp(D)$. Because $|V| = np$, full flexibility serves $n$ units. Hence, $L(G,D) \le \epsilon n$ is equivalent to $z(G,D) \ge (1-\epsilon)n$. By the max-flow min-cut formulation in \eqref{eq:mfmc-zg-s}, this is equivalent to
\$
|N(S)\cap V| \ge p(|S|-\epsilon n), \qquad \forall S\subset [n].
\$ 
 As a result, any
$S$ for which $|N(S)\cap V| < p(|S|- \epsilon n)$ forms a \emph{bottleneck}, violating the desired condition.  
Figure \ref{fig:bottle} illustrates examples of both large and small bottlenecks.
The first connectivity condition~\eqref{eq:flex-d-large-s-frac} ensures that the probability of having a large bottleneck ($|S| > \epsilon^c n$) is exponentially small. %This condition follows by upper-bounding the probability that $e(S,T)=0$ for any demand subset $T$ of size $|V| - |S||V|/n$ and then applying a union bound. 
Note that if $S$ is a bottleneck, then there is a demand set $T\subset V$ with $|T| = |V| - p(|S|- \epsilon n)$
such that $e(S,T)=0$ (see Figure \ref{fig:bottle} (a)). %there are no edges between $S$ and $T$, 
In the proof, we establish \eqref{eq:flex-d-large-s-frac} by upper-bounding the probability of this event via a union bound over all possible choices of $S$ and $T$.
%that $e(S,T)=0$ for $S$ and $T$ with $|S| > \epsilon^c n$ and $|T| = |V| - p(|S|- \epsilon n)$. 

%[Because $S$ being a bottleneck is equivalent to the existence of a demand subset $T\subset V$ with $|T| = |V| - p(|S|- \epsilon n)$ such that there are no edges between $S$ and $T$, \eqref{eq:flex-d-large-s-frac} follows by upper-bounding the probability that $e(S,T)=0$ for $S$ and $T$ with $|S| > \epsilon^c n$ and $|T| = |V| - p(|S|- \epsilon n)$. and then applying a union bound. 
%One may think of $T$ as the complement of $N(S)$ in $V$. ]
%\blue{[maybe by using the degree regularity on the demand side and a direct union bound over all possible choices of supply subsets $S$ and demand subsets $N(S)\cap V$.]}
%\blue{[The main challenge and our novelty lie in the small-$S$ regime.]} 

%\nbr{Can we include plots, one for potential large bottleneck and one for potential small bottleneck? This might help reader to understand the explanations here in words}

\begin{figure}[ht]
    \centering
\begin{tikzpicture}[
    every node/.style={inner sep=1pt, minimum size=5mm, font=\scriptsize,thick},
    supplyNode/.style={rectangle, draw=red!70, fill=red!30},
    demandNode/.style={circle, draw=blue!70, fill=blue!30},
    ghostNode/.style={circle, draw=gray!70, densely dashed},
    isolateNode/.style={rectangle, draw=red!70},
    isolateNode1/.style={circle, draw=blue!70},
    solidEdge/.style={thick, blue!60!red!60},
    ghostEdge/.style={thick, gray!70, densely dashed},
]
\begin{scope}[xshift=0cm, yshift = 0cm]
% Left set S
\node at (-2.55,-0.04) {\small $S$};

% Titles
  \node at (-1.7,1.6) {Supply};
  \node at (1.7,1.6)  {Demand};

  % Supply nodes (squares)
  \node[supplyNode] (s1) at (-1.7, 1.0) {};
  \node[supplyNode] (s2) at (-1.7, 0.4) {};
  \node[supplyNode] (s3) at (-1.7,-0.2) {};
  \node[supplyNode] (s4) at (-1.7,-0.8) {};
  \node[supplyNode] (s5) at (-1.7,-2.0) {};
  \node[supplyNode] (s6) at (-1.7,-2.6) {};

  % Vertical dots for more supply
  \node at (-1.7,-1.3) {\Large$\vdots$};

  % Demand nodes (solid circles)
  \node[demandNode] (d1) at ( 1.7, 1.0) {};
  \node[demandNode] (d2) at ( 1.7, 0.4) {};

  \node[demandNode] (d4) at ( 1.7, -0.8) {};

\node at (1.7,-0.1) {\Large$\vdots$};
\node at (1.7,-1.9) {\Large$\vdots$};
  % Ghost demand nodes (dashed circles)
  \node[ghostNode] (g1) at ( 1.7,-1.4) {};
\node[ghostNode] (g2) at ( 1.7, -2.6) {};

\draw[rounded corners,thick] (-2.1,1.35) rectangle (-1.3,-1.48);

% Right set T (dashed box)
\draw[rounded corners,thick] (1.3,1.35) rectangle (2.1,0.05);
\node at (0.85,0.7) {\small $T$};
\draw[decorate,thick,decoration={brace,amplitude=5pt}]
  (2.3,1.35) -- (2.3,-1.10)
  node[midway,xshift=1cm] {\small $\supp(D)$};

% Multiple edges from the bottom left square to the bottleneck node
\draw[solidEdge] (-1.45,-1.6)-- (d1);
\draw[solidEdge] (-1.45,-2.0)-- (d1);
\draw[solidEdge] (-1.45,-2.6)-- (d1);
\draw[solidEdge] (-1.45,-1.6)-- (d2);
\draw[solidEdge] (-1.45,-2.0)-- (d2);
\draw[solidEdge] (-1.45,-2.6)-- (d2);

% Caption
\node at (0.0, -3.5) {\small (a) Large bottleneck $S$};
\end{scope}

\begin{scope}[xshift=8cm, yshift = 0cm]
% Left set S
\node at (-2.55,-0.04) {\small $S$};
\draw[rounded corners,thick] (-2.1,0.7) rectangle (-1.3,-0.5);
% Titles
  \node at (-1.7,1.6) {Supply};
  \node at (1.7,1.6)  {Demand};

  % Supply nodes (squares)
  \node[supplyNode] (ss1) at (-1.7, 1.0) {};
  \node[supplyNode] (ss2) at (-1.7, 0.4) {};
  \node[supplyNode] (ss3) at (-1.7,-0.2) {};
  \node[supplyNode] (ss4) at (-1.7,-0.8) {};
  \node[supplyNode] (ss5) at (-1.7,-2.0) {};
  \node[supplyNode] (ss6) at (-1.7,-2.6) {};

  % Vertical dots for more supply
  \node at (-1.7,-1.3) {\Large$\vdots$};

 % Demand nodes (solid circles)
  \node[demandNode] (dd1) at ( 1.7, 1.0) {};
  \node[demandNode] (dd2) at ( 1.7, 0.4) {};

  \node[demandNode] (dd4) at ( 1.7, -0.8) {};

\node at (1.7,-0.1) {\Large$\vdots$};
\node at (1.7,-1.9) {\Large$\vdots$};
  % Ghost demand nodes (dashed circles)
  \node[ghostNode] (gg1) at ( 1.7,-1.4) {};
\node[ghostNode] (gg2) at ( 1.7, -2.6) {};

% Multiple edges from the bottom left square to the bottleneck node
\draw[solidEdge] (1.45,-1.4)-- (ss2);
\draw[solidEdge] (dd2)-- (ss2);
\draw[solidEdge] (1.45,-2.6)-- (ss2);
\draw[solidEdge] (1.45,-1.4)-- (ss3);
\draw[solidEdge] (1.45,-2.0)-- (ss3);
\draw[solidEdge] (1.45,-2.6)-- (ss3);
\draw[decorate,thick,decoration={brace,amplitude=5pt}]
  (2.3,1.35) -- (2.3,-1.10)
  node[midway,xshift=1cm] {\small $\supp(D)$};

% Caption
\node at (0.0, -3.5) {\small (b) Small bottleneck $S$};
\end{scope}

\end{tikzpicture}
\caption{Illustration of bottlenecks. The blue-colored nodes are demand nodes with positive demand; together they form $\supp(D)$. For clarity, panel (a) shows only edges incident to $T$, while panel (b) shows only edges incident to $S$. Panel (a) illustrates a large bottleneck $S$: there exists  $T\subset \supp(D)$ of a certain size whose $|T|d$ edges all connect to $S^c$.  
Such an event is highly unlikely, so large bottlenecks rarely occur.
Panel (b) shows a small bottleneck $S$, which has too few edges connecting to $\supp(D)$. However, since $\supp(D)$ contains a constant fraction of demand nodes (roughly $pn$), it is unlikely that most of the $|S|d$ edges of $S$ fall in $\supp(D)^c$, thereby preventing small bottlenecks as well.}
    \label{fig:bottle}
\end{figure}

Deriving a similar connectivity condition for small $S$, with $\epsilon n \le |S| \le \epsilon^c n$, is much more delicate.
The difficulty arises because when a small $S$ has few connections to $V$, the events $\{e(S, T) = 0\}$  are highly correlated across different demand sets $T\subset V$.
For instance, if $e(S,V)=0$, then $e(S,T)=0$ for all $T \subset V.$ This correlation leads to a severe overestimation of the probability if one directly applies a union bound over all $T$.
To address this challenge, we establish two complementary conditions for small $S$, stated as \eqref{eq:no_waste} and \eqref{eq:flex-d-no-cong} in \prettyref{lmm:epsilon_connectivity}. 
Specifically, \eqref{eq:no_waste} ensures that the probability of a small $S$ having too few edges to $V$ is low (see Figure \ref{fig:bottle} (b)), thereby preventing the ``waste'' of supply nodes, while \eqref{eq:flex-d-no-cong} bounds the probability of excessive edge concentration between any small $S$ and $T$, thus preventing ``congestion''. Together, these two properties guarantee that small bottlenecks occur with vanishing probability, as shown in the proof of \prettyref{prop:flex-const}.

In particular, the degree regularity of the random $d$-regular graph is crucial for proving \prettyref{lmm:epsilon_connectivity}. The regularity of demand nodes ensures that every demand subset receives sufficiently many edges from the supply side, which is essential for \eqref{eq:flex-d-large-s-frac}. Likewise, the regularity of supply nodes ensures that every supply subset has sufficiently many edges to the demand side, which is crucial for \eqref{eq:no_waste} and \eqref{eq:flex-d-no-cong}.
%The regular degree of supply nodes is essential in the proof of \eqref{eq:no_waste}, \blue{[Intuitions on why regularity is crucial, and why the sharp constant is sufficient?]} and the joint regularity of both supply and demand nodes facilitates the proof of \eqref{eq:flex-d-large-s-frac}.  
Specifically, \eqref{eq:no_waste} guarantees that the number of isolated supplies is at most $\epsilon n$. When $d$ is close to the threshold $\log(1/\epsilon)/\log(1/q)$, this bound can be achieved only by regular graphs.
%relies critically on the sharp condition on $d$.}
%This property generalizes the notion of isolated supplies and yields thesharp condition on $d$ given by \eqref{eq:d-const-prop}. \nb{YW:not clear what this sentence means.}

With the connectivity properties in \prettyref{lmm:epsilon_connectivity}, we now complete the proof of \prettyref{prop:flex-const}. %by applying the max-flow–min-cut theorem and showing that both large and small bottlenecks occur only with exponentially small probability.

\begin{proof}[Proof of Proposition \ref{prop:flex-const}]
Let $V = \supp(D) $ with $|V| = np$.
By the min-cut formulation of $z(G,D)$ in \eqref{eq:mfmc-zg-s}, the event of small loss $\{n - z(G,D) \le \epsilon n\}$ is equivalent to ensuring 
\#\label{eq:edge-expansion}
\forall S \subset [n]\colon |N(S)\cap V| \ge p(|S|- \epsilon n).
\#
If $|S| \le \epsilon n,$
then $|N(S)\cap V|\ge 0 \ge p(|S|- \epsilon n)$ trivially holds. Hence, it suffices to show that random $d$-regular graphs satisfy \eqref{eq:edge-expansion} for all $S$ with size $|S| > \epsilon n$ with high probability. We separately consider regimes where $\epsilon n< |S| \le \epsilon^c n$ and $|S| > \epsilon^c n$, for the constant $c< 1$ given by Lemma \ref{lmm:epsilon_connectivity}.

\paragraph{Small $S$: $\epsilon n< |S| \le \epsilon^c n$.}
\iffalse
For any supply subset $S$, the bad event that its demand neighborhood is too small, $\{|N(S)\cap V| < p(|S|- \epsilon n)\}$, implies one of two cases: either the total number of edges from $S$ to $V$ is small, or many edges edges go to $V$ but are concentrated on a small subset of $V$. For any set $U \subset [n]$ and $W \subset [n]$, let $e(U,W)$ denote the number of edges between $U$ and  $W$ (counting edge multiplicities). Formally, for any threshold $\gamma>0$, we have 
\$ 
\{|N(S)\cap V| < p(|S|- \epsilon n)\} & \subset \{ e(S,V) \le \gamma|S|\}  \\
& \qquad \cup \{\exists T\subset V \text{ with } |T| = p(|S|- \epsilon n) \colon e(S,T) > \gamma|S|\}.
\$
\fi
Fix the threshold $\gamma>0$ as given by Lemma \ref{lmm:epsilon_connectivity}.
For any $S$, consider the bad event that 
%its demand neighborhood is too small, 
$|N(S)\cap V| < p(|S|- \epsilon n)$. This event implies that there exists $T\subset V$ with $|T|=p(|S|- \epsilon n)$ and $e(S,T) = e(S,V)$. In turn, this implies either the total number of edges from $S$ to $V$ is small, \ie, $e(S,V) \le \gamma|S|$, or there exists a subset $T$ for which $e(S,T) = e(S,V)>\gamma |S|$. 
Formally,  we have 
\$ 
\{|N(S)\cap V| < p(|S|- \epsilon n)\} & \subset \{ e(S,V) \le \gamma|S|\}  \cup \{\exists T\subset V \text{ with } |T| = p(|S|- \epsilon n) \colon e(S,T) > \gamma|S|\}.
\$
Therefore, by first showing that no supply nodes are wasted via \eqref{eq:no_waste}, and then showing that no edge congestion occurs between small sets via \eqref{eq:flex-d-no-cong}, both with high probability, we have
\$ 
\probs{\exists S \subset [n] \text{ with } \epsilon n \le |S| \le \epsilon^c n: |N(S)\cap V| < p(|S|- \epsilon n)}{G} \le 2\zeta.
\$
\paragraph{Large $S$: $\epsilon^c n < |S| \le n$.} The probability that a large-$S$ bottleneck exists is no greater than $\zeta$, as given by 
\eqref{eq:flex-d-large-s-frac} in Lemma \ref{lmm:epsilon_connectivity}. Combining it with the above small-$S$ result, we conclude that 
\begin{equation*}
\probs{z(G,D) < (1-\epsilon) n}{G} = \probs{\exists S \subset [n]\colon |N(S)\cap V| < p(|S|-\epsilon n)}{G} 
\le 3\zeta. \qedhere
\end{equation*}
\end{proof}

\subsection{\blue{Proof of Theorem~\ref{thm:loss-general}}}
\label{sec:flex-bound}

We now extend the Bernoulli performance guarantee in Proposition~\ref{prop:flex-const} to arbitrary $(1/p)$-bounded demand distributions. The extension relies on two observations. First, given the total demand, an appropriate excess-loss function is convex and symmetric in the demand coordinates. Its maximum is thus attained at a Bernoulli-type extreme point. Second, monotonicity properties allow us to compare every such extreme point with demand vectors of support size $np$ considered in Proposition~\ref{prop:flex-const}.

To formalize the first observation, we fix a total demand and consider the fixed-total-demand slice: 
\$
\Sigma_t = \left\{x\in [0, 1/p]^n\colon \sum_{j\in[n]}x_j = t\right\}, \qquad \forall t\in[0, n/p].
\$ 
Let $k(t)\triangleq \left\lfloor tp\right\rfloor$ and $r(t)\triangleq t-k(t)/p\in[0,1/p)$. Let $e_j \in \R^n$ denote the $j$-th standard basis vector, and $\mathbf{1}_{[k]} \triangleq \sum_{j=1}^k e_j$ denote the vector whose first $k$ entries are one and whose remaining entries are zero.
Define the canonical extreme point of $\Sigma_t$ by
\#\label{eq:extreme-dp}
D^\star(t)\triangleq
\begin{cases}
\mathbf{1}_{[k(t)]}/p +r(t)e_{k(t)+1}, & k(t)<n,\\
\mathbf{1}_{[n]}/p, & k(t)=n.
\end{cases}
\#
Thus, $D^\star(t)$ has entries in $\{0,1/p\}$ except for at most one fractional coordinate. 
%{\color{red} yw: Transition to the next sentence is not clear.} Theorem~\ref{thm:loss-general} establishes a high-probability guarantee for the expected loss under an arbitrary demand distribution. To this end, 
Next, we work with the positive-part excess loss. For $a>0$, define
\begin{equation}
F_a(D)
\triangleq
\E_{G\sim\cG(n,n,d)}
\left[(L(G,D)-a)^+\right],
\qquad
u^+\triangleq\max\{u,0\}.
\label{eq:f_ta}
\end{equation}
The excess loss preserves the convexity used below and will subsequently allow us to obtain a tail bound through Markov's inequality. The following lemma formalizes the convexity reduction.

\begin{lemma}[Convexity reduction]
\label{lmm:lt-convex}
For every $t\in[0,n/p]$ and $a>0$, the function $F_a$ is convex on $\Sigma_t$. Moreover, for every $D\in\Sigma_t$,
\[
F_a(D)\le F_a(D^\star(t)).
\]
\end{lemma}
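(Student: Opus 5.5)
Within the slice $\Sigma_t$ the total $\sum_j D_j=t$ is fixed, so $z(K_{n,n},D)=\min\{n,t\}$ is constant there. This gives $L(G,D)=\min\{n,t\}-z(G,D)$ on $\Sigma_t$. The plan is to show this is convex in $D$ for each fixed $G$, pass convexity through $(\cdot-a)^+$ and the expectation over $G$, and then use permutation symmetry of the configuration-model law to push any $D$ toward $D^\star(t)$.

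\textbf{Step 1: $z(G,\cdot)$ is concave.} For fixed $G$, $z(G,D)$ is the optimal value of the maximization LP \eqref{prob:max-flow}, where $D$ enters only the right-hand side. If $x$ is optimal for $D$ and $x'$ is optimal for $D'$, then $\lambda x+(1-\lambda)x'$ is feasible for $\lambda D+(1-\lambda)D'$. Hence $z(G,\cdot)$ is concave on $[0,1/p]^n$. (Equivalently, \eqref{eq:mfmc-zg-s} in its general weighted form $z(G,D)=\min_S\{n-|S|+\sum_{j\in N(S)}D_j\}$ is a minimum of affine functions.) Therefore $L(G,\cdot)$ is convex on $\Sigma_t$. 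Since $u\mapsto(u-a)^+$ is convex and nondecreasing, $(L(G,\cdot)-a)^+$ is convex. Averaging over $G\sim\cG(n,n,d)$ preserves convexity, so $F_a$ is convex on $\Sigma_t$.

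\textbf{Step 2: $F_a$ is symmetric.} Take a permutation $\pi$ of the demand nodes and write $D^\pi$ for the permuted vector. Relabeling the right side of $G$ by $\pi$ gives $z(G,D^\pi)=z(G^\pi,D)$. The configuration model is invariant under relabeling right vertices, so $G^\pi\overset{d}{=}G$ and hence $F_a(D^\pi)=F_a(D)$.

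\textbf{Step 3: reduction to $D^\star(t)$.} $\Sigma_t$ is a compact polytope, the intersection of the box $[0,1/p]^n$ with a hyperplane. A convex function on it attains its maximum at a vertex. A vertex has at least $n-1$ active box constraints, so at most one coordinate lies strictly between $0$ and $1/p$. Because the coordinates sum to $t$, such a vertex has exactly $k(t)$ coordinates equal to $1/p$ and one coordinate equal to $r(t)$ (or all coordinates equal to $1/p$ when $k(t)=n$). So every vertex is a permutation of $D^\star(t)$. By Step 2, all vertices share the value $F_a(D^\star(t))$, and so $F_a(D)\le \max_{\text{vertices}}F_a=F_a(D^\star(t))$. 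The same conclusion follows more concretely by writing $D$ as a convex combination of vertices (Carathéodory) and applying Jensen.

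\textbf{Main obstacle.} The one delicate point is the vertex characterization in the edge cases $r(t)=0$ and $k(t)=n$. In those cases the extreme points are exactly the $0/(1/p)$ vectors with $k(t)$ ones, and one must check that these coincide with permutations of $D^\star(t)$ as defined. Continuity of $F_a$ is immediate, since it is a finite average of continuous piecewise-linear functions; this guarantees the maximum is attained.
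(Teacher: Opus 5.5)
Your proof is correct and matches the paper's argument step for step: convexity of $L(G,\cdot)$ on $\Sigma_t$ (you get it from concavity of the LP value, the paper writes $L$ as a maximum of affine functions via max-flow min-cut), composition with the convex nondecreasing map $(\cdot-a)^+$, averaging over $G$, the characterization of extreme points as having at most one fractional coordinate (you count active constraints, the paper perturbs along $e_i-e_j$), and permutation invariance of the configuration model. You also handle the edge cases $r(t)=0$ and $k(t)=n$ correctly.
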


The convexity-reduction lemma hence reduces the analysis of every demand vector in \(\Sigma_t\) to the canonical extreme point \(D^\star(t)\). It remains to compare these extreme points with the Bernoulli demand vector with $|\supp(D)| = np$ covered in Proposition~\ref{prop:flex-const} using the monotonicity properties.

\begin{lemma}[Demand monotonicity]
\label{lem:monotonicity}
Fix a bipartite graph $G$ with $n$ unit-capacity supply nodes.
\renewcommand{\labelenumi}{\textnormal{(\alph{enumi})}}
\renewcommand{\theenumi}{\thelemma(\alph{enumi})}
\begin{enumerate}
\item\label{lem:mono-frac} For every $D\in\mathbb R^n$, $j\in[n]$, and $r\in[0,1/p]$, it holds that
$L(G,D+r e_j)\le L(G,D)+r$.

\item\label{lem:mono-down} If $D, D'\in\mathbb R^n$ satisfy $0\le D'\le D$ componentwise and $\sum_jD_j\le n$, then
$L(G,D')\le L(G,D)$.
\end{enumerate}
\end{lemma}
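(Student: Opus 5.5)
Both parts follow from writing $L(G,D)=z(K_{n,n},D)-z(G,D)$ and controlling how each of the two terms changes when the demand vector changes. The full-flexibility term has the closed form $z(K_{n,n},D)=\min\{n,\sum_j D_j\}$ for nonnegative $D$. For nonnegative demand, the general-demand analogue of the min-cut identity \eqref{eq:mfmc-zg-s} gives
\[
z(G,D)=\min_{S\subset[n]}\Bigl\{n-|S|+\sum_{j\in N(S)}D_j\Bigr\}.
\]
I will work with nonnegative demand vectors throughout, since this is the only case used later, when the lemma is applied to points of $\Sigma_t$. If the statement must hold literally on $\mathbb R^n$, the same argument can be run directly on the LP \eqref{prob:max-flow}, whose feasible region is empty or degenerate when some $D_j<0$; I would treat that as a side remark. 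The one property I need from $z(G,\cdot)$ is that it is nondecreasing and $1$-Lipschitz in each coordinate in the upward direction:
\[
z(G,D)\le z(G,D+se_j)\le z(G,D)+s \qquad\text{for } s\ge 0.
\]
I would prove this directly from the LP. Monotonicity holds because raising $D_j$ only enlarges the feasible region. For the Lipschitz bound, take an optimal flow for $D+se_j$ and decrease the flows $x_{ij}$ into node $j$ by a total of at most $s$ until $\sum_i x_{ij}\le D_j$. The result is feasible for $D$ and loses at most $s$ in objective value. The min-cut formula gives the same conclusion, because each term in the minimum is nondecreasing and $1$-Lipschitz in $D_j$.

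For part (a), I combine the two terms. Adding $r$ to one coordinate increases $\min\{n,\sum_j D_j\}$ by at most $r$, while $z(G,\cdot)$ does not decrease. Hence
\[
L(G,D+re_j)\le z(K_{n,n},D)+r-z(G,D)=L(G,D)+r.
\]
The restriction $r\le 1/p$ plays no role here.

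For part (b), the assumption $\sum_j D'_j\le\sum_j D_j\le n$ puts both benchmarks below the capacity cap. Thus $z(K_{n,n},D)=\sum_j D_j$ and $z(K_{n,n},D')=\sum_j D'_j$, and the claim becomes
\[
z(G,D)-z(G,D')\le \sum_j (D_j-D'_j).
\]
I would pass from $D'$ to $D$ one coordinate at a time, increasing coordinate $j$ by $s_j=D_j-D'_j\ge 0$. By the Lipschitz property, each step raises $z(G,\cdot)$ by at most $s_j$, and telescoping gives the inequality.

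The only step needing care is this Lipschitz property, specifically checking that reducing the flow into node $j$ keeps the supply constraints satisfied. It does, because flows only decrease. Everything else is bookkeeping, and I do not expect a real obstacle.
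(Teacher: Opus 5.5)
Your proposal is correct and follows the paper's proof: part (a) combines the at-most-$r$ increase of $\min\{n,\sum_j D_j\}$ with the monotonicity of $z(G,\cdot)$, and part (b) uses $\sum_j D_j\le n$ to reduce the claim to $z(G,D)-z(G,D')\le\sum_j(D_j-D'_j)$. The only cosmetic difference is that the paper gets this inequality in one step, by plugging the minimizing cut $S'$ for $D'$ into the min-cut formula for $D$, whereas you telescope a coordinatewise $1$-Lipschitz bound; the two arguments amount to the same thing.
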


Proofs of Lemmas~\ref{lmm:lt-convex} and~\ref{lem:monotonicity} are deferred to Section~\ref{sec:general_demand_proofs}. Together, these lemmas allow us to reduce the general-demand analysis to the extreme point $D^\star(n)$ and prove \prettyref{thm:loss-general}.

\begin{proof}[Proof of \prettyref{thm:loss-general}]
%For simplicity, we assume that $np$ is an integer, so that $D^\star(n)={\mathbf 1_{[np]}}/p$. When \(np\) is not an integer, \(D^\star(n)\) has one additional fractional coordinate relative to \(\mathbf 1_{[\lfloor np\rfloor]}/p\). By Lemma~\ref{lem:monotonicity}(a), this coordinate increases the loss by at most \(1/p\), which can be absorbed into the final loss bound.
Recall that $D^\star(n)={\mathbf 1_{[np]}}/p$.
We first claim that
\[
L(G,D^\star(t))\le L(G, D^\star(n)),
\qquad \forall t\in[0,n/p].
\]
When \(t\le n\), we have \(D^\star(t)\le D^\star(n)\) componentwise, so the inequality follows from Lemma~\ref{lem:mono-down}. When \(t\ge n\), we have \(D^\star(t)\ge D^\star(n)\) componentwise and
$z(K_{n,n},D^\star(t))=z(K_{n,n}, D^\star(n))=n$.
Since \(z(G,\cdot)\) is nondecreasing, the inequality again follows.

Fix $\epsilon',\delta(\epsilon') >0$ as in \prettyref{prop:flex-const}. Then $L(G, D^\star(n))\le\epsilon' n$ with probability at least \(1-c_n(\epsilon')\).
On this event, \(L(G,D^\star(t))\le\epsilon' n\) simultaneously for every \(t\in[0,n/p]\). Therefore, for every \(a\ge\epsilon' n\),
\begin{equation}\label{eq:slice-bd-asd}
F_a(D^\star(t))\le n c_n(\epsilon'),
\qquad \forall t\in[0,n/p].
\end{equation}    
%{\color{red} Here $F_a$ truncates the loss below $a$: the excess loss is zero whenever $L(G,D)\le a$, allowing us to bound $F_a(D)$ directly using the small probability that $L(G,D) > a$.}
%Thus, Proposition~\ref{prop:flex-const} controls $D^\star(t)$ simultaneously for every total-demand level $t$. \nbr{JX. This last sentence may be unnecessary?}

%\nbr{JX. In the following proof, why we use $F_a$ is a bit unclear.}
We next average the bound in \eqref{eq:slice-bd-asd} over the random total demand and convert the resulting joint expectation into a high-probability guarantee over the graph.
Let $D \sim \mathcal{D}$ and $T= T (D)= \sum_j D_j$. For any $a> 0$, since $G$ and $D$ are independent, and $(L(G,D)-a)^+ \ge 0$, Tonelli's theorem gives
\#\label{eq:tadasafg}
\mathbb{E}_{G, D}\left[(L(G,D)-a)^+\right] &= \E_{D}\left[\E_{G}\left[(L(G,D)-a)^+\right]\right] \\ \notag
&= \E_{T}\left[ \E_D\left[F_{a}(D) \mid T\right] \right] \le \E_{T}\left[F_{a}(D^\star(T))\right],
\#
where the last equality follows from $F_{a}$ defined in \eqref{eq:f_ta} and the inequality follows from \prettyref{lmm:lt-convex}.  
Let $a = \epsilon' n$. 
By \eqref{eq:slice-bd-asd}, $F_a(D^\star(t)) \le nc_n(\epsilon')$ for all $t$. Hence, by \eqref{eq:tadasafg}, 
\[
\mathbb{E}_{G, D}\left[(L(G,D)-a)^+\right]
\le \E_{T}\left[F_{a}(D^\star(T))\right] \le
n  c_{n}(\epsilon').
\]
Since $\mathbb{E}_D[L(G,D)]\ge 2a$ implies $\mathbb{E}_D\left[(L(G,D)-a)^+\right]
\ge a$, it follows that 
\$
\probs{\E_D[L(G,D)] \ge 2a }{G}
&\le
\probs{\mathbb{E}_D\left[(L(G,D)-a)^+\right] \ge a }{G}
\\
& \le 
\mathbb{E}_{G, D}\left[(L(G,D)-a)^+\right]/a
\le 
c_{n}(\epsilon')/\epsilon'.
\$
We now choose $\epsilon'$ separately for the fractional-loss and constant-loss regimes.
\paragraph{$\epsilon$-fractional loss.} Setting $\epsilon' = \epsilon/4$ and noting that $c_{n}(\epsilon')/\epsilon' \le 1/n$ conclude \prettyref{thm:frac-loss-general}.

\paragraph{Constant loss.} Fix constants $\delta, \ell >0$ such that $\ell > 64/\delta$. Setting $\epsilon' = \ell/(2n)$ and noting that  $c_{n}(\epsilon')/\epsilon' = 6n^2\exp\left(-\delta \ell\log(2n/\ell)/16 \right)/\ell \le 1/n$ conclude \prettyref{thm:constant-loss-general}.
\end{proof}

%\yw{YW: \prettyref{thm:constant-loss-general} uses $l$, not $C$.}

\section{Middle-Mile Transportation Design}\label{sect:trans}
This section studies the middle-mile transportation problem. Given an input graph $G$, and under the Bernoulli demand $\cD_p$, each node with a full-truckload demand does not enter the matching problem and is independently deleted with probability $q = 1-p$. 
For a demand realization $D\sim\cD_p$, let $\supp(D)$ be the set of remaining nodes with half-truckload demand, and $G[\supp(D)]$
be the residual graph induced by these nodes. 
In this case, the maximum number of station pairs served jointly, $\mu(G,D)$ in \eqref{prob:transportation}, is simply the maximum matching size of $G[\supp(D)]$. Accordingly, we write $\mu(G,D) = \mu(G[\supp(D)])$, and the matching loss 
$L(G) = 2\E_{D}[\mu(K_n[\supp(D)]) - \mu(G[\supp(D)])]$. 

\subsection{Lower Bound for Any Design}
The theorem below establishes a lower bound on the expected loss for any graph design $G$ as a function of its average degree $d$. Compared with the existing bound in~\citet[Theorem~1]{feng2024designing}, which shows that $L(G) \geq (npq^d - 1)^+$, our result provides a slight improvement.
\begin{theorem}\label{thm:lower_bound}
Suppose that $D\sim \cD_p$ and recall that $q=1-p$.
For any graph $G$ with $n$ nodes and average degree $d$, the expected matching loss is lower bounded by
     %$L(G) \geq npq^d.$
     $$
     L(G) \geq \max\left\{ npq^d - \frac{1}{2} 
     \left(1-(1-2p)^n\right), \, \frac{1}{2} np q^d  \left( 1 - (1-2p)^{n-d-1}\right) \right\}.
     $$
     %$L(G)\geq(npq^d-1)^+$.
\end{theorem}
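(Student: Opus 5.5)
The plan is to lower bound the matching loss realization by realization through the isolated nodes of the residual graph, and then take expectations. Fix $D\sim\cD_p$, let $V=\supp(D)$ with $|V|\sim\Binom(n,p)$, and let $I$ be the number of isolated nodes of $G[V]$. On the complete graph, $\mu(K_n[V])=\lfloor |V|/2\rfloor$. Isolated nodes can never be matched, so $\mu(G[V])\le (|V|-I)/2$. Hence
\[
L(G,D)=2\bigl(\lfloor |V|/2\rfloor-\mu(G[V])\bigr)\ \ge\ I-\indc{|V|\ \odd},
\]
and trivially $L(G,D)\ge 0$.

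\textbf{First term of the maximum.} I would take expectations in the inequality above. Node $i$ is isolated in $G[V]$ exactly when $i\in V$ and all of its $d_i$ neighbors are deleted, so
\[
\E[I]=\sum_i p\,q^{d_i}\ \ge\ npq^{d}
\]
by Jensen's inequality, since $x\mapsto q^x$ is convex. With multi-edges, $d_i$ should be read as the number of distinct neighbors, which only helps. The parity term is standard:
\[
\prob{|V|\ \odd}=\tfrac12\bigl(1-(1-2p)^n\bigr),
\]
which gives the first expression.

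\textbf{Second term of the maximum.} Here I would avoid subtracting the parity indicator and instead use $L(G,D)\ge I\cdot\indc{|V|\text{ even}}$. This follows from the displayed inequality when $|V|$ is even and from $L\ge 0$ otherwise. Summing over nodes gives
\[
\E[L]\ \ge\ \sum_i \prob{i\in V,\ N(i)\cap V=\emptyset,\ |V|\text{ even}}.
\]
Condition on $i$ surviving and its neighbors being deleted. Then $|V|=1+\Binom(n-d_i-1,p)$, using independence of the remaining nodes. So $|V|$ is even iff that binomial is odd, which has probability $\tfrac12\bigl(1-(1-2p)^{n-d_i-1}\bigr)$. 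The $i$-th term is therefore
\[
\tfrac12\,p\,f(d_i),\qquad f(x)\triangleq q^x\bigl(1-(1-2p)^{n-1-x}\bigr).
\]

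\textbf{Main obstacle.} The remaining step, and the one I expect to be hardest, is passing from $\sum_i f(d_i)$ to $n f(d)$. The function $f(x)=q^x-(1-2p)^{n-1}\bigl(q/(1-2p)\bigr)^x$ is a difference of exponentials, so it is not obviously convex. I would first try to verify convexity directly on the relevant range $0\le x\le n-1$, for instance for $p\le 1/2$ by comparing the second derivatives of the two exponentials. Failing that, I would use a two-step bound:
\begin{itemize}
\item[(i)] apply Jensen only to $q^{d_i}$;
\item[(ii)] control the correction term $\sum_i q^{d_i}(1-2p)^{n-1-d_i}$ separately, using monotonicity in $d_i$ or a smoothing (majorization) argument that moves degrees toward the average while tracking the sign of $\sum_i f(d_i)$.
\end{itemize}
If needed, I would reduce to the regime where $(1-2p)^{n-d-1}$ is small, since otherwise the first term of the maximum already dominates. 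The case $p>1/2$, where $1-2p<0$ and the parity factor oscillates, needs particular care in this step.
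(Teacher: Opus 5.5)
Your reduction to isolated vertices, the first bound, and the setup of the second bound are correct and match the paper. That setup is the pointwise inequality $L(G,D)\ge I\cdot\indc{|V|\text{ even}}$ together with the computation of the $i$-th term as $\tfrac12 p f(d_i)$. Your closed form is exactly the paper's $\sum_{k\text{ even}}\binom{n-d_i-1}{k-1}p^kq^{n-k}$.

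The gap is the step you flagged: nothing in the proposal proves $\sum_i f(d_i)\ge nf(d)$, and the routes you suggest do not close it. Checking convexity of $f$ as a function of a real variable on $[0,n-1]$ fails. For $0<p<1/2$,
\[
f''(n-1)=q^{n-1}\Bigl[\log^2(1/q)-\log^2\bigl(q/(1-2p)\bigr)\Bigr]<0,
\]
because $q^2>1-2p$. So $f$ is strictly concave near $x=n-1$, and for $p>1/2$ the real extension is not even defined. Your proposed reduction is also backwards. When $(1-2p)^{n-d-1}$ is not small (e.g.\ $d$ close to $n-1$, with $p$ fixed and $n$ large), $npq^d$ is exponentially small and the first expression is negative. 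There it is the second expression that attains the maximum, so it cannot be avoided.

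The missing observation is that degrees are integers, so you only need convexity of $f$ on $\{0,1,\dots,n-1\}$. You then apply Jensen to its piecewise-linear interpolation, using the paper's rounding convention if $d$ is not an integer. This convexity holds for every $p\in(0,1)$ by a one-line computation:
\[
f(x+2)-2f(x+1)+f(x)=p^2q^x\bigl(1-(1-2p)^{n-3-x}\bigr)\ge 0,\qquad 0\le x\le n-3,
\]
since $|1-2p|\le 1$.

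The paper reaches the same conclusion differently. It conditions on $|\supp(D)|=k$ and writes the $i$-th term as $\sum_{k\text{ even}}\binom{n-d_i-1}{k-1}p^kq^{n-k}$. It then applies Jensen separately for each $k$ to $x\mapsto\binom{n-x-1}{k-1}$, which is convex on the integers since its second difference is $\binom{n-x-3}{k-3}\ge 0$. Summing over even $k$ gives $\tfrac12 npq^d\bigl(1-(1-2p)^{n-d-1}\bigr)$. Either way, no case split on $p$ and no regime reduction is needed.

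For your multi-edge remark, the second bound also needs $f$ to be nonincreasing on the integers. This follows from $f(x+1)-f(x)=-pq^x\bigl(1+(1-2p)^{n-2-x}\bigr)\le 0$.
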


\prettyref{thm:lower_bound} implies that for a graph to achieve an $\epsilon$-fractional loss for some $\epsilon >0$, \ie, $L(G) \le \epsilon n$, its average degree must be $d \geq \log(1/\epsilon)/{\log(1/q)} - O(1)$.
Similarly, achieving a constant loss, \ie, $L(G) = O(1)$, 
requires $d \geq \log n/{\log(1/q)} -O(1)$. 
We derive the lower bound by computing the expected number of \emph{isolated} nodes, since any isolated node cannot be matched and therefore contributes to the matching loss. Similar to the flexibility design setting, a node is isolated with probability $pq^{d_i}$. Applying Jensen's inequality yields a bound $npq^d$, with equality only when $d_i \equiv d$. Thus, any graph design achieving the minimum required average degree must be regular. We defer the formal proof of \prettyref{thm:lower_bound} to Section \ref{sect:app-transportation}.

%In fact, as shown later, the random $d$-regular graph achieves this minimum degree requirement. 

\subsection{Optimality of Random $d$-Regular Graphs}
We now show that the random $d$-regular graph achieves the optimal degree requirement, beginning with its formal definition via the standard configuration model~\cite[Chapter 9]{janson2011random}.

\begin{definition}[Configuration Model]
A random $d$-regular graph, denoted by $\calG(n,d)$, is generated as follows.  
Given a vertex set $[n]$, attach $d$ half-edges to each vertex, where $nd$ is assumed to be even. Then, pair all $nd$ half-edges uniformly at random, and merge each pair into a single edge.  
\end{definition}
The resulting random regular graph may contain self-loops and multi-edges. These do not affect the matching performance, so the results in our theorems still hold for the corresponding simple graphs obtained by removing them. 
\blue{The following theorem shows that random $d$-regular graphs attain the optimal degree thresholds in both the $\epsilon$-fractional-loss and constant-loss regimes.}
{\color{blue}
\begin{theorem}
\label{thm:matching-loss}
Suppose $D\sim\cD_p$ and recall that $q=1-p$.
\begin{enumerate}
\renewcommand{\labelenumi}{\textnormal{(\alph{enumi})}}
\renewcommand{\theenumi}{\thetheorem(\alph{enumi})}

\item \textbf{$\epsilon$-fractional loss.}
\label{thm:constant d}
There exist constants $\epsilon_0,c>0$ such that, for every constant $0<\epsilon<\epsilon_0$ and every integer $d$ satisfying
\begin{equation}
d
\ge
\frac{
\log(1/\epsilon)
+
c\epsilon\log^2(1/\epsilon)
}{
\log(1/q)
},
\label{eq:match-const-d}
\end{equation}
a random $d$-regular graph $G\sim\cG(n,d)$ satisfies
\[
\lim_{n\to\infty}
\probs{L(G)\le\epsilon n}{G}
=
1.
\]

\item \textbf{Constant loss.}
\label{thm:match-1-n}
For every fixed constant $\delta>0$ and all sufficiently large $n$, if
\begin{equation}
d
\ge
\frac{(1+\delta)\log n}{\log(1/q)},
\label{eq:cond_d}
\end{equation}
then a random $d$-regular graph $G\sim\cG(n,d)$ satisfies
$
\E_G[L(G)]=
O\big(n^{-\delta/8}\big)$.
Consequently,
\[
\probs{
L(G)\le n^{-\delta/16}
}{G}
\ge
1-O\left(n^{-\delta/16}\right).
\]
\end{enumerate}
\end{theorem}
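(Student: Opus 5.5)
Theorem~\ref{thm:matching-loss} has two parts, and I would prove them by different routes, as the overview in Section~\ref{sect:overview} suggests.

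\textbf{Part (a): local weak convergence.} Here $d$ is a constant, so the plan is to work with the limit object.

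First, I would show that $G[\supp(D)]$ for $G\sim\cG(n,d)$ converges locally to a Galton--Watson tree $T$. Its root has $\Binom(d,p)$ children, and every other vertex has $\Binom(d-1,p)$ offspring. This follows from the standard local tree-likeness of the configuration model combined with independent site percolation.

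Second, I would invoke \citet{bordenave2013matchings}. It gives $\mu(G[\supp(D)])/n\to p\cdot\frac12(1-\text{(limiting unmatched fraction)})$ in probability. The unmatched fraction is expressed through the smallest root $x$ of the fixed-point equation $x=\phi'(1-\phi'(1-x)/\phi'(1)\cdot\ldots)$ built from the offspring generating function. Concretely, with $\hat\phi(s)=(q+ps)^{d-1}$, the equation is $x=1-\hat\phi(1-\hat\phi(x))$, and the root-level formula uses $(q+ps)^d$.

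Third, I would compare this with the complete graph benchmark. There $2\mu(K_n[\supp(D)])\ge |\supp(D)|-1$, so the loss is $n$ times the limiting unmatched fraction, up to $o(n)$. It then remains to estimate that fraction analytically. The fixed point should satisfy $\hat\phi(x)\approx q^{d-1}$ small, which gives $x\approx 1-(q^{d-1})$-type corrections. Plugging this in, the unmatched fraction should come out as $pq^d(1+O(dq^{d}\cdot\ldots))$, that is, the isolated-vertex term plus corrections of order $\epsilon\cdot\epsilon\log(1/\epsilon)$-ish.

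Choosing $d$ as in \eqref{eq:match-const-d} makes $q^d\le\epsilon e^{-c\epsilon\log^2(1/\epsilon)}$. This should absorb the second-order terms and yield a limit strictly below $\epsilon$. Convergence in probability then gives the claim; since $L(G)$ is an expectation over $D$, I would use bounded convergence (the loss is at most $n$) to pass from convergence in probability to the expectation. This careful expansion of the Bordenave--Lelarge--Salez formula, verifying that non-isolated defects contribute only $O(\epsilon^2\log^2(1/\epsilon))$, is where I expect the main difficulty.

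\textbf{Part (b): Tutte--Berge with structural properties.} I would use the Tutte--Berge formula: the deficiency of $H=G[\supp(D)]$ equals $\max_U(\odd(H-U)-|U|)$. The plan is to establish, jointly over $G$ and $D$, an analogue of Lemma~\ref{lmm:epsilon_connectivity}, namely the properties below.

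\begin{enumerate}
\item[(i)] $H$ has no isolated vertices except with probability $npq^d\le n^{-\delta}$.
\item[(ii)] With high probability, all small vertex sets expand: every $U$ with $|U|\le\beta n$ has at least $(1+\eta)|U|$ neighbors in $\supp(D)$. I would prove this by a union bound using the configuration model edge-count estimates, where $d\gtrsim\log n$ gives super-polynomially small error.
\item[(iii)] $H$ is connected apart from isolated vertices, and every set of size at least $\beta n$ has many edges to its complement in $\supp(D)$.
\end{enumerate}

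These properties force every component of $H-U$ other than one giant component to have size at least roughly $d$. Hence $\odd(H-U)\le |U|+1$, and the deficiency is at most the parity term. The parity term is matched by $K_n$, since $K_n$ has deficiency exactly $|\supp(D)|\bmod 2$.

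On the good event the loss is $0$. Off it, the loss is at most $n$. Summing the failure probabilities gives $\E_G[L(G)]\le n\cdot O(n^{-1-\delta/8})+O(n^{-\delta/8})$. Markov's inequality then yields the stated high-probability bound.
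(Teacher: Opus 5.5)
Your part (a) follows the paper's route: local convergence of $G[\supp(D)]$ to the unimodular Galton--Watson tree with $\Binom(d,p)$ root degree, the Bordenave--Lelarge--Salez formula, and the estimate $\max_t F_{d,p}(t)=q^d(1+O(d^2q^{d}))$. One step you should not skip. That formula involves $\max_{t\in[0,1]}F(t)$, so an upper bound on the loss requires controlling this global maximum, not evaluating $F$ at a single root of the fixed-point equation. The paper handles this in Lemma~\ref{lmm:F_extrema}: it uses log-concavity of $\phi''$ to show that $F$ has at most three critical points and that the smallest is a global maximizer, and only then bounds $t_*\lesssim q^{d-1}$ and $F(t_*)\le q^d+d(d-1)p^2t_0^2$. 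In your variable $x=1-t$, the root with $x\approx 1-q^{d-1}$ is the largest root, not the smallest. The two extreme roots give the same value of $F$, so this slip is harmless.

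Part (b) has a genuine gap: property (ii) cannot hold with super-polynomially small failure probability, nor even $o(1/n)$, at the threshold \eqref{eq:cond_d}.
\begin{itemize}
\item For a singleton $U=\{v\}$, expansion by a factor $1+\eta$ forces $v$ to have at least two neighbors in $\supp(D)$.
\item The expected number of degree-one vertices of $H$ is about $np^2dq^{d-1}\asymp n^{-\delta}\log n$, and one exists with probability of that order.
\item Isolated vertices likewise exist with probability of order $n^{-\delta}$.
\end{itemize}
Because you bound the loss by $n$ off the good event, your accounting yields only $\mathbb{E}_G[L(G)]\lesssim n^{1-\delta}\log n$. This does not even vanish when $\delta<1$. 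At this sharp threshold the argument must tolerate both kinds of low-degree vertices rather than exclude them.

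The paper does this as follows.
\begin{itemize}
\item It removes the isolated set $I$ and shows that, on an event of probability $1-O(n^{-1-\delta/8})$, $\mu(G[\supp(D)])=\lfloor|\supp(D)\setminus I|/2\rfloor$. Hence $L(G,D)\le 2|I|$ there, and $\mathbb{E}|I|\le n^{-\delta/8}$ is integrated separately.
\item In place of vertex expansion it uses the little-congestion property \eqref{eq:congestion}: there are no disjoint $S,T\subset V$ with $|S|+1\le|T|\le\gamma n$ and $e(S,T)=e(V\setminus T,T)\ge|S|+1$. A lone degree-one vertex does not violate this, whereas two degree-one vertices sharing a neighbor would. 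The property fails only with probability $n^{-1-\delta/4}$.
\item Combined with the absence of components of sizes $2,\dots,\gamma n$ in $G[V]$, every small component of $G^*-S$ sends an edge into $S$. This gives the contradiction $|S|+1\le e(S,T)\le |S|$, where $T$ is the union of the non-largest components.
\end{itemize}

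You also leave large $U$ unhandled. Your claim that every non-giant component of $H-U$ has size $\gtrsim d$ is false for general $U$ (take $U\supseteq N_H(v)$). A lower bound on the edges leaving large sets does not rule out many small components either. The paper instead uses $c(G^*-S)\le \mathrm{Ind}(G)<2n\log d/d$ to force any bottleneck $S$ to be small. It then uses the no-big-cut property \eqref{eq:no_big_cut_1} to show that the non-largest components have total size below $4n\log d/d$, which is what makes \eqref{eq:congestion} applicable.
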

}

Theorem~\ref{thm:matching-loss} shows that random $d$-regular graphs attain the optimal degree requirement in both loss regimes. For $\epsilon$-fractional loss, the degree requirement in Theorem~\ref{thm:constant d} matches the lower bound $\log(1/\epsilon)/\log(1/q)$ up to a lower-order term. For constant loss, Theorem~\ref{thm:match-1-n} shows that a degree just above $\log n/\log(1/q)$ yields not merely a constant loss, but vanishing loss with high probability.

\begin{remark}
For comparison, we also provide tight characterizations of the matching loss for \ER random graphs and for $K$-chains. Existing work ~\citep[Theorem~4]{feng2024designing} shows that \ER graphs require an average degree $d\ge (1+\delta)\log n/(1-q)$ to achieve a constant loss. For the $\epsilon$-fractional loss, we confirm their conjecture that achieving an $\epsilon$-fractional loss  requires $d\ge (1+o_\epsilon(1))\log (1/\epsilon)/(1-q)$ (see \prettyref{cor:er-const}). 
These requirements are strictly larger than the optimal ones. This gap stems from the degree fluctuations in \ER graphs, which force a larger average degree even to keep the number of isolated nodes below the target loss level.

For $K$-chains, $C_{n,K}$, existing work ~\citep[Corollary~1]{feng2024designing} establishes the upper bound $L(C_{n,K}) \le 4nq^{d/4}/d$. We improve this result and derive a nearly matching lower bound, obtaining the tight order $L(C_{n,K}) = \Theta(nq^{d/2})$ (\prettyref{thm:KNN}). Consequently, achieving an $\epsilon$-fractional loss requires a $K$-chain to have an average degree  $2\log(1/\epsilon)/\log(1/q)$, which is twice the optimal degree. Interestingly, the suboptimality gap for $K$-chains  in this setting is much smaller than that in the process flexibility setting, highlighting a fundamental difference between the two problems.
\end{remark}

%Although $K$-chains are regular graphs, they still require twice the optimal degree. 
%Interestingly, the pronounced contrast between the performance of $K$-chains in process flexibility design (where $d\ge \Omega(1/\epsilon)$ is necessary, as shown in \cite{chen2015optimal}) and in the middle-mile transportation problem (where our results show that $d\ge 2\log(1/\epsilon)/\log(1/q)$) also highlights a fundamental difference between the two problem settings. 

%\nbr{Also, mention the ER random graph result here???}
\subsection{Proof of \prettyref{thm:matching-loss} }
This section proves the main results. We begin with the constant loss, whose proof follows a structure similar to the analysis for process flexibility. 
%We first establish several key connectivity properties of random $d$-regular graphs, and then complete the proof via the classic Tutte's theorem. 
For the fractional loss, we apply a celebrated result that directly characterizes the maximum matching size of sparse graphs with locally tree-like structure. 
%In this part, we verify the local weak convergence of random regular graphs with constant degree and derive the corresponding limiting matching size.
\subsubsection{Proof of Theorem \ref{thm:match-1-n}}
\blue{In this section, we prove \prettyref{thm:match-1-n} for the constant-loss regime. The proof is guided by two possible sources of matching loss in the residual graph. The first is isolated vertices, which are an unavoidable source of loss. A vertex is isolated in the residual graph if it is retained while all $d$ of its neighbors are deleted, which occurs with probability $pq^d$. Hence, the expected number of isolated residual nodes is $npq^d$. Under the degree condition in \prettyref{eq:cond_d}, this is at most $pn^{-\delta}$ and therefore vanishes asymptotically.
However, the absence of isolated nodes alone does not guarantee a near-perfect matching, as larger structural bottlenecks may still exist. The main challenge is therefore to show that, after removing the isolated vertices, the residual graph has no additional matching bottlenecks and admits a matching covering all remaining non-isolated vertices except possibly one due to parity. The following proposition formalizes these two steps.}

\begin{proposition}\label{prop:perfect matching}
Fix any sufficiently large $n$.
For any constant $\delta>0$, there exists a constant $\eta>0$ such that 
for any demand realization $D\sim\cD_p$ with 
$|\supp(D)| \ge (1-\eta)np$, the random $d$-regular graph 
$G\sim\cG(n,d)$ with $d$ given by~\prettyref{eq:cond_d} satisfies the following.
Let $I$ denote the set of isolated vertices in $G[\supp(D)]$. Then
\begin{align}
\probs{|I| \ge 1}{G}
\le \expects{|I|}{G}
\le n^{-\delta/8}. 
\label{eq:isolated_node_0}
\end{align}
Moreover,
\begin{align}
\probs{
\mu\left(G[\supp(D)]\right)
=
\left\lfloor \frac{|\supp(D)\setminus I|}{2}\right\rfloor
}{G}
\ge 1-n^{-1-\delta/8}.
\label{eq:matching_size}
\end{align}
\end{proposition}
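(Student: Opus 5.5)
The plan is to condition on the realization $D$, write $V=\supp(D)$ and $m=|V|\ge(1-\eta)np$, and work entirely with the configuration-model randomness of $G\sim\cG(n,d)$. Throughout, $H=G[V]$.

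\emph{Step 1: isolated vertices, \eqref{eq:isolated_node_0}.} A vertex $v\in V$ is isolated in $H$ exactly when each of its $d$ half-edges is paired either with a half-edge of a vertex in $V^c$ or with another half-edge of $v$ (a self-loop, which does not help matching). Exposing the $d$ half-edges of $v$ one at a time, each lands in $V\setminus\{v\}$ with conditional probability at least $(m-1)d/(nd)\ge(1-\eta)p-O(1/n)$. Hence
\[
\probs{v\in I}{G}\le\bigl(q+\eta p+O(1/n)\bigr)^d .
\]
Choose $\eta=\eta(\delta)$ small enough that
\[
\log\frac{1}{q+\eta p}\ \ge\ \frac{1+\delta/4}{1+\delta}\,\log\frac1q .
\]
Then \eqref{eq:cond_d} gives $\probs{v\in I}{G}\le n^{-1-\delta/4}(1+o(1))$. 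Summing over $v\in V$ gives $\expects{|I|}{G}\le n^{-\delta/8}$, and Markov's inequality gives the first inequality in \eqref{eq:isolated_node_0}.

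\emph{Step 2: reduction via Tutte--Berge.} We use
\[
\mu(H)=\min_{U\subset V}\frac{|V|+|U|-\odd(H-U)}{2}.
\]
It therefore suffices to show that, outside an event of probability $n^{-1-\delta/8}$,
\[
\odd(H-U)\le |U|+|I|+\indc{|V\setminus I|\text{ odd}}\qquad\text{for every } U\subset V\setminus I .
\]
Taking $U\cap I\neq\emptyset$ never helps. So a violation means the following. There exist $U$ and $k\ge|U|+2$ odd components $C_1,\dots,C_k$ of $H-U$, none of which is a vertex of $I$. Each such $C_i$ is either a singleton whose only $V$-neighbours lie in $U$, or has size at least $3$. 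Call such a pair $(U,\{C_i\})$ a \emph{bottleneck}.

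I would rule out bottlenecks with connectivity properties parallel to Lemma~\ref{lmm:epsilon_connectivity}, splitting by the total size $w=|U|+\sum_i|C_i|$.
\begin{itemize}
\item \emph{Large bottlenecks ($w\ge\beta n$).} Either $\sum_i|C_i|\ge\beta n/2$, or $|U|$ is linear. In the first case, group the components into two families, each of total size at least $\beta n/6$, with no $H$-edges between them. In the second case, $k\ge|U|+2$ forces linearly many components. A union bound over pairs of disjoint sets $A,B\subset V$ with $|A|,|B|\ge\beta' n$ and $e(A,B)=0$ costs $4^n\exp(-c\beta'^2 nd)$. Since $d\asymp\log n$, this is negligible.
\item \emph{Small bottlenecks ($w<\beta n$).} Every non-isolated component vertex sends all of its $V$-edges into $W=U\cup\bigcup_i C_i$. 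So $e(W)$ is of order $p d\cdot\sum_i|C_i|$, which is at least of order $pd\,|W|/2$ since $k>|U|$. I would then use the standard sparse-small-sets estimate for random regular graphs: every $W$ with $|W|\le\beta n$ spans at most $\max\{C_0|W|,\ (d|W|/n)\,n^{o(1)}|W|\}$ edges, except on an event of probability $n^{-\Omega(\log n)}$. This contradicts the edge count unless most component vertices have atypically small $V$-degree.
\end{itemize}

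\emph{Main obstacle.} The hard part is the small-bottleneck regime with very few low-degree vertices. A single vertex of $V$-degree $1$ or $2$ occurs with probability about $n\,d^{2}q^{d}\approx n^{-\delta}\mathrm{polylog}\,n$. That is far larger than the target $n^{-1-\delta/8}$, so I cannot simply assume a minimum $V$-degree. The key observation is that a bottleneck with $k\ge|U|+2$ requires at least two \emph{near-isolated} structures attached to the same small $U$. For example, two degree-one vertices sharing a neighbour, or a small odd component whose vertices all have few $V$-neighbours outside $U$.

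I would therefore enumerate minimal bottleneck witnesses: a set $U$ of size $u$ together with $u+2$ small components whose external $V$-neighbourhoods lie in $U$. Each component of size $s$ contributes a factor of roughly $(n\,d^{O(s)}(q+\eta p)^{ds}/n^{s-1})$, and the attachments to $U$ cost a factor $(u/n)$ per edge. Summing over $u,s$, the two extra components give at least a factor $n^{-2(1+\delta/4)}\cdot n\cdot\mathrm{polylog}\,n\le n^{-1-\delta/8}$. Making this counting uniform over all $u\le\beta n$, and in particular interpolating between the union-bound regime and the expansion regime at $u\approx n/\log n$, is where I expect most of the work to lie.
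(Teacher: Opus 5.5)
Your Step~1 and the Tutte--Berge reduction are fine. Self-loops at $v$ reduce the number of exposures below $d$, but they cost only a relative factor $1+O(d/n)$. The gap is in ruling out bottlenecks, which is the whole content of \eqref{eq:matching_size}.

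First, your split by $w=|U|+\sum_i|C_i|$ misclassifies the typical bottleneck. For small $U$, $H-U$ has a giant component. If it is one of the $C_i$, then $w\ge\beta n$, but the $C_i$ cannot be split into two families each of size at least $\beta n/6$. So neither of your regimes applies, and your ``minimal witness'' with $u+2$ \emph{small} components does not exist either. The paper handles this case as follows:
\begin{enumerate}
\item Sort the components $T_0,\dots,T_r$ of $G^*-U$ by size.
\item Use \eqref{eq:ind_set} to get $|U|<2n\log d/d$.
\item Use \eqref{eq:no_big_cut_1} to show that everything outside the largest component $T_0$, together with $I$, has fewer than $M=4n\log d/d$ vertices.
\end{enumerate}
This is exactly what the $+2$ in the Tutte condition pays for. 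After discarding $T_0$, the set $T=\bigcup_{i\ge1}T_i$ still consists of at least $|U|+1$ components and is small.

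Second, and more seriously, your small regime is only a heuristic count, which you yourself flag as unfinished. The sparse-small-sets route cannot finish it, because low-$V$-degree vertices are too common, as you note. The missing ingredient is a single union bound over pairs $(S,T)$, rather than over component structures. This is the little-congestion estimate \eqref{eq:congestion} (Lemma~\ref{lmm:congestion}). It bounds the probability that disjoint $S,T\subset V$ with $|S|+1\le|T|\le\gamma n$ satisfy $e(S,T)=e(V\setminus T,T)\ge|S|+1$.

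In the configuration model this is a direct count, with $\alpha=q+\eta p$:
\begin{itemize}
\item All $|T|d$ half-edges of $T$ must pair inside $T$, into $S$, or into $V^c$. This costs $(\alpha+2\gamma+\sqrt{2\gamma})^{|T|d}$.
\item At least $|S|+1$ of them must land in $S$. This costs $\bigl(2e|S||T|d/(\alpha n(|S|+1))\bigr)^{|S|+1}$.
\end{itemize}
Summing over $|T|$ and then over all $|S|$ gives $n^{-1-\delta/4}$ in total, with no interpolation between regimes.

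The contradiction then follows. $G[V]$ has no component of size $2,\dots,\gamma n$ (Lemma~\ref{lmm:no_small_component}), and a singleton $T_i$ with no edge to $U$ would lie in $I$. Hence every $T_i$ with $i\ge1$ sends an edge to $U$, so $e(U,T)\ge r\ge|U|+1$, contradicting the congestion bound.

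Your intuition of ``two near-isolated structures attached to the same $U$'' is the right one. Without this lemma and the giant-component step, however, the argument is not yet a proof.
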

\blue{The two statements in Proposition~\ref{prop:perfect matching} correspond to the two steps above: \eqref{eq:isolated_node_0} controls the unavoidable loss from isolated vertices, while \eqref{eq:matching_size} shows that, once these vertices are removed, no additional matching loss arises except for the unavoidable parity effect. Therefore, \prettyref{thm:match-1-n} follows directly from \prettyref{prop:perfect matching}. We defer the formal proof of \prettyref{thm:match-1-n} to Section \ref{sec:app-trans-cons} and focus here on establishing \prettyref{prop:perfect matching}. The main technical challenge is to establish the second statement, for which we use the classic Tutte's theorem to characterize the possible matching bottlenecks.} In what follows, we condition on $V = \supp(D)$ with $|V| \ge (1-\eta) np$. 
%, \emph{with self-loops and duplicated parallel edges removed}.  
Let 
$I$ be the set of isolated vertices in $G[V]$, and let $G^\ast =G[V] - I$ be the subgraph of $G[V]$ obtained by removing these isolated vertices. %Note that the conclusion \prettyref{eq:isolated_node_0} in~\prettyref{prop:perfect matching} directly follows from~\prettyref{eq:isolated_node} in~\prettyref{lmm:no_small_component}. Thus,  it remains to show that 
Under this notation, the statement \eqref{eq:matching_size} in~\prettyref{prop:perfect matching} asserts that
$G^*$ has a maximum matching of size $\floor{|V(G^*)|/2}$ with probability at least $1-n^{-1-\delta/8}.$
%\nbr{I thought we should use Tutte–Berge formula ?}

\begin{lemma}[Tutte's Theorem  \citep{tutte1947factorization}, Tutte–Berge formula \citep{berge1958couplage}]\label{lmm:tutte}
Let $G$ be a finite undirected graph (simple or multi-graph).
Then $G$ has a perfect matching if and only if for every subset
$S\subset V(G)$, the number of odd components in $G-S$, denoted by $\odd(G-S)$, satisfies
\$
\odd(G-S)\le |S|,\  \forall S\subset V(G) .
\$
More generally, the number of unmatched vertices in a maximum matching is given by 
$$
\max_{S\subseteq V(G)}\left(\odd(G-S)-|S|\right).
$$
\end{lemma}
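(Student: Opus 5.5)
The statement is classical, so I would prove the Tutte--Berge formula in full and recover Tutte's theorem as the special case in which the maximum deficiency is zero. First I would reduce to simple graphs. Deleting self-loops and collapsing parallel edges changes neither the family of matchings (up to the choice of parallel copy) nor the connected components of $G-S$ for any $S$. Hence both sides of each claim are unchanged, and it suffices to treat simple finite graphs. Write $\mathrm{def}(G)\triangleq\max_{S\subseteq V(G)}(\odd(G-S)-|S|)$; taking $S=\emptyset$ gives $\mathrm{def}(G)\ge 0$.

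\emph{Easy direction.} Fix any matching $M$ and any $S$. Each odd component $C$ of $G-S$ has odd size, so $M$ restricted to $C$ leaves some vertex of $C$ that is either unmatched or matched into $S$, since there are no edges between distinct components. At most $|S|$ components can use an edge into $S$, because each vertex of $S$ is matched at most once. Therefore at least $\odd(G-S)-|S|$ vertices are unmatched, so the number of unmatched vertices in a maximum matching is at least $\mathrm{def}(G)$. In particular, the Tutte condition is necessary for a perfect matching.

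\emph{Hard direction for Tutte's theorem.} I would follow Lovász's maximal-counterexample argument. Suppose $G$ satisfies $\odd(G-S)\le|S|$ for all $S$ but has no perfect matching. Taking $S=\emptyset$ shows $|V(G)|$ is even. Adding edges preserves the Tutte condition, because adding edges can only merge components and merging two odd components reduces the odd count by two. So we may add edges while no perfect matching appears and obtain an edge-maximal counterexample $G$. Let $U$ be the set of vertices adjacent to all others. If every component of $G-U$ is a clique, we can build a perfect matching directly: pair vertices inside each component, leave one vertex of each odd component, match those leftovers into $U$ (possible since $\odd(G-U)\le|U|$), and pair the remaining vertices of $U$ among themselves. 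Parity works because $|V(G)|$ is even. This contradicts the assumption. Otherwise, some component of $G-U$ is not a clique, which yields vertices $x,z$ with a common neighbour $y\notin U$ and $xz\notin E$. Since $y\notin U$, there is also a vertex $w$ with $yw\notin E$. By maximality, $G+xz$ and $G+yw$ each have perfect matchings $M_1$ and $M_2$. Consider the symmetric difference $M_1\triangle M_2$ and the alternating cycle containing $xz$. Rerouting along this cycle using the edges $xy$ or $yz$ produces a perfect matching of $G$ itself, a contradiction. I expect this case analysis on the alternating cycle to be the main obstacle, since one must check the two cases according to whether $yw$ lies on the same cycle.

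\emph{Tutte--Berge from Tutte.} Let $k=\mathrm{def}(G)$. Note that $k\equiv|V(G)|\pmod 2$, because $\odd(G-S)\equiv|V(G)|-|S|\pmod 2$. Form $G'$ by adding $k$ new vertices joined to every vertex, including each other. I would then verify the Tutte condition for $G'$. If $S'$ omits some new vertex, then $G'-S'$ is connected, and parity handles this case. If $S'$ contains all the new vertices, write $S'=S\cup(\text{new})$; then $\odd(G'-S')=\odd(G-S)\le |S|+k=|S'|$. Hence $G'$ has a perfect matching. Removing the new vertices destroys at most $k$ edges of this matching, leaving at most $k$ vertices of $G$ unmatched. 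Combined with the easy direction, this gives exactly $k$ unmatched vertices in a maximum matching.
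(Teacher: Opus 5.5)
Your proof is correct. It differs from the paper only in that the paper does not prove this lemma: it cites \citet{tutte1947factorization} and \citet{berge1958couplage} and uses the result as a black box.

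You instead give a complete, self-contained argument. You prove Tutte's theorem with Lov\'asz's edge-maximal counterexample, then derive Tutte--Berge by adjoining $k$ universal vertices. That second part is correct as written, including the parity bookkeeping $k\equiv|V(G)|\pmod 2$ and the check of the Tutte condition in $G'$.

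Citing is the sensible choice for a textbook theorem. Your version additionally justifies two things the paper only asserts. The first is the multigraph clause, which holds under the convention that a loop is never a matching edge; otherwise a single vertex with a loop already breaks the formula. The second is the parity identity $\odd(G-S)\equiv|V(G)|-|S|\pmod 2$, which the paper uses right after the lemma to reduce to $c(G^*-S)\le|S|+1$.

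The one step you leave unexecuted closes in the standard way. First note that $xz\in M_1\setminus M_2$, that $yw\in M_2\setminus M_1$, and that $w\notin\{x,z\}$ because $y$ is adjacent to both. Let $C$ be the cycle of $M_1\triangle M_2$ through $xz$.

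\emph{Case $yw\notin C$.} The $M_2$-edges on $C$ together with the $M_1$-edges off $C$ form a perfect matching of $G$. Neither $xy$ nor $yz$ is needed here.

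\emph{Case $yw\in C$.} Walk along $C$ from $y$, starting with $yw$, until the first vertex of $\{x,z\}$; say it is $z$. This path $P$ cannot enter $z$ through $xz$, since that would visit $x$ first. So $P$ begins and ends with $M_2$-edges and avoids $xz$. Since $yz\notin M_1\cup M_2$, the cycle $P+yz$ is $M_2$-alternating. Hence $M_2\triangle E(P+yz)$ is a perfect matching containing neither $yw$ nor $xz$, that is, a perfect matching of $G$.
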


Note that if $G$ has an even number of vertices, then $\odd(G-S) \equiv |S|  \pmod{2} $. Thus, by \prettyref{lmm:tutte}, a graph $G$ has a maximum matching of size $\lfloor |V(G)|/2\rfloor$ if and only if $\odd(G-S) \le |S|+1$ for all $S \subset V(G). $ 
Hence, to prove \prettyref{prop:perfect matching}, it suffices to show that with probability at least $1-n^{-1-\delta/8}$, the total number of connected components (even or odd) of $G^*-S$, denoted by $c(G^*-S)$, satisfies 
\#\label{eq:to-prove}
c(G^*-S)\le |S|+1, \quad \forall S \subset V(G^*).
\# 
We refer to any
$S$ satisfying $c(G^*-S) \ge |S|+2$ as a \emph{bottleneck}, since it violates the desired condition.  
Our analysis rules out the existence of bottlenecks of any size using key connectivity properties of random $d$-regular graphs, stated in the next lemma. Let $\mathrm{Ind}(G)$ be the maximum size of an independent set in $G$, \blue{where an independent set is a set of vertices with no edges between any pair of them,} and let $X_k$ be the number of connected components of size $k$ in $G[V]$.

\begin{lemma}\label{lmm:match-prop}
Under the same conditions as in \prettyref{prop:perfect matching}, let $V = \supp(D)$. There exists a constant $\gamma>0$ such that, with $\zeta = n^{-1-\delta/4}$,
\begin{align}
&\probs{\mathrm{Ind}(G) \ge 2n\log d/d}{G} \le
\zeta, \label{eq:ind_set} \\
&\probs{ \exists \text{ disjoint } U, W \subset [n] \text{ with } |U|, |W| \ge  {4 n \log d}/{d}\colon e(U, W)= 0}{G} 
\le \zeta, \label{eq:no_big_cut_1}\\
& \probs{ \exists \text{ disjoint } S, T\subset  V \text{ with }  |S|+1 \le |T| \le \gamma n \colon e(S,T) = e(V\setminus T, T) \ge |S| + 1}{G} \le \zeta, \label{eq:congestion} \\
&\probs{X_1 \ge 1}{G} 
\le n^{-\delta/8}, \ \pr_G\Big\{ \sum_{k=2}^{\gamma n} X_k \ge 1 \Big\} 
\le \zeta. \label{eq:no_small_comp} 
\end{align}
\end{lemma}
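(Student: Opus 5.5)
All four bounds will come from first-moment (union-bound) computations in the configuration model. The basic tool is the pairing estimate: for disjoint vertex sets $A,B$ with $|A|=a$, $|B|=b$, the probability that a given collection of $m$ half-edges incident to $A$ is paired entirely outside $B$ is at most $\prod_{i<m}\bigl(1-\tfrac{bd}{nd-2i-1}\bigr)\le \exp(-m b d/(nd))$. The probability that at least $k$ specified pairings land inside a set carrying $h$ half-edges is at most $\binom{m}{k}(h/(nd-2m))^k$. Throughout, $d\ge (1+\delta)\log n/\log(1/q)$, so $d=\Theta(\log n)$, $\log d/d\to 0$, and $q^d\le n^{-1-\delta}$.

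\textbf{Bounds \eqref{eq:no_big_cut_1} and \eqref{eq:ind_set}.} For \eqref{eq:no_big_cut_1}, fix disjoint $U,W$ with $|U|=|W|=u=4n\log d/d$; it suffices to treat the minimal sizes, since any larger pair contains such a subpair. By the pairing estimate, $\Pr\{e(U,W)=0\}\le \exp(-ud\cdot u/n)=\exp(-16 n\log^2 d/d)$. The number of pairs is at most $\binom{n}{u}^2\le \exp(2u\log(en/u))\le\exp(8n\log^2 d/d\,(1+o(1)))$. The exponent $16$ beats $8$, so the union bound is at most $\exp(-c n\log^2 d/d)\ll n^{-2}$. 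Bound \eqref{eq:ind_set} is the case $U=W$. An independent set $A$ of size $a=2n\log d/d$ requires its $ad$ half-edges to pair only outside $A$. Pairing these sequentially gives probability at most $\exp(-a^2 d/(2n))=\exp(-2n\log^2 d/d)$. This must beat $\binom{n}{a}\le \exp(2n\log d\,(\log d-\log\log d+O(1))/d)$. Here the constants are tight: to close the gap I would either split $A$ into halves and apply the cut bound to them, or, more simply, use the standard Bollobás-type bound for $\mathrm{Ind}$ of random regular graphs, which gives $(2+o(1))n\log d/d$. If the constant still fails, I would enlarge the threshold slightly; only $\mathrm{Ind}=o(n)$ is used downstream.

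\textbf{Small components \eqref{eq:no_small_comp}.} $X_1$ counts vertices in $V$ all of whose neighbours avoid $V$. Since $|V|\ge(1-\eta)np$, conditional on $V$ each such event has probability about $(1-|V|/n)^d\le (q+\eta p)^d$. Choose $\eta$ small enough that $(q+\eta p)^d\le n^{-1-\delta/8}$, which is possible because $d\log(1/(q+\eta p))\ge(1+\delta/2)\log n$ for small $\eta$. Then $\Pr\{X_1\ge1\}\le \E X_1\le n^{-\delta/8}$. For $2\le k\le\gamma n$, a component $C$ of size $k$ in $G[V]$ contains a spanning tree, which accounts for $k-1$ internal edges. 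All of its remaining $kd-2(k-1)$ half-edges must pair either inside $C$ or to $V^c$, and each does so with probability at most $(q+\eta p)+k/n$. I would count $\binom{n}{k}k^{k-2}$ trees, with each tree edge present with probability about $d^{2}/(nd)$, and multiply these factors. This gives
\[
\E X_k\le (e d)^{k} n^{-(k-1)}\bigl(q+\eta p+\gamma\bigr)^{kd-2k}\,n .
\]
For $k=2$ this is about $n\cdot d^2 n^{-1} n^{-2(1+\delta/2)}$, and the bound decays geometrically in $k$ as long as $\gamma$ is small. The main subtlety is the range $k=\Theta(n)$: there the factor $(q+\eta+\gamma)^{kd}$ against $\binom nk$ still wins, since $d\to\infty$.

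\textbf{Congestion \eqref{eq:congestion}.} This is the step I expect to be the main obstacle. Fix $|S|=s$, $|T|=t$ with $s+1\le t\le\gamma n$. The event requires that all $td$ half-edges of $T$ avoid $V\setminus(S\cup T)$ and that at least $s+1$ of them go into $S$. The probability is at most
\[
\binom{td}{s+1}\Bigl(\tfrac{sd}{nd}\Bigr)^{s+1}\bigl(q+\eta+\gamma\bigr)^{td-s-1}.
\]
Multiplying by $\binom{n}{s}\binom{n}{t}$ and using $s<t$, the factor $(q+\eta+\gamma)^{td/2}\le n^{-t(1+\delta/4)/2}\cdot$\,(small) must absorb $n^{t}$-type counting. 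This is tight, because $(q+\eta+\gamma)^{d}\approx n^{-1-\delta}$ only barely beats a single factor of $n$. I would therefore use the full exponent $td-s-1\ge td(1-1/d)$ rather than $td/2$. Combined with $(t d\, s/n)^{s+1}$ from the edges into $S$, this gives a total per $(s,t)$ of roughly $n^{s+t}\cdot n^{-(1+\delta/2)t}\cdot (tds/n)^{s+1}$. Since $s<t$ and $s/n\le\gamma$, this is $n^{-\delta t/2}\cdot(\text{poly }d)^{s}$. Summing over $s,t$ gives $O(n^{-1-\delta/4})$, with the worst case at $t=2$, $s=1$, where I would check the exponents carefully.
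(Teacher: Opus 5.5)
Your plan is the paper's route: first-moment union bounds in the configuration model for the independent-set and cut bounds, a spanning-tree count for small components of $G[V]$, and a union bound over $(S,T)$ for congestion, with $s+1$ half-edges of $T$ sent into $S$ and the rest into $S\cup T\cup V^c$. With the corrections below it goes through. The step you flag as doubtful is actually fine, but several of your estimates are wrong as written.

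For the independent set, your own numbers already close the bound. With $a=2n\log d/d$ you have $\log(en/a)-ad/(2n)=1-\log 2-\log\log d$. The union bound is therefore $\exp\{-a(\log\log d-1+\log 2)\}\ll n^{-2}$, which is exactly the paper's computation. Drop the fallbacks: the halves of $A$ have size $n\log d/d$, below the cut threshold $4n\log d/d$, and raising the threshold changes the statement.

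You must, however, allow self-loops. The independent set used downstream (one vertex per component of $G^*-S$) may carry loops, so ``all $ad$ half-edges pair outside $A$'' is only a sub-event of independence. Summing over the number of loops costs a factor $\exp\{O(ad/n)\}=d^{O(1)}$, as in the paper.

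More generally, your product with $m$ factors is valid only if the $m$ half-edges cannot pair with one another. In the cut bound the half-edges of $U$ can, so sequential revelation guarantees only $ud/2$ trials. This gives $\exp\{-u^2d/(2n)\}=\exp\{-8n\log^2 d/d\}$ rather than $\exp\{-16n\log^2 d/d\}$. Your ``$16$ beats $8$'' becomes a tie at leading order, broken again by the $-\log\log d$ in $\log(en/u)=\log d-\log\log d+\log(e/4)$. For the same reason, in the component and congestion bounds the per-half-edge factor should be $q+\eta p+O(\sqrt{\gamma})$ rather than $q+\eta p+\gamma$, since leftover half-edges may pair inside $C$ or $T$ two at a time. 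This is harmless for small $\gamma$.

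Your bookkeeping in the last two parts is also off. Let $\beta=q+\eta p+O(\sqrt{\gamma})$ be chosen with $\beta^d\le n^{-1-\delta'}$, where $\delta'=\delta/2$.

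In the component bound, $\binom{n}{k}k^{k-2}(d/n)^{k-1}\le e^k n d^{k-1}/k^2$, so $\mathbb{E}X_k\lesssim n(ed)^k\beta^{kd-2k}$. There is no extra factor $n^{-(k-1)}$, and the $k=2$ term is of order $d\,n^{-1-2\delta'}$, not $d^2n^{-2-2\delta'}$. This still suffices, but with far less room than your display suggests.

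In the congestion bound, your simplification to $n^{-\delta t/2}(\mathrm{poly}\,d)^s$ drops the factor $n^{-1}$, which is the whole source of the $n^{-1-\delta/4}$ rate; summing your displayed bound over $t\ge 2$ gives only about $n^{-\delta}$. Also, $(tds)^{s+1}$ is not polynomial in $d$ for large $s,t$ unless you keep the $s^{-s}t^{-t}$ from the binomials. Done correctly:
\begin{itemize}
\item $\binom{n}{s}(2s/n)^{s+1}\le(2e)^{s+1}s/n$;
\item $\binom{td}{s+1}\le(etd/(s+1))^{s+1}$;
\item $\binom{n}{t}\beta^{t(d-1)}\le(e/(t\beta))^t n^{-\delta' t}$.
\end{itemize}
Using $s+1\le t$, the product is at most $(t/n)\,(2e^3 d\,n^{-\delta'}/\beta)^t$. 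Summing over $1\le s<t$ and $t\ge 2$ gives $O(d^2 n^{-1-2\delta'})\le n^{-1-\delta/4}$.
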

In words, \prettyref{lmm:match-prop} asserts that, with high probability, a random $d$-regular graph with $d$ given by \eqref{eq:cond_d} satisfies these key structural properties: \emph{no large independent set} \eqref{eq:ind_set}, \emph{no big cut} \eqref{eq:no_big_cut_1}, \emph{little congestion} \eqref{eq:congestion}, and \emph{no small component} \eqref{eq:no_small_comp}. We defer the formal proof of \prettyref{lmm:match-prop} to Section \ref{sect:app-transportation} and focus here on how these properties are used. The first bound in \eqref{eq:no_small_comp} directly yields \prettyref{eq:isolated_node_0}. To establish \prettyref{eq:matching_size}, \eqref{eq:ind_set} rules out large bottlenecks, while \eqref{eq:no_big_cut_1}--\eqref{eq:no_small_comp} together rule out the remaining small bottlenecks. Figure~\ref{fig:trans-bott} illustrates the latter argument.

Our proof strategy is inspired by the classical work of \cite{p__erds__1966}, which proved the existence of perfect matching for \ER random graph with average degree $\log n+\omega(1)$, \blue{where $\omega(1)$ denotes a quantity that diverges as $n\to \infty$.}
Suppose, for the sake of contradiction, there exists a bottleneck $S \subset V(G^*)$
with $|S|=s$ such that $c(G^*-S) \ge s+2$. We first rule out the possibility of a large bottleneck. Since selecting one vertex from each connected component forms an independent set, the \emph{no-large-independent-set} property \eqref{eq:ind_set} ensures that $s+2< 2n \log d/d$. Next, we show that the existence of a small bottleneck also leads to a contradiction, as illustrated in Figure
\ref{fig:trans-bott}. Consider the connected components of $G^*-S$. Any partition of these components into two groups induces a cut; therefore, by the \emph{no-big-cut} property \eqref{eq:no_big_cut_1}, there exists a union of $s+1$ components of $G^*-S$, denoted by $T$, whose total size is small, specifically $|T| \le \gamma n$. Since both $S$ and $T$ are small, by the \emph{little-congestion} property~\eqref{eq:congestion}, we then show that $e(S,T)\le s$. On the other hand, by the \emph{no-small-component} property \eqref{eq:no_small_comp} of $G$, we obtain $e(S, T)\ge  s+1$, leading to a contradiction. 

\begin{figure}[ht]
    \centering
\begin{tikzpicture}[
    % Define styles for consistency
    node distance=1.5cm and 1cm,
    every node/.style={},
    bottleneck/.style={ellipse, draw, thick, minimum width=1.8cm, minimum height=1.35cm, inner sep=2pt},
    t_node/.style={ellipse, draw, thick, fill=white, minimum width=1.0cm, minimum height=0.75cm},
    set_node/.style={ellipse, draw, thick, fill=white, minimum width=3.2cm, minimum height=2.4cm},
    dashed_box/.style={draw=gray, thick, dashed, rounded corners, inner sep=10pt, fill=gray!5},
    inner_node/.style={circle, draw=blue!70, fill=blue!30, inner sep=1.5pt},
    inner_edge/.style={thick, blue!60!red!60}
]

    % 1. Draw the top node (S)
    \node[bottleneck] (S) {\small $S$};

    % 2. Draw the bottom nodes (T set)
    % Center node
    \node[t_node, below=1.6cm of S] (n3) {};
    % Nodes to the right
    \node[t_node, right=0.5cm of n3] (n4) {};
    \node[right=0.3cm of n4] (dots) {\Large $\cdots$};
    \node[t_node, right=0.3cm of dots] (n5) {};
    % Nodes to the left
    \node[t_node, left=0.5cm of n3] (n2) {};
    \node[set_node, left=0.5cm of n2] (n1) {};

    %--- Small Components on the right ---
\foreach \i in {2, 3, 4, 5} {
    % Add internal nodes and edges to small components
    \foreach \j in {1,2,3, 4} {
         \node[inner_node] (cn\i\j) at ($(n\i.center)+({120+(\j-1)*90}:0.2)$) {};
    }
}
\draw[inner_edge] (cn24) -- (cn21) -- (cn22) -- (cn23);
\draw[inner_edge] (cn34) -- (cn31);
\draw[inner_edge] (cn31) -- (cn32);
\draw[inner_edge] (cn31) -- (cn33);
\draw[inner_edge] (cn44) -- (cn41) -- (cn43) -- (cn42);
\draw[inner_edge] (cn51) -- (cn54) -- (cn53) -- (cn52);
\draw[inner_edge] (cn54) -- (cn52);

\foreach \j in {1,..., 9} {
         \node[inner_node] (ln\j) at ($(n1.center)+({120+(\j-1)*40}:0.9)$) {};
}
\draw[inner_edge] (ln2) -- (ln3) -- (ln4) -- (ln5) -- (ln6) -- (ln7) -- (ln8) -- (ln9) -- (ln1);
\draw[inner_edge] (ln1) -- (ln5);
\draw[inner_edge] (ln1) -- (ln3) -- (ln7)-- (ln9);

\foreach \j in {1,..., 6} {
         \node[inner_node] (sn\j) at ($(S.center)+({120+(\j-1)*60}:0.45)$) {};
}
\draw[inner_edge] (sn6) -- (sn1) -- (sn5) -- (sn6);
\draw[inner_edge] (sn2) -- (sn3);

    % 3. Draw the edges
    % We use 'edge' to push them to the background layer automatically if configured, 
    % or just draw lines.
    \draw[inner_edge, dashed] (ln8) -- (sn2);
    \draw[inner_edge] (sn3) -- (cn24);
    \draw[inner_edge] (sn4) -- (cn31);
    \draw[inner_edge] (sn4) -- (cn41);
    \draw[inner_edge] (sn5) -- (cn51);

    % 4. Draw the grouping box (G* - S)
    % The 'fit' library calculates the size automatically
    \begin{scope}[on background layer]
        \node[dashed_box, fit=(n1) (n5), label={above left: \small  Components of $G^* - S$}] (box) {};
    \end{scope}

    % 5. Draw the curly brace
    \draw[decorate, decoration={brace, amplitude=10pt, mirror}, thick]
        ([yshift=-0.24cm]n2.south west) -- ([yshift=-0.24cm]n5.south east)
        node[midway, below=12pt] {\small $T$: Small components};
\end{tikzpicture}
\caption{Illustration of a small bottleneck $S$. The figure shows the graph $G^*$, decomposed into the node set $S$ and the connected components of $G^*-S$. By the {no-big-cut} property \eqref{eq:no_big_cut_1}, there exists a union $T$ of at least $s+1$ small components, whose total size $|T|$ is small. Since both $S$ and $T$ are small, the {little-congestion} property~\eqref{eq:congestion} implies that $e(S,T)\le s$. However, by the {no-small-component} property \eqref{eq:no_small_comp} of $G^*$, each of these components must have at least one edge to $S$ (shown as the solid lines), so $e(S, T) \ge s+1$. This contradiction rules out the existence of such a small bottleneck. We also note that the largest component of $G^*-S$ may have no edge to $S$, hence shown with a dashed line.}
\label{fig:trans-bott}
\end{figure}

We remark that in \prettyref{lmm:match-prop}, the properties \eqref{eq:congestion} and \eqref{eq:no_small_comp} concern the induced subgraph $G[V]$. Ensuring that these properties hold under the minimum-degree requirement fundamentally relies on the random $d$-regular graph design. %These conditions are crucial for achieving bounded loss, and their validity is specific to random regular graphs. 
In contrast, \eqref{eq:ind_set} and \eqref{eq:no_big_cut_1} are properties of the original graph $G$ that are already present whenever $d$ is a sufficiently large constant. Hence, they do not determine the sharp degree threshold.

\begin{proof}[Proof of~\prettyref{prop:perfect matching}] Assume that none of the low-probability events in \prettyref{lmm:match-prop} occurs. Suppose, for the sake of contradiction, that there exists a set $S \subset V(G^*)$
with $|S|=s$ such that $c(G^*-S) \ge s+2$. 
Selecting one vertex from each connected component in $G^*-S$ forms an independent set in $G$; hence 
$c(G^*-S) \le \mathrm{Ind}(G)$.
By \eqref{eq:ind_set}, $\mathrm{Ind}(G) < 2n\log d/d$. Hence, we have
$$
s+2 \le c(G^*-S) \le \mathrm{Ind}(G) < \frac{2 n\log d}{d}.
$$

%Next, we show that if we divide the connected components in $G^*-S$ into two groups, then one group must have fewer than $4n\log (d)/d$ vertices.
Next, we show that $G^*-S$ contains $s+1$ components whose total number of vertices among them is less than $M\triangleq 4n\log d/d$.
Specifically, let $T_0, T_1, \cdots, T_{r}$  denote the vertex sets of the connected components in $G^*-S$ with $|T_i|\triangleq t_i$ such that $t_0 \ge t_1 \ge \cdots \ge t_r \ge 1$, where $r+1=c(G^*-S)\ge s+2.$  Note that for any $0 \le j \le r-1$, $\cup_{i=0}^j T_i$ and $(\cup_{i=j+1}^r T_i) \cup I$ do not have any crossing edge and thus form a cut in $G$, recalling that $I$ denotes the set of isolated vertices in $G[V]$.  It follows from the no-big-cut property of $G$ in~\eqref{eq:no_big_cut_1} that 
\begin{align}
\text{ either } \sum_{i=0}^j t_i <M 
\text{ or } \sum_{i=j+1}^r t_i + |I| < M, \quad \forall 0 \le j \le r-1. \label{eq:either_or}
\end{align}
%where $N=4 n \log d/d$. \nb{It's a bit weird to define $N$ here given $4 n \log d/d$ appeared earlier. Also, $N$ can be confused with $N(\delta)$, maybe use $C$ or something else?}
Suppose $t_0 < M.$ Let $j\ge 1$ be the smallest index such that 
$\sum_{i=0}^{j-1} t_i < M \le  \sum_{i=0}^{j} t_i $. Then 
$\sum_{i=0}^j t_i \le \sum_{i=0}^{j-1} t_i +
t_0 \le 2M$. Since
$\sum_{i=0}^r t_i + |I| = |V|-s$, 
it follows that 
$$
\sum_{i=j+1}^r t_i + |I| \ge  |V| - s -2M
\ge (1-\eta)np - 10n \log d/d \ge M,
$$
where the last inequality holds for sufficiently large $n$, such that $d$ in \eqref{eq:cond_d} satisfies $\log d/d \le p/28$. %\nb{Here is where we used the condition $n$ is large enough. I prefer not impose condition on $d$ except \eqref{eq:cond_d}}
Thus, we arrive at a contradiction to~\prettyref{eq:either_or}. Hence, we must have $t_0 \ge M$, which, by~\prettyref{eq:either_or}, further implies that $\sum_{i=1}^r t_i + |I| < M$.
%and $t_0 \ge  \rho n -s -N.$ \nbr{If I am not mistaken, we do not need the following: Note that 
% $T_0$ and $(\cup_{i=j+1}^r T_j) \cup I$ form a cut in $\tilde{G}$. Thus, we apply
% the no-big-cut property of $\tilde{G}$ established in~\prettyref{lmm:no_big_cut_2} with
% $\alpha=C \log^2 d/(\rho^2 d^2) $
% and $\beta=1- (s+N)/(\rho n) \ge 1-6\log (d)/(\rho d)$. The conditions in~\prettyref{lmm:no_big_cut_2}  are all satisfied by choosing $C$ to be a sufficiently large constant. Hence, applying~\prettyref{lmm:no_big_cut_2}  yields that 
% $$
% \sum_{i=1}^r t_i + |I| \le C \frac{n \log^2(d)}{\rho d^2}.
% $$}
Let $T=\cup_{i=1}^r T_i$. Then $|T| \le M.$

We now apply the little-congestion property~\eqref{eq:congestion} to $S$ and $T$. 
Since $r \ge s+1$, $|T| = |\cup_{i=1}^r T_i| \ge s+1.$ 
Note that $e(V\setminus T, T) = e(S,T)$, since the connected components $T=\cup_{i=1}^r T_i$ have no edges to the remaining component $T_0$ or to the isolated nodes $I$. Moreover, since $|T|\le M = 4n \log d/d $, we have $|T| \le \gamma n$ for $\gamma$ in Lemma \ref{lmm:match-prop} and sufficiently large $d$. Therefore, by \eqref{eq:congestion}, we have $e(S,T) \le s$. 

Conversely, 
$e(S, T_i) \ge 1$ for every $1 \le i \le r$, 
because otherwise $T_i$ is a connected component in $G[V]$, contradicting the no-small-component property of $G[V]$ in~\eqref{eq:no_small_comp}. Therefore, 
$e(S, T) \ge \sum_{i=1}^r e(S, T_i) \ge r \ge s+1$, leading to a contradiction. 

In summary, when none of the low-probability events in \prettyref{lmm:match-prop} occur, \eqref{eq:to-prove} holds, and consequently $\mu(G^*) = \floor{|V(G^*)|/2}$.  
Since each event occurs with probability at most $\zeta$, a union bound yields that none of them occur with probability at least $1-4\zeta = 1-O(n^{-1-\delta/8}).$ 
\end{proof}

\blue{The proof of Proposition \ref{prop:perfect matching} also reveals a close parallel with the process-flexibility analysis.
In both settings, we first control large bottlenecks and then rule out small bottlenecks. We also note a close parallel between the small-bottleneck arguments via Lemmas \ref{lmm:epsilon_connectivity} and \ref{lmm:match-prop}. In both settings, the proof combines two connectivity conditions: one ensures sufficiently many relevant edges, while the other prevents these edges from being excessively concentrated. For process flexibility, these roles are played by \eqref{eq:no_waste} and \eqref{eq:flex-d-no-cong}; for transportation, by \eqref{eq:no_small_comp} and \eqref{eq:congestion}. Thus, although the notions of bottleneck differ, the corresponding connectivity properties play analogous roles in the two analyses.}

\subsubsection{Proof of Theorem \ref{thm:constant d}}
\blue{For the fractional-loss regime, the degree $d$ remains constant as $n\to \infty$, and a different approach becomes useful. In this regime, a random $d$-regular graph is sparse: each vertex has only a constant number of neighbors, and because the edges are formed randomly over an increasingly large graph,  short cycles are rare. Consequently, the graph is locally tree-like: any fixed-radius neighborhood around a typical vertex resembles a tree with high probability. This behavior is formalized by local weak convergence, which describes the limiting distribution of such local neighborhoods.} A celebrated result from the literature shows that, for such locally tree-like graphs, the asymptotic maximum matching size can be characterized through a local recursion on the limiting tree. We defer the formal definitions of local weak convergence and its limiting object, the unimodular Galton--Watson trees, to Section \ref{sect:ec-lwc}.

%characterizes the matching size for graphs with this locally tree-like structure, and defer the formal definitions of local weak convergence and unimodular Galton--Watson trees to Section \ref{sect:ec-lwc}.
%This property allows the asymptotic maximum matching size to be determined by solving local recursive equations on the limiting tree. We recall here a celebrated result from the literature that characterizes the matching size for graphs with this locally tree-like structure, and defer the formal definitions of local weak convergence and unimodular Galton--Watson trees to Section \ref{sect:ec-lwc}.

\begin{lemma}[\citep{bordenave2013matchings}]
\label{lem:matching_limit_ugw}
Suppose that $\{G_n= (V_n, E_n)\}_{n \in \mathbb{N}}$ is a sequence of finite graphs admitting a local weak convergence to a unimodular Galton--Watson (UGW) tree with degree distribution $\pi$. Recall that $\mu(G_n)$ denotes the maximum matching size of $G_n$. Then %the ratio of the maximum matching size $\mu(G_n)$ to the number of vertices $|V_n|$ is given by: \nbr{this is a bit wordy...}
\[
\frac{\mu(G_n)}{|V_n|} \xlongrightarrow{} \frac{1 - \max_{t \in [0,1]} F(t)}{2}, 
\]
where
\begin{align}
F(t) = t\phi'(1-t) + \phi(1-t) + \phi\left(1 - \frac{\phi'(1-t)}{\phi'(1)}\right) - 1, \label{eq:def_F}
\end{align}
and $\phi(t) = \sum_k \pi_k t^k$ is the probability generating function of the degree distribution $\pi$. 
% \blue{I thought the MGF is $\E[e^{tX}] = \sum_{k} M_k t^k/k!$ with $M_k = \E[t^k]$ being the $k$th moment. While the PGF (only for discrete random variables) is $\E[t^{X}] = \sum_{k} \pi_k t^k$ with $\pi_k = \pr(X=k)$ being the probability mass function. Although \citep{bordenave2013matchings} refers to it as MGF, maybe it's better to call it PGF to avoid confusion?}  
% \nbr{OK. I do not think the name really matters here, but I'm Ok to use probability generating function for better clarify.}
%\nbr{check whether this degree distribution requires zero mass at degree 0.}
\end{lemma}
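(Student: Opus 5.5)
We plan to follow the monomer--dimer (positive temperature) route of \citet{bordenave2013matchings}, since a maximum matching is a zero-temperature limit of a Gibbs measure on matchings, and Gibbs quantities at positive temperature are local.

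\emph{Step 1 (finite-temperature problem).} For a finite graph $G=(V,E)$ and fugacity $z>0$, define the partition function $P_G(z)=\sum_{M}z^{|M|}$ over all matchings $M$. Let $\mathbb P^z_G$ be the Gibbs measure proportional to $z^{|M|}$. The normalized log-partition function and the monomer density $\frac{1}{|V|}\sum_v\mathbb P^z_G(v\notin M)$ are tied together by $z\,\partial_z\log P_G(z)=\mathbb E^z_G|M|$. We will use two facts.
\begin{itemize}
\item The root marginal satisfies the exact tree recursion $x_v=1/(1+z\sum_{u\sim v}x_{u\to v})$ on trees, where $x_{u\to v}$ is the corresponding cavity quantity for $u$ in the graph with $v$ removed.
\item This recursion is a contraction after two iterations, because the map is antitone, so even and odd iterates sandwich the true value.
\end{itemize}
The contraction gives correlation decay: $\mathbb P^z_G(v\notin M)$ is approximated, uniformly in $G$, by a function of the ball of radius $R=R(z)$ around $v$. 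Local weak convergence to the UGW tree then yields convergence of $\frac{1}{|V_n|}\mathbb E^z_{G_n}|M|$ to $\frac12\bigl(1-\mathbb E[x_o(z)]\bigr)$. Here $x_o(z)$ is the root value on the limiting tree, obtained from the unique solution of the recursive distributional equation.

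\emph{Step 2 (interchanging $z\to\infty$ and $n\to\infty$).} We need a bound on $\mu(G)/|V|-\mathbb E^z_G|M|/|V|$ that is uniform in $G$ with bounded average degree and vanishes as $z\to\infty$. We will use the sandwich
\[
\mathbb E^z_G|M|\le\mu(G)\le \frac{\log P_G(z)}{\log z},
\]
together with $\log P_G(z)\le \mu\log z+|V|\log 2$, since the number of matchings is at most $2^{|E|}$ and $|E|=O(|V|)$. A convexity argument in $\log z$ turns this into $\mu(G)/|V|-\frac{1}{|V|}\mathbb E^z_G|M|=O(1/\log z)$ uniformly. This makes the two limits commute, provided the log-partition function also converges under local weak convergence, which again follows from Step 1 by integrating in $z$.

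\emph{Step 3 (zero-temperature fixed points).} As $z\to\infty$, the rescaled messages converge in law to solutions of the $\{0,1\}$-valued recursion "$x_v=1$ iff all children have $x=0$." The relevant extremal fixed points are encoded by $t=\mathbb P(\text{message}=1)$ and satisfy $t=\hat\phi(1-\hat\phi(1-t))$, where $\hat\phi=\phi'/\phi'(1)$ is the size-biased generating function. Computing the monomer probability at the root from the two extremal solutions (the smallest and largest fixed points) gives $1-\mu/|V|\cdot 2 \to \max_t F(t)$ with $F$ as in \eqref{eq:def_F}. One also checks that the maximum over $[0,1]$ is attained at such a fixed point, because $F'(t)$ vanishes exactly when $t$ solves the two-step equation.

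\emph{Main obstacle.} We expect the hardest part to be Step 3. Correlation decay fails at $z=\infty$, so the $z\to\infty$ limit of the unique finite-$z$ solution must be identified among possibly many zero-temperature fixed points. We would first try monotonicity: the even and odd iterates from the all-$0$ and all-$1$ boundary conditions converge to the extremal fixed points. We would then show the finite-$z$ root law is squeezed between them and that the resulting monomer density equals $\max_t F(t)$.
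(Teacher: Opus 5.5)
\textbf{Gap in Step 3.} The paper does not prove this lemma; it imports it from \citet{bordenave2013matchings}, so your sketch has to carry the whole argument. Steps 1--2 are essentially sound, with caveats at the end. Step 3, however, is where the real content lies, and it is asserted rather than argued. The mechanism you propose also cannot deliver it.

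First, the zero-temperature limit is not a $\{0,1\}$-valued object. Already in the star $K_{1,3}$, each leaf is unmatched with probability $2/3$ under the uniform maximum matching. On an infinite tree, messages can also live on intermediate scales: on the $d$-regular tree, symmetry forces $x=1/(1+(d-1)zx)$, so $x\asymp z^{-1/2}$. The rescaled messages therefore do not converge to a solution of the $\{0,1\}$ recursion.

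Second, and more seriously, the limiting monomer density is not a function of the root's type under either extremal solution. Let $\sigma^-\le\sigma^+$ be the minimal and maximal solutions of the two-step $\{0,1\}$ recursion; the one-step map swaps them. Incoming messages at the root $o$ are then i.i.d.\ with $\mathbb{P}(\sigma^-=1)=t_*$ and $\mathbb{P}(\sigma^+=1)=t^*=\phi'(1-t_*)/\phi'(1)$. The root is of type $1$ with probability $\phi(1-t_*)$ under $\sigma^-$ and $\phi(1-t^*)$ under $\sigma^+$. Now let $\mathcal{E}^+$ be the event that all incoming $\sigma^+$-messages at $o$ are $0$, and $\mathcal{E}^-$ the event that at least two incoming $\sigma^-$-messages are $1$. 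A direct computation gives $F(t_*)=\phi(1-t^*)-[1-\phi(1-t_*)-t_*\phi'(1-t_*)]=\mathbb{P}(\mathcal{E}^+)-\mathbb{P}(\mathcal{E}^-)$, which in general equals neither root-type probability. This is the unimodular analogue of the Gallai--Edmonds count $|V|-2\mu(T)=|D|-|A|$ for a finite tree $T$. Here $D$ is the set of vertices receiving no type-$1$ message and $A$ the set receiving at least two. The difference is that in $F(t_*)$ the two terms are evaluated at different extremal solutions.

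\textbf{What is missing.} Per-vertex exposure probabilities are fractional, and only their sum matches such counts. In $K_{1,3}$, the three leaves contribute $3\cdot\tfrac23=|D|-|A|$ while the center contributes $0$. A root-level squeeze between the extremal solutions only brackets the root marginal between quantities determined by the root's types, so it cannot produce this mixed expression. You would need two further ingredients:
\begin{itemize}
\item[(i)] a mass-transport (unimodularity) step that turns $\mathbb{E}[x_o(z)]$ into vertex counts of this kind;
\item[(ii)] a proof that, as $z\to\infty$, the positive-temperature solution selects exactly this combination on the event $\{\sigma^-\ne\sigma^+\}$.
\end{itemize}
This is the substance of the cited zero-temperature analysis.

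The issue is not avoidable in the paper's application. For $\mathrm{Bin}(d,p)$ with $d$ large, the fixed point $t_1$ of $f(t)=(1-pt)^{d-1}$ satisfies $|f'(t_1)|=(d-1)pt_1/(1-pt_1)\to\infty$. Hence $f\circ f$ has three fixed points, and $\mathbb{P}(\sigma^-\ne\sigma^+)=t^*-t_*>0$.

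\textbf{Smaller points.} In Step 1, antitonicity yields only the even/odd sandwich. Decay of the gap needs a quantitative contraction estimate, and such estimates are uniform over graphs only under a degree bound. In Step 2, $|E_n|=O(|V_n|)$ does not follow from local weak convergence. You need either bounded degrees, which holds in the paper's use since degrees are at most $d$, or a truncation step (deleting a vertex changes $\mu$ by at most one). Your convexity bound itself is correct: with $g(s)=\log P_G(e^s)$ convex, $g'(s)=\mathbb{E}^{e^s}_G|M|\ge (g(s)-g(0))/s\ge\mu(G)-|E|\log 2/s$.
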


Lemma \ref{lem:matching_limit_ugw} provides powerful machinery for determining the asymptotic size of the maximum matching in sparse random graphs. \blue{In our setting, the random node deletions induced by the demand realization translate naturally into independent node deletions on the local limiting tree. Applying Lemma \ref{lem:matching_limit_ugw} then involves two steps: first, identifying the limiting tree after node deletion; and second, analyzing the corresponding matching objective $\max_{t \in [0,1]} F(t)$.} While this is not the main focus of the paper, we illustrate this approach with a warm-up example: we characterize the matching loss of \ER graphs with constant average degree $d$, thereby resolving the open question posed in~\citet[Section 5.4]{feng2024designing}.  %In this setting, we can show that the local degree distribution of ${\rm ER}_n[\supp(D)]$ is $\Pois(\lambda)$ with $\lambda = (1-q)d$. %Since \cite{bordenave2013matchings} also discusses this $\Pois(\lambda)$ case as an example, we omit some computational details. \nbr{omit what???} 

\begin{corollary}\label{cor:er-const}
Suppose $D\sim\cD_p$. For the \ER random graph ${\rm ER}_n$ with constant average degree $d$, the matching loss satisfies
$$
\frac{L({\rm ER}_n, D)}{np} \xlongrightarrow{\mathbb{P}}
F_\lambda(t_*), 
$$
where $F_\lambda(t_*)=(\lambda t_*+1)\exp(-\lambda t_*) + t_*-1$, $\lambda = (1-q)d$, $t_* \in (0,1)$ is the smallest fixed point of the equation $t=\exp(-\lambda \exp(-\lambda t))$, and $F_\lambda(t_*)=\exp(-(1-q)d)(1+o_d(1))$ for large $d$.
%The convergence also holds in $L^r$ norm.
\end{corollary}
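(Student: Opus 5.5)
The plan is to reduce to Lemma~\ref{lem:matching_limit_ugw} applied to the residual graph, and then evaluate the resulting variational formula explicitly for a Poisson degree distribution.

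\textbf{Reduction to the residual graph.} Under $\cD_p$, each vertex survives independently with probability $p$. Write $V=\supp(D)$ and $m=|V|$, so that $m/n\to p$ almost surely by the law of large numbers. Take ${\rm ER}_n$ with edge probability $d/n$. Conditional on $V$, the residual graph ${\rm ER}_n[V]$ is exactly an \ER graph on $m$ vertices with edge probability $d/n=\lambda/m\cdot(1+o(1))$, where $\lambda=pd=(1-q)d$.

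The standard local weak convergence of sparse \ER graphs (in probability) then gives that ${\rm ER}_n[V]$ converges locally to the UGW tree with offspring law $\Pois(\lambda)$; I will quote this from Section~\ref{sect:ec-lwc}. Lemma~\ref{lem:matching_limit_ugw}, in its in-probability form for random graph sequences, yields $\mu({\rm ER}_n[V])/m\to(1-\max_{t\in[0,1]}F(t))/2$.

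On the benchmark side, $\mu(K_n[V])=\lfloor m/2\rfloor$. Therefore
\[
L/(np)=\bigl(2\lfloor m/2\rfloor-2\mu({\rm ER}_n[V])\bigr)/(np)\to\max_t F(t)
\]
in probability, using $m/(np)\to 1$.

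\textbf{Computing the maximum.} For $\Pois(\lambda)$ we have $\phi(s)=e^{\lambda(s-1)}$ and $\phi'=\lambda\phi$. Substituting into \eqref{eq:def_F} gives
\[
F(t)=\lambda t e^{-\lambda t}+e^{-\lambda t}+\exp(-\lambda e^{-\lambda t})-1 .
\]
Differentiating,
\[
F'(t)=\lambda^2e^{-\lambda t}\bigl(\exp(-\lambda e^{-\lambda t})-t\bigr),
\]
so the sign of $F'$ is the sign of $h(t)=g(g(t))-t$, where $g(t)=e^{-\lambda t}$. Since $h(0)>0$ and $h(1)<0$, the maximum over $[0,1]$ is attained at an interior fixed point of $g\circ g$.

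Next I will use the following symmetry. If $t$ is a fixed point of $g\circ g$, then so is $u=g(t)$. At such a point $F=\lambda tu+u+t-1$, which is symmetric in $(t,u)$.

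\textbf{Identifying the maximizer.} If $\lambda\le e$, $g\circ g$ has a unique fixed point, so $t_*$ is the maximizer. If $\lambda>e$, there are three fixed points $t_1<t_2<t_3$ with $t_3=g(t_1)$, and $t_2$ the fixed point of $g$. The sign pattern of $h$ (positive, negative, positive, negative) shows that $t_1$ and $t_3$ are the local maxima. By the symmetry they have equal $F$-values, so the maximum equals $F(t_*)$ with $t_*=t_1$ the smallest fixed point.

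Using $\exp(-\lambda e^{-\lambda t_*})=t_*$ then gives $F_\lambda(t_*)=(\lambda t_*+1)e^{-\lambda t_*}+t_*-1$. The main obstacle is this careful sign and fixed-point analysis of $h$ (that $h$ has at most three zeros, with the stated ordering); I would handle it via the log-convexity structure of $g\circ g$, following the Karp--Sipser analysis in \citet{bordenave2013matchings}.

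\textbf{Large-$d$ asymptotics.} For large $\lambda$, $t_*\in(0,1)$ satisfies $t_*=\exp(-\lambda e^{-\lambda t_*})$. This forces $\lambda t_*\to 0$ and hence $t_*=e^{-\lambda}(1+O(\lambda^2e^{-\lambda}))$. Expanding,
\[
(\lambda t_*+1)e^{-\lambda t_*}=1-\tfrac12\lambda^2t_*^2+O(\lambda^3t_*^3),
\]
so $F_\lambda(t_*)=t_*-O(\lambda^2t_*^2)=e^{-\lambda}(1+o_d(1))$, as claimed.
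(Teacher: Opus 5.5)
Your proposal follows the paper's proof: local convergence of ${\rm ER}_n[\supp(D)]$ to the Poisson$(\lambda)$ UGW tree, then Lemma~\ref{lem:matching_limit_ugw} with $\phi(s)=e^{\lambda(s-1)}$, then $|\supp(D)|/(np)\to 1$; you actually supply more of the maximizer analysis than the paper does, and your symmetry $F(t)=F(g(t))$ at fixed points of $g\circ g$ is the same device used in Lemma~\ref{lmm:F_extrema}. Two points need tightening: $\lambda t_*\to 0$ follows from $t_*$ being the \emph{smallest} root (e.g.\ $g(g(2e^{-\lambda}))\le e^{-\lambda}\exp(2\lambda^2e^{-\lambda})<2e^{-\lambda}$ for large $\lambda$), not from the fixed-point equation alone, which also has roots with $\lambda t\to\infty$; and the at-most-three-roots property comes from $g\circ g$ having a single inflection point, at $\lambda e^{-\lambda t}=1$, not from log-convexity, since $\log(g\circ g)=-\lambda e^{-\lambda t}$ is concave.
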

\begin{proof}
%Let ${\rm ER}_n$ be an \ER random graph with average degree $d$, and l
Let ${\rm ER}_n[\supp(D)]$ denote the subgraph induced by the random vertex set $\supp(D).$ By the Poisson splitting property, one can show that ${\rm ER}_n[\supp(D)]$ converges locally in probability to a UGW tree with Poisson degree distribution $\Pois(\lambda)$,
where $\lambda=pd=(1-q)d$.
The probability generating function for $\Pois(\lambda)$ is given by 
$\phi(t)=\exp(\lambda(t-1))$.
Applying~Lemma \ref{lem:matching_limit_ugw} and solving for the maximizer of $F(t)$, we obtain that 
   $$
\frac{\mu\left({\rm ER}_n[\supp(D)]\right)}{|\supp(D)|}
   \xlongrightarrow{\mathbb{P}} \frac{1 - F_\lambda(t_*) }{2}, 
   $$
   where  $t_* \in (0,1)$ is the smallest fixed point of the equation $t=\exp(-\lambda \exp(-\lambda t))$
   and $F_\lambda(t_*)=(\lambda t_*+1)\exp(-\lambda t_*) + t_*-1$. For large $\lambda$, we have $t_*=(1+o_\lambda(1))\exp(-\lambda)$
and $F_\lambda(t_*)=(1+o_\lambda(1))\exp(-\lambda)$. Since $|\supp(D)|/(np) \to 1$ in probability, it follows that the matching loss satisfies 
   \begin{equation*}
   \frac{L({\rm ER}_n,D)}{np}  \xlongrightarrow{\mathbb{P}}
   F_\lambda(t_*). \qedhere
   \end{equation*}
%   Because $L({\rm ER}_n,D)/(np)$ is uniformly bounded by $1/p$,  the convergence also holds in $L^r$ norm. 
\end{proof}
As shown in \prettyref{cor:er-const}, 
the limiting matching loss of \ER graphs with constant average degree $d$ is on the order of $(1+o_d(1))\exp(-(1-q)d)$. Thus, following the same steps used later to prove Theorem \ref{thm:constant d}, we conclude that an average degree $d = (1+o_\epsilon(1)) \log(1/\epsilon)/(1-q)$ is necessary and sufficient for \ER graphs to achieve an $\epsilon$-fractional loss.

Returning to our main focus, the same two steps become substantially more delicate for random $d$-regular graphs.  First, although it is natural to conjecture that $G[\supp(D)]$ (with $G\sim\cG(n,d)$) converges locally in probability to a UGW tree with degree distribution $\Binom(d,p)$, this convergence needs to be established rigorously. Second, sharply characterizing $\max_{t \in [0,1]} F(t)$ is technically more involved. The following lemmas establish these key results.
\begin{lemma}\label{lmm:lwc_rrg}
Let $\{\widetilde G_n\}_{n \in \mathbb{N}}$ be the sequence of random graphs, where $\tilde{G}_n=G_{n,d}[\supp(D)]$, $D\sim \cD_p$, and $G_{n,d}\sim\cG(n,d)$ with a constant $d.$ Then $\widetilde G_n$ converges locally in probability to the unimodular Galton--Watson tree with degree distribution $\Binom(d, p)$.
\end{lemma}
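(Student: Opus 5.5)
To prove Lemma~\ref{lmm:lwc_rrg}, we will follow the standard route for local convergence in probability: a first-moment computation for rooted neighborhood counts, followed by a second-moment (concentration) argument. Fix a radius $r\ge 1$ and a finite rooted tree $(T,o)$ of depth at most $r$ in which every vertex has degree at most $d$. Let
\[
N_n(T)\triangleq \#\bigl\{v\in\supp(D)\colon B_r(\widetilde G_n,v)\cong (T,o)\bigr\}.
\]
It suffices to show that $N_n(T)/|\supp(D)| \to \pr\{B_r(\calT,o)\cong (T,o)\}$ in probability, where $\calT$ is the UGW tree with degree law $\Binom(d,p)$. Since $|\supp(D)|/(np)\to 1$ in probability by the law of large numbers, we may equivalently normalize by $np$. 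Because the depth-$r$ neighborhoods in $\widetilde G_n$ have uniformly bounded size, the class of such $(T,o)$ is finite. Convergence for each $(T,o)$ therefore gives local convergence in probability.

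\textbf{First moment.} We first compare $G_{n,d}$ with its local limit. It is classical for the configuration model with fixed $d$ \cite[Chapter 9]{janson2011random} that the depth-$r$ ball $B_r(G_{n,d},v)$ is a tree, namely the $d$-regular tree truncated at depth $r$, with probability $1-O((d-1)^{2r}/n)$. This follows from exploring the ball half-edge by half-edge and noting that each pairing lands on an already-discovered half-edge with probability $O(\text{size}/n)$.

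Conditional on the ball in $G_{n,d}$ being a tree, the demand deletions are i.i.d.\ Bernoulli$(q)$ on its vertices and independent of $G$. The residual ball around a surviving root $v$ is then obtained by percolating the truncated $d$-regular tree. In this percolation:
\begin{itemize}
\item the root keeps each of its $d$ children independently with probability $p$, giving $\Binom(d,p)$ offspring;
\item each surviving non-root vertex keeps each of its $d-1$ children independently, giving $\Binom(d-1,p)$ offspring.
\end{itemize}
This is exactly the UGW tree with degree distribution $\Binom(d,p)$, because the size-biased law of $\Binom(d,p)$ minus one equals $\Binom(d-1,p)$. Hence
\[
\E[N_n(T)]=np\,\pr\{B_r(\calT,o)\cong T\}+O(1).
\]

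\textbf{Second moment (the main obstacle).} We must show $\var(N_n(T))=o(n^2)$. We write $N_n(T)=\sum_v \mathbf 1_v$ and bound the covariances. For $v,w$ at distance greater than $2r$ in $G_{n,d}$, the relevant quantities are governed by disjoint randomness in two places:
\begin{itemize}
\item the demand deletions on the two balls are independent;
\item the joint exploration of the two balls in the configuration model differs from independent explorations only by an $O(1/n)$ perturbation, since after exposing one ball the remaining pairing is again a uniform matching on $nd-O(1)$ half-edges.
\end{itemize}
So these pairs contribute covariance $O(1/n)$ each. The number of pairs within distance $2r$ is at most $n d^{2r}=O(n)$. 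Together this gives $\var(N_n(T))=O(n)$, and Chebyshev's inequality concludes. The delicate point is making the "exposed-ball then uniform remainder" coupling precise, including the root being required to lie in $\supp(D)$. An alternative that sidesteps this is a bounded-differences argument. Switching two pairings, or flipping one $D_j$, changes $N_n(T)$ by at most $O(d^{r})$, so a martingale (Azuma) inequality over the sequential half-edge pairing and the $n$ demand coordinates yields concentration at scale $O(\sqrt n)$. I would try this Azuma route first, since it avoids explicit covariance bookkeeping.
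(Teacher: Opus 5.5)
Your proposal is correct and is essentially the paper's argument: a first moment obtained by conditioning on the $r$-ball in $G_{n,d}$ being the truncated $d$-regular tree and percolating it to get the $\Binom(d,p)$ UGW tree, followed by a second-moment bound and Chebyshev. Your normalization by $np$ instead of $|\supp(D)|$ is a harmless and slightly cleaner variant.

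The ``delicate coupling'' and the Azuma detour are unnecessary. For a fixed pair $v\neq w$, the event that both $r$-balls are vertex-disjoint trees fails with probability $O(d^{2r}/n)$; this is the analogue of the paper's event $\calE$. On that event the two indicators depend on disjoint sets of i.i.d.\ demand coordinates and factor exactly, so $\var(N_n(T))=O(n)$ follows directly.
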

The probability generating function 
of $\Binom(d,p)$ is 
$ \phi(t) = \expects{t^X}{X\sim \Binom(d,p)}=(pt+1-p)^d$. 
Hence,  by~\prettyref{eq:def_F}, the corresponding function is $F_{d,p}(t)$, defined as
\$
F_{d,p}(t)\triangleq tdp(1-pt)^{d-1}+(1-pt)^d+(1-(1-pt)^{d-1}p)^d-1.
\$
Let $F^*_{d,p} \triangleq \max_{t \in [0,1]} F_{d,p}(t)$.  The next lemma characterizes $F^*_{d,p}$ %shows that $\max_{t \in [0,1]} F_{d,p}(t)=(1+o_d(1)) q^d$ 
for a large constant $d$. 
\begin{lemma}\label{lmm:fdddd}
%For $p>0$, let $d_0 \ge 3$ be the largest %\nb{smallest?} \blue{largest} value of $d$ such that $p^2(d-1)^2 q^{d-2} \ge 1/3 $. %\nb{$\le$} \nbr{I think this should be correct.} 
For $p\in(0,1)$ and all sufficiently large $d$, %such that $p^2(d-1)^2 q^{d-2} < 1/3 $, 
we have
$
q^d \le F^*_{d,p} \le q^d ( 1+ 9d^2 p^2 q^{d-2}). 
$
In particular, $F^*_{d,p}=q^d(1+o_d(1)),$ and  $F_{d,p}^* \le \epsilon$ for $d$ in \eqref{eq:match-const-d} with $c = [6p/(q\log(1/q))]^2$ and sufficiently small $\epsilon$.
%In particular, 
%=q^d+\frac{d(d-1)p^2q^{2d}}{2(p^2(d-2)q+q^2+p^2(d-1))}+R_d$, where $q=1-p,$ $R_d=O(d^3q^{3d-3})$.
%\yw{YW: the stated \(d_0\ge3\) need not exist as defined; for example, when \(p=0.9\), no \(d\ge3\) satisfies the displayed inequality.}
\end{lemma}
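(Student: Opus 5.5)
The lower bound is immediate from $t=0$. There $1-pt=1$, so $F_{d,p}(0)=0+1+(1-p)^d-1=q^d$, and hence $F^*_{d,p}\ge q^d$. For the upper bound I plan to locate the maximizer by differentiating. Write $a=a(t)=1-pt$ and $b=b(t)=1-pa^{d-1}$. In $t\,dp\,a^{d-1}+a^d$ the terms $\pm dp\,a^{d-1}$ cancel on differentiation. Hence
\[
F_{d,p}'(t)=d(d-1)p^2a^{d-2}\bigl(\psi(t)-t\bigr),\qquad \psi(t)\triangleq b^{d-1}=\bigl(1-p(1-pt)^{d-1}\bigr)^{d-1}.
\]
Because $\psi(0)=q^{d-1}>0$ and $\psi(1)=(1-pq^{d-1})^{d-1}<1$, we have $F'>0$ at $0$ and $F'<0$ at $1$. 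So the maximum is attained at a fixed point $t=\psi(t)$ where $\psi-t$ changes sign from $+$ to $-$. The function $\psi$ is increasing, and for large $d$ I expect exactly two such relevant fixed points: a small one $t_1\approx q^{d-1}$ and a large one $t_3\approx 1-dpq^{d-1}$, with $\psi(t)<t$ on the middle range between them. I will bound $F$ separately on a small-$t$ region and a large-$t$ region.

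\emph{Small region, $t\le 2q^{d-1}$.} First I check that $\psi(2q^{d-1})<2q^{d-1}$. This holds because $a^{d-1}\ge 1-2(d-1)pq^{d-1}$ there, so $\psi(t)\le q^{d-1}\bigl(1+O(d^2pq^{d-1})\bigr)$. Consequently $t_1\le 2q^{d-1}$. Then I bound the increment directly from $0$, using $a\le 1$ and $\psi(s)\le 2q^{d-1}$ on this interval:
\[
F(t_1)-F(0)=\int_0^{t_1}F'(s)\,ds\le d^2p^2\,t_1\max_{s\le t_1}\psi(s)\le 4d^2p^2q^{2d-2}=q^d\cdot 4d^2p^2q^{d-2}.
\]

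\emph{Large region, $t\ge 1-2dpq^{d-1}$ (say).} Here I use $(1-x)^d\le 1-dx+\binom d2x^2$ with $x=pa^{d-1}$. This gives
\[
F\le h(t)+\tfrac12 d^2p^2a^{2d-2},\qquad h(t)\triangleq a^d-dp(1-t)a^{d-1}.
\]
A direct computation gives $h'(t)=d(d-1)p^2(1-t)a^{d-2}\ge 0$, so $h(t)\le h(1)=q^d$. On this region $a\le q+2dp^2q^{d-1}$, so $a^{2d-2}\le q^{2d-2}\exp\bigl(4d^2p^2q^{d-2}\bigr)\le 2q^{2d-2}$ for large $d$. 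Hence $F\le q^d(1+d^2p^2q^{d-2})$. The two regions together give the claimed bound with constant $9$, which leaves room for slack. For the "in particular" part, $d^2q^{d}\to 0$ gives $F^*=q^d(1+o_d(1))$. Then I plug in $d$ from \eqref{eq:match-const-d}: the first term makes $q^d\le\epsilon\,\epsilon^{c\epsilon\log(1/\epsilon)}$, and the choice $c=[6p/(q\log(1/q))]^2$ makes the factor $1+9d^2p^2q^{d-2}$ absorbed. This is routine arithmetic with $d\approx\log(1/\epsilon)/\log(1/q)$.

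\emph{Main obstacle: the middle range $2q^{d-1}\le t\le 1-2dpq^{d-1}$.} Here I must show $\psi(t)<t$, so that $F$ is decreasing and no interior maximum occurs. I would first use $\psi(t)\le\exp\bigl(-(d-1)pa^{d-1}\bigr)$ and split the range at some $t=1-\kappa/d$.
\begin{itemize}
\item For $t\le 1-\kappa/d$, $a\ge q+p\kappa/d$, so $(d-1)pa^{d-1}$ is bounded below by roughly $dpq^{d-1}e^{p\kappa/q}$. I need this to exceed $\log(1/t)$. For $t$ near $2q^{d-1}$ this needs more care, and there I would instead use that $\psi$ stays $\le q^{d-1}\bigl(1+O(d^2pq^{d-1})\bigr)$ as long as $a^{d-1}\ge 1/2$.
\item For $t\ge 1-\kappa/d$, write $t=1-s$. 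I will compare $1-\psi(t)\ge\tfrac12(d-1)pa^{d-1}\ge\tfrac12(d-1)pq^{d-1}$ with $s$. The extra factor of $d$ in $\kappa/d$ must be handled by a monotonicity or convexity argument for $\psi$ on $[1/2,1]$.
\end{itemize}
If the clean two-fixed-point picture proves hard to establish, my fallback is to bound $F$ directly on the middle range. There I would use the identity at a fixed point $t=b^{d-1}$, namely $F=d(1-b)b^{d-1}+a^d+b^d-1$, and show it is at most $q^d$ whenever $t$ lies in the middle range.
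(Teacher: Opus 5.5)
There is a genuine gap, in the middle range. Your derivative formula $F_{d,p}'=d(d-1)p^2a^{d-2}(\psi-t)$ is correct, and so are your small- and large-region estimates; the small-region integral bound in fact holds for every $t\le 2q^{d-1}$, not just at $t_1$. But the middle-range step rests on a false claim: $\psi(t)<t$ does \emph{not} hold on $[2q^{d-1},1-2dpq^{d-1}]$. The unique fixed point $t_f$ of $f(t)=(1-pt)^{d-1}$, of order $\log(pd)/(pd)$, is also a fixed point of $\psi=f\circ f$ inside this range, and $\psi(t)>t$ on $(t_f,t_3)$. Indeed, for every fixed $t\in(0,1)$ you have $\psi(t)\ge 1-(d-1)p(1-pt)^{d-1}\to 1$; for $p=1/2$, $d=20$ one gets $\psi(1/2)\approx 0.96$. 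Your own picture forces this: between two points where $\psi-t$ changes sign from positive to negative, it must change sign from negative to positive. So $F$ decreases on $(t_1,t_f)$ and \emph{increases} on $(t_f,t_3)$, and neither bullet can work. For $t$ bounded away from $0$ and $1$ the quantity $(d-1)pa^{d-1}$ is exponentially small. Near $t_f$, the estimate $\psi(t)\le\exp(-(d-1)pf(t))$ cannot give $\psi<t$, because $\exp(-(d-1)pt_f)\ge t_f$.

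What you actually need is that $t_f$ is the only fixed point of $\psi$ in the middle range. Then $F$ restricted to that range is maximized at its endpoints, which your regional bounds already control. Your fallback does not supply this. At any fixed point $t$ of $\psi$, with $s=f(t)$, one finds $F(t)=t+s+(d-2)pts-1=F(s)$. For small $t$ this equals $qt\,(1+O(d^2t))$. So a hypothetical fixed pair $(t,f(t))$ with $2q^{d-1}<t<d^{-3}$ would give $F(t)\ge 2q^d(1-o(1))$, violating the bound, and nothing in the identity excludes such pairs.

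This global control of the fixed-point set is the missing idea, and it is what the paper's Lemma~\ref{lmm:F_extrema} provides. Strict log-concavity of $\phi''(u)=d(d-1)p^2(pu+q)^{d-2}$ forces $\psi''$ to vanish at most once. Hence $\psi(t)=t$ has at most three roots, necessarily $t_1<t_f<t_3=f(t_1)$, with $F(t_1)=F(t_3)$, so the smallest fixed point $t_1$ is a global maximizer. The paper then shows $t_1\le q^{d-1}/(1-2p^2(d-1)^2q^{d-2})$ and applies the mean value theorem from $t=0$, which is essentially your small-region computation. With that lemma in hand, your large- and middle-range analyses become unnecessary.
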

Combining Lemma~\ref{lem:matching_limit_ugw} with Lemmas~\ref{lmm:lwc_rrg} and~\ref{lmm:fdddd} yields the following proposition, from which Theorem~\ref{thm:constant d} follows immediately. We defer the short derivation of Theorem~\ref{thm:constant d} to Section \ref{sect:ec-lwc}.

\begin{proposition}\label{prop:match-const}
For the random $d$-regular graph $G\sim\cG(n,d)$ with constant degree $d$, we have %the ratio of the matching loss to $np$ converges in probability to $F_{d,p}^*$, that is,
$$
\frac{L(G, D)}{np} \xlongrightarrow{\mathbb{P}} F_{d,p}^*,
$$
where $F_{d,p}^* =  q^d(1+o_d(1))$. Moreover, the convergence also holds in $L_1$ norm, that is, 
 \$\lim_{n\rightarrow \infty} \expect{|L(G, D)/(np) - F_{d,p}^*|}=0.\$ 
%where $ F_{d,p}(t)=tdp(1-pt)^{d-1}+(1-pt)^d+(1-(1-pt)^{d-1}p)^d-1$.
%\nbr{I do not like the current rewriting...} \nb{There are a few issues with the Proposition is stated. First, my understanding is that for Theorem 5, we just need $\lim_{n\rightarrow \infty}E|\frac{L(G, D)}{np} - F_{d,p}^*| = 0$. If that's the case, I don't think the generality of the proposition (without appropriate definitions of $L^r$ norm convergence and convergence in probability) is counterproductive. Second, we should define $F_{d,p}^* \triangleq  \max_{t \in [0,1]} F_{d,p}(t) $ first.} 
\end{proposition}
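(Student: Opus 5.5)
Throughout, write $L(G,D)=2\bigl(\mu(K_n[\supp(D)])-\mu(G[\supp(D)])\bigr)$ and $N=|\supp(D)|$. I would prove Proposition~\ref{prop:match-const} by splitting the loss into the benchmark term and the design term, and applying Lemma~\ref{lem:matching_limit_ugw} to the latter. The complete graph restricted to $\supp(D)$ has $\mu(K_n[\supp(D)])=\lfloor N/2\rfloor$. Since $N\sim\Binom(n,p)$, the law of large numbers gives $N/(np)\to1$ in probability. Hence
\[
\frac{L(G,D)}{np}=\frac{N}{np}-\frac{2\mu(G[\supp(D)])}{np}+O(1/n),
\]
and it suffices to show that $\mu(\widetilde G_n)/N\to(1-F^*_{d,p})/2$ in probability, with $\widetilde G_n=G[\supp(D)]$.

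For the convergence in probability, Lemma~\ref{lmm:lwc_rrg} gives that $\widetilde G_n$ converges locally in probability to the UGW tree with degree law $\Binom(d,p)$. Lemma~\ref{lem:matching_limit_ugw} is stated for deterministic sequences, so I would pass to random graphs by a subsequence argument. Local convergence in probability means the empirical neighborhood statistics converge in probability. From any subsequence I can therefore extract a further subsequence along which they converge almost surely, for a countable convergence-determining family of neighborhood events. Along that sub-subsequence the realized graphs converge locally almost surely, so Lemma~\ref{lem:matching_limit_ugw} applies pathwise with $\phi(t)=(pt+q)^d$. Its output is exactly $\mu(\widetilde G_n)/|V(\widetilde G_n)|\to(1-\max_t F_{d,p}(t))/2=(1-F^*_{d,p})/2$. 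As every subsequence has such a further subsequence, convergence in probability follows. Combined with $N/(np)\to1$ and the display above, this gives $L(G,D)/(np)\to F^*_{d,p}$ in probability. The identification $F^*_{d,p}=q^d(1+o_d(1))$ is taken directly from Lemma~\ref{lmm:fdddd}.

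For $L_1$ convergence, I would use uniform integrability. Deterministically $0\le L(G,D)\le 2\lfloor N/2\rfloor\le N$, so $0\le L(G,D)/(np)\le N/(np)$. The family $N/(np)$ is uniformly integrable because its second moment is bounded: $\E[(N/(np))^2]=1+q/(np)$. Hence $L(G,D)/(np)-F^*_{d,p}$ is uniformly integrable, and convergence in probability upgrades to $L_1$ by Vitali's theorem.

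The main obstacle is the technical care in applying Lemma~\ref{lem:matching_limit_ugw}. One issue is the random-graph versus deterministic-sequence issue handled above; the other is whether the lemma tolerates atoms of the degree distribution at $0$ (isolated vertices). Here $\pi_0=q^d>0$. The Bordenave--Lelarge--Salez result does allow this, and isolated vertices are simply unmatched, consistent with $F$ containing $\phi(1-t)$ terms. If that needed checking, I would strip isolated vertices and renormalize, using that their fraction converges to $q^d$.
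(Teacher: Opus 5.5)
Your proposal is correct and follows the paper's proof: you combine Lemma~\ref{lem:matching_limit_ugw} with Lemmas~\ref{lmm:lwc_rrg} and~\ref{lmm:fdddd} to get $\mu(G[\supp(D)])/|\supp(D)|\to(1-F^*_{d,p})/2$ in probability, use $|\supp(D)|/(np)\to 1$, and then upgrade to $L_1$; your subsequence argument only spells out the passage from deterministic to random graph sequences, which the paper leaves implicit. For the $L_1$ step you do not need the second-moment uniform-integrability argument, because $N\le n$ already gives the deterministic bound $0\le L(G,D)/(np)\le 1/p$, and this is the bound the paper uses.
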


\begin{proof}%[Proof of \prettyref{prop:match-const}]
Applying Lemma~\ref{lem:matching_limit_ugw} with Lemmas \ref{lmm:lwc_rrg} and \ref{lmm:fdddd}, 
the maximum matching size satisfies
$$
\frac{\mu\left({G}[\supp(D)]\right)}{|\supp(D)|}
\xlongrightarrow{\mathbb{P}} \frac{1 - F_{d,p}^*}{2}.
$$
Using the fact that 
$|\supp(D)|/(np) \to 1$ in probability, the matching loss satisfies 
%\nb{Can we provide another line to show the calculations.}
$$
\frac{L(G, D)}{np} \xlongrightarrow{\mathbb{P}}
  F_{d,p}^*. 
$$
Since $L(G,D)/(np)$ is uniformly bounded by $1/p$,  the convergence also holds in $L_1$ norm.    
\end{proof}

\section{Experiments}\label{sect:exp}
To support our theoretical results on the optimality of random $d$-regular graphs, we conduct numerical experiments for both problems. Additional results and discussions are deferred to Section \ref{sec:additional-flex-comparison}.
\subsection{Process Flexibility Design}\label{sect:exp-pro-flex}
%We evaluate the performance of various sparse bipartite graphs for the process flexibility problem.
%\nb{Shall we include an experiment study for general demand distribution?}
\paragraph{Experimental Setup.} In the simulations, we consider systems with $n$ supply and $n$ demand nodes. 
Each demand node is independently retained with probability $p$, and is assigned a demand of $1/p$ units if retained.
We numerically study three representative scenarios: $(n, p) = (200, 0.5)$ (high deletion probability, small $n$), $(n, p) = (500, 0.5)$ (high deletion probability, large $n$), 
and $(n, p) = (400, 0.8)$ (low deletion probability, medium $n$).
For each scenario, we generate $1000$ independent demand realizations. For each graph design, we evaluate its performance across these realizations by computing the average loss relative to the full flexibility design. 

Specifically, for each average degree level, we consider the following graph designs: the \ER graph, the probabilistic expander from \cite{chen2015optimal}, the random $d$-regular graph, and the $K$-chain. Here, a probabilistic expander is generated by uniformly sampling $d/2$ demand nodes with replacement for each supply node, and uniformly sampling $d/2$ supply nodes with replacement for each demand node. For the three random graph families, we generate $10$ independent instances of each graph design and report the average loss along with the corresponding $95\%$ confidence intervals computed across these instances. We also compute the theoretical lower bound on the expected loss directly using the formula $nq^d \prob{\Binom(n-d, p) \ge np}$ as derived in Theorem \ref{thm:lower_bound_bp}.

\paragraph{Results.}
Figure \ref{fig:flex-design} presents the results for the three scenarios. In each plot, the $x$-axis represents the average degree of the graphs, and the $y$-axis shows the average loss relative to the full flexibility design, along with the corresponding $95\%$ confidence intervals, plotted on a logarithmic scale. Note that some confidence intervals extend to negative infinity since the loss is shown on a logarithmic scale; this occurs when one or more of the $10$ graph instances achieve an average loss of exactly zero across the $1000$ demand realizations. Across the three plots with different values of $p$, recall that the optimal loss scales as $nq^d$; hence, for larger $p$ (and smaller $q$), a smaller average degree $d$ is sufficient to achieve the same level of loss.

\begin{figure}[htp]
\centering
\includegraphics[width=.33\textwidth]{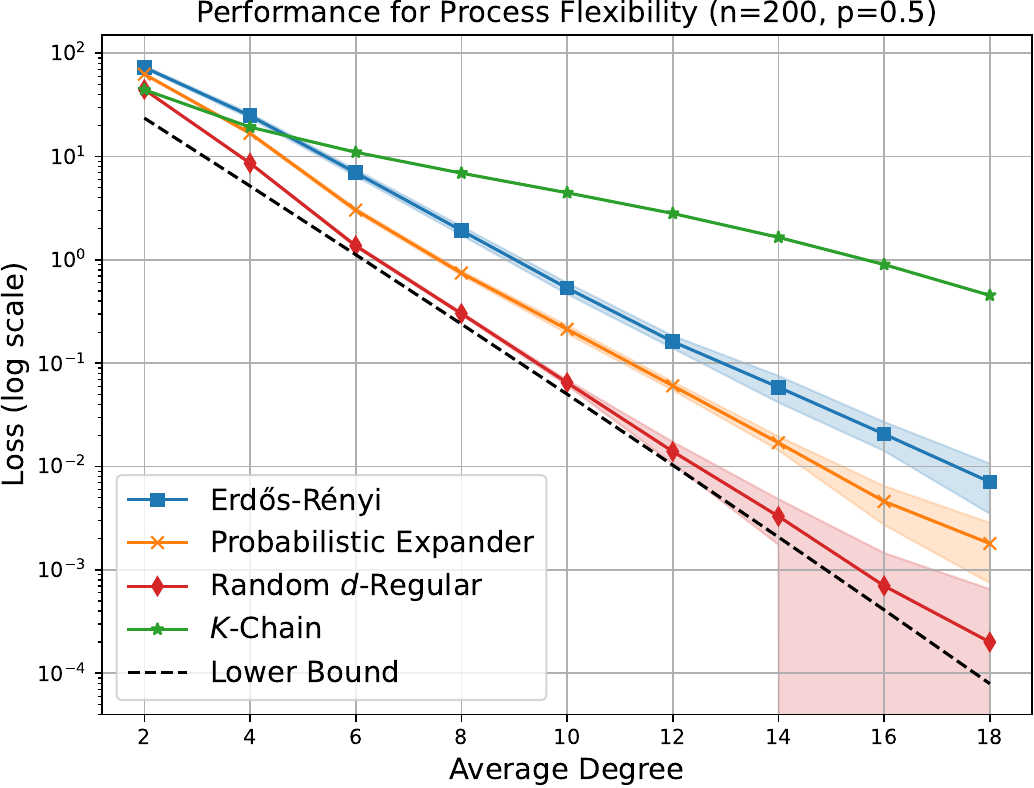}
\includegraphics[width=.33\textwidth]{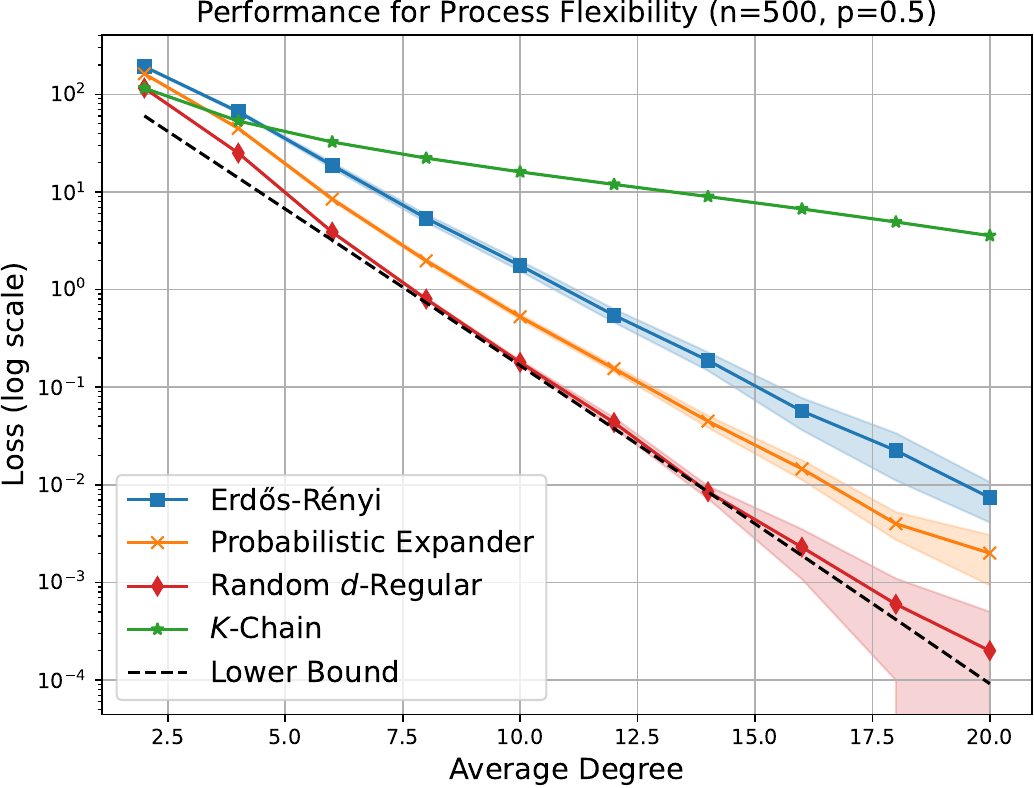}\hfill
\includegraphics[width=.33\textwidth]{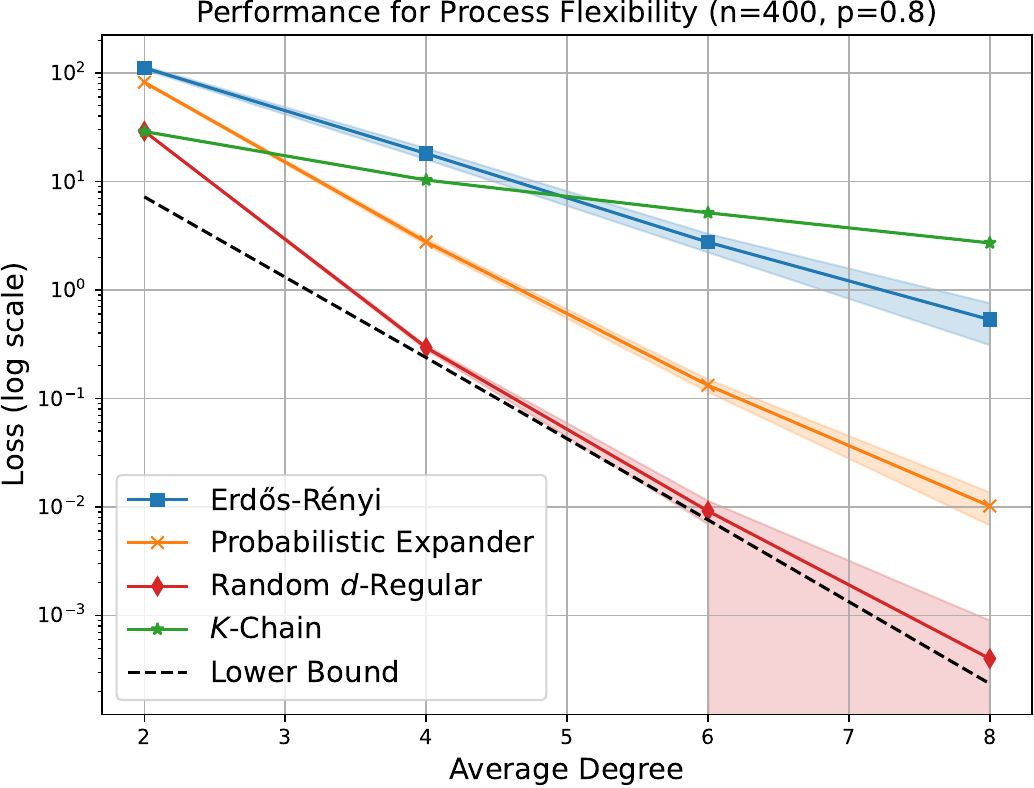}
% \hfill
% \includegraphics[width=.33\textwidth]{figures/flex-design/n_500_p_0.25.pdf}
% \includegraphics[width=.33\textwidth]{figures/flex-design/n_100_p_0.5.pdf}\hfill
% \includegraphics[width=.33\textwidth]{figures/flex-design/n_200_p_0.5.pdf}
\caption{Performance of graph designs for process flexibility.} %\nbr{Can we change the legend starting from $K$-Chain?}
\label{fig:flex-design}
\end{figure}

The results are consistent with our theoretical findings. In our experiments, the random $d$-regular design performs substantially better than the other designs, which fail to achieve the optimal asymptotic sparsity. While our analysis establishes optimality only in the asymptotic regime of large \( n \) and \( d \), its performance is already very close to the theoretical lower bound for moderate graph sizes (\( n = 200 \) to \( n = 500 \)) and relatively small average degrees (\( d = 2 \) to \( d = 20 \)).

\subsection{Middle-Mile Transportation Design}
Next, we compare the performance of sparse graphs for the middle-mile transportation problem.

\paragraph{Experimental Setup.} 
We use the same three $(n,p)$ scenarios and simulation procedure as for process flexibility. We compare the Erd\H{o}s--R\'enyi, random $d$-regular, and $K$-chain designs by their average loss relative to the complete graph. We also report the lower bound from Theorem~\ref{thm:lower_bound}, which is tighter than the $(npq^d-1)^+$ bound used by \citet{feng2024designing}.

%The setup is similar to that of the process flexibility. We consider systems with $n$ nodes, where each node is independently retained with probability $p$. We again examine three representative scenarios: $(n, p) \in \{(200, 0.5), (500, 0.5), (400, 0.8)\}$. For each scenario, we generate $1000$ independent node-deletion realizations. For each graph design, we evaluate its performance across these realizations by computing the average loss relative to the complete graph. For each average degree level, we consider the \ER graph, the random $d$-regular graph design, and the $K$-chain. For the random graph families, we generate $10$ independent instances of each graph design and report the average loss along with the $95\%$ confidence interval computed across these instances. We also compute the theoretical lower bound on the expected loss using Theorem \ref{thm:lower_bound}. Previous work \citep{feng2024designing} had a similar setup; however, our lower bound from Theorem~\ref{thm:lower_bound} is tighter than the $(npq^d - 1)^+$ bound used therein.

\begin{figure}[ht]
\centering
\includegraphics[width=.33\textwidth]{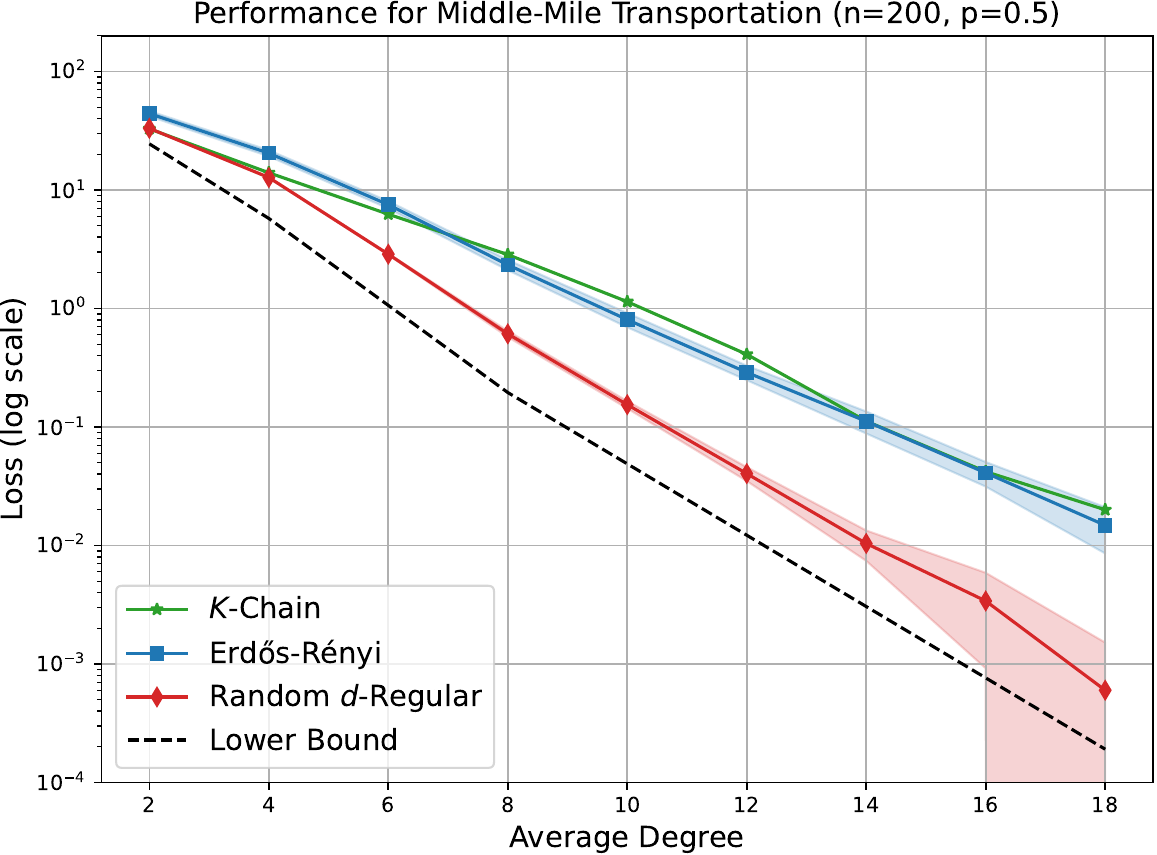}\hfill
\includegraphics[width=.33\textwidth]{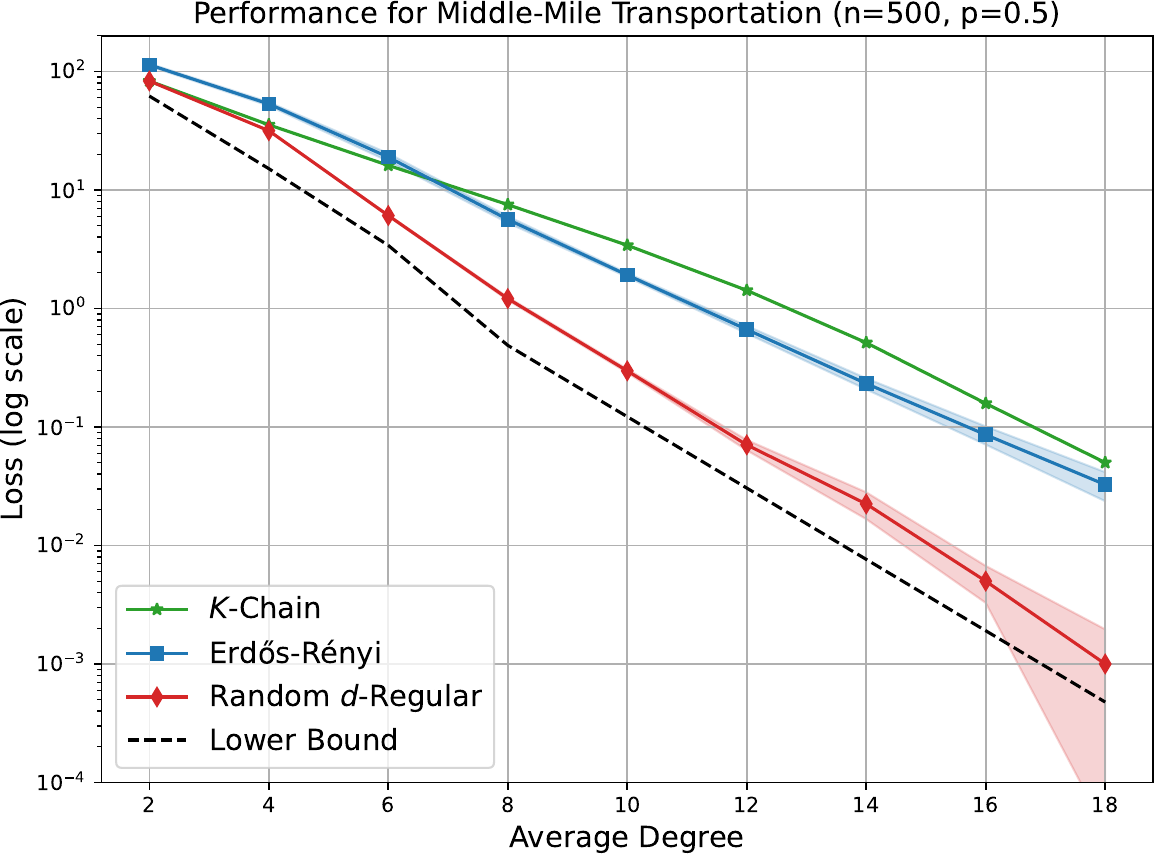}\hfill
\includegraphics[width=.33\textwidth]{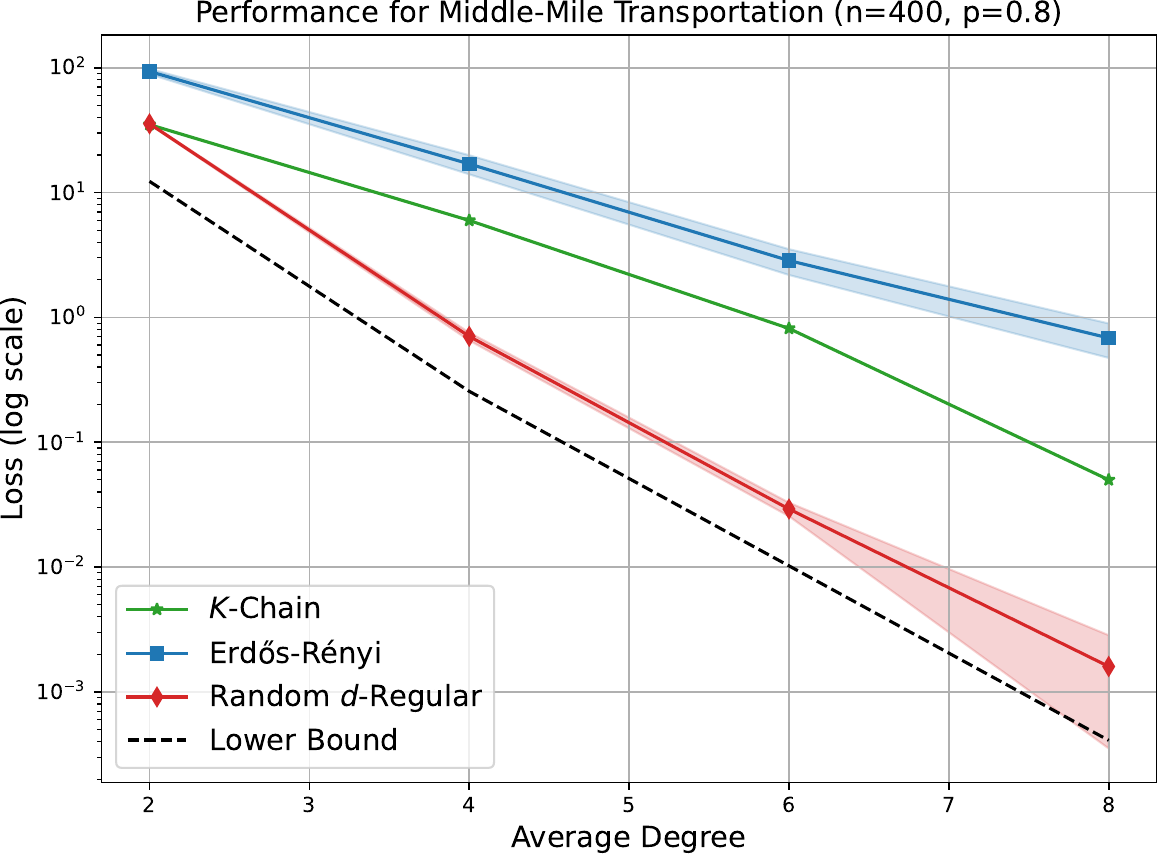}
\caption{Performance of graph designs for middle-mile transportation.} %Flexibility \nbr{The first plot has $n=400$ instead of $n=300$}}
\label{fig:trans-design}
\end{figure}
%\nbr{figures not in the same sizes compared to Figure 6?}
\paragraph{Results.} 
Figure~\ref{fig:trans-design} reports the average loss relative to the fully flexible system across three scenarios in middle-mile transportation. As in process flexibility, random $d$-regular graphs substantially outperform both the $K$-chain and the \ER graphs, since neither the $K$-chain nor the \ER design achieves optimal asymptotic sparsity. Unlike in process flexibility, however, a small but noticeable horizontal gap remains between the performance of the random $d$-regular design and the theoretical lower bound. Nevertheless, we observe numerically that the horizontal gap, i.e., the difference between the degree required by random $d$-regular graphs and the lower bound for a given loss, does not increase as the loss becomes small. As a result, the ratio of this gap to the lower bound approaches zero in the limit, where the loss becomes small and the lower bound becomes large.
This is consistent with our theoretical result that random $d$-regular graphs are asymptotically optimally sparse. Also, we note that when $p$ is fixed ($p=0.5$), the gap between the random $d$-regular design and the theoretical lower bound decreases as we increase $n$ from 200 to 500.

\section{Conclusion and Future Directions}\label{sect:conclude}

In this paper, we show that random $d$-regular designs achieve asymptotically optimal sparsity for both process flexibility and middle-mile transportation. For sufficiently large $d$, their average degree nearly matches the lower bound required to attain a prescribed fractional or constant loss. Our analysis relies on key connectivity properties of random $d$-regular graphs, and numerical experiments further validate this conclusion.

A natural direction for future work is to extend our analysis to broader operational settings. \blue{For process flexibility, one could consider unbalanced systems and heterogeneous supply capacities, perhaps using configuration models with appropriately adapted degree sequences. For middle-mile transportation, one could study split-delivery models, where trucks may combine portions of demand from multiple stations. Such models would not reduce to random node deletion and matching on the residual graph.} Another direction is the stochastic matching problem introduced by \cite{blum2015ignorance-ec} and motivated by kidney exchange: finding a sparse subgraph whose maximum matching remains close to that of the original graph after random edge deletions. Existing work provides sparse designs with strong guarantees \citep{assadi2016stochastic-ec,behnezhad2020stochastic}, but optimal sparsity remains open. Although this problem involves edge rather than node deletions, our design and techniques may offer useful insights. More broadly, it would be valuable to examine whether suitable forms of ``regularity'' in the decision variables, combined with low correlations, yield near-optimal solutions for broader classes of two-stage stochastic optimization problems.

%A natural direction for future research is to extend our analysis to more general settings. For example, one could consider demand distributions beyond the i.i.d.\ Bernoulli model in either the process flexibility or middle-mile transportation settings, or settings with asymmetric supply and demand. It would also be interesting to explore other sparse network design problems, such as the stochastic matching problem introduced by \cite{blum2015ignorance-ec} and motivated by kidney exchange. There, the goal is to design a sparse subgraph whose maximum matching remains nearly as large as that of the original graph after random edge deletions. Subsequent work has proposed sufficiently sparse designs with strong performance guarantees \citep{assadi2016stochastic-ec,behnezhad2020stochastic}, but identifying an optimal design—achieving a prescribed performance loss with the minimum number of edges—remains an open problem. This setting is closely related to the flexibility problems studied in this paper, which instead feature random node deletions. Despite this difference, we believe that our random regular graph design and analytical techniques may provide useful insights toward resolving the optimal design question in stochastic matching.  More broadly, an important direction is to study classes of two-stage stochastic optimization problems,  examining whether enforcing appropriate forms of ``regularity'' in the decision variables—while keeping correlations low—continues to yield near-optimal solutions.

\bibliographystyle{apalike} 

\ifsharedrootabove
  \bibliography{../reference}
\else
  \bibliography{reference}
\fi

\begin{appendix}
\section{Proofs for Process Flexibility Design}\label{sect:app-process}
This section completes the proofs of the main results for the process flexibility problem. Section~\ref{sect:flex-pr-lb} establishes the lower bound by observing that the loss is bounded below by the number of isolated supplies when the realized demand is relatively large. %This observation suggests that, for a graph design to achieve the optimal loss, the number of its isolated supplies must be of the optimal order.
To prove the optimality of random regular graphs, we introduce three key structural properties in Section~\ref{sect:flex-prop}. Among these, as highlighted in~\prettyref{rmk:properties}, the \emph{little-waste-of-supply} property plays a central role in ensuring optimality. Section~\ref{sect:flex-pf-con-fract} shows that these properties are sufficient to establish Lemma~\ref{lmm:epsilon_connectivity}. Section~\ref{sec:general_demand_proofs} proves Lemmas \ref{lmm:lt-convex} and \ref{lem:monotonicity}. Finally, Section \ref{sec:additional-flex-comparison} provides further numerical studies.

\subsection{Lower Bound in Theorem \ref{thm:lower_bound_bp}}\label{sect:flex-pr-lb}
Given a graph $G$ and a demand realization $D$, a supply node is \emph{isolated} if it is disconnected from all demand nodes with nonzero realized demand. The unfulfilled demand $n-z(G,D)$ is always lower bounded by the number of isolated supply nodes, since such supplies cannot serve any demand. When  $|\supp(D)|$ is large, $z(K_{n,n}, D) = n$, so the loss $z(K_{n,n}, D) -z(G, D)$ is also lower bounded by the number of isolated supply nodes. For small $|\supp(D)|$, we use zero as a trivial lower bound. %\prettyref{prop:flex-1-n-small-demand} has shown that random regular graphs can, in fact, achieve zero loss in this small-demand regime. 
\begin{proof}[Proof of Theorem \ref{thm:lower_bound_bp}]
We consider any graph $G$ with supply degrees $d_i$ for $i\in\ns$ and $\sum_{i\in\ns}d_i/n = d$.
For each supply node $i$, let $I_i=1$ if $i$ is disconnected from $\supp(D)$ and $I_i=0$ otherwise.
%Note that $\expect{I_i}=\prob{I_i=1}=q^{d_i}$, as each of its $d_i$ neighbors on the demand side is not in $\supp(D)$ independently with probability $q$.  \nb{This is not used in the later proof}
Since each $I_i=1$ leads to one unit of loss compared with $n$, we have $z(G,D) \le n-\sum_{i=1}^n I_i$ and thus 
\$ 
z(G,D) \le \min\{n-\sum_{i=1}^n I_i, |\supp(D)|/p\}.
\$
Recall $z(K_{n,n}, D) = \min\{n, |\supp(D)|/p\}$. When $|\supp(D)| \ge np$, we have $z(K_{n,n}, D) = n$ and thus $L(G,D) \triangleq z(K_{n,n}, D) - z(G,D) \ge \sum_{i=1}^n I_i$. Then fixing any small constant $\delta>0$, we have
\#\label{eq:flex-lb} 
L(G) = \E_D[L(G,D)] &= \sum_{k=0}^{n} \E[L(G,D)\given |\supp(D)| = k] \cdot \prob{|\supp(D)| = k} \notag\\
& \ge \sum_{k=np}^{n} \E[L(G,D)\given |\supp(D)| = k] \cdot \prob{|\supp(D)| = k} \notag \\
& \ge \sum_{i=1}^n \sum_{k=np}^{n} \E[I_i\given |\supp(D)| = k] \cdot \prob{|\supp(D)| = k}.
\#
%Fix $\Delta = n^{(1+\eta)/2}$.
For any $i\in [n]$, the event $I_i=1$ occurs if and only if no neighbor of supply $i$ belongs to $\supp(D)$. Thus, for any $np \le k \le n$, we have
\$
\E[I_i\given |\supp(D)| = k] &= \prob{I_i = 1\given |\supp(D)| = k} = \frac{\binom{n-d_i}{k}}{\binom{n}{k}},
%\\& = \frac{\binom{(n-k)d_i}{d_i}}{\binom{nd_i}{d_i}} \ge \left(\frac{(n-k-1)d_i}{nd_i}\right)^{d_i} \ge \left(q - \frac{\Delta +1 }{n}\right)^{d_i} \ge q^{d_i} e^{-4n^{-(1-\eta)/2}d_i/q}, \notag
\$
with the convention that $\binom{n-d_i}{k}=0$ for $k>n-d_i.$
%For $k> n-d_i$, we have $\prob{I_i = 1\given |\supp(D)| = k} = 0$. 
%since the demand set $\supp(D)$ is sufficiently large that at least one neighbor of supply $i$ must be included in it.
Thus, 
\begin{align}\label{eq:flex-lb-i} 
\sum_{k=np}^{n} \E[I_i\given |\supp(D)| = k] \cdot \prob{|\supp(D)| = k} & = \sum_{k=np}^{n} \frac{\binom{n-d_i}{k}}{\binom{n}{k}} \cdot \binom{n}{k}p^kq^{n-k} \notag\\
&=\sum_{k=np}^{n} \binom{n-d_i}{k}p^kq^{n-k}.
\end{align}
Substituting~\prettyref{eq:flex-lb-i} into~\prettyref{eq:flex-lb} and changing the summations, we get that 
\begin{align}
L(G)
&\ge \sum_{k=np}^{n}  \sum_{i=1}^n \binom{n-d_i}{k}p^kq^{n-k} \ge  n \sum_{k=np}^{n} \binom{n-d}{k}p^kq^{n-k} = n q^{d} \prob{\Binom(n-d, p) \ge np}, \label{eq:convex_lb}
\end{align}
%\nbr{check whether there is a simple non-asymptotic lower bound.. Do not find good one. But the following should already work.}
where the second inequality holds by applying Jensen's inequality to the convex function $\binom{x}{k} \indc{x\ge k}$, with equality if and only if $d_i \equiv d$ for all $i$.
%& = q^{d_i} \sum_{k=np}^{n} \binom{n-d_i}{k}p^kq^{n-d_i-k} \notag\\
%& = q^{d_i} \prob{\Binom(n-d_i, p) \ge np}.
%\#
The following Berry--Esseen bound establishes the concentration of $X\sim\Binom(n-d, p)$ \citep{dasgupta2008asymptotic}, 
\$ 
\sup_{x\in\R}\left|\prob{\frac{X - (n-d)p}{\sqrt{(n-d)pq}} \le x} - \Phi(x)\right| \le \frac{c(
1-2pq)}{\sqrt{(n-d)pq}},
\$
%\nbr{add references for Berry-Esseen. Also, how did you get $|1-2p|?$ \cite[Example 11.1]{dasgupta2008asymptotic}}
where $\Phi(\cdot)$ is the CDF for standard Gaussian distribution and $c>0$ is a constant. Therefore, %\nbr{I think you can make $c$ more explicit...} is a constant depending only on $p$. By applying the bound, we have
\#\label{eq:flex-lb-conc}
%\sum_{k=np}^{n-d_i} \binom{n-d_i}{k}p^kq^{n-d_i-k} = \prob{\Binom(n-d_i, p) \ge np} 
\prob{X \ge np} \ge 1- \Phi\left(\frac{d p}{\sqrt{(n-d)pq}}\right) - \frac{c}{\sqrt{n-d}} \ge \frac{1}{2}-\frac{d p}{\sqrt{2\pi(n-d)pq}}  - \frac{c(
1-2pq)}{\sqrt{(n-d)pq}},
\#
%\nbr{we can use the bound $\Phi(x) \le 1/2 + x/\sqrt{2\pi}.$}
where the last inequality is due to $\Phi(x) \le 1/2 + x/\sqrt{2\pi}$. %holds for $d=o(\sqrt{n})$, as ${d p}/\sqrt{(n-d)pq}\to 0$,
%= O(n^{-\eta})\to 0$, $\Phi(0) = 1/2$, and $c/\sqrt{n-d_i}\to 0.$
%d_i < n^{1/2-\eta}$ and sufficiently large $n$, noting that ${d_i p}/\sqrt{(n-d_i)pq} = O(n^{-\eta})\to 0$, $\Phi(0) = 1/2$, and $c/\sqrt{n-d_i}\to 0$. 
Substituting 
%\eqref{eq:flex-lb-i} and 
\eqref{eq:flex-lb-conc} into \eqref{eq:convex_lb}, we have
\begin{equation*}
L(G) \ge  \left(\frac{1}{2}-\frac{d p}{\sqrt{2\pi(n-d)pq}}  - \frac{c(
1-2pq)}{\sqrt{(n-d)pq}}\right) nq^{d}. 
%\ge \sum_{i=1}^n q^{d_i}/3 \ge n q^d/3,
\qedhere
\end{equation*}
%where the second inequality holds by applying Jensen's inequality to the convex function $q^{x}$.
\iffalse
\$ 
\E[L(G,D)] &= \E[L(G,D)\cdot \indc_{|\supp(D)|\ge np}] + \E[L(G,D)\indc_{|\supp(D)|< np}]\\
& \ge \sum_{i=1}^n \E\left[ I_i\cdot \indc_{|\supp(D)|\ge np}\right] \\
& = \sum_{i=1}^n \E\left[\frac{\binom{(n-|\supp(D)|)d_i}{d_i}}{\binom{nd_i}{d_i}}\cdot \indc_{|\supp(D)|\ge np}\right] \\
& \ge \sum_{i=1}^n \left(1-p - c/\sqrt{n}\right)^{d_i} \pr( n(p+c/\sqrt{n})\ge|\supp(D)|\ge np)\\
&\approx \sum_{i=1}^n q^{d_i}/2 \ge nq^{d}/2,
\$
\fi
%\nb{JX. I do not understand the second equality. Please clarify.} \nb{Also, I do not think we can use $\approx$. Need to make that step precise.} where the second inequality follows from the concentration of $|\supp(D)|$, and the last inequality holds by applying Jensen's inequality to the convex function $q^{x}.$
\end{proof}
\subsubsection{\blue{Suboptimality of \ER and Probabilistic Expanders}}

Lower bounds based on isolated supply nodes also yield necessary degree conditions for both \ER and probabilistic-expander designs, showing that neither can attain the sharp constant achieved by random regular graphs. Let $d_{\mathrm{RR}}$, $d_{\mathrm{ER}}$, and $d_{\mathrm{PE}}$ denote the average degrees required by random regular, Erd\H{o}s--R\'enyi, and probabilistic-expander designs, respectively, to achieve $\epsilon$-fractional loss.

For an \ER graph with edge probability $d/n$, a fixed supply node, with degree distribution $X \sim \text{Bin}(n, d/n) \approx \text{Pois($d$)}$, has no edge to any positive demand node with probability 
\[
\mathbb{E}_{X \sim \text{Pois($d$)}}\left[q^{X}\right] = e^{-(1-q)d}.
\]
Thus, the expected number of isolated supplies is asymptotically $ne^{-(1-q)d}$. Since isolated supplies provide a lower bound on the loss, any Erdős–Rényi graph achieving $\epsilon$-fractional loss must satisfy $d_{\mathrm{ER}} \ge (1-o_{\epsilon}(1))\log(1/\epsilon)/(1-q)$. Relative to the sharp random-regular threshold, this gives
\[
\frac{d_{\mathrm{ER}}}{d_{\mathrm{RR}}}
\ge
\frac{\log(1/q)}{1-q}>1, \quad \text{for any } q\in(0,1).
\]

We can similarly analyze the particular probabilistic expander design proposed by \citet{chen2015optimal}. In its construction, on the supply side, the first half of the edges, formed by uniformly sampling $d/2$ demand nodes for each supply node, resembles a random $(d/2)$-regular design, while the second half, formed by uniformly sampling $d/2$ supply nodes for each demand node, resembles an Erdős–Rényi design with an average degree of $d/2$. Intuitively, its performance should lie between those of Erdős–Rényi and random regular graphs, as confirmed by our numerical results in Figure~\ref{fig:flex-design}. More explicitly, the isolation probability of a fixed supply node is 
\[
\mathbb{E}_{X \sim \text{Pois($d/2$)}}\left[q^{X + d/2}\right] = \left[qe^{-(1-q)}\right]^{d/2}.
\] 
Thus, if $d_{\mathrm{PE}}$ denotes the average degree required by the probabilistic expander to achieve $\epsilon$-fractional loss, controlling the expected number of isolated nodes requires $d_{\mathrm{PE}} \ge (2-o_\epsilon(1))\log(1/\epsilon)/[\log(1/q) + 1-q]$. Consequently,
\[
\frac{d_{\mathrm{PE}}}{d_{\mathrm{RR}}}
\ge \frac{2\log(1/q)}{\log(1/q) + 1-q}>1, \quad \text{for any } q\in(0,1).
\]

\subsection{Key Properties of Random Regular Bipartite Graphs}  \label{sect:flex-prop}
This section presents several key properties of random $d$-regular bipartite graphs. Our analysis relies on the following fact about graphs generated by the configuration model. For any $U \subset [n]$ and $W \subset [n]$, recall that $e(U,W)$ denotes the number of edges between them (counting multiplicities). If $|U| = un$ and $|W| = wn$, then under $G\sim\cG(n,n,d)$, $e(U,W)
\sim \text{Hypergeometric}(nd, wnd, und)$, \ie,
\begin{align}
\probs{e(U,W) = k }{G\sim \cG(n,n,d)} = \frac{\binom{wnd}{k}\binom{(1-w)nd}{und-k}}{\binom{nd}{und}}, \quad \forall 0 \le k\le \min\{u,w\} \cdot nd. \label{eq:hypergeometric}
\end{align}

To see why the above expression holds, note that $e(U,W)=k$ if and only if there are $k$ half-edges incident to $W$ and $und-k$ half-edges incident to $[n]\setminus W$ that are paired with half-edges of $U.$  Thus, for the numerator, we first choose $k$ half-edges among $wnd$ half-edges incident to $W$, and there are $\binom{wnd}{k}$ different choices.  Then we choose $und-k$ half-edges among $(1-w)nd$ half-edges incident to $[n]\setminus W$, and there are $\binom{(1-w)nd}{und-k}$ different choices. Finally, these selected $und$ half-edges in total are paired with half-edges incident to $U.$ For the denominator, 
$\binom{nd}{und}$ counts all the possible selections of $und$ half-edges that are paired with these half-edges incident to $U$, without restriction.
%\nbr{check the effect of multi-edges.... The above expression does not consider the effect of multiple edges. It is possible that the selected $k$ half-edges form parallel edges, causing $e(U,W) < k$ if we do not count edge multiplicity. I think this will only affect the proof of~\prettyref{lmm:flex-reg2}.}

Based on the fact that $\text{Hypergeometric}(N,K,n)$ is dominated by $\Binom(n,K/n)$ in convex order, \cite{hoeffding1963probability} proved the following tail bounds for hypergeometric random variables.
\begin{lemma}\label{lem:hypergeo}
Let $X\sim\text{Hypergeometric}(N,K,n)$ and $p=K/N$. Then for any $0< t< p$, the following inequalities hold:  
\begin{align*}
\prob{X \le (p-t) n }
& \le \exp\left( - n \cdot D_{\mathrm KL}( p-t \| p )\right), \\
\prob{X \ge (p+t) n }
& \le \exp\left( - n \cdot D_{\mathrm KL}( p+t \| p )\right),  
\end{align*}
where the Kullback--Leibler divergence is defined as $D_{\mathrm KL}( a\| b ) = a\log(a/b) + (1-a)\log((1-a)/(1-b))$. 
\end{lemma}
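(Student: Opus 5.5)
The plan is a Chernoff argument: first reduce the hypergeometric moment generating function to that of a binomial, then optimize the exponential tilt, which produces the KL exponent. Write $X=\sum_{k=1}^n Z_k$, where $Z_1,\dots,Z_n$ are draws \emph{without} replacement from an urn of $N$ balls, $K$ of which are marked ($Z_k=1$ if the $k$-th draw is marked). Let $Y=\sum_{k=1}^n W_k\sim\Binom(n,p)$ be the analogous sum for draws \emph{with} replacement.

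\textbf{Step 1 (convex order).} Show that $\E[f(X)]\le \E[f(Y)]$ for every continuous convex $f$; this is the comparison due to \citet{hoeffding1963probability}. The cleanest route is a coupling $(X,Y)$ with $\E[Y\mid X]=X$, after which Jensen's inequality gives $\E f(Y)=\E[\E[f(Y)\mid X]]\ge \E f(X)$.

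To build the coupling, sample with replacement by drawing indices $I_1,\dots,I_n$ uniformly from $[N]$. Let $J$ be the (random) set of distinct indices, with $|J|=m$ and multiplicities $c_j$ summing to $n$. Conditional on the multiplicity pattern, the distinct balls form a uniform without-replacement sample of size $m$, so $Y=\sum_{j\in J} c_j\,\mathbb 1\{j \text{ marked}\}$. One then extends this size-$m$ sample to a size-$n$ without-replacement sample by drawing $n-m$ further balls uniformly from the rest, and calls the marked count $X$.

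The key claim is $\E[Y\mid X]=X$. Since the indices are exchangeable, each ball of the extended $n$-sample is equally likely to play the role of any of the $m$ distinct balls with any of the weights $c_j$. Hence conditionally on the $n$-sample $\E[Y\mid X]$ equals $\frac{1}{n}\sum_j c_j$ times $X$, which is $X$ because $\sum_j c_j=n$. Verifying this exchangeability carefully is the only delicate point and is \textbf{the main obstacle}. If it becomes messy, I would fall back on Hoeffding's original argument: expand $\E f(Y)$ over multiplicity patterns and apply Jensen within each pattern, using the symmetry of $f(\sum_k x_{i_k})$ in the sampled indices.

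\textbf{Step 2 (Chernoff).} For $\lambda>0$, apply Markov's inequality and then Step 1 with the convex function $f(x)=e^{\lambda x}$:
\[
\prob{X\ge (p+t)n}\le e^{-\lambda(p+t)n}\,\E e^{\lambda X}\le e^{-\lambda(p+t)n}\bigl(1-p+pe^{\lambda}\bigr)^n .
\]
Minimize the exponent $\lambda(p+t)-\log(1-p+pe^{\lambda})$ over $\lambda$. The stationarity condition is
\[
\frac{pe^\lambda}{1-p+pe^\lambda}=p+t,
\]
which gives $e^{\lambda}=\frac{(p+t)(1-p)}{p(1-p-t)}$; this is positive whenever $t<1-p$, and for larger $t$ the event is empty. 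Substituting back, the exponent equals $D_{\mathrm KL}(p+t\,\|\,p)$, giving the upper-tail bound.

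For the lower tail, repeat with $\lambda<0$, equivalently apply the upper-tail bound to $n-X$, which is hypergeometric with parameter $1-p$. The optimizer is $e^\lambda=\frac{(p-t)(1-p)}{p(1-p+t)}<1$, valid since $0<t<p$, and the exponent becomes $D_{\mathrm KL}(p-t\,\|\,p)$.

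Step 2 is routine once Step 1 is in hand.
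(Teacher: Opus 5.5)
Your proposal is correct and follows the same route as the paper, which cites Hoeffding's convex-order domination of the hypergeometric by $\Binom(n,p)$ together with the resulting binomial Chernoff bound; you additionally prove the domination via a coupling, and that coupling does work. The cleanest statement of your key identity is $\mathbb{E}[Y\mid S]=X$ for the extended sample $S$, which follows from invariance of the law of $(c,S)$ under permutations of $[N]$ fixing $S$ together with $\sum_{b\in S}c_b=n$. The one loose end is the boundary case $t=1-p$ in the upper tail, where no finite optimizer exists but letting $\lambda\to\infty$ gives the required bound $p^n=\exp(-nD_{\mathrm KL}(1\|p))$.
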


Next, we present three key lemmas, all of which concern the subgraph of the random regular bipartite graph induced by a fixed demand subset $V$. Let $h(x) = x\log(1/x) + (1-x)\log(1/(1-x))$ be the binary entropy function. The following lemma establishes the \emph{no-big-cut} property: any two reasonably large sets $S$ and $T$ do not form a cut (\ie, $e(S,T)=0$) with high probability.

\begin{lemma}[No Big Cut]\label{lmm:flex-reg1} 
For any $V \subset \nd$ with $|V|=np$ and any $s,t  \in (0,1)$ with $s+t\ge 1$, if $d t\log (1/t) \ge 4h(t)$ and $d (1-s)\log (1/(1-s)) \ge 2h(1-s)/p$, 
then a random $d$-regular graph $G\sim \cG(n,n,d)$ satisfies
\$
& \probs{\exists S\subset \ns \text{ with } |S| = sn, \exists T\subset V \text{ with } |T| = tp n\colon e(T,S) = 0 }{G}
\le \exp\left\{ - \frac{d p nt}{4}\log \frac{1}{t} \right\}.
\$
\end{lemma}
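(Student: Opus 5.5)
We take a union bound over the choices of $S$ and $T$ and bound each term with the hypergeometric formula \eqref{eq:hypergeometric}.

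\textbf{Single-pair bound.} Fix $S$ with $|S|=sn$ and $T\subset V$ with $|T|=tpn$. The event $e(T,S)=0$ means that all $tpnd$ half-edges of $T$ are paired with half-edges of $[n]\setminus S$, of which there are $(1-s)nd$. By \eqref{eq:hypergeometric} with $k=0$,
\[
\probs{e(T,S)=0}{G}=\frac{\binom{(1-s)nd}{tpnd}}{\binom{nd}{tpnd}}\le (1-s)^{tpnd}.
\]
The inequality holds because the ratio is a product of $tpnd$ factors, each at most $1-s$. Since $s+t\ge 1$, we have $1-s\le t$. So the probability is also at most $t^{tpnd}=\exp(-dpnt\log(1/t))$. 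The bound $(1-s)^{tpnd}=\exp(-dpnt\log(1/(1-s)))$ will be the one we use to pay for the choices of $S$.

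\textbf{Counting.} The number of choices of $S$ is $\binom{n}{sn}=\binom{n}{(1-s)n}\le e^{nh(1-s)}$. The number of choices of $T$ is $\binom{np}{tpn}\le e^{pnh(t)}$. The union bound therefore gives
\[
\exp\Bigl\{nh(1-s)+pnh(t)-dpnt\log\tfrac1t\Bigr\}.
\]

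\textbf{Absorbing the entropy terms.} The hypothesis $dt\log(1/t)\ge 4h(t)$ gives $pnh(t)\le \tfrac14 dpnt\log(1/t)$. For $nh(1-s)$, the hypothesis $d(1-s)\log(1/(1-s))\ge 2h(1-s)/p$ gives
\[
nh(1-s)\le \tfrac12 dpn(1-s)\log\tfrac{1}{1-s}.
\]
Since $x\log(1/x)$ is not monotone, we cannot simply compare $(1-s)\log\frac{1}{1-s}$ with $t\log\frac1t$. This is the main obstacle.

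To get around it, we split the exponent $dpnt\log(1/t)$ into halves and write one half as $\tfrac12 dpnt\log\tfrac1t\le \tfrac12 dpnt\log\tfrac{1}{1-s}$, using $t\ge 1-s$. Because $t\ge 1-s$, this is at least $\tfrac12 dpn(1-s)\log\tfrac1{1-s}\ge nh(1-s)$.

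\textbf{Conclusion.} This half of the exponent covers the count of $S$. A quarter of the remaining half covers the count of $T$, and a quarter is left over. The total is then at most $\exp\{-dpnt\log(1/t)/4\}$, as claimed.

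The one delicate point is the splitting step above, which avoids comparing the nonmonotone functions and needs only $1-s\le t$.
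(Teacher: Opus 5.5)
\textbf{The splitting step fails as written.} In the Counting step you put the \emph{weaker} single-pair exponent $dpnt\log\frac1t$ into the union bound. What you then need is $\frac12 dpnt\log\frac1t\ge nh(1-s)$. Your chain instead reads $\frac12 dpnt\log\frac1t\le\frac12 dpnt\log\frac1{1-s}$, followed by $\frac12 dpnt\log\frac{1}{1-s}\ge\frac12 dpn(1-s)\log\frac1{1-s}\ge nh(1-s)$. The two inequalities point in opposite directions, so nothing follows about $\frac12 dpnt\log\frac1t$.

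This is not only a presentational slip. The expression $\exp\{nh(1-s)+pnh(t)-dpnt\log\frac1t\}$ can be exponentially large under the lemma's hypotheses. Take $p=1/2$, $s=0.7$, $t=0.99$, $d=30$. Both hypotheses hold:
\[
dt\log\tfrac1t\approx 0.30\ge 4h(t)\approx 0.22,\qquad d(1-s)\log\tfrac1{1-s}\approx 10.8\ge 2h(1-s)/p\approx 2.44 .
\]
Yet $h(1-s)+ph(t)-dpt\log\frac1t\approx 0.611+0.028-0.149>0$. Because $x\log\frac1x$ decreases near $1$, the count of $S$ cannot be paid for out of $dpnt\log\frac1t$.

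\textbf{The fix is the one you announced but did not carry out.} Keep the stronger bound $(1-s)^{tpnd}$ in the union bound, so the exponent is $n\bigl[h(s)+ph(t)-dpt\log\frac1{1-s}\bigr]$. Then split $dpt\log\frac1{1-s}$ into two halves:
\begin{itemize}
\item Using $t\ge 1-s$ as a prefactor, one half is at least $\frac12 dp(1-s)\log\frac1{1-s}\ge h(1-s)=h(s)$.
\item Using $\log\frac1{1-s}\ge\log\frac1t$, the other half is at least $\frac12 dpt\log\frac1t$. A quarter of that pays for $ph(t)$, and a quarter remains.
\end{itemize}
This yields $\exp\{-\frac{dpnt}{4}\log\frac1t\}$ and is exactly the paper's proof. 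Your single-pair bound and the counting bounds are otherwise correct.
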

\begin{proof}
Fix any $S \subset [n]$ with $|S| = s n$, and $T \subset V$ with $|T| = t p n$. In view of~\prettyref{eq:hypergeometric}, we have
\#\label{eq:reg-d-t}
\prob{e(T,S) = 0 } =  \frac{\binom{(1-s)nd}{tp n d}}{\binom{nd}{tp n d}} \le (1-s)^{tp n d},
\#
where the inequality holds due to 
$\binom{m}{k}/\binom{n}{k} \le (m/n)^k.$
Then a union bound over both $S$ and $T$
gives
\begin{align*}
&\probs{\exists S\subset \ns \text{ with } |S| = sn, \exists T\subset V \text{ with } |T| = tp n\colon e(T,S) = 0 }{G} \notag\\
& \qquad \le \binom{n}{sn} \binom{p n}{tp n} (1-s)^{tp n d}
\notag\\
&  \qquad\le \exp\left\{n\left[h(s) + p  h(t) - dp t\log \frac{1}{1-s} \right]\right\} \\
& \qquad \le \exp\left\{n\left[h(s) + p  h(t) - dp \left(\frac{t}{2}\log \frac{1}{t} + \frac{1-s}{2}\log \frac{1}{1-s} \right)\right]\right\} \le \exp\left\{ - \frac{dp nt}{4}\log \frac{1}{t} \right\},
\end{align*}
where the second inequality follows from 
$\binom{n}{k} \le \exp(nh(k/n))$, the third inequality holds since $t \ge 1-s$ by assumption, and the last inequality follows from $d t\log (1 /t )\ge 4h(t)$ and $d (1-s)\log (1 /(1-s) )\ge 2h(1-s)/p = 2h(s)/p$. 
\end{proof}

The next lemma shows the \emph{little-waste-of-supply} property: every supply set $S$ whose size  $sn$ is not too small---specifically, for $s$ lower bounded in terms of $d$ as stated in the lemma---%\nbr{In the lemma below, $s$ can be close to $1$?} \xn{[Yes.]} 
receives a sufficiently large number of demand edges, ensuring that little supply is wasted.

 %[Potentially relate to Step 3, 5]
% \xn{We should highlight this is the key lemma that requires the tight $d$.}
% \nbr{JX. Yes, that's right. In fact, if we take $S$ to be the set of isolated nodes and $u=\epsilon$, the lemma requires no existence of $\epsilon n$ isolated nodes, which will lead to the condition $d \ge \log (1/\epsilon)/\log(1/q).$ }

% \nb{Yehua: little supply is wasted?}
% \nbr{Check. One side regularity graph is enough???}

\begin{lemma}[Little Waste of Supply]
\label{lmm:flex-reg2}
For any subset $V$ with $|V| =np$, any $\delta \in (0,1)$ and $\beta \in (0,p)$ such that $
\beta \le \delta/8$
and $\beta \log (e/\beta) \le (\delta/8) \log (1/(1-p))$,  
and any $s \in (0,1)$ such that $\log(e/s) < d (1-\delta/4)\log(1/q)$, a random $d$-regular graph $G\sim \cG(n,n,d)$ satisfies 
\$
\probs{ \exists S\subset \ns \text{ with } |S| = sn\colon  e(S,V) \le \beta d|S|} {G} \le \exp\left\{-n s \left[ d (1-\delta/4) \log \frac{1}{q} - \log \frac{e}{s} \right]\right\}.
\$
\end{lemma}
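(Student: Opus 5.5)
We plan a direct union bound over all supply sets $S$ of size $sn$, combined with a sharp estimate for a single fixed $S$ taken from the hypergeometric representation \eqref{eq:hypergeometric}. Fix $S$ with $|S|=sn$. By \eqref{eq:hypergeometric}, $e(S,V)\sim \text{Hypergeometric}(nd, pnd, snd)$: the $snd$ half-edges of $S$ are paired with a uniformly random subset of the $nd$ demand half-edges, of which $pnd$ belong to $V$. The event $e(S,V)\le \beta d|S| = \beta snd$ is therefore a lower-tail event at level $\beta<p$. Lemma~\ref{lem:hypergeo} gives
\[
\prob{e(S,V)\le \beta snd}\le \exp\{-snd\, D_{\mathrm KL}(\beta\|p)\}.
\]

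Next we lower bound the KL divergence so that the leading term is $\log(1/q)$. We write
\[
D_{\mathrm KL}(\beta\|p)=\beta\log\tfrac{\beta}{p}+(1-\beta)\log\tfrac{1-\beta}{q}\ge (1-\beta)\log\tfrac1q - h(\beta) - \beta\log\tfrac1p.
\]
Since $h(\beta)\le \beta\log(e/\beta)$, the hypotheses $\beta\le\delta/8$ and $\beta\log(e/\beta)\le(\delta/8)\log(1/q)$ handle the terms $-\beta\log(1/q)$ and $-h(\beta)$. Each costs at most $(\delta/8)\log(1/q)$, for a total of $(\delta/4)\log(1/q)$.

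The remaining term $\beta\log(1/p)$ is the one I expect to need care, since $\log(1/p)$ is not obviously controlled by $\log(1/q)$. I would first try not to drop it: $\beta\log(\beta/p)$ is negative only when $\beta<p$, and combining it with $(1-\beta)\log(1-\beta)$ gives exactly $-h(\beta)-\beta\log(1/p)$. If the stated constants are meant to absorb this, a cleaner route is to bound the hypergeometric probability directly. Choose which $k\le\beta snd$ of the $S$-half-edges land in $V$ and require the rest to land outside $V$. This gives
\[
\prob{e(S,V)\le\beta snd}\le \binom{snd}{\beta snd}\,\frac{\binom{qnd}{(1-\beta)snd}}{\binom{nd}{(1-\beta)snd}}\lesssim \exp\{snd[\beta\log(e/\beta)-(1-\beta)\log(1/q)]\},
\]
using $\binom{m}{k}/\binom{N}{k}\le(m/N)^k$. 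This avoids $\log(1/p)$ entirely and yields exponent at least $snd(1-\delta/4)\log(1/q)$ by the two hypotheses. I would use this direct counting as the main step.

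Finally, we take a union bound over $\binom{n}{sn}\le (e/s)^{sn}=\exp\{sn\log(e/s)\}$ choices of $S$. The total probability is then at most
\[
\exp\{-ns[d(1-\delta/4)\log(1/q)-\log(e/s)]\},
\]
which is the claimed bound; the condition $\log(e/s)<d(1-\delta/4)\log(1/q)$ simply makes it nontrivial.
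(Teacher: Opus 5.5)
Your proof is correct, and its skeleton is the paper's: a per-set lower-tail estimate, followed by a union bound using $\binom{n}{sn}\le (e/s)^{sn}$. The difference lies in the per-set estimate. The paper applies Hoeffding's Chernoff bound for the hypergeometric law (Lemma~\ref{lem:hypergeo}) to get $\exp\{-snd\,D_{\mathrm KL}(\beta\|p)\}$, and then lower bounds the divergence. You instead do a first-moment count over which $(1-\beta)snd$ half-edges of $S$ land outside $V$, using $\binom{snd}{\beta snd}\le (e/\beta)^{\beta snd}$ and $(qnd)_k/(nd)_k\le q^k$. Your count is rigorous (your $\lesssim$ is in fact $\le$) and self-contained, since it avoids the convex-order comparison behind Hoeffding's bound. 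The Chernoff route a priori gives the slightly larger exponent $D_{\mathrm KL}(\beta\|p)$. After the paper simplifies it, however, the two per-half-edge exponents coincide exactly: both equal $(1-\beta)\log(1/q)-\beta\log(e/\beta)$.

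Your reason for setting aside the divergence route rests on a sign error. Since $\beta\log(\beta/p)=\beta\log\beta+\beta\log(1/p)$, you get
\[
D_{\mathrm KL}(\beta\|p)=(1-\beta)\log\tfrac1q-h(\beta)+\beta\log\tfrac1p ,
\]
so the $\log(1/p)$ term enters with a plus sign, not a minus sign. It is nonnegative and can simply be dropped; this is exactly what the paper does (``because $p\le 1$''). The inequality $(1-\beta)\log(1-\beta)\ge-\beta$ then gives $-h(\beta)\ge -\beta\log(e/\beta)$, and your two hypotheses finish the argument. So there was no obstacle: the route you abandoned is precisely the paper's proof, and it works as directly as your counting argument.
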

\begin{proof}
For any $S \subset [n]$, 
recall from~\prettyref{eq:hypergeometric} that $e(S,V)$ follows $\text{Hypergeometric}(nd, p nd, snd)$, which is dominated by $\Binom(snd, p)$ in convex order. Thus, it follows from the Chernoff bound for binomial distribution (Lemma \ref{lem:hypergeo}) that for any $\beta < p$,
\#\label{eq:reg-d-s1}
\prob{e(S,V) \le \beta snd } \le \exp\left( - snd \cdot D_{\mathrm KL}( \beta \| p )\right), 
\#
where $D_{\mathrm KL}( \beta \| p ) = \beta\log(\beta/ p) + (1-\beta)\log((1-\beta)/q)$. 
Then a union bound over $S$ yields %that
\begin{align*}
\prob{\exists S \subset [n] \text{ with } |S|=sn: e(S,V) \le \beta d s n }
& \le \binom{n}{sn}\exp\left( - snd \cdot D_{\mathrm KL}( \beta \| p )\right) \\
& \le \exp\left( - sn\left(  d \cdot D_{\mathrm KL}\left( \beta \| p \right)   -  \log (e/s)\right)\right),
\end{align*}
where the last inequality holds due to $\binom{n}{k} \le (en/k)^k$. We conclude the proof by observing that 
$$
D_{\mathrm KL}( \beta \| p ) \ge   \beta \log \beta - \beta  + (1-\beta) \log (1/q)
\ge (1-\delta/4) \log (1/q),
$$
where the first inequality holds because $p \le 1$
and $(1-\beta)\log(1-\beta) \ge -\beta$, and the last inequality holds by the assumption on $\beta.$
\end{proof}

The following lemma shows the \emph{no-edge-congestion} property: with high probability, any small subsets $S$ and $W$ cannot have too many connecting edges, provided that their sizes $sn$ and $w p n$ are upper bounded by the conditions in the lemma.
\begin{lemma}[No Edge Congestion]\label{lmm:flex-reg3} 
Fix a subset $V$ with $|V| = np$. For any $d>0$ and $\beta,s,w \in (0,1)$ satisfying $d\beta \ge 8$, $\beta \log(1/s) \ge 4\log 2+8$, and  $w \le \min\{\beta, s\}/p,$ a random $d$-regular graph $G\sim \cG(n,n,d)$ satisfies 
\$
\probs{\exists S\subset \ns,  W\subset V \text{ with } |S| = sn, |W|=w|V|\colon e(S, W) \ge \beta d|S| }{G}\le \exp\left\{- \frac{\beta nd s}{2} \log \frac{1}{s}\right\}.
\$
\end{lemma}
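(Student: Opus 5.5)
We follow the template of Lemmas~\ref{lmm:flex-reg1} and~\ref{lmm:flex-reg2}: we bound the upper tail of $e(S,W)$ for a fixed pair $(S,W)$ and then take a union bound over all such pairs. Fix $S\subset[n]$ with $|S|=sn$ and $W\subset V$ with $|W|=wpn$. By \eqref{eq:hypergeometric}, $e(S,W)\sim\text{Hypergeometric}(nd, wpnd, snd)$, whose mean is $wp\cdot snd\le \min\{\beta,s\}\, snd$.

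We need the probability that $e(S,W)\ge \beta snd$. Rather than use the KL form of Lemma~\ref{lem:hypergeo} directly, we use a cruder bound. Choose which $k=\beta snd$ of the half-edges at $S$ land in $W$; each lands in $W$ with conditional probability at most $wpnd/(nd-k)$, roughly $wp$. This gives
\[
\prob{e(S,W)\ge k}\le \binom{snd}{k}\Big(\frac{wp}{1-\beta s}\Big)^{k}\le \Big(\frac{e\,s\, wp}{\beta(1-\beta s)}\Big)^{\beta snd}.
\]
Equivalently, Lemma~\ref{lem:hypergeo} gives the bound $\exp(-snd\,D_{\mathrm KL}(\beta\|wp))$ with $D_{\mathrm KL}(\beta\|wp)\ge \beta\log(\beta/(wp))-\beta$. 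Since $wp\le s$, the bound is at most $\exp\{-\beta snd[\log(1/s)+\log\beta-1]\}$.

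The condition $wp\le\beta$ absorbs the $\log\beta$ term. Indeed, we may write $\log(\beta/(wp))\ge \log(\beta/\min\{\beta,s\})$. The cleanest route is to split $\log(\beta/(wp))$ using $wp\le\sqrt{\beta s}$, which holds since $wp\le \min\{\beta,s\}$. This gives $\log(\beta/(wp))\ge \tfrac12\log(1/s)+\tfrac12\log\beta$. A cleaner alternative is to keep $\beta\log(1/s)-\beta\log(1/\beta)-\beta$ and absorb the middle term, using $\beta\log(1/\beta)\le 1/e$ together with the hypothesis $\beta\log(1/s)\ge 4\log2+8$.

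Next comes the union bound. The number of pairs $(S,W)$ is at most $\binom{n}{sn}\binom{pn}{wpn}\le \exp\{n(h(s)+p\,h(wp/p\cdot))\}$. Since $wp\le s$, this is at most $\exp\{2 sn\log(e/s)\}$.

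It remains to show the exponent $\beta snd\cdot[\text{per-edge rate}]-2sn\log(e/s)$ is at least $\tfrac{\beta snd}{2}\log(1/s)$. Two estimates do this. First, $d\beta\ge 8$ gives $2sn\log(e/s)\le \tfrac{\beta snd}{4}(\log(1/s)+1)$. Second, $\beta\log(1/s)\ge 4\log2+8$ makes $\log(1/s)$ large enough that constant losses such as the $-1$, $\log 2$ and $\log e$ terms fit within a quarter of $\beta snd\log(1/s)$.

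The main obstacle is the per-pair tail step: extracting a rate of the form $(\tfrac34+o(1))\log(1/s)$ per edge from the hypergeometric bound, uniformly in $\beta$, given only $wp\le\min\{\beta,s\}$. In the sharpest case $wp=s\ge\beta$, the rate $\log(\beta/(wp))$ is even negative, so the crude approach fails there. If that happens, I would instead use the combinatorial bound: pick $k$ edges at $S$ and force their partners into the $wpnd$ half-edges of $W$. This yields $\binom{snd}{k}\binom{wpnd}{k}k!/[(nd)_k]\le (e\,s\,wp/\beta^2\cdot\text{const})^{k}$ or similar, which exploits both smallness factors $s$ and $wp\le\min\{\beta,s\}$, and yields rate $\log(1/s)$ minus constants absorbed by the two hypotheses. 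Multi-edges are harmless since $e(\cdot,\cdot)$ counts multiplicity.
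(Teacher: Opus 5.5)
Your ``cleaner alternative'' line is correct and is essentially the paper's proof. The paper also applies the KL tail bound of Lemma~\ref{lem:hypergeo} to $e(S,W)\sim\text{Hypergeometric}(nd,wpnd,snd)$. It takes the same union bound $\binom{n}{sn}\binom{pn}{wpn}\le\exp\{2sn\log(e/s)\}$ using $wp\le s$, and absorbs constants with $d\beta\ge 8$ and $\beta\log(1/s)\ge 4\log 2+8$. The only difference is cosmetic. The paper uses $\beta\log\beta+(1-\beta)\log(1-\beta)\ge-\log 2$ to get $D_{\mathrm{KL}}(\beta\|wp)\ge\beta\log(1/s)-\log 2$, whereas you get $\beta\log(1/s)-\beta\log(1/\beta)-\beta$. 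Your bookkeeping closes: the total exponent is at least $snd\bigl(\tfrac34\beta\log(1/s)-\beta\log(1/\beta)-\tfrac54\beta\bigr)$, and $\tfrac14\beta\log(1/s)\ge\log 2+2>1/e+\tfrac54\ge\beta\log(1/\beta)+\tfrac54\beta$.

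The other material should be dropped, and your closing paragraph misidentifies the difficulty.
\begin{itemize}
\item The $\sqrt{\beta s}$ split is not viable. It gives a per-pair exponent $\beta snd\bigl(\tfrac12\log(1/s)+\tfrac12\log\beta-1\bigr)$, which is already below the target $\tfrac12\beta snd\log(1/s)$ before the union-bound cost is subtracted.
\item Your crude bound carries a spurious factor $s$. Since $\binom{snd}{\beta snd}\le(e/\beta)^{\beta snd}$ and $(wpnd)_k/(nd)_k\le(wp)^k$, the bound is $(e\,wp/\beta)^{\beta snd}$, the same as the KL estimate.
\item Most importantly, the ``sharpest case'' never arises. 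Because $\beta+\beta\log(1/\beta)<1+1/e<4\log 2+8\le\beta\log(1/s)$, we have $\log(\beta/s)>1$. Hence $s<\beta/e$, $\min\{\beta,s\}=s$, and $wp\le s<\beta/e$ always. This is also what guarantees $\beta>wp$, which the upper-tail bound of Lemma~\ref{lem:hypergeo} requires.
\end{itemize}
The case had to be vacuous. If $wp\ge\beta$, the threshold $\beta snd$ would not exceed the mean $wp\cdot snd$ of $e(S,W)$, so the probability could not be exponentially small and no argument could prove the lemma there. For the same reason your fallback cannot help. Indeed, $\binom{snd}{k}\binom{wpnd}{k}k!/(nd)_k=\binom{snd}{k}(wpnd)_k/(nd)_k\le(e\,wp/\beta)^k$ is just the crude bound again, with no extra factor $s/\beta$.
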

\begin{proof}
For $S$ with $|S| = sn$ and $W$ with $|W| = w p n  \le \beta n$, recall from \eqref{eq:hypergeometric} that $e(S,W)\sim \text{Hypergeometric}(nd, w p nd, snd)$. 
Thus, the Chernoff bound in Lemma \ref{lem:hypergeo} gives that
\#\label{eq:reg-d-s2} 
\prob{e(S,W) \ge \beta snd } \le \exp\left( - snd \cdot D_{\mathrm KL}( \beta \| wp )\right),
\#
Then a union bound gives that
\$
&\prob{\exists S\subset [n], W \subset V \text{ with } |S| = s n, |W| = w p n:  e(S, W) \ge \beta s nd } \\
& \qquad\le \binom{n}{s n}\binom{p n}{wp n} \exp\left( - snd \cdot D_{\mathrm KL}( \beta \| wp )\right) \\
& \qquad \le \exp\big\{- n\big[s d \cdot D_{\mathrm KL}( \beta \| wp )- s \log (e/s) - p w \log (e/w) \big]\big\}.
\$  
Moreover, since $\beta \log\beta +(1-\beta) \log (1-\beta) \ge -\log 2$, it follows that 
$D_{\mathrm KL}( \beta \| wp )=\beta \log [\beta/(wp)]+ (1-\beta) \log [(1-\beta)/(1-wp)] \ge  \beta \log (1/wp) -\log 2$. Therefore, 
\$ 
& s d \cdot D_{\mathrm KL}( \beta \| wp )- s \log (e/s) - p w \log (e/w) \\
& \qquad \ge sd\left(\beta\log\frac{1}{wp} - \log 2\right) - s \log \frac{e}{s} - p w \log \frac{e}{w}\\
& \qquad \ge s \Big[d\Big(\beta\log\frac{1}{s} - \log 2 \Big) - 2\log\frac{e}{s} \Big]\ge \frac{1} {2} \beta ds \log\frac{1}{s},
\$
where the second inequality holds since the function $x\log(e/x)$ is increasing in $x\in(0,1)$ and $w p \le s$, and the last inequality follows from the assumptions that  $\beta\log(1/s)  \ge 4\log 2+8$ and $d\beta \ge  8$. Finally, we combine the above results and conclude the proof.
\end{proof}

\begin{remark}\label{rmk:properties}
Among the above three lemmas, 
\prettyref{lmm:flex-reg1} exploits the degree-$d$ regularity of the demand nodes, through the term $|T|d$ in the bound \eqref{eq:reg-d-t}, where $T$ is a subset on the demand side. The other two lemmas use the degree-$d$ regularity of the supply nodes through the term $|S|d$ in the bounds \eqref{eq:reg-d-s1} and \eqref{eq:reg-d-s2}, where $S$ is a subset on the supply side.

In particular, the \emph{little-waste-of-supply} property established in~\prettyref{lmm:flex-reg2} is the most critical, since it imposes the sharp condition on $d$ given by~\eqref{eq:degree-const-require}.
To illustrate, if we take $S$ to be the set of ``isolated'' supply nodes (with no connections to $V$) in~\prettyref{lmm:flex-reg2}, then ruling out the existence of  $\epsilon n$ isolated nodes requires the condition $d \ge \log (1/\epsilon)/\log(1/q)$, as shown in the proof of lower bound in~\prettyref{thm:lower_bound_bp}. 
Note that if we were using the \ER random graphs instead of random $d$-regular graphs, then to rule out the existence of  $\epsilon n$ isolated nodes and ensure little waste of supply, we would need a larger average degree, that is, $d \ge \log (1/\epsilon)/(1-q)$. This is because the expected number of isolated supply nodes in
 \ER random graphs is approximately $n \cdot \prob{\Pois(dp)=0}= n \exp(-dp)$ with $p=1-q.$
\end{remark}
\subsection{Proof of \prettyref{lmm:epsilon_connectivity}}\label{sect:flex-pf-con-fract} 
In this section, we prove Lemma~\ref{lmm:epsilon_connectivity} using the key properties established above.

\iffalse
\begin{proposition}\label{prop:flex-const}
For any constant $\delta>0$, there exists a constant $\epsilon(\delta)>0$ such that for any $0<\epsilon<\epsilon(\delta)$ %there exists $N(\epsilon)>0$ such that for any $n > N(\epsilon)$ 
and any demand realization $D$ with $|\supp(D)| \ge (1-\epsilon/2)np$, random $d$-regular graphs with $d$ given by \eqref{eq:d_cond_eps_flex} satisfy 
\$
\probs{z(G,D) \ge (1-\epsilon) n}{G\sim\mathbb{G}(d)} \ge 1- 3n\exp\left(-\delta n\epsilon\log(1/\epsilon)/8 \right).
\$
\end{proposition}  
%\nb{WJ:Can the specific expression of $\epsilon(\delta)$ here be written directly? Otherwise, it is not clear that it is a small value.} \nbr{I think if $\epsilon(\delta)$ is bigger, then the range for $\epsilon$ is bigger, which is in fact better.}

\begin{lemma}\label{lem:concentrate-V}
For any $n \ge 3\log(4/\gamma) / (\eta^2p)$, it holds that
\$
\pr\Big(\abs{|\supp(D)|-np} \le \eta np \Big) \ge 1-\frac{\gamma}{2}.
\$
\end{lemma}
\fi    
%\begin{lemma}\label{lem:pre-sub} Fix any graph $G$. For realizations of demand $V_1\subset V_2$, we have $z(G, V_1) \le z(G, V_2)$. \end{lemma}   
%Following from the proof sketch in Section \ref{sect:pf-sketch-proce}, it suffices to establish the three key statements in \eqref{eq:no_waste}-\eqref{eq:flex-d-large-s-frac}.
\begin{proof}[Proof of \prettyref{lmm:epsilon_connectivity}]
We apply Lemmas \ref{lmm:flex-reg1}-\ref{lmm:flex-reg3} and verify their conditions to show \eqref{eq:flex-d-large-s-frac}-\eqref{eq:flex-d-no-cong}.
\paragraph{Proof of \eqref{eq:flex-d-large-s-frac} via Lemma \ref{lmm:flex-reg1}.}
Choose small constants $\epsilon_0,c\in(0,1/4)$ such that $$
\sqrt{\epsilon_0}\log(1/\epsilon_0) \le 1/2,  \quad \log(1/\epsilon_0) \ge 8\log(1/q)\max\{1/p, 1+2\log 2\},
\quad c\le p/[32\log(1/q)].
$$
Fix any $\epsilon\in(0, \epsilon_0)$. These choices also imply $\epsilon^c \ge 2\epsilon$, since $\epsilon,c\le1/4$ give $\epsilon^{1-c} \le \sqrt{\epsilon} \le \sqrt{1/4}=1/2$.

A supply subset $S$ is a bottleneck, \ie, $|N(S)\cap V| < p(|S|- \epsilon n)$, if and only if there exists a demand set $T\subset V$ with $|T| = |V| - p(|S|- \epsilon n)$ such that $e(S,T)=0$. Here $T$ can be viewed as the complement of $N(S)$ in $V$. Then we have
\#\label{eq:ssss1}
& \prob{\exists S \subset [n] \text{ with } |S| \ge \epsilon^c n\colon |N(S)\cap V| < p(|S|- \epsilon n)} \\
& \qquad \le \sum_{sn=\epsilon^c n}^n \prob{\exists S \subset [n] \text{ with } |S| = sn, \exists T \subset V \text{ with } |T| = |V| - p(s- \epsilon)n\colon e(S,T) = 0}. \notag
\#
For $s\in[\epsilon^c, 1]$, define $t(s) \triangleq |T|/|V| = 1 - s + \epsilon$. Hence, we have $t(s)\ge 1-s$ and $t(s)\in[\epsilon, 1-\epsilon^c+\epsilon] \subset 
[\epsilon, 1-\epsilon^c/2]$ using $\epsilon^c \ge 2\epsilon$. Applying \prettyref{lmm:flex-reg1} to \eqref{eq:ssss1} for each $s\in[\epsilon^c, 1]$ and $t = t(s)$, 
\#\label{eq:sss1}
& \prob{\exists S \subset [n] \text{ with } |S| \ge \epsilon^c n\colon |N(S)\cap V| < p(|S|- \epsilon n)} \notag\\
& \qquad \le \sum_{sn=\epsilon^c n}^n \exp\left\{ - \frac{d p nt(s)}{4}\log \frac{1}{t(s)} \right\}  \le n\exp\left\{ - \frac{dp n\epsilon}{4}\log \frac{1}{\epsilon} \right\},
\#  
where the last inequality uses the fact that $g(t) \triangleq t\log (1/t) \ge g(\epsilon)$ for all $t \in [\epsilon, 1-\epsilon^c/2]$. To see this, note that $g$ is concave, so we only need to compare the endpoints. Moreover, 
\$g(1-\epsilon^c/2) \ge (1-\epsilon^c/2)\epsilon^c/2 \ge \sqrt{\epsilon}/4 
\ge g(\epsilon),\$ 
for sufficiently small $\epsilon$, since $c\le 1$. Finally, it remains to verify the lemma's conditions on $d$:
\$ 
d (1-s)\log (1/(1-s)) \ge 2h(1-s)/p, \ d t\log (1/t) \ge 4h(t), \qquad \forall s\in[\epsilon^c, 1], \ t\in [\epsilon, 1-\epsilon^c/2].
\$
Let $f(x) \triangleq h(x)/[x\log(1/x)]$. Since $f(x)$ is increasing on $(0,1)$, it suffices to verify the conditions at the corresponding endpoints, \ie, $ 
d \ge 2f(1-\epsilon^c)/p$ and $d \ge 4f(1-\epsilon^c/2)$.
Note that $f(1-y) = 1 + y\log(1/y)/[(1-y)\log(1/(1-y))] 
\le 1+ 2\log(1/y)$ for all $y\le 1/2$, since $y \le \log(1/(1-y))$ for all $y\ge 0$. Thus, we only need to ensure
\$ 
d \ge 2[1+ 2\log(1/\epsilon^c)]/p, \quad d \ge 4[1+ 2\log(2/(\epsilon^c))].
\$
Because $d$ is of order $\log(1/\epsilon)$, these conditions hold for sufficiently small $c$. In fact, our choices of $\epsilon,c$ ensure that $d > \log(1/\epsilon)/\log(1/q)  \ge \max\{4/p, 8\log(1/\epsilon^c)/p, 8(1+2\log 2), 16\log(1/\epsilon^c)\}$, which implies the above conditions.

\paragraph{Proof of \eqref{eq:flex-d-no-cong} via Lemma \ref{lmm:flex-reg3}.}
Given the constant $c$ fixed above, we choose 
$$
\gamma = \max\{8, (8\log 2+16)/[c\log(1/q)]\}, \quad \beta =  \gamma/d, \quad  w = p(s- \epsilon),
$$
and verify the conditions in \prettyref{lmm:flex-reg3} 
for all $s \in [\epsilon,\epsilon^c]$. First, since $s\le \epsilon^c$, we have 
\begin{align*}
\beta \log(1/s) \ge \beta c\log(1/\epsilon) = \gamma c\log(1/\epsilon)/d \ge  \gamma c\log(1/q) /2 \ge 4\log 2+8, 
\end{align*}
where we will later ensure $\delta\le1$, so that $d$ in \eqref{eq:ddddelta} satisfies $d\le2\log(1/\epsilon)/\log(1/q)$, and the last inequality follows from the definition of $\gamma$. Next,
\begin{align*}
w p = p^2(s-\epsilon)\le s \le  \epsilon^c \le \gamma\log(1/q)/[2\log(1/\epsilon)] \le \gamma/d = 
\beta,
\end{align*}
where the third inequality uses $\epsilon^c\log(1/\epsilon^c) \le 1\le \gamma c\log(1/q)/2$, and the last inequality again uses $d\le2\log(1/\epsilon)/\log(1/q)$.
Thus, \prettyref{lmm:flex-reg3} yields that
\#\label{eq:sss2} 
& \prob{\exists S \subset [n] \text{ with } \epsilon n\le |S| \le \epsilon^c n, \exists T\subset V \text{ with } |T| = p(|S|- \epsilon n) \colon e(S,T) > \gamma|S|} \notag\\
& \qquad \le \sum_{sn = \epsilon n}^{\epsilon^c n}\prob{\exists S \subset [n] \text{ with } |S| =s n, \exists T\subset V \text{ with } |T| = p(s- \epsilon )n \colon e(S,T) > \gamma|S|} \notag\\
& \qquad \le \sum_{sn = \epsilon n}^{\epsilon^c n}  \exp\left\{- \frac{\gamma n s}{2} \log \frac{1}{s}\right\} \le n \exp\left\{- \frac{\gamma n \epsilon}{2} \log \frac{1}{\epsilon}\right\}.
\#

\paragraph{Proof of \eqref{eq:no_waste} via Lemma \ref{lmm:flex-reg2}.} 
Given the constant $\gamma$ chosen above, we set
$$
c_1 = 16\gamma\log(1/q)/\log(1/(1-p/2)), \quad \delta = c_1 \log\log(1/\epsilon)/\log(1/\epsilon).$$
Choose $\epsilon_0$ sufficiently small to further satisfy that $\beta\le 1/e$ and $\delta\le 1$ for all $0<\epsilon\le \epsilon_0$. Recall that $\beta = \gamma/d$ is of order $1/\log(1/\epsilon)$, and hence $\beta\log(1/\beta)$ is of order $\log \log(1/\epsilon)/ \log(1/\epsilon)$, the same order as $\delta$. Thus, for $c_1$ large enough---indeed, for our specific choice above---the conditions of \prettyref{lmm:flex-reg2} hold: 
$$
\beta \le \delta/8, \text{ and } \beta \log (e/\beta) \le (\delta/8) \log (1/q).
$$
Moreover, when $d$ is given by \eqref{eq:ddddelta}, we have 
\#\label{eq:app-d-con} 
d(1-\delta/4)\log \frac{1}{q}  & \ge (1+\delta) \left( 1-\delta/4\right) \log \frac{1}{\epsilon} \ge (1+\delta/4)\log \frac{1}{\epsilon}.
\#
Thus, \prettyref{lmm:flex-reg2}, with a union bound over $|S|$, yields that
\#\label{eq:sss3} 
\prob{\exists S \subset [n] \text{ with } |S| \ge \epsilon n \colon e(S,V) \le \gamma|S|} &\le \sum_{sn = \epsilon n}^{n} \exp\left\{-n s \left[ d (1-\delta/4) \log \frac{1}{q} - \log \frac{e}{s} \right]\right\} \notag\\
&\le \sum_{sn = \epsilon n}^{n} \exp\left\{-n s \left[ \left(1+{\delta}/{4}\right)\log \frac{1}{\epsilon} - \log \frac{e}{s} \right]\right\} \notag\\
&
\le n\exp (- \delta n \epsilon\log(1/\epsilon)/8),
\#
where the second inequality holds due to \eqref{eq:app-d-con} and the last inequality holds since $\phi(s) = -s [ (1+{\delta}/{4})\log ({1}/{\epsilon}) - \log ({e}/{s})]$ decreases on $s\in(\epsilon, 1)$, and achieves the maximum at $s=\epsilon$. 

Finally, by taking the maximum of the probabilities in \eqref{eq:sss1}, \eqref{eq:sss2}, and \eqref{eq:sss3} that $\zeta = n\exp (- \delta n \epsilon\log(1/\epsilon)/8)$, we complete the proof of the lemma.
\end{proof}

\subsection{\blue{Proofs of Lemmas \ref{lmm:lt-convex} and \ref{lem:monotonicity}}}\label{sec:general_demand_proofs}
This subsection proves the two key ingredients used in Section~\ref{sec:flex-bound}. We first establish the convexity reduction on each fixed-total-demand slice and then prove the monotonicity properties used to compare the resulting extreme points.
\begin{proof}
[Proof of Lemma \ref{lmm:lt-convex}]
Let $\phi_a(u) = (u-a)^+$. Then $F_a(D) = \E_{G\sim\cG(n,n,d)}[\phi_a(L(G,D))]$. 
Fix a graph $G$, and let $N_G(S)$ denote the set of demand nodes adjacent to a supply-node set $S$. For any demand vector $D\in\mathbb R^n$, the max-flow min-cut theorem gives
\begin{equation}
z(G,D)
=
\min_{S\subseteq[n]}
\left\{
n-|S|+\sum_{j\in N_G(S)}D_j
\right\}.
\label{eq:mfmc-zg-general}
\end{equation}
In particular, $z(K_{n,n},D)=\min\{n,\sum_{j=1}^nD_j \}$.
Fix a graph $G$ and a vector $D\in\Sigma_t$. Using \eqref{eq:mfmc-zg-general} and $\sum_jD_j=t$, we obtain
\begin{align*}
L(G,D)
&=
\min\left\{n,\sum_jD_j\right\}
-
\min_{S\subseteq[n]}
\left\{
n-|S|+\sum_{j\in N_G(S)}D_j
\right\}\\
&=
\max_{S\subseteq[n]}
\left\{
|S|-n+\sum_{j\notin N_G(S)}D_j
\right\}
-\max\{t-n,0\}.
\end{align*}
This is the maximum of finitely many affine functions of $D$, minus a constant (for fixed $t$), and hence convex on $\Sigma_t$. Since $\phi_a$ is convex and nondecreasing, the composition
$\phi_a(L(G,\cdot))$ is also convex on $\Sigma_t$. Averaging over $G$ preserves convexity, so
$F_{a}$ is convex on $\Sigma_t$.

A convex function on the compact polytope $\Sigma_t$ attains its maximum at an extreme
point. If an extreme point $D$ had two distinct coordinates in $D_i, D_j\in (0,1/p)$, then for all sufficiently small $\epsilon>0$, both vectors $D\pm \epsilon(e_i-e_j)$
would lie in $\Sigma_t$, and have midpoint $D$, contradicting extremality. Hence, every
extreme point has at most one coordinate in $(0,1/p)$ and all others in $\{0,1/p\}$. Since the coordinates sum to $t$, every extreme point is a permutation of $D^\star(t)$.

Finally, the configuration-model distribution $\cG(n,n,d)$ is invariant under relabeling of the demand nodes, so $F_{a}$ is permutation invariant on $\Sigma_t$. Therefore, $F_{a}$ takes the same value at all extreme points, and $F_{a}(D)\le F_{a}(D^\star(t))$ for every $D\in\Sigma_t$.  
\end{proof}

\begin{proof}[Proof of \prettyref{lem:monotonicity}]
We first prove part (a).
Let $D' = D + re_j$. From $z(K_{n,n}, D) = \min\{n, \sum_j D_j\}$ and $z(K_{n,n}, D') = \min\{n, \sum_j D'_j\}$, we have  $z(K_{n,n}, D') \le z(K_{n,n}, D)+r$. Moreover, $D'\ge D$ coordinate-wise and $z(G, \cdot)$ is nondecreasing in the demand vector, so $z(G, D') \ge z(G,D)$. Combining them, we have
\$ 
L(G, D') \le z(K_{n,n}, D)+r - z(G, D) = L(G, D) + r \le L(G,D) + 1/p.
\$

We now prove part (b).
The proof follows from the simple observation that 
\begin{align}
z\left(K_{n,n}, D\right) - z\left(K_{n,n}, D'\right) = \sum_{j \in [n]}(D_j-D'_j), \label{eq:mono-down1} \\   
z\left(G, D\right) - z\left(G, D'\right) \leq \sum_{j \in [n]}(D_j-D'_j),  \label{eq:mono-down2} 
\end{align}
where \eqref{eq:mono-down1} immediately follows from the definition of the full flexibility network. For \eqref{eq:mono-down2}, by max-flow min-cut \eqref{eq:mfmc-zg-general}, $z(G, D') = n-|S'| + \sum_{j \in N_G(S')} D'_j$
for some $S' \subseteq [n]$, and 
$z(G, D) \leq n-|S'| + \sum_{j \in N_G(S')} D_j$, implying that
\begin{align*}
z(G, D) - z(G, D') \leq \sum_{j \in N_G(S')} (D_j - D'_j)\leq \sum_{j \in [n]}(D_j-D'_j).
\end{align*}
Thus, we have \eqref{eq:mono-down2}.
Subtract \eqref{eq:mono-down2} from 
\eqref{eq:mono-down1} and the lemma follows.
\end{proof}

\subsection{\blue{Additional Numerical Results}}
\label{sec:additional-flex-comparison}
\subsubsection{Tables for Experiments in Section \ref{sect:exp}}
To better illustrate the practical magnitude of the performance differences, Tables \ref{tab:original-fractional-loss} and \ref{tab:transport-fractional-loss} report fractional losses on their original scale, for all six experimental settings in Figures \ref{fig:flex-design} and \ref{fig:trans-design}. 
Across the settings, we observe that the random regular design often achieves substantially lower fractional loss than competing designs.

%its loss is less than half that of the best alternative and remains close to the theoretical lower bound. 

%These performance gains have important practical implications. 
%A network with average degree $d$ contains $dn$ edges in process flexibility and $dn/2$ edges in middle-mile transportation. Thus, even a modest reduction in the degree required to achieve a target loss can eliminate many edges in a large system. Because each edge may entail setup, qualification, and maintenance costs, this reduction can translate into meaningful implementation savings.

\begin{table*}[htbp]
\centering
\small
\setlength{\tabcolsep}{2pt}
\begin{tabular}{@{}l@{\hspace{9pt}}*{3}{>{\centering\arraybackslash}p{0.078\textwidth}}@{\hspace{14pt}}*{3}{>{\centering\arraybackslash}p{0.078\textwidth}}@{\hspace{14pt}}*{3}{>{\centering\arraybackslash}p{0.078\textwidth}}@{}}
\toprule
& \multicolumn{3}{c@{\hspace{14pt}}}{(a) $n=200$, $p=0.5$}
& \multicolumn{3}{c@{\hspace{14pt}}}{(b) $n=500$, $p=0.5$}
& \multicolumn{3}{c}{(c) $n=400$, $p=0.8$} \\
\cmidrule(l{4pt}r{18pt}){2-4}\cmidrule(l{4pt}r{18pt}){5-7}\cmidrule(l{4pt}r{4pt}){8-10}
Design
& $d=4$ & $d=6$ & $d=8$
& $d=4$ & $d=6$ & $d=8$
& $d=4$ & $d=6$ & $d=8$ \\
\midrule
$K$-chain
& $9.645\%$ & $5.510\%$ & $3.457\%$
& $10.667\%$ & $6.505\%$ & $4.446\%$
& $2.573\%$ & $1.284\%$ & $0.677\%$ \\
Erdős--Rényi
& $12.430\%$ & $3.490\%$ & $0.964\%$
& $13.278\%$ & $3.748\%$ & $1.078\%$
& $4.512\%$ & $0.688\%$ & $0.133\%$ \\
Prob.\ expander
& $8.400\%$ & $1.519\%$ & $0.373\%$
& $8.969\%$ & $1.692\%$ & $0.395\%$
& $0.691\%$ & $0.033\%$ & $0.003\%$ \\
Random regular
& $4.318\%$ & $0.685\%$ & $0.151\%$
& $5.011\%$ & $0.777\%$ & $0.160\%$
& $0.074\%$ & $0.002\%$ & \makebox[\linewidth][c]{$<\!0.001\%$} \\
Lower bound
& $2.595\%$ & $0.562\%$ & $0.120\%$
& $2.790\%$ & $0.642\%$ & $0.147\%$
& $0.059\%$ & $0.002\%$ & \makebox[\linewidth][c]{$<\!0.001\%$} \\
\bottomrule
\end{tabular}
\caption{\small{Fractional losses for process flexibility in the three settings of Figure \ref{fig:flex-design}.}}
\label{tab:original-fractional-loss}
\end{table*}

\begin{table*}[htbp]
\centering
\small
\setlength{\tabcolsep}{2pt}
\begin{tabular}{@{}l@{\hspace{9pt}}*{3}{>{\centering\arraybackslash}p{0.078\textwidth}}@{\hspace{14pt}}*{3}{>{\centering\arraybackslash}p{0.078\textwidth}}@{\hspace{14pt}}*{3}{>{\centering\arraybackslash}p{0.078\textwidth}}@{}}
\toprule
& \multicolumn{3}{c@{\hspace{14pt}}}{(a) $n=200$, $p=0.5$}
& \multicolumn{3}{c@{\hspace{14pt}}}{(b) $n=500$, $p=0.5$}
& \multicolumn{3}{c}{(c) $n=400$, $p=0.8$} \\
\cmidrule(l{4pt}r{18pt}){2-4}\cmidrule(l{4pt}r{18pt}){5-7}\cmidrule(l{4pt}r{4pt}){8-10}
Design
& $d=4$ & $d=6$ & $d=8$
& $d=4$ & $d=6$ & $d=8$
& $d=4$ & $d=6$ & $d=8$ \\
\midrule
$K$-chain
& $6.985\%$ & $3.123\%$ & $1.424\%$
& $7.093\%$ & $3.225\%$ & $1.503\%$
& $1.495\%$ & $0.204\%$ & $0.013\%$ \\
Erdős--Rényi
& $10.217\%$ & $3.756\%$ & $1.167\%$
& $10.645\%$ & $3.785\%$ & $1.127\%$
& $4.239\%$ & $0.713\%$ & $0.171\%$ \\
Random regular
& $6.369\%$ & $1.436\%$ & $0.305\%$
& $6.263\%$ & $1.155\%$ & $0.229\%$
& $0.155\%$ & $0.004\%$ & \makebox[\linewidth][c]{$<\!0.001\%$} \\
Lower bound
& $2.875\%$ & $0.531\%$ & $0.098\%$
& $3.025\%$ & $0.681\%$ & $0.098\%$
& $0.064\%$ & $0.003\%$ & \makebox[\linewidth][c]{$<\!0.001\%$} \\
\bottomrule
\end{tabular}
\caption{\small{Fractional losses for middle-mile transportation in the settings of Figure \ref{fig:trans-design}.}}
\label{tab:transport-fractional-loss}
\end{table*}

\subsubsection{Experiments for non-Bernoulli Process Flexibility Models}
We further examine the performance of random regular graphs in process flexibility beyond the scaled Bernoulli demand model. Specifically, we consider i.i.d.\ truncated Gaussian and Poisson demands, both with mean $1$, for systems with $n=400$, %\nbr{what's the variance for Gaussian?Maybe we can increase the variance to make the plot look better?} 
using the same graph designs and simulation procedure as in Section \ref{sect:exp-pro-flex}. The results, reported in Figure \ref{fig:flex-design-gen} and Table \ref{tab:general-demand-fractional-loss}, show that random regular graphs continue to outperform the competing designs under both demand distributions. For comparison, we also plot a general lower bound on the expected loss, 
given by the following claim.

\begin{claim}
Suppose that $D=(D_i)_{i\in [n]}$ consists of i.i.d.\ nonnegative random 
variables.
The expected loss $L(G) = \E_D[L(G, D)]$ of any bipartite graph $G$, with $n$ nodes in each partition and average degree $d$, is lower bounded by
\begin{align*}
L(G) & \ge  n \E\left[\left(1-\sum_{i=1}^{d} D_i\right)^+ \mathbf{1}_{\left\{\sum_{i=1}^n D_i \ge n\right\}} \right].
\end{align*}     
\end{claim}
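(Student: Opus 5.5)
We generalize the isolated-supply argument from the proof of \prettyref{thm:lower_bound_bp}: each supply node can recover at most the total demand of its neighbors, and on the event $\{\sum_j D_j\ge n\}$ full flexibility serves all $n$ units.

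\textbf{Step 1 (per-node shortfall).} Fix $G$ with supply degrees $d_i$, $\sum_i d_i = nd$, and a demand realization $D$. In any feasible flow $x$, supply node $i$ ships at most $\min\{1,\sum_{j\in N(i)}D_j\}$, since each $x_{ij}\le D_j$ (counting each neighbor once; multi-edges only help the bound). Summing over $i$ gives
\[
z(G,D)\le n-\sum_{i=1}^n\Bigl(1-\sum_{j\in N(i)}D_j\Bigr)^+ .
\]

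\textbf{Step 2 (benchmark).} Since $z(K_{n,n},D)=\min\{n,\sum_jD_j\}$, on the event $E=\{\sum_{j=1}^nD_j\ge n\}$ we have $z(K_{n,n},D)=n$. Off $E$ we use $L(G,D)\ge0$. Combining with Step 1,
\[
L(G,D)\ge \mathbf 1_E\sum_{i=1}^n\Bigl(1-\sum_{j\in N(i)}D_j\Bigr)^+ .
\]
Taking expectations and using that the $D_j$ are i.i.d., the joint law of $\bigl(\sum_{j\in N(i)}D_j,\sum_{j=1}^n D_j\bigr)$ depends only on $k=|N(i)|\le d_i$. The $i$-th term is therefore
\[
g(k)\triangleq \E\Bigl[\Bigl(1-\sum_{j=1}^{k}D_j\Bigr)^+\mathbf 1_{\{\sum_{j=1}^n D_j\ge n\}}\Bigr].
\]
Because $D\ge0$, $g$ is nonincreasing in $k$, so the $i$-th term is at least $g(d_i)$.

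\textbf{Step 3 (from individual degrees to the average degree).} We need $\sum_i g(d_i)\ge n\,g(d)$; this is the main obstacle. In \prettyref{thm:lower_bound_bp} this step was a Jensen argument for an explicitly convex function of the degree. Here convexity of $k\mapsto g(k)$ on integers must be shown, i.e.
\[
g(k)-g(k+1)\ge g(k+1)-g(k+2).
\]
To do this, write $S=\sum_{j\le k}D_j$ and $f(x)=(1-x)^+$, so that
\[
g(k)-g(k+1)=\E\bigl[(f(S)-f(S+D_{k+1}))\mathbf 1_E\bigr].
\]
Then compare this with the next difference by exchangeability: swap the roles of $D_{k+1}$ and $D_{k+2}$, both of which lie outside the first $k$ coordinates and inside the symmetric event $E$. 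The problem then reduces to the pointwise inequality
\[
f(S)-f(S+a)\ge f(S+b)-f(S+a+b)\qquad\text{for all } a,b\ge0,
\]
which holds because $f$ is convex and nonincreasing.

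Once discrete convexity is established, Jensen's inequality (via the piecewise-linear extension when $d$ is not an integer, using the paper's rounding convention) gives $\sum_i g(d_i)\ge n\,g(d)$. This is exactly the claimed bound.
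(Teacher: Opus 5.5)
Your proposal is correct and follows the paper's proof essentially step by step. Both arguments bound each supply node's unused capacity by $(1-\sum_{j\in N(i)}D_j)^+$, restrict to the event $\{\sum_j D_j\ge n\}$, and use the i.i.d.\ assumption to reduce each term to a function of the degree. Both then prove discrete convexity of that function by exchanging $D_{k+1}$ and $D_{k+2}$ and using convexity of $(1-x)^+$, and finish with Jensen. Your extra monotonicity step for multi-edges ($|N(i)|\le d_i$) is a small refinement that the paper glosses over by taking $d_s=|N(s)|$.
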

\begin{proof}
For a demand realization $D$, consider a supply node $s$. The total 
amount served by \(s\) cannot exceed the aggregate demand of its neighbors,
and hence its unused capacity is at least
\[ 
\left(1-\sum_{i\in N(s)} D_i\right)^+.
\]
Summing over all supply nodes gives
\begin{align*}
z(G,D) & \le n - \sum_{s\in [n]}\left(1-\sum_{i\in N(s)} D_i\right)^+.
\end{align*} 
Recall that  $z(K_{n,n}, D) = \min\{ n, \sum_{j\in[n]} D_j\}.$
Therefore, when $\sum_{i=1}^n D_i \ge n$, we have $L(G,D) \ge \sum_{s\in[n]} \left(1-\sum_{i\in N(s)} D_i\right)^+$. When $\sum_{i=1}^n D_i < n$, we use the trivial lower bound \(L(G,D)\ge0\). Let $d_s = |N(s)|$ denote the degree of supply node $s$, and define, for 
\(k\in[n]\),
\[
f(k)
\triangleq
\E\left[
\left(1-\sum_{i=1}^{k}D_i\right)^+
\mathbf{1}_{\left\{\sum_{i=1}^{n}D_i\ge n\right\}}
\right].
\] By the i.i.d.\ assumption, we have
\begin{align*}
L(G) & = \E_D[L(G,D)] \ge \sum_{s\in [n]}\E\left[\left(1-\sum_{i\in N(s)} D_i\right)^+ \mathbf{1}_{\left\{\sum_{i=1}^n D_i \ge n\right\}}\right]  \\
& = \sum_{s\in [n]}\E\left[\left(1-\sum_{i=1}^{d_s} D_i\right)^+ \mathbf{1}_{\left\{\sum_{i=1}^n D_i \ge n\right\}} \right]  = \sum_{s\in [n]} f(d_s).
\end{align*} 
It remains to note that \(f(k)\) is convex in \(k\). Let
$\cA=\{\sum_{i=1}^{n}D_i\ge n\}$ and $h(x)=(1-x)^+$.
Conditional on \(\cA\), the random variables \(D_1,\ldots,D_n\) remain
exchangeable. Therefore, for \(0\le k\le n-2\),
\begin{align*}
&f(k+2)-2f(k+1)+f(k)\\
&\qquad =
\E\Big[
h(S_k+D_{k+1}+D_{k+2})
-h(S_k+D_{k+1})
-h(S_k+D_{k+2})
+h(S_k)
\,\Big|\cA
\Big] \prob{\cA},
\end{align*}
where \(S_k=\sum_{i=1}^kD_i\). Since \(h\) is convex and
\(D_{k+1},D_{k+2}\ge0\),
\[
h(x+a+b)-h(x+b)
\ge
h(x+a)-h(x),
\qquad a,b\ge0,
\]
so the quantity above is nonnegative. Hence \(f(k)\) is convex. Finally, since \(\sum_s d_s/n=d\), Jensen's inequality gives
\[
L(G)
\ge
\sum_{s\in[n]}f(d_s)
\ge
n f(d),
\]
which proves the claim.
\end{proof}

\begin{figure}[htp]
\centering
\includegraphics[width=.45\textwidth]{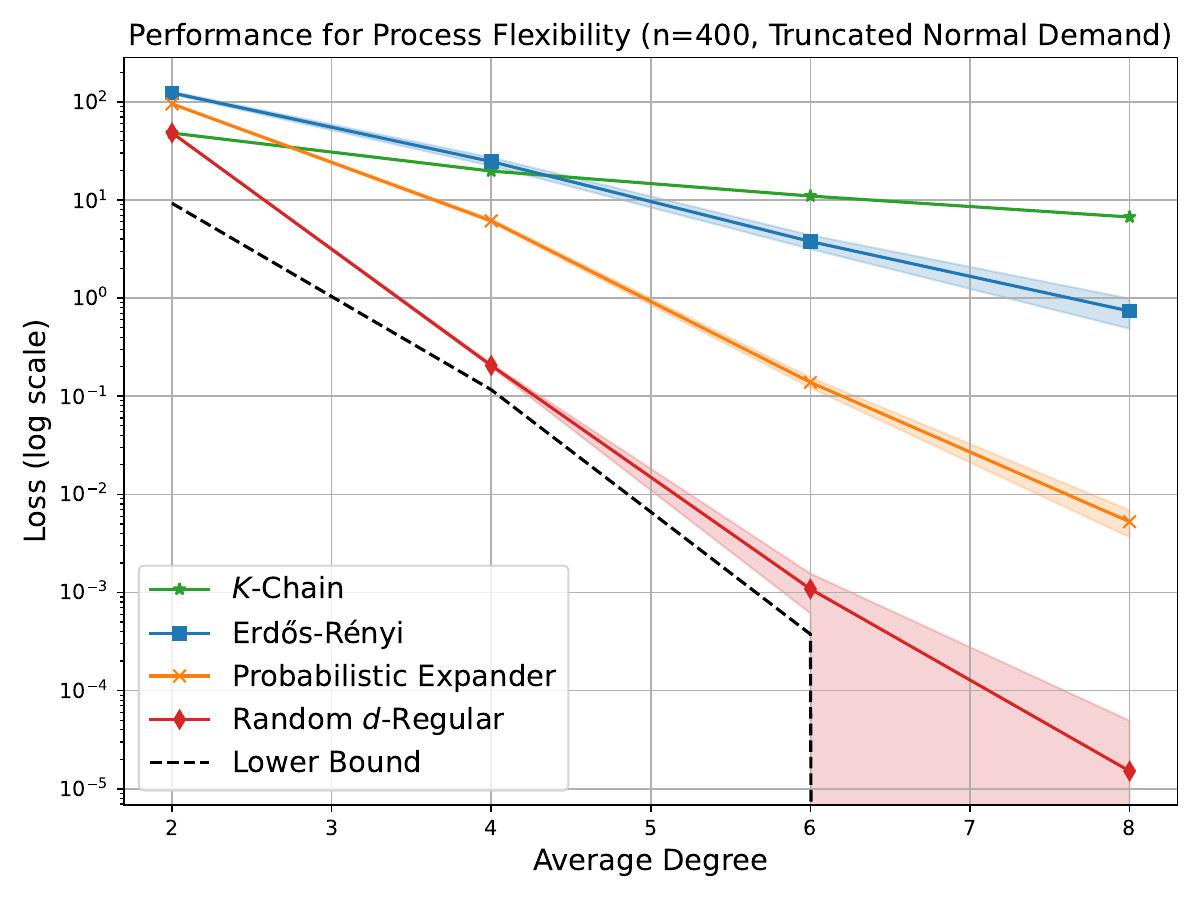}
\includegraphics[width=.45\textwidth]{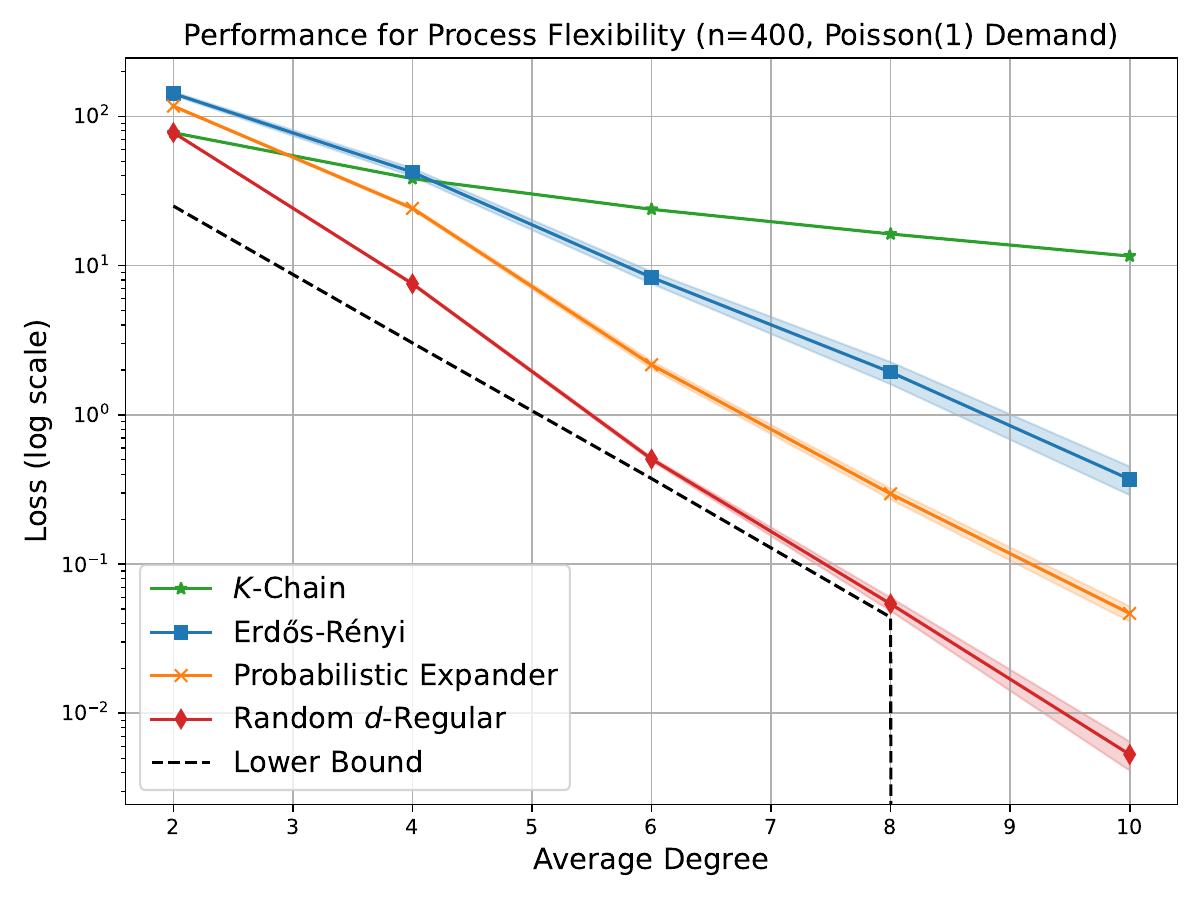}
\caption{Performance for process flexibility under general demand distributions.} %\nbr{Can we change the legend starting from $K$-Chain?}
\label{fig:flex-design-gen}
\end{figure}

\begin{table*}[htbp]
\centering
\small
\setlength{\tabcolsep}{2pt}
\begin{tabular}{@{}l@{\hspace{9pt}}*{3}{>{\centering\arraybackslash}p{0.10\textwidth}}@{\hspace{48pt}}*{3}{>{\centering\arraybackslash}p{0.10\textwidth}}@{}}
\toprule
& \multicolumn{3}{c@{\hspace{48pt}}}{(a) Truncated Gaussian demand}
& \multicolumn{3}{c}{(b) Poisson demand} \\
\cmidrule(l{4pt}r{52pt}){2-4}\cmidrule(l{4pt}r{4pt}){5-7}
Design
& $d=4$
& $d=6$
& $d=8$
& $d=4$
& $d=6$
& $d=8$ \\
\midrule
$K$-chain
& $4.919\%$
& $2.739\%$
& $1.675\%$
& $9.568\%$
& $5.961\%$
& $4.074\%$ \\
Erdős--Rényi
& $6.155\%$
& $0.943\%$
& $0.184\%$
& $10.576\%$
& $2.085\%$
& $0.483\%$ \\
Probabilistic expander
& $1.532\%$
& $0.035\%$
& $0.001\%$
& $6.054\%$
& $0.541\%$
& $0.074\%$ \\
Random regular
& $0.051\%$
& \mbox{$<0.001\%$}
& \mbox{$<0.001\%$}
& $1.889\%$
& $0.126\%$
& $0.014\%$ \\
Lower bound
& $0.029\%$
& \mbox{$<0.001\%$}
& \mbox{$<0.001\%$}
& $0.758\%$
& $0.094\%$
& $0.011\%$ \\
\bottomrule
\end{tabular}

\caption{Fractional losses under truncated Gaussian and Poisson demand ($n=400$).}
\label{tab:general-demand-fractional-loss}
\end{table*}

\section{Proofs for Middle-Mile Transportation Design}\label{sect:app-transportation}
This section proves Theorem \ref{thm:lower_bound} and Lemmas \ref{lmm:match-prop}, \ref{lmm:lwc_rrg}, and \ref{lmm:fdddd}, and completes the proofs of the main results for the transportation design in Section \ref{sect:trans}.
\subsection{Lower Bound in Theorem \ref{thm:lower_bound}}
\begin{proof}[Proof of Theorem \ref{thm:lower_bound}]
Recall that $\mu\left(K_n[\supp(D)]\right)=\lfloor |\supp(D)|/2 \rfloor$. Let $I \subset \supp(D)$ denote the isolated vertices in the subgraph $G[\supp(D)].$ 
Since isolated vertices cannot be matched, it follows that $\mu\left(G[\supp(D)]\right)\le \lfloor |\supp(D) \setminus I|/2 \rfloor$.
Therefore, 
\begin{align}
L(G, D) \ge 2 \left( \lfloor |\supp(D)|/2 \rfloor - \lfloor |\supp(D) \setminus I|/2 \rfloor \right) \label{eq:L_G_D_lower}
\end{align}
It follows that 
$L(G,D) \ge |I|- \indc{|\supp(D)| \text{ is odd}}$
and hence
$
L(G) \ge \expect{|I|} - \prob{|\supp(D)| \text{ is odd}}.
$
Let $I_i$ denote the indicator that node $i$ is isolated. Then $\prob{I_i=1}=pq^{d_i}$ and thus
$
\expect{|I|} = \sum_{i=1}^n p q^{d_i} \ge npq^{d},
$
where the inequality holds by applying Jensen's inequality to the convex function $q^{x}.$
Moreover,
$$
\prob{|\supp(D)| \text{ is odd}}
= \sum_{k \text{ is odd}}  \binom{n}{k}p^k q^{n-k}
= \frac{1}{2} \left( (p+q)^n - (q-p)^n \right)
= \frac{1}{2} \left( 1- (1-2p)^n \right).
$$
In conclusion, we get that 
$
L(G) \ge npq^d - ( 1- (1-2p)^n)/2.
$

Alternatively, we can get from~\prettyref{eq:L_G_D_lower} that $L(G, D) 
\ge 
|I| \indc{|\supp(D)| \text{ is even} }
$
and hence
\begin{align*}
L(G) & \ge \expect{ |I| \indc{|\supp(D)| \text{ is even}}} \\
& =\sum_{k \text{ is even}}
\prob{|\supp(D)|=k} \expect{|I| \mid |\supp(D)|=k} \\
&= \sum_{k \text{ is even}}
\prob{|\supp(D)|=k} \sum_{i=1}^n 
\expect{I_i \mid |\supp(D)|=k}.
\end{align*}
Note that $\prob{|\supp(D)|=k}=\binom{n}{k}p^k q^{n-k}$ and
$$
\expect{I_i \mid |\supp(D)|=k }
=\prob{I_i=1 \mid |\supp(D)|=k}
= \frac{\binom{n-d_i-1}{k-1}}{ \binom{n}{k}}
$$
It follows that 
\begin{align*}
L(G)
&\ge \sum_{k \text{ is even}} 
\sum_{i=1}^n
\binom{n-d_i-1}{k-1} p^k q^{n-k} \\
& \ge n \sum_{k \text{ is even}} \binom{n-d-1}{k-1} p^k q^{n-k} \\
& = npq^{d}
\sum_{k \text{ is even}} \binom{n-d-1}{k-1} p^{k-1} q^{n-d-k}  \\
&=
\frac{1}{2} np q^d  \left( 1 - (1-2p)^{n-d-1}\right),
\end{align*}
where the inequality holds by applying Jensen's inequality to the convex function $\binom{x}{k} \indc{x\ge k}$.
%}
% \xn{Another tighter lower bound compared to $npq^d-1$: In fact,
% \$ 
% L(G,D) \ge |I| \indc{|I| \text{ is even}} + (|I|+1) \indc{|I| \text{ is odd}, |\supp(D)| \text{ is even}} + (|I|-1) \indc{|I| \text{ is odd}, |\supp(D)| \text{ is odd}}.
% \$
% Then we have
% \$
% \E[L(G,D)] &\ge \E[|I|] + \prob{|I| \text{ is odd}, |\supp(D)| \text{ is even}} - \prob{|I| \text{ is odd}, |\supp(D)| \text{ is odd}} \\
% & \ge npq^d - \prob{|\supp(D)| \text{ is odd}}.
% \$
% Here $|\supp(D)| \sim \text{Bin}(n, p)$.
%}
\end{proof}

\subsection{Proofs for Constant Loss}\label{sec:app-trans-cons}
In this section, we prove Lemma \ref{lmm:match-prop} and establish \prettyref{thm:match-1-n}. 

\subsubsection{Proof of \prettyref{lmm:match-prop}}
We begin by stating the key properties in the following four lemmas, and then verify that the minimum-degree requirement $d$ in \eqref{eq:cond_d} applies to all of these lemmas, which in turn yields \prettyref{lmm:match-prop}.

The following standard result bounds the maximum size of an independent set in random $d$-regular graphs. The precise asymptotic value has been determined in~\cite{ding2016maximum}.
\begin{lemma}[No Large Independent Set]
\label{lmm:ind_set}
There exist constants $C, c>0$ such that for all $d \ge C,$ the random $d$-regular graph $G \sim \calG(n,d)$ satisfies
$$
\probs{\mathrm{Ind}(G) \ge \frac{2n\log d}{d} }{G} \le
\exp\left( - c \frac{n\log d}{d} \right).
$$
\end{lemma}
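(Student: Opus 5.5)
We would prove Lemma~\ref{lmm:ind_set} by a first-moment argument in the configuration model, followed by a union bound over candidate independent sets. Set $k \triangleq \lceil 2n\log d/d\rceil$ and $\alpha \triangleq k/n$. It suffices to bound the probability that some fixed $k$-set $A\subset[n]$ spans no edge, and then multiply by $\binom{n}{k}\le \exp(nh(\alpha))$. Since any independent set of size at least $k$ contains an independent set of size exactly $k$, this suffices.

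\textbf{Step 1: one fixed set.} Fix $A$ with $|A|=k$. The set $A$ carries $kd$ half-edges, and $A$ is independent exactly when every one of these half-edges is paired with one of the $(n-k)d$ half-edges outside $A$. We count perfect matchings of the $nd$ half-edges sequentially, always pairing the next unpaired half-edge of $A$. At each step the chance that its partner lies outside $A$ is at most $(n-k)d/(nd - 2j -1)$, where $j$ is the step number. We therefore obtain roughly
\[
\pr\{A \text{ independent}\}\le \frac{\binom{(n-k)d}{kd}(kd)!\,(nd-2kd-1)!!}{(nd-1)!!}\approx \exp\Big(-nd\big[\tfrac12\alpha\log\tfrac{1}{1-\alpha}\cdot 2 - O(\alpha^2)\big]\Big)
\]
which is of the form $(1-\alpha)^{(1+o(1))\alpha nd/2}$ up to second-order corrections. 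Concretely, we expect $\pr\{A \text{ independent}\}\le \exp(-\alpha^2 nd/2\,(1-O(\alpha)))$. We would verify this carefully via Stirling, or via the cleaner bound $\prod_{j<kd}\frac{(n-k)d - j}{nd-2j-1}$. That product is at most $\big(\frac{(n-k)d}{nd-kd}\big)^{kd/2}$-type expressions after pairing terms; the intended exponent is $\frac{d}{2}n\alpha^2$ to leading order.

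\textbf{Step 2: union bound.} The union bound gives the exponent $n[\alpha\log(e/\alpha) - d\alpha^2(1-O(\alpha))/2]$. With $\alpha = 2\log d/d$, we have $\alpha\log(e/\alpha)=\frac{2\log d}{d}(\log d - \log(2\log d)+1)$. We also have $d\alpha^2/2 = \frac{2\log^2 d}{d}$. The difference is therefore $-\frac{2\log d}{d}\big(\log\log d+\log 2 -1 - O(\log^2 d/d)\big)\le -c\,\frac{\log d}{d}\cdot n$ once $d\ge C$. This yields the claimed bound $\exp(-c\,n\log d/d)$; in fact it yields the stronger $\exp(-c\,n\log d\log\log d/d)$.

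\textbf{Main obstacle.} The delicate step is Step 1: extracting the sharp exponent $\alpha^2 nd/2$ rather than the naive $\alpha^2 nd$ or $\alpha^2nd/4$. The margin in Step 2 is only a $\log\log d$ factor, so losing a constant factor in the exponent would break the argument. The factor $1/2$ is essential and comes from half-edges within $A$ being paired among themselves. We would handle this by writing the probability exactly as a ratio of matching counts. We would then bound $\log$ of each factor by $\log(1-\frac{kd-j}{nd-2j})$, and sum. The sum is $\sum_{j<kd}\frac{kd-j}{nd}(1+O(\alpha))=\frac{(kd)^2}{2nd}(1+O(\alpha))$, which gives exactly $\alpha^2nd/2$. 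Multi-edges and self-loops are harmless, since a self-loop at a vertex of $A$ only makes $A$ non-independent, and our counting already treats such a pairing as internal.
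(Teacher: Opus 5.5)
Your proposal follows the paper's proof essentially step for step. It uses a first-moment bound for a fixed set of size $k=2n\log d/d$, the exact sequential-pairing product $\prod_{j<kd}\frac{(n-k)d-j}{nd-2j-1}\le\exp\bigl(-k(kd-1)/(2n)\bigr)$, and the union bound $\binom{n}{k}\le(en/k)^k$. This leaves the bracket $1-\log 2-\log\log d+o(1)<-c$ for $d\ge C$. There is, however, one genuine but easily repaired gap, concerning self-loops.

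In the paper, an independent set is a set with no edge between two \emph{distinct} vertices; the proof's $e(T)$ explicitly does not count self-loops. This is the notion used downstream. In the proof of Proposition~\ref{prop:perfect matching}, the vertex chosen from a component of $G^*-S$ may carry a loop in the configuration-model graph (e.g.\ a single-vertex component), and the chosen set must still count toward $\mathrm{Ind}(G)$.

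Under that definition, your event ``all $kd$ half-edges of $A$ are paired outside $A$'' is a strict sub-event of ``$A$ is independent''. An upper bound on its probability is therefore not an upper bound on $\mathbb{P}\{A\text{ independent}\}$. Declaring that a loop destroys independence simply changes the statement.

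The paper handles this by summing over the number $\ell$ of loops at vertices of $T$. Choosing the loops costs at most $\binom{td}{\ell}d^\ell$, and the bookkeeping multiplies your product by at most $\sum_\ell\binom{td}{\ell}(n-2t)^{-\ell}\le\exp\bigl(td/(n-2t)\bigr)$. This adds only $d/(n-2t)$ to the bracket, which is negligible for $d=o(n)$. With this correction your argument goes through.

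Separately, tidy Step~1. The per-step bound $(n-k)d/(nd-2j-1)$ is useless on its own. Replacing $(n-k)d-j$ by $(n-k)d$ gives back almost exactly the factor $\exp(\alpha^2nd/2)$ you are trying to gain. Also, $\bigl(\frac{(n-k)d}{nd-kd}\bigr)^{kd/2}=1$.

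The clean route, which the paper uses and your last paragraph is reaching for, is
\[
\frac{(n-k)d-j}{nd-2j-1}=1-\frac{kd-j-1}{nd-2j-1}\le 1-\frac{kd-j-1}{nd}.
\]
Multiplying gives exactly $\exp(-k(kd-1)/(2n))$, with no $(1+O(\alpha))$ slack to track.
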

\begin{proof}
Let $e(T)$ denote the number of edges connecting two distinct vertices in $T$ (counting edge multiplicities but not self-loops). Recall that $\mathrm{Ind}(G)$ is the maximum size of an independent set in $G$. Thus, it suffices to prove that
\$
\prob{\exists T \subset [n] \text{ with } |T| \ge  \frac{2n \log d }{d} : e(T)=0 } \le
\exp\left( - c \frac{n\log d}{d} \right).
\$
Fix a set $T \subset [n]$ with $|T|=t$. 
We first bound the probability that $T$ is an independent set, \ie, $e(T)=0$. 
There are $td$ half-edges incident to $T$. Suppose $2\ell$ of these half-edges are matched among themselves to form $\ell$ self-loops. Then $e(T)=0$ if and only if all remaining $td-2\ell$ half-edges are paired with half-edges incident to $[n]\setminus T.$
From this observation, we obtain the following bound: 
$$
\prob{e(T)=0} \le
\sum_{\ell} \binom{td}{\ell} d^\ell 
(nd-td)_{td-2\ell} \cdot \frac{1}{[nd]_{td-\ell}} ,
$$
where $(x)_n=x(x-1)\ldots (x-n+1)$ denotes the falling factorial and $[x]_n=(x-1)(x-3)\ldots (x-2n+1)$ denotes the falling double factorial. 
%Note that the denominator $(nd-1)!!$ counts all possible pairings of $nd$ half-edges without restriction. 
In the above expression, we sum over $\ell$, the possible number of self-loops incident to vertices in $T.$ For each value of $\ell$, we first select 
$\ell$ half-edges out of $td$ half-edges incident to $T$, to serve as one endpoint of each self-loop; there are $\binom{td}{\ell}$ choices.
Each chosen half-edge can be paired with at most  $d$ half-edges from the same vertex to form a loop, giving at most $d^\ell$ possibilities in total.
Then we pair the remaining $td-2\ell$
half-edges with half-edges incident to $[n]\setminus T$
and there are 
$(nd-td)_{td-2\ell}$ such ways. So far we have formed $\ell+(td-2\ell)=td-\ell$ edges.
The denominator $[nd]_{td-\ell}$ counts all the possible ways to form $td-\ell$ edges without restriction.

We further bound the above probability as follows. Note that 
\begin{align*}
\frac{(nd-td)_{td-2\ell}} {[nd]_{td-\ell}}  = \frac{(nd-td)_{td}} {[nd]_{td} \cdot [nd-2td+2\ell+1]_\ell }.
\end{align*} 
Now,
 % Therefore,
% \begin{align*}
% \prob{e(T)=0} & \le
% \frac{(nd-td)_{td}} {[nd]_{td} }
% \sum_{\ell} \binom{td}{\ell}  \frac{d^\ell}{[nd-2td+2\ell+1]_\ell} \\
% & \le 
% \frac{(nd-td)_{td}} {[nd]_{td} }
% \sum_{\ell} \binom{td}{\ell}  \left( \frac{1}{n-2t}\right)^{\ell} \\
% & \le \frac{(nd-td)_{td}} {[nd]_{td} } \exp\left( 
% \frac{td}{n-2t}\right)
% \end{align*}
$$
\frac{(nd-td)_{td}} {[nd]_{td}}
=\prod_{i=0}^{td-1}
\frac{nd-td-i}{nd-2i-1}
\le \prod_{i=0}^{td-1}
\frac{nd-td+i+1}{nd}
\le \exp \left( - \sum_{i=0}^{td-1} \frac{td-i-1}{nd} \right)
=\exp \left( - \frac{t(td-1)}{2n} \right).
$$
Moreover,
$$
 \sum_{\ell} \binom{td}{\ell}  \frac{d^\ell}{[nd-2td+2\ell+1]_\ell}  \le 
\sum_{\ell} \binom{td}{\ell}  \left( \frac{1}{n-2t}\right)^{\ell} =  \left(1+ \frac{1}{n-2t}\right)^{td} \le \exp\left( \frac{td}{n-2t}\right).
$$
Therefore, we obtain that 
$$
\prob{e(T)=0}
\le 
\exp \left( - \frac{t(td-1)}{2n} 
+ \frac{td}{n-2t} \right).
$$
Finally, since there are $\binom{n}{t}$ different choices of $T \subset [n]$ with $|T|=t$, applying union bound yields that %\nbr{The upper bound here is for $|T|=t$, why we can get $|T|\ge t$ below? Perhaps we should sum over $t$ and then illustrate this using monotonic or exponential decay? } \nb{not needed. Because the existence of $|T| \ge t$ implies the existence of $|T|=t$. Add an extra step to clarify.} \xn{The summation over $t$ is needed. Fix later.}
\begin{align*} 
\prob{\exists T \subset [n] \text{ with } |T| \ge  t : e(T)=0 } 
& = \prob{\exists T \subset [n] \text{ with } |T| = t : e(T)=0 }  \\
& \le 
\binom{n}{t} \exp \left( - \frac{t(td-1)}{2n} 
+ \frac{td}{n-2t} \right) \\
& \le 
\exp\left( t \left[ 
\log \frac{en}{t} - \frac{td-1}{2n} + \frac{d}{n-2t}\right]\right). 
\end{align*}
Choosing $t=2n\log d/d$, we have
\begin{align*}
\log \frac{en}{t} - \frac{td-1}{2n} + \frac{d}{n-2t}
= \log \frac{ed }{2\log d}
- \log d + \frac{1}{2n}
+ \frac{d}{n(1-4 \log d/d)} \le -c,
\end{align*}
for some constant $c>0$, where the last inequality holds under the assumption that $d\ge C.$
\end{proof}

The following result shows that, with high probability,  random $d$-regular graphs do not contain two sufficiently large subsets of vertices with no crossing edges. 

\begin{lemma}[No Big Cut]\label{lmm:no_big_cut_1}
Consider the random $d$-regular graph $G \sim \calG(n,d).$ There exist constants $C, c>0$ such that for all $d \ge C,$
$$\probs{ \exists \text{ disjoint } S, T \subset [n] \text{ with } |S|, |T| \ge  \frac{4 n \log d}{d}: e(S, T)=0}{G} \le \exp\left( - c \frac{n\log d}{d} \right).
$$
\end{lemma}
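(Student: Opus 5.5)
We will follow the template of the proof of \prettyref{lmm:ind_set}: bound the probability that a fixed pair of disjoint sets has no crossing edges, then take a union bound. It suffices to handle $|S|=|T|=t\triangleq 4n\log d/d$. Indeed, if disjoint $S,T$ with $|S|,|T|\ge t$ satisfy $e(S,T)=0$, then any subsets $S'\subset S$, $T'\subset T$ of size exactly $t$ also satisfy $e(S',T')=0$. So we fix disjoint $S,T$ with $|S|=|T|=t$ and aim for
\[
\prob{e(S,T)=0}\le \exp\Big(-(1-o_d(1))\frac{t^2 d}{n}\Big).
\]
The union bound then gives
\[
\binom{n}{t}^2\exp\Big(-(1-o_d(1))\frac{t^2d}{n}\Big)\le \exp\Big(t\Big[2\log\frac{en}{t}-(1-o_d(1))\frac{td}{n}\Big]\Big).
\]
With $t=4n\log d/d$ the bracket equals $2\log\big(ed/(4\log d)\big)-4(1-o_d(1))\log d\le -c$ once $d\ge C$. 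This yields the claimed bound $\exp(-c\,n\log d/d)$.

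For the fixed-pair estimate, expose the pairing of the $td$ half-edges attached to $S$ sequentially. We pair them one at a time, each time choosing an unpaired half-edge of $S$ and matching it uniformly to one of the remaining unpaired half-edges. At every step at most $2td$ half-edges have been used. At least $td-2td\cdot 0$ of $T$'s half-edges remain untouched, since $T$'s half-edges are consumed only by a crossing edge, which is forbidden on the event. Hence, conditional on no crossing edge so far, the next step creates one with probability at least $td/(nd)=t/n$.

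Only about $td/2$ steps occur in the worst case, because pairs inside $S$ consume two $S$-half-edges at once. This gives $\prob{e(S,T)=0}\le (1-t/n)^{td/2}\le \exp(-t^2d/(2n))$. That loses a factor of $2$ in the exponent, and we will have to recover it.

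This constant is the main obstacle. To recover it, we split according to the number $\ell$ of edges internal to $S$, exactly as the self-loop count was handled in \prettyref{lmm:ind_set}. The event requires all $td-2\ell$ non-internal half-edges of $S$ to land in $[n]\setminus(S\cup T)$, and the number of internal pairings is controlled. Writing the count as
\[
\sum_\ell \binom{td}{2\ell}(2\ell-1)!!\,(nd-2td)_{td-2\ell}\big/[nd]_{td-\ell},
\]
we compare it with the falling-factorial bound used earlier. The factor $(1-2t/n)^{td-2\ell}$ beats $\exp(-t^2d/n)$ whenever $\ell\ll td$. Large $\ell$ are penalized by the internal-edge terms $\binom{td}{2\ell}(2\ell-1)!!/(nd)^{\ell}\approx (td^2\cdot t/(nd))^{\ell}/\ell!$, which are tiny unless $\ell\approx t^2d/(2n)$. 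Since $t/n=4\log d/d\to 0$, we have $t^2d/n=o(td)$, so the typical number of internal edges is negligible and the exponent is $(1-o_d(1))\cdot 2t\cdot td/(2n)\cdot\ldots$.

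Carefully, the expected number of crossing edges is $t^2d/n$ and the no-crossing probability is about $\exp(-t^2d/n)$, which is what we need. If this bookkeeping proves messy, the fallback is to use a larger $t$ in the sequential argument. That already suffices provided the union-bound slack allows it, which it does if we take the threshold $4n\log d/d$ and accept a slightly worse constant check: $2\log(ed/4\log d)-2\log d<0$ fails only by $\log\log d$ terms. So the refined count is genuinely needed.
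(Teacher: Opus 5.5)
Your sequential-exposure estimate $\prob{e(S,T)=0}\le (1-t/n)^{td/2}\le \exp(-t^2d/(2n))$, followed by the union bound over pairs of $t$-sets, is exactly the paper's proof, and it already proves the lemma. The factor of $2$ lost to pairings inside $S$ is harmless. Your final check contains an arithmetic slip. With $t=4n\log d/d$ we have $td/(2n)=2\log d$ and $en/t=ed/(4\log d)$, so
\[
2\log\frac{en}{t}-\frac{td}{2n}=2\log\frac{ed}{4\log d}-2\log d=-2\log\frac{4\log d}{e}.
\]
This is negative for all $d\ge 2$. For $d\ge C\ge 2$ it is at most $-c_0$ with $c_0=2\log(4\log C/e)>0$. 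Hence $\binom{n}{t}^2e^{-t^2d/(2n)}\le e^{-c_0t}=\exp(-4c_0\,n\log d/d)$. The $\log\log d$ term works in your favor, not against you, so the claim that ``the refined count is genuinely needed'' is false. That is the real problem with the proposal as written: you discard a complete argument and rest the conclusion on an estimate you have not established.

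The refined sketch would not stand on its own either. The internal-pairing weights satisfy $\binom{td}{2\ell}(2\ell-1)!!/(nd)^{\ell}\approx\lambda^{\ell}/\ell!$ with $\lambda\approx t^2d/(2n)$. They sum to about $e^{\lambda}$, which is of the same exponential order as your target $e^{-t^2d/n}$. So they are not negligible, even though the typical $\ell$ is $o(td)$. In addition, the shrinking pool makes $(nd-2td)_{td-2\ell}\,(nd)^{\ell}/[nd]_{td-\ell}$ larger than $(1-2t/n)^{td-2\ell}$ by roughly another factor $e^{t^2d/(2n)}$. Only after tracking all three first-order contributions ($-2+\tfrac12+\tfrac12$ in units of $t^2d/n$) do you obtain $\exp(-(1+o_d(1))t^2d/n)$. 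Your sketch stops (``$\ldots$'') before doing this, and the heuristic that $(1-2t/n)^{td}$ ``beats'' the target would, taken literally, give a false bound.

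If the sharper constant were ever needed, split on the number $\ell$ of pairs internal to $S$ (self-loops included) being at most $\epsilon td$. On that event the exposure runs for at least $(1-\epsilon)td$ steps, giving $(1-t/n)^{(1-\epsilon)td}$. A first-moment count over internal pairings gives $\prob{\ell>\epsilon td}\le (et/(\epsilon n))^{\epsilon td}$, which is negligible. For this lemma, simply delete the refined part and keep your first argument.
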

\begin{proof}
Fix two disjoint sets $S, T \subset [n]$ with $|S|=s$ and $|T|=t$. 
We bound the probability that there is no crossing edge between $S$ and $T$ as follows:
\$
\prob{e(S,T)=0}
\le (1-t/n)^{sd/2} \le \exp(-std/(2n)), \$
where the second inequality follows from $1-x \le \exp(-x).$ 
To see why the first inequality is true, we sequentially reveal the pairings for the half-edges incident to the vertices in $S$. 
Conditioned on the pairings revealed so far and no crossing edge between $S$ and $T$
has formed, letting $m$ denote the total number of half-edges remained, the  probability that the next half-edge is not paired with a half-edge incident to vertices in $T$ is $(m-td-1)/(m-1)$. Now, crucially, no matter the value of $m$ is, 
$(m-td-1)/(m-1) \le (nd-td)/nd = 1-t/n$. Also, since the half-edges incident to the vertices in $S$ could be paired among themselves, to reveal all their pairings, we need at least $(sd/2)$ steps. 

Since there are at most $\binom{n}{s}$ choices of $S \subset [n]$ with $|S|=s$ and $\binom{n}{t}$ choices of $T \subset [n]$ with $|T|=t$, fixing $s,t\ge 4n \log d/d$ and applying the union bound yields that %\xn{Similarly, we need an additional sum over $s,t$.} \nb{I do not understand why we need this sum. Let's discuss.}
\begin{align*}
&\prob{ \exists \text{ disjoint } S, T \subset [n] \text{ with } |S|, |T| \ge t: e(S, T)=0}  \\
& = \prob{ \exists \text{ disjoint } S, T \subset [n] \text{ with } |S|, |T| = t: e(S, T)=0}  \\
& \le 
\binom{n}{t} \binom{n}{t}
\exp \left( -\frac{t^2 d}{2n} \right) \\
&\le \exp \left( 2 t \log \frac{en}{t} - \frac{t^2d}{2n} \right).
\end{align*}
The proof is complete by choosing $t=4n \log d/d$ and noting that $\log (en/t) - (td)/(4n)
=\log (e/4) - \log \log d.$
\end{proof}

The following lemma shows that there is little congestion, in the sense that with high probability, there is no $S, T \subset V$ with 
$|S|+1 \le |T| \le \delta n$, for which there are at least $|S|+1$ outgoing edges from $T$ to $V\setminus T$ and all are connected to $S.$
In particular, for the special case of $|S|=1$ and $|T|=2$, there are no two degree-$1$ vertices that are connected to the same vertex in $G[V].$ Note that this is clearly necessary for the existence of a perfect matching for $G[V]$, as otherwise, the two degree-$1$ vertices cannot be matched simultaneously. 
\begin{lemma}[Little Congestion]\label{lmm:congestion}
Fix a set $V \subset [n]$ with $|V|=n-m$ and $m \le \alpha n$ for some constant $\alpha.$ Suppose that for some constant $\delta \in (0,1)$,
$$
d \ge \frac{(1+\delta)\log n}{\log(1/\alpha)}. 
$$
Then there exists a constant $\epsilon \equiv\epsilon(\delta,\alpha)>0$ such that for all sufficiently large $n,$ 
$$\prob{ \exists  \text{ disjoint } S, T\subset  V \text{ with }  |S|+1 \le |T| \le \epsilon n: e(S,T)= e(V\setminus T, T)  \geq |S|+1} \le n^{-1-\delta/4}.$$
%$$\prob{ \exists  \text{ disjoint } S, T\subset  V \text{ such that }  |S|+1 \le |T| \le \epsilon n, N(T)\setminus T \subset S, e(S,T) \geq |S|+1} \le n^{-1-\delta/4}.$$
\end{lemma}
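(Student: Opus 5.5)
The plan is a first-moment (union-bound) computation in the configuration model. We sum over $s=|S|$ and $t=|T|$ with $s+1\le t\le \epsilon n$, and for each fixed pair of disjoint $S,T\subset V$ we bound the probability of the event $\mathcal E_{S,T}$ that every half-edge of $T$ is paired either inside $T$, into $S$, or into $V^c$, with at least $s+1$ of them going to $S$.

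\emph{What we pay.} The union over choices costs at most $\binom{n}{t}\binom{n}{s}\le (en/t)^t(en/s)^s$.

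\emph{What we gain.} There are three sources of gain.
\begin{enumerate}
\item Every external half-edge of $T$ not going to $S$ must land in $V^c$. Since $V^c$ holds $md\le \alpha nd$ half-edges, each such pairing succeeds with probability at most $\alpha(1+O(\epsilon))$. The degree condition gives $\alpha^{d}\le n^{-1-\delta}$, so this factor is roughly $n^{-(1+\delta)t}$, provided most of the $td$ half-edges of $T$ are external.
\item The edges into $S$ must cover at least $s+1$ pairings. We choose which $s+1$ half-edges of $T$ go to $S$ and which half-edges of $S$ receive them. This contributes at most $\binom{td}{s+1}(sd)^{s+1}/(nd(1-O(\epsilon)))^{s+1}$, which is $\mathrm{poly}(d)^{s+1}\,(s/n)^{s+1}$ up to constants.
\item Combining with the union bound, $\binom{n}{s}(s/n)^{s+1}$ is about $e^{s}/n$, so the $S$-side costs essentially nothing beyond $(Cd^2)^{s+1}/n$.
\end{enumerate}
Together with $\binom{n}{t}\alpha^{td}\lesssim (en/t)^t n^{-(1+\delta)t}\le n^{-\delta t}$, this should give a bound of order $n^{-1}\,(C d^2)^{s+1} n^{-\delta t}$ for each $(s,t)$. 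Since $t\ge s+1\ge 2$ and $d=O(\log n)$, the factors $(Cd^2)^{s+1}\le (Cd^2)^{t}$ are absorbed by $n^{-\delta t/2}$, and summing over $s<t\le\epsilon n$ gives $O(n^{-1-\delta})\le n^{-1-\delta/4}$. I would formalize these probabilities by revealing the pairings of $T$'s half-edges one at a time, as in the proof of \prettyref{lmm:no_big_cut_1}. At each step the remaining pool has at least $nd-2td-2sd$ half-edges, which gives the $(1-O(\epsilon))$ denominators.

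\emph{Main obstacle: internal edges of $T$.} If $T$ spans $k$ internal edges, only $td-2k$ half-edges are external, and the $\alpha$-gain drops to $\alpha^{td-2k}$. I would bound the cost of $k$ internal edges by choosing them: at most $\binom{td}{k}\,(td/(nd(1-O(\epsilon))))^{k}\le (e\,t^2d/(kn))^{k}$ up to constants. Each internal edge trades a factor $\alpha^2$ for $t^2 d/(kn)$, and I would split into two regimes.
\begin{enumerate}
\item For small $k$ (say $k\le \eta t d$ with $\eta$ small), the loss is at most a factor $\alpha^{-2\eta td}\le n^{2\eta(1+\delta) t}$, which is absorbed into $n^{-\delta t}$ by taking $\eta\ll\delta$. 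The factor $(et^2d/(kn))^k$ is at most $1$ once $k\ge e t^2 d/n$, and it is bounded by $e^{O(t\cdot\epsilon d)}$-type terms otherwise.
\item For large $k\ge \eta td$, we drop the $\alpha$-gain entirely. The factor $(et/(\eta n))^{\eta td}$ alone beats $\binom{n}{t}^2$ whenever $t\le\epsilon n$ with $\epsilon=\epsilon(\delta,\alpha)$ small, because $d\to\infty$.
\end{enumerate}
Making these two regimes meet cleanly across the whole range $2\le t\le \epsilon n$ is the delicate part. This matters especially for mid-range $t$, where $t^2d/n$ is comparable to $t$; there I expect to need $\epsilon$ small relative to $\log(1/\alpha)$ and $\delta$. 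I would handle it by choosing $\eta=\delta/(8(1+\delta))$ and $\epsilon$ so that $e\epsilon/\eta\le\alpha^{2}$ as well as $\epsilon\le\eta$. The first condition makes each internal edge no more costly than two external ones. The second ensures $k\ge\eta td\ge\epsilon td\ge et^2d/n$ whenever $k$ is large, so the per-edge factor $(et^2d/(kn))$ is then below $e\epsilon/\eta\le\alpha^2$.
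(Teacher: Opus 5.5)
Your plan is sound and is essentially the paper's proof: a first-moment bound over disjoint $S,T\subset V$ in the configuration model, designating $s+1$ half-edges of $T$ that pair into $S$, accounting for pairings inside $T$, and forcing every other half-edge of $T$ into $S\cup([n]\setminus V)$, followed by a union bound over $s$ and $t$. The only real difference is in how internal edges are handled, and a few bookkeeping points need tightening.
\begin{itemize}
\item \emph{Internal edges.} The paper does not need your two-regime split. It bounds the contribution of $\ell$ internal edges by $\binom{td-k}{2\ell}(2t/n)^{\ell}$ and uses $\sum_\ell\binom{N}{2\ell}a^{2\ell}b^{N-2\ell}\le(a+b)^N$ with $a=\sqrt{2t/n}$. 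Internal edges therefore only worsen the base from $\alpha$ to $\beta=\alpha+2\epsilon+\sqrt{2\epsilon}$. Summing your own bound exactly does the same, since $\sum_k\binom{td}{k}(t/n)^k\alpha^{-2k}=(1+t/(n\alpha^2))^{td}$.
\item \emph{The factor from $\binom{td}{s+1}$.} Your $S$-side estimate drops a factor $(t/(s+1))^{s+1}$ hidden in $\binom{td}{s+1}$. It is harmless but must be kept, for instance by absorbing it via $t^{s+1}\le t^{t}$ into $\binom{n}{t}\le(en/t)^t$, as the paper does.
\item \emph{Size of $d$.} The lemma does not give $d=O(\log n)$, only a lower bound on $d$. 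So absorb $\mathrm{poly}(d)$ factors into the exponential-in-$d$ gain, using that $d\,\beta^{d}$ is decreasing in $d$ above the threshold, rather than into $n^{-\delta t}$.
\end{itemize}
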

%\nbr{Please carefully check: Here we can take $\epsilon$ to be an arbitrarily small constant, so that later we do not need to ensure $|T| \lesssim n \log^2(d)/d^2$ in the later proof. In other words, we do not need the no-big-cut property-case II.}
\begin{proof}
Fix two disjoint sets $S, T\subset  V$ with $|S| = s$ and $|T|=t$. We bound the probability that there are at least $k$ edges going out of $T$ to $V\setminus T$ and all are connected to vertices in $S,$ \ie, 
$e(S,T)=e(V\setminus T, T) \ge k $.
There are $td$ half-edges incident to $T$. Suppose $2\ell$ of these half-edges are matched among themselves to form $\ell$ edges within $T.$ Then $e(S,T)=e(V\setminus T, T) \ge k $ if and only if among the remaining $td-2\ell$ half-edges, $k$ of them are paired with half-edges incident to $S$ and $td-2\ell-k$ of them  are paired with half-edges incident to $S\cup ([n]\setminus V)$. From this observation, we obtain the following bound: for $k\le \min\{s,t\}\cdot d,$
\begin{align*}
&\prob{e(S,T)=e(V\setminus T, T) \ge k} \\
& \le \binom{td}{k}  (sd)_{k}
\cdot \sum_{\ell} \binom{td-k}{2\ell} (2\ell-1)!!
\cdot (sd-k+md)_{td-k-2\ell}  \cdot \frac{1}{[nd]_{td-\ell}} 
\end{align*}
 In the above expression, we first select $k$ half-edges from $td$ half-edges incident to $T$, and there are $\binom{td}{k}$ choices. These chosen $k$ half-edges are paired with half-edges incident to $S$, and there are $(sd)_{k}$ different ways.
 %with probability $\prod_{i=0}^{k-1} \frac{sd-i}{nd-2i-1}$. 
 Then we sum over $\ell,$ the possible number of edges within $T.$ For each value of $\ell,$ we select the $2\ell$ half-edges from the remaining $td-k$ ones,
 and there are $\binom{td-k}{2\ell}$ choices. These $2\ell$ chosen half-edges are paired among themselves to form $\ell$ edges within $T$, and there are $(2\ell-1)!!$ different pairings.
 %with probability $\prod_{i=0}^{\ell-1} 
%\frac{2\ell-2i-1}{nd-2k-2i-1}$. 
Finally, the remaining $td-k-2\ell$ half-edges are paired with half-edges incident to $S \cup ([n]\setminus V)$, and there are $(sd-k+md)_{td-k-2\ell}$ different pairings.
So far we have formed 
$k+\ell+(td-k-2\ell)=td-\ell$ edges. Thus the denominator $[nd]_{td-\ell}$ counts all the possible pairings to form $td-\ell$ edges without restriction.
%with probability $\prod_{j=0}^{td-2\ell-k-1} 
%\frac{sd-k+ md-j}{nd-2k-2\ell-2j-1}$.

% \begin{align*}
% &\prob{e(S,T)=e(V\setminus T, T) \ge k} \\
% & \le \sum_{\ell} \binom{td}{2\ell} 
% \prod_{i=0}^{\ell-1} 
% \frac{2\ell-2i-1}{nd-2i-1}
% \binom{td-2\ell}{k} 
% \prod_{i=0}^{k-1} \frac{sd-i}{nd-2\ell-2i-1}
% \prod_{j=0}^{td-2\ell-k-1} 
% \frac{sd-k+ md-j}{nd-2\ell-2k-2j-1}
% \end{align*}
%  In the above expression, we sum 
%  over $\ell,$ the possible numbers of edges within $T.$ For each value of $\ell,$ we select the $2\ell$ half-edges from $kd$ ones,
%  and there are $\binom{kd}{2\ell}$ choices. These $2\ell$ chosen half-edges are paired among themselves to form $\ell$-edges within $T$, with probability $\prod_{i=0}^{\ell-1} 
% \frac{2\ell-2i-1}{nd-2i-1}$. Then from the remaining $td-2\ell$ half-edges, we select $k$ half-edges, and there are $\binom{td-2\ell}{k}$ choices. These chosen $k$ half-edges are paired with half-edges incident to $S$, with probability $\prod_{i=0}^{k-1} \frac{sd-i}{nd-2\ell-2i-1}$. Finally, the remaining $td-2\ell-k$ half-edges are paired with half-edges incident to $S \cup ([n]\setminus V)$, with probability $\prod_{j=0}^{td-2\ell-k-1} 
% \frac{sd-k+ md-j}{nd-2\ell-2k-2j-1}$.

Note that since $2\ell+k \le td$ and $k \le sd$, it follows that
\begin{align*}
\frac{(sd)_k}{[nd]_k}
= \prod_{i=0}^{k-1} \frac{sd-i}{nd-2i-1} 
\le \prod_{i=0}^{k-1} \frac{sd+i+1}{nd} \le 
\left(\frac{2s}{n}\right)^k. 
\end{align*}
Similarly
$$
\frac{(2\ell-1)!!}{[nd-2k]_{\ell}}=
\prod_{i=0}^{\ell-1}
\frac{2\ell-2i-1}{nd-2k-2i-1} 
\le \left(\frac{2\ell+2k}{nd} \right)^\ell \le \left( \frac{2t}{n} \right)^\ell
$$
and 
\begin{align*}
\frac{(sd-k+md)_{td-k-2\ell}}{[nd-2k-2\ell]_{td-k-2\ell}}
 =\prod_{i=0}^{td-k-2\ell-1} \frac{sd-k+ md-i}{nd-2\ell-2k-2i-1} 
& \le \prod_{i=0}^{td-k-2\ell-1}\frac{sd+md+2\ell+k+i+1}{nd} \\
& \le \left(\frac{m+s+t}{n}\right)^{td-k-2\ell}.
\end{align*}
Since $[nd]_{td-\ell}=[nd]_{k} \cdot [nd-2k]_{\ell} \cdot [nd-2k-2\ell]_{td-k-2\ell}$, it follows that 
\begin{align*}
\prob{e(S,T)=e(V\setminus T, T) \ge k}
& \le \binom{td}{k} \left(\frac{2s}{n} \right)^k \cdot \sum_\ell \binom{td-k}{2\ell} 
\left( \frac{2t}{n}\right)^\ell 
\left(\frac{m+s+t}{n} \right)^{td-2\ell-k} \\
%&=  \left(\frac{m+s+t}{n} \right)^{td} \binom{td}{k} \left(\frac{2s}{m+s+t} \right)^k
%\sum_\ell \binom{td-k}{2\ell} 
%\left( \frac{\sqrt{2nt}}{m+s+t}\right)^{2\ell} \\
%&\le  \left(\frac{m+s+t}{n} \right)^{td} \binom{td}{k}\left(\frac{2s}{m+s+t} \right)^k
%\left( 1+ \frac{\sqrt{2nt}}{m+s+t}\right)^{td}\\
%&=\left(\frac{m+s+t+\sqrt{2nt}}{n} \right)^{td} \binom{td}{k}\left(\frac{2s}{m+s+t} \right)^k \\
& \le \binom{td}{k} \left(\frac{2s}{n} \right)^k
\left( \sqrt{\frac{2t}{n}} + \frac{m+s+t}{n} \right)^{td-k} \\
& \le \left(\frac{\alpha n+s+t+\sqrt{2nt}}{n} \right)^{td} \left(\frac{2estd}{\alpha n k} \right)^k,
\end{align*}
where the last inequality holds by the assumption that $m\le \alpha n.$
Finally, applying a union bound over the choices of $S,T$ yields that 
\begin{align*}
& \prob{ \exists  \text{ disjoint } S, T\subset  V \text{ with } |S|+1 \le |T| \le \epsilon n : e(S,T)= e(V\setminus T, T)  \geq |S|+1} \\
&\le \sum_{s} \sum_{s+1\le t \le \epsilon n} \binom{n}{s}\binom{n-s}{t}
\left(\frac{\alpha n+s+t+\sqrt{2nt}}{n} \right)^{td} \left(\frac{2e std}{\alpha n(s+1)}\right)^{s+1} 
\end{align*}
%%YW: the factor \(\sqrt{2}\) disappears: previous equation has \(\sqrt{2nt}\), but this equation has \(\sqrt{nt}\).
In the above expression, we first sum over $t$ and get that 
\begin{align*}
\sum_{s+1\le t \le \epsilon n }
\binom{n-s}{t}\left(\frac{\alpha n+s+t+\sqrt{2nt}}{n} \right)^{td}  t^{s+1}
& \le \sum_{s+1\le t \le \epsilon n}
\left(\frac{en}{t} \right)^t
\left(\frac{\alpha n+2t+\sqrt{2nt}}{n} \right)^{td}
t^{s+1} \\
& \le \sum_{s+1\le t \le \epsilon n } 
\left[en
\left(\alpha+2\epsilon +\sqrt{2\epsilon} \right)^{d}\right]^t \\
& \le \left[e \beta^{d\delta'} \right]^{s+1}/(1-e\beta^{d\delta'})
\end{align*}
where $\beta=\alpha+2\epsilon +\sqrt{2\epsilon}$
and $\delta'=\delta/(2+\delta)$,
and the last inequality holds because  $d \ge (1+\delta)\log n/\log(1/\alpha)$ so that  there exists $\epsilon(\delta,\alpha)$ such that 
$d\ge (1+\delta/2)\log n/\log (1/\beta) $.
Therefore, we obtain that 
\begin{align*}
& \prob{ \exists  \text{ disjoint } S, T\subset  V \text{ with } |S|+1 \le |T| \le \epsilon n: e(S,T)= e(V\setminus T, T)  \geq |S|+1} \\
&\le \frac{1}{1-e\beta^{d\delta'}} 
\sum_{s } \binom{n}{s}
\left(\frac{2e^2  d\beta^{d\delta'}}{\alpha n }\right)^{s+1}
= \frac{1}{1-e\beta^{d\delta'}} 
\left(\frac{2e^2  d\beta^{d\delta'}}{\alpha n }\right)
\left( 1+ \frac{2e^2  d\beta^{d\delta'}}{\alpha n }\right)^n
\le n^{-1-\delta/4},
\end{align*}
where the last inequality holds for all sufficiently large $n$ because 
$\beta^{d\delta'} \le n^{-\delta/2}$.
\end{proof}

Finally, the next lemma shows that the subgraph $G[V]$ induced by a fixed vertex set $V$ does not contain a small component (in particular, isolated vertices), with high probability. Since each odd component contains at least one unmatched vertex, no odd component is necessary for the existence of a perfect matching for $G[V].$ Here $X_k$ is the number of connected components of size $k$ in $G[V]$.
\begin{lemma}[No Small Component]\label{lmm:no_small_component}
Fix a set $V \subset [n]$ with $|V|=n-m$ and $m\le\alpha n.$  
%$I$ denote the isolated vertices in $G[V]$, and $G[V]-I$ denote the graph $G[V]$ after removing the isolated vertices. 
%Suppose that $d$ is given by~\prettyref{eq:cond_d}.
%$m \le  nq (1-\epsilon)$ for some constant $\alpha>0$ and  
Suppose that for some constant $\delta \in (0,1)$,
$$
d \ge \frac{(1+\delta)\log n}{\log(1/\alpha)}. 
$$
 Then there exists a constant $\epsilon \equiv \epsilon(\delta,\alpha)>0$ such that for
all sufficiently large $n$
\$
\prob{X_1 \ge 1} \le \expect{X_1} \le n^{-\delta/2} ,\$
and 
\$
\prob{ \sum_{k=2}^{\epsilon n} X_k \ge 1 }  
\le \sum_{k=2}^{\epsilon n} \expect{X_k} \le n^{-1-\delta/2}.  \$

\end{lemma}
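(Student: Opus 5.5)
We bound the expected number of small components in $G[V]$ by first moment and union bound, using the same configuration-model counting as in the proof of \prettyref{lmm:congestion}. A set $K\subset V$ with $|K|=k$ is a connected component of $G[V]$ exactly when $G[K]$ is connected and $e(K,V\setminus K)=0$. In particular, every one of the $kd$ half-edges at $K$ must then be paired either with a half-edge at $K$ or with a half-edge at the deleted set $[n]\setminus V$, which has only $md\le \alpha nd$ half-edges. The first moment is therefore
\[
\expect{X_k}\le \sum_{K\subset V,|K|=k}\prob{e(K,V\setminus K)=0,\ G[K] \text{ connected}}.
\]

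For $k=1$ this is a direct computation. A single vertex $v$ is isolated in $G[V]$ when all its half-edges go to $[n]\setminus V$, or form self-loops. Sequentially revealing the $d$ pairings, each step has probability at most $(md+O(d))/(nd-2d)\le \alpha(1+O(d/n))$. Hence $\expect{X_1}\le n\,\alpha^{d}(1+o(1))\le n\cdot n^{-(1+\delta)}(1+o(1))$, and this is at most $n^{-\delta/2}$ for large $n$. The bound $\prob{X_1\ge1}\le\expect{X_1}$ is Markov's inequality.

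For $2\le k\le \epsilon n$, count pairings as in \prettyref{lmm:congestion}. Let $2\ell$ of the $kd$ half-edges of $K$ be paired internally, and send the remaining $kd-2\ell$ to $[n]\setminus V$. Bounding the ratios of falling (double) factorials gives
\[
\prob{e(K,V\setminus K)=0}\le \sum_\ell \binom{kd}{2\ell}\left(\frac{2k}{n}\right)^{\ell}\left(\frac{\alpha n+2k}{n}\right)^{kd-2\ell}\le\left(\alpha+2\epsilon+\sqrt{2k/n}\right)^{kd}.
\]
Connectivity of $G[K]$ is not even needed for this upper bound. With $\binom{n}{k}\le (en/k)^k$ we get $\expect{X_k}\le \bigl[(en/k)\,\beta_k^{d}\bigr]^k$, where $\beta_k=\alpha+2\epsilon+\sqrt{2k/n}$.

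The main obstacle is that $\sqrt{2k/n}$ is not small uniformly when $k$ is near $\epsilon n$, and that the exponent must gain an extra factor $n^{-1}$ overall. I would split the range of $k$.

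\emph{Small $k$, say $k\le \log^2 n$.} Here $\sqrt{2k/n}=o(1)$, so $\beta_k\le \alpha+3\epsilon$. Choose $\epsilon(\delta,\alpha)$ so that $d\log(1/(\alpha+3\epsilon))\ge (1+\delta/2)\log n$. Then each term is at most $(e n\cdot n^{-1-\delta/2})^k=(e\,n^{-\delta/2})^k$. For $k\ge2$ this is $O(n^{-\delta})$, not yet $n^{-1-\delta/2}$, so the crude bound must be sharpened for $k=2$ and small $k$. Actually the bound should keep the full $\binom{kd}{2\ell}$ structure. A component on $k\ge2$ vertices needs at least $k-1$ internal edges, which uses $2(k-1)$ half-edges and leaves at most $kd-2(k-1)$ half-edges going to the deleted set. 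Each internal edge costs an extra factor of order $k/n$ instead of $\alpha$. This gives roughly $\binom{n}{k}(k/n)^{k-1}k^{O(k)}\alpha^{kd-2k}$, i.e. about $n\cdot (C k\,\alpha^{d-2})^k\le n\cdot n^{-k(1+\delta)+o(k)}$. For $k\ge2$ this is at most $n^{-1-\delta}$ up to constants, and summing over $k\le\log^2 n$ gives $n^{-1-\delta/2}$. So the step I expect to need the most care is exactly this connectivity refinement, forcing $\ell\ge k-1$ in the sum over $\ell$.

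\emph{Large $k$, $\log^2 n<k\le\epsilon n$.} Use $\beta_k\le\alpha+2\epsilon+\sqrt{2\epsilon}=:\beta<1$, with $\epsilon$ small enough that $d\log(1/\beta)\ge(1+\delta/2)\log n$. Then $\expect{X_k}\le (e\,n^{-\delta/2})^{k}$, which is superpolynomially small since $k\ge\log^2 n$. Summing both ranges yields $\sum_{k=2}^{\epsilon n}\expect{X_k}\le n^{-1-\delta/2}$, and the union bound converts this into the probability statement.
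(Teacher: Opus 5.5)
Your proposal is correct, but it is organized differently from the paper's proof.

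The paper never uses a connectivity-free bound. For every $1\le k\le\epsilon n$ it union-bounds over the $k^{k-2}$ spanning trees on $T$ and the at most $d^{2(k-1)}$ half-edge choices realizing each tree. It then lets all remaining half-edges of $T$ pair internally or into $[n]\setminus V$ freely. This gives the single estimate $\prob{T\text{ is a component}}\le\bigl(\frac{kd}{(n-2k)\beta^2}\bigr)^{k-1}\beta^{kd}$, with $\beta=\alpha e^{\epsilon/\alpha}+\sqrt{2\epsilon}$. Its factor $(kd/n)^{k-1}$ cancels the surplus $(n/k)^{k-1}$ in $\binom{n}{k}$ uniformly in $k$. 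So one geometric series gives both the $k=1$ and the $k\ge 2$ claims, with no case split.

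You instead extract the connectivity gain by restricting the number of internal pairings to $\ell\ge k-1$. That gain exists only while the sum over $\ell$ is dominated by its first term, i.e.\ for $k\ll n/d^2$. This is why you must split the range. The split works because, as you observe, your connectivity-free bound $[(en/k)\beta_k^d]^k\le(en^{-\delta/2})^k$ already suffices once $k$ exceeds a constant of order $1/\delta$, and a fortiori for $k>\log^2 n$. Your route avoids Cayley's formula and reuses a single counting identity throughout. The paper's route is shorter and uniform in $k$.

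Two details need fixing when you write it out.

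First, for $k=1$ your sequential-revealing step does not justify $d$ factors of $\alpha$. A self-loop pairs two half-edges of $v$ in a single step, so there can be fewer than $d$ factors. You can account for self-loop steps separately: each has probability $O(1/n)$, and summing over their positions costs only a factor $(1+O(1/(n\alpha^2)))^d=1+o(1)$. Alternatively, apply your connectivity-free bound with $k=1$, where $\binom{n}{1}=n$ and $(\alpha+O(n^{-1/2}))^d=\alpha^d(1+o(1))$.

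Second, in the small-$k$ refinement the leading term is of order $\binom{kd}{2k-2}(k/n)^{k-1}\alpha^{kd-2(k-1)}$. So, relative to the $\ell=0$ term, each forced internal edge costs a factor of order $d^2k/(\alpha^2n)$, not $k/n$. Consecutive terms in $\ell$ have ratio $O(d^2k/n)=o(1)$ for $k\le\log^2 n$. This gives $\expect{X_k}\le(e/k)\,n^{-\delta}\,(Cd^2n^{-1-\delta})^{k-1}(1+o(1))$, and hence a total of $O(d^2n^{-1-2\delta})$ over $2\le k\le\log^2 n$.

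These absorptions need $d$ to grow slowly, e.g.\ $d=\Theta(\log n)$ as in the application. The paper's own summation carries the same implicit restriction, since its $k=2$ term is $O(d\,n^{-1-\delta})$.
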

\begin{proof}

Fix a subset $T \subset V$ with $k$ vertices. There are $kd$ half-edges incident to $T$. Then $T$ is a connected component of $G[V]$ if and only if $2(k-1)$ of these half-edges are paired among themselves to form a spanning tree with $k-1$ edges, and all the remaining
$kd-2(k-1)$ half-edges are either paired among themselves or paired with edges incident to $[n]\setminus V.$  From this observation, we obtain the following probability bound:
\begin{align*}
&\prob{ T \text{ is a component}} \\
& \qquad \le k^{k-2} d^{2k-2} \sum_{\ell} \frac{1}{[nd]_{kd-(k-1)-\ell}} \binom{kd-2(k-1)}{2\ell} (2\ell-1)!! \cdot (md)_{kd-2(k-1)-2\ell}.
\end{align*}
In the above expression, we first choose a spanning tree on the vertex set $T$ and there are $k^{k-2}$ different choices. 
For each chosen spanning tree, we then select and pair the half-edges to form the $k-1$ edges in the spanning tree, and there are at most $d^{2(k-1)}$ ways, as each edge can be formed by at most $d^2$ different choices of half-edges. Next, we sum over $\ell$, the possible number of edges within $T$ in addition to the $k-1$ edges in the chosen spanning tree.  For each value of $\ell,$ we select the $2 \ell$ half-edges from the remaining $kd-2(k-1)$ half-edges incident to $T$, and there are $\binom{kd-2(k-1)}{2\ell}$ choices. These $2\ell$ chosen half-edges are paired among themselves to form $\ell$ edges within $T$, and there are $(2\ell-1)!!$ different ways. Finally, the remaining $j\triangleq kd-2(k-1)-2\ell$ half-edges are paired with half-edges incident to $[n]\setminus V$, and there are $(md)_{j}$ different ways. Note that in total we have formed $(k-1)+\ell+j=kd-(k-1)-\ell$ edges. Thus, the denominator $[nd]_{kd-(k-1)-\ell}$ counts all possible pairings to form  $kd-(k-1)-\ell$ edges without restriction. 

We further simplify the upper bound as follows. 
$$
\frac{(md)_{j}}{[nd]_{j}}
= \prod_{i=1}^{j} \frac{md-i+1}{nd-2i+1}
\le \prod_{i=1}^{j} \frac{md+i}{nd}
\le \alpha^{j}
\exp \left( \frac{1}{\alpha  nd} \sum_{i=1}^{j} i \right) 
\le \left(\alpha \exp \left( \frac{\epsilon}{\alpha} \right) \right)^{j},
$$
where we used the facts that $m\le \alpha n$, $j \le kd$, and $k\le \epsilon n.$
Furthermore, 
$$
\frac{(2\ell-1)!!}{[nd-2j]_{\ell}}
=\prod_{i=1}^\ell \frac{2\ell-2i+1}{nd-2j-2i+1}
\le \prod_{i=1}^\ell \frac{2\ell+2j}{nd} \le 
\left( \frac{2k}{n}\right)^\ell
\le (2\epsilon)^\ell
$$
and 
$$
[nd-2j-2\ell]_{k-1} \ge \left(nd-2j-2\ell-2(k-1) \right)^{k-1} \ge [(n-2k)d]^{k-1}.
$$
Since $[nd]_{kd-(k-1)-\ell}=[nd]_{j} \cdot [nd-2j]_{\ell} \cdot [nd-2j-2\ell]_{k-1}$, 
% YW: [nd-j]_{\ell} should be [nd-2j]_{\ell}?
it follows that 
\begin{align*}
\prob{ T \text{ is a component}} 
& \le \left(\frac{kd}{n-2k} \right)^{k-1} 
\sum_{\ell} \binom{kd-2(k-1)}{2\ell}
\left( 2\epsilon \right)^\ell
\left(\alpha e^{\epsilon/\alpha} \right)^{kd-2(k-1)-2\ell} \\
& \le 
\left(\frac{kd}{n-2k} \right)^{k-1} \left( \alpha e^{\epsilon/\alpha} + \sqrt{2\epsilon}\right)^{kd-2(k-1)} \\
&= \left(\frac{kd}{(n-2k)\beta^2} \right)^{k-1} \beta ^{kd}. 
%\le \left(\frac{kd}{(n-2k)\beta^2} \right)^{k-1} n^{-k(1+\epsilon/2)},
\end{align*}
where  $\beta =\alpha e^{\epsilon/\alpha} + \sqrt{2\epsilon}$.

%and the last inequality holds because by assumption $m=\alpha n$ and $d \ge (1+\epsilon) \log (n)/\log(1/\alpha)$ so that there exists $\epsilon(\epsilon,\alpha)$ such that $d \ge (1+\epsilon/2)\log n/\log(1/\beta).$

Since there are at most $\binom{n}{k} $ different choices of $T \subset V$ with $k$ vertices, it follows that 
$$
\expect{X_k} \le  \binom{n}{k} \left(\frac{kd}{(n-2k)\beta^2} \right)^{k-1} \beta^{kd}
%\le e  \left(\frac{ed}{(n-2k)\beta^2} \right)^{k-1} (n\beta^d)^k
\le  \binom{n}{k} \left(\frac{kd}{(n-2k)\beta^2} \right)^{k-1} n^{-k(1+\delta/2)},
$$
where the last inequality holds because by assumption  $d \ge (1+\delta) \log (n)/\log(1/\alpha)$ so that there exists $\epsilon(\delta,\alpha)$ such that $d \ge (1+\delta/2)\log n/\log(1/\beta).$

Therefore, for $k=1$, $\expect{X_1} \le n^{-\delta/2}$. Moreover, 
\begin{align*}
\sum_{k=2}^{\epsilon n} \expect{X_k}
& \le  \sum_{k=2}^{\epsilon n} 
\left( \frac{en}{k} \right)^k \left(\frac{kd}{(n-2k)\beta^2} \right)^{k-1}  n^{-k(1+\delta/2)} \\
&  \le en^{-\delta/2} \sum_{k=2}^{\epsilon n}  \left(\frac{e d}{(1-2\epsilon)\beta^2 n^{1+\delta/2}} \right)^{k-1} \le  n^{-1-\delta/2}.
\end{align*}
The proof is complete by applying Markov's inequality. 
%It follows that 
%$\expect{X_1} \le n^{-\epsilon/2}$
\end{proof}

\iffalse
\begin{remark}\label{rmk:properties_regular_graphs}
    Note that \prettyref{lmm:ind_set} and \prettyref{lmm:no_big_cut_1} establish properties of the entire random regular graph $G^d$. In contrast, \prettyref{lmm:congestion} and \prettyref{lmm:no_small_component} concern properties of the subgraph of $G^d$ induced by a fixed vertex set $V$. It is also worth emphasizing that the first two properties hold whenever $d$ exceeds a sufficiently large constant. Hence, they are not essential for random regular graphs to achieve bounded loss under the minimum degree requirement; analogous properties also hold for Erdős–Rényi random graphs with the same average degree. By contrast, the latter two properties hold only under the critical condition $d \ge (1+\delta) \log(n)/\log(1/\alpha)$.
For Erdős–Rényi random graphs, however, a higher connectivity condition is required, namely degree $d \ge (1+\delta) \log (n)/(1-\alpha)$.
\end{remark}
\fi

\begin{proof}[Proof of \prettyref{lmm:match-prop}]
Since Lemmas \ref{lmm:ind_set} and \ref{lmm:no_big_cut_1} require only that $d$ be a sufficiently large constant, the choice $d = \Theta(\log n)$ \eqref{eq:cond_d} satisfies their assumptions for large $n$; hence, \eqref{eq:ind_set} and \eqref{eq:no_big_cut_1} follow directly. It remains to show \eqref{eq:congestion} and \eqref{eq:no_small_comp} by verifying the assumptions in Lemmas \ref{lmm:congestion} and \ref{lmm:no_small_component}. 

Since $|V| \ge (1-\eta)np$, we have $m=n-|V| \le n-(1-\eta)np =(q+\eta p) n$. Let $\alpha=(q+\eta p).$
Recalling the condition~\prettyref{eq:cond_d} that $d \ge (1+\delta) \log n/ \log(1/q)$, we may choose $\eta$ to be a sufficiently small constant so that  
$d \ge (1+\delta') \log n/\log(1/\alpha)$ with $\delta'=\delta/2.$ Under this condition, \eqref{eq:congestion} holds with $\epsilon_1\equiv \epsilon_1(\delta', \alpha)$ as given in~\prettyref{lmm:congestion}, and \eqref{eq:no_small_comp} holds with $\epsilon_2\equiv \epsilon_2(\delta', \alpha)$ as given in~\prettyref{lmm:no_small_component}. Setting $\gamma = \min\{\epsilon_1, \epsilon_2\}$, we obtain both \eqref{eq:congestion} and \eqref{eq:no_small_comp}.
\end{proof}

\subsubsection{Proof of \prettyref{thm:match-1-n}}
Finally, we prove \prettyref{thm:match-1-n} via \prettyref{prop:perfect matching}.
\begin{proof}[Proof of Theorem \ref{thm:match-1-n}]
Let $\calA$ denote the event that $\mu(G[\supp(D)])=\lfloor |\supp(D) \setminus I|/2 \rfloor$ and $\calB$ denote the event that $|\supp(D)| \ge (1-\epsilon)np.$
Combining~\prettyref{prop:perfect matching} with the Chernoff bound on $|\supp(D)|$ yields that 
$$
\probs{\calA \cap \calB}{G, D}
= \prob{\calA \mid \calB} \prob{\calB}
\ge 1- n^{-1-\delta/8} - \exp\left(-\epsilon^2 np/2 \right).
$$
Moreover, $\E_{G,D}[|I| \mathbf{1}_{\calB}] \le n^{-\delta/8},$
and on event $\calA,$
\begin{align*}
L(G, D) & = 2 \left( \lfloor |\supp(D)|/2 \rfloor - \mu(G[\supp(D)]) \right)  = 2 \left( \lfloor |\supp(D)|/2 \rfloor -  \lfloor |\supp(D) \setminus I |/2 \rfloor \right)  \\
& \le |I| + \indc{|I| \text{ is odd}} \le 2|I|.
\end{align*}
Therefore, since it always holds that $L(G, D)\le n$, we have
\begin{align*}
\expects{L(G, D)}{G, D}
& = \expects{L(G, D) \indc{\calA\cap \calB}}{G, D} + \expects{L(G, D) \indc{(\calA\cap \calB)^c}}{G, D} \\
& \le 2 \expects{|I| \mathbf{1}_{\calB}}{G, D}
+ n \prob{(\calA\cap\calB)^c} \\
& \le 2n^{-\delta/8} + n^{-\delta/8} + n \exp\left(-\epsilon^2 np/2 \right)
=O(n^{-\delta/8}).
\end{align*}
Finally, the proof is complete by applying Markov's inequality: 
\begin{equation*}
\probs{\expects{L(G,D)}{D} \ge n^{-\delta/16}}{G} \le n^{\delta/16}\expects{\expects{L(G,D)}{D}}{G}=O(n^{-\delta/16}). \qedhere
\end{equation*}
\end{proof}

\subsection{Proofs for $\epsilon$-Fractional Loss} \label{sect:ec-lwc}
This section proves Lemmas \ref{lmm:lwc_rrg} and \ref{lmm:fdddd} and thereby completes the analysis of the $\epsilon$-fractional loss. We begin by introducing the notion of local weak convergence for deterministic graphs. Interested readers are referred to \cite{van2024random} for further details.
A \emph{rooted graph} is a pair $(G, o)$, where $G$ is a graph with root vertex $o\in V(G)$. 
Two rooted graphs $(G_1, o_1)$ and $(G_2, o_2)$ are \emph{isomorphic}, denoted by $(G_1, o_1) \cong (G_2, o_2)$, if there exists a bijection $\phi\colon V(G_1)\to V(G_2)$ such that $o_2 = \phi(o_1)$ and $\{u, v\}\in E(G_1)$ if and only if $\{\phi(u), \phi(v)\}\in E(G_2)$. Let $\cG_\star$ be the space of rooted graphs modulo isomorphisms; that is, all $(G', o')$ such that $(G', o')\cong (G,o)$ are considered to be the same. For any rooted graph $(G,o)$, let $B_r^G(o)$ denote the rooted subgraph induced by all vertices at graph distance at most $r$ from $o$, called the \emph{neighborhood} of radius $r$. With these definitions in place, the following lemma characterizes local weak convergence for sequences of rooted graphs.

\begin{lemma}[{\citep[Theorem 2.7]{van2024random}}]
Let %$\{G_n\}_{n \in \mathbb{N}}$ be a sequence of finite graphs, and let 
$\{(G_n, o_n)\}_{n \in \mathbb{N}}$ be the sequence of rooted graphs obtained by choosing $o_n\in V(G_n)$ uniformly at random. Then $(G_n, o_n)$ converges locally weakly to $(G,o) \sim \nu$ if and only if,
for every rooted graph $H_\star \in \cG_\star$ and all integers $r\ge 0$, 
\$
p^{G_n}(H_\star) = \frac{1}{|V(G_n)|} \sum_{u\in V(G_n)}\indc{\{B_r^{G_n}(u)\cong H_\star\}} \xlongrightarrow{} \nu(B_r^{G}(o)\cong H_\star).
\$
\end{lemma}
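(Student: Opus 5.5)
The plan is to work directly with the definition of local weak convergence as weak convergence of probability measures on the rooted-graph space $\cG_\star$. Following \citet{van2024random}, $\cG_\star$ (locally finite rooted graphs modulo isomorphism) carries the local metric
\[
d_\star\bigl((G_1,o_1),(G_2,o_2)\bigr)=\frac{1}{1+R^\star},\qquad R^\star=\sup\{r\colon B_r^{G_1}(o_1)\cong B_r^{G_2}(o_2)\}.
\]
Under this metric, $\cG_\star$ is a Polish space. The statement then says that weak convergence of the law of $(G_n,o_n)$, with $o_n$ uniform, is equivalent to convergence of the probabilities of the \emph{cylinder events} $\mathcal C_r(H_\star)\triangleq\{(G,o)\colon B_r^G(o)\cong H_\star\}$. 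Note that $p^{G_n}(H_\star)$ is exactly the probability of $\mathcal C_r(H_\star)$ under the law of $(G_n,o_n)$, since $o_n$ is uniform. Two structural facts about cylinders will be used throughout, and I would verify them first.

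\emph{Fact 1 (clopen).} Each $\mathcal C_r(H_\star)$ is both open and closed, since it is exactly a ball of radius $1/(1+r)$ in an ultrametric-type space. Indeed, two rooted graphs agreeing up to radius $r$ agree up to every smaller radius.

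\emph{Fact 2 (nested or disjoint).} Any two cylinders are either nested or disjoint. For a fixed $r$, the cylinders over the countably many finite rooted graphs $H_\star$ partition $\cG_\star$; countability holds because $B_r$ of a locally finite graph is finite.

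\emph{Forward direction.} By Fact 1, the indicator $\indc{\mathcal C_r(H_\star)}$ is a bounded continuous function on $\cG_\star$. Weak convergence therefore immediately gives $p^{G_n}(H_\star)\to\nu(\mathcal C_r(H_\star))$.

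\emph{Reverse direction.} I would use the portmanteau criterion: it suffices to show $\liminf_n \mathbb P_n(U)\ge\nu(U)$ for every open $U\subset\cG_\star$.
\begin{enumerate}
\item Covering by cylinders. For each point of $U$ there is an $r$ such that the cylinder of that point at radius $r$ lies in $U$. So $U$ is a union of cylinders contained in $U$, and there are only countably many such cylinders.
\item Disjointification. By Fact 2, I would keep only the maximal cylinders, i.e.\ those of smallest radius, contained in $U$. This yields a countable family $\{\mathcal C_k\}$ of pairwise disjoint cylinders whose union is $U$.
\item Fatou's lemma. Applying Fatou's lemma for sums together with the hypothesis,
\[
\liminf_n\mathbb P_n(U)=\liminf_n\sum_k\mathbb P_n(\mathcal C_k)\ge\sum_k\lim_n\mathbb P_n(\mathcal C_k)=\sum_k\nu(\mathcal C_k)=\nu(U).
\]
\end{enumerate}
This is exactly the open-set inequality needed, and portmanteau then yields weak convergence.

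\emph{Main obstacle.} I expect the main care to be needed in the disjointification step. One must check that a maximal cylinder exists above each cylinder contained in $U$; this holds because radii are nonnegative integers, so the smallest radius is attained. With unbounded degrees, a radius-$r$ cylinder splits into \emph{countably} (not finitely) many radius-$(r+1)$ cylinders, so refining all cylinders to a common radius would not work. The argument must therefore rely on countable additivity and Fatou rather than a finite partition. One should also confirm that $\nu$ is a genuine probability measure on $\cG_\star$, as the lemma presupposes, so that no mass escapes.
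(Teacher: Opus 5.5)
Your proof is correct and complete. The paper gives no argument of its own for this lemma: it imports it from \citet[Theorem 2.7]{van2024random} and uses it only as a black-box criterion, through its in-probability analogue Lemma~\ref{lmm:lwc}, in the proof of Lemma~\ref{lmm:lwc_rrg}. So your write-up is a self-contained addition rather than a reconstruction of something in the text.

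It is worth contrasting it with the other standard route for the reverse direction, tightness plus identification of the limit.
\begin{itemize}
\item The radius-$r$ neighborhood laws converge pointwise, on the countable set of finite rooted graphs, to a probability mass function. So for each $r$ there is a finite set of balls carrying all but $\epsilon 2^{-r}$ of the mass, uniformly in large $n$.
\item Intersecting these events over $r$ gives a compact subset of $\mathcal{G}_\star$, which needs completeness. Prokhorov then yields subsequential limits.
\item Cylinders are clopen, so they have null boundary, and they form a $\pi$-system generating the Borel $\sigma$-algebra. Hence every subsequential limit equals $\nu$.
\end{itemize}

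Your argument avoids completeness, compactness and Prokhorov altogether. The ultrametric structure makes every open set a countable disjoint union of cylinders, and Fatou plus portmanteau then finish in one line. In exchange, the tightness route also produces the limit measure when one only knows that the limiting neighborhood frequencies exist and sum to one at each radius. Your argument instead uses that $\nu$ is given a priori as a probability measure, which the lemma does assume.

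Two small points of phrasing:
\begin{itemize}
\item The disjointification is cleanest stated pointwise. For $x\in U$, take the least $r(x)$ with $\mathcal{C}_{r(x)}(B_{r(x)}(x))\subset U$. Two such cylinders that meet are nested, and minimality then forces them to be equal. Your phrase ``those of smallest radius'' is ambiguous because the same set can arise from different radii: for the one-vertex rooted graph $H$ one has $\mathcal{C}_1(H)=\mathcal{C}_2(H)=\cdots$.
\item Polishness of $\mathcal{G}_\star$ is never used, since the portmanteau theorem holds in any metric space.
\end{itemize}
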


\iffalse
\begin{definition}[Local weak convergence]
Let $G_n = (V(G_n), E(G_n))$ denote a finite (possibly disconnected \nbr{delete this???}) graph.
Let $(G_n, o_n)$ be the rooted graph obtained by letting $o_n \in V(G_n)$
be chosen uniformly at random, and restricting $G_n$ to the connected
component of $o_n$ in $G_n$ \nbr{what is this?}. We say that $(G_n, o_n)$ \emph{converges locally weakly} to the connected \nbr{why emphasize this???}
rooted graph $(G, o)$, which is a (possibly random) element of $\mathcal{G}_\star$
having law $\mu$, when, for every bounded and continuous function
$h : \mathcal{G}_\star \to \mathbb{R}$,
\$
\mathbb{E}\bigl[h(G_n, o_n)\bigr] \;\longrightarrow\;
\mathbb{E}_{\mu}\bigl[h(G, o)\bigr],
\$
where the expectation on the right-hand side is with respect
to $(G,o)$ having law $\mu$, while the expectation on the left-hand side
is with respect to the random vertex $o_n$.
\end{definition}
\fi

Since we study random graph designs, we also introduce the notion of local convergence in probability for random graphs.
\begin{lemma}[{\citep[Theorem 2.15]{van2024random}}]\label{lmm:lwc}
Let $\{(G_n, o_n)\}_{n \in \mathbb{N}}$ be the sequence of rooted graphs obtained by choosing $o_n\in V(G_n)$ uniformly at random. Then $(G_n, o_n)$ converges locally in probability to $(G,o) \sim \nu$ if and only if, %precisely when \nbr{what does this mean?}, 
for every rooted graph $H_\star \in \cG_\star$ and all integers $r\ge 0$, 
\$
p^{G_n}(H_\star) = \frac{1}{|V(G_n)|} \sum_{u\in V(G_n)}\indc{\{B_r^{G_n}(u)\cong H_\star\}} \xlongrightarrow{\mathbb{P}} \nu(B_r^{G}(o)\cong H_\star).
\$
Here the convergence in probability is with respect to the random graph $G_n$. 
\end{lemma}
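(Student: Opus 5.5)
This lemma is quoted from \citep[Theorem 2.15]{van2024random}. I would reduce it to the deterministic characterization in the preceding lemma (Theorem 2.7) by working with the empirical neighborhood measure.

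\emph{Setup.} For a realization of $G_n$, let
\[
P_n \triangleq \frac{1}{|V(G_n)|}\sum_{u\in V(G_n)} \delta_{(G_n,u)},
\]
a random probability measure on $\cG_\star$. Equip $\cG_\star$ with the local metric
\[
d\big((G_1,o_1),(G_2,o_2)\big)=\frac{1}{1+\sup\{r\colon B_r^{G_1}(o_1)\cong B_r^{G_2}(o_2)\}} .
\]
Under this metric $\cG_\star$ is Polish, and each cylinder set $A_{r,H_\star}=\{(G,o)\colon B_r^G(o)\cong H_\star\}$ is clopen. By definition, local convergence in probability means $\int h\,dP_n \xlongrightarrow{\mathbb{P}} \E_\nu[h(G,o)]$ for every bounded continuous $h\colon\cG_\star\to\R$. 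Also note $p^{G_n}(H_\star)=P_n(A_{r,H_\star})$.

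\emph{Necessity.} The indicator $\indc{A_{r,H_\star}}$ is bounded and continuous, because $A_{r,H_\star}$ is clopen. Applying the definition to this $h$ gives the displayed convergence immediately.

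\emph{Sufficiency.}
\begin{enumerate}
\item Reduce to uniformly continuous test functions. By the Portmanteau theorem it is enough to treat bounded, uniformly continuous $h$.
\item Approximate $h$ by a function of a finite ball. Let $\omega$ be the modulus of continuity of $h$. Since $d\big((G,o),(B_r^G(o),o)\big)\le 1/(r+2)$, we get
\[
\big|h(G,o)-h(B_r^G(o),o)\big|\le \omega\big(1/(r+2)\big)
\]
uniformly over $(G,o)$. It therefore suffices to show that $\int h_r\,dP_n\to \E_\nu[h_r]$ in probability, where $h_r(G,o)\triangleq h(B_r^G(o),o)$, and then let $r\to\infty$.
\item Write $h_r$ as a countable sum. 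For fixed $r$, the finite rooted graphs $H_\star$ of radius $r$ form a countable family, and $h_r=\sum_{H_\star} h(H_\star)\indc{A_{r,H_\star}}$.
\item Pass to total variation on $r$-balls. The hypothesis gives $P_n(A_{r,H_\star})\to\nu(A_{r,H_\star})$ in probability for each $H_\star$. Fix $\eta>0$ and choose a finite family $\mathcal H$ with $\nu\big(\bigcup_{H_\star\in\mathcal H}A_{r,H_\star}\big)\ge 1-\eta$. Convergence on the finitely many sets in $\mathcal H$ forces $P_n\big(\bigcup_{H_\star\in\mathcal H} A_{r,H_\star}\big)\ge 1-2\eta$ with probability tending to one, since $P_n$ is a probability measure. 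This is the Scheffé-type step.
\item Conclude. From step 4, the total variation distance between the $r$-ball laws under $P_n$ and under $\nu$ is at most $4\eta+o_{\mathbb P}(1)$. Hence
\[
\Big|\int h_r\,dP_n-\E_\nu[h_r]\Big|\le 4\eta\,\|h\|_\infty+o_{\mathbb P}(1),
\]
and letting $\eta\to 0$ finishes the proof.
\end{enumerate}

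\emph{Alternative route.} One could instead use a subsequence argument. Enumerate the countably many pairs $(r,H_\star)$ and extract, by a diagonal argument, a further subsequence along which all the cylinder frequencies converge almost surely. Then apply Theorem 2.7 pathwise along that subsequence. Since every subsequence has such a further subsequence, convergence in probability follows.

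\emph{Main obstacle.} The delicate point is step 4, ruling out escape of mass to infinitely many ball types. Pointwise convergence on each cylinder does not by itself give convergence of the infinite sum in step 3. The fix is tightness of the limiting $r$-ball law under $\nu$ (which holds because $\nu$ is a probability measure on a countable set of $r$-balls), combined with the fact that each $P_n$ has total mass one. Everything else is routine measure theory on the Polish space $\cG_\star$.
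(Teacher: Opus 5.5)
The paper gives no proof of this lemma. Like the deterministic criterion just before it, the lemma is quoted from \citet[Theorem 2.15]{van2024random} and used as a black box in the proof of Lemma~\ref{lmm:lwc_rrg}, so your proposal supplies an argument the paper simply imports. Your argument is essentially correct, with one step that needs an extra argument and one harmless numerical slip.

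Your necessity direction is exactly right. Steps 2--5 are also sound; their real content is that, for each fixed $r$, the $\ell^1$ distance between the law of the $r$-ball type under $P_n$ and under $\nu$ tends to $0$ in probability. This is the in-probability form of Scheff\'e's lemma, and you handle it correctly.

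Step 1, however, is not justified by the Portmanteau theorem as you invoke it. Portmanteau concerns a deterministic sequence of measures. Here you only know that $\int h\,dP_n$ converges in probability for each uniformly continuous $h$ separately, so there is no single realization to which the theorem applies. The reduction to uniformly continuous test functions is still true, but it needs an argument. Two options:
\begin{itemize}
\item Use the diagonal subsequence extraction from your own alternative route.
\item Use tightness of $\nu$ directly. Take a compact $K$ with $\nu(K)\ge 1-\eta$. A continuous $h$ is uniformly continuous near $K$, so there is an $R$ with $|h(x)-h(y)|<\eta$ whenever $y\in K$ and $B_R(x)\cong B_R(y)$. Then replace $h$ by a function of the $R$-ball supported on the finitely many $R$-cylinders that meet $K$. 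This handles every bounded continuous $h$ directly from the hypothesis.
\end{itemize}

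Separately, $d\bigl((G,o),(B_r^G(o),o)\bigr)=1/(r+1)$ whenever some vertex lies at distance $r+1$ from $o$. Your bound $1/(r+2)$ should therefore read $1/(r+1)$; nothing downstream depends on this.

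Comparing your two routes: the alternative route is already a complete and shorter proof given the deterministic lemma. The pairs $(r,H_\star)$ are countable, and that lemma is then applied pathwise along an almost surely convergent sub-subsequence. The direct route avoids the deterministic lemma and shows the quantitative mechanism, namely convergence in probability of the $r$-ball laws in total variation. It becomes complete once step 1 is repaired.
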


%To apply Lemma \ref{lem:matching_limit_ugw}, we first verify the local weak convergence of the random $d$-regular graph. 
%\nbr{need to define the notation $\cong$}

\iffalse
\begin{definition}[Local convergence in probability of random graphs]
Let $\{G_n\}_{n \ge 1}$ with $G_n = (V(G_n), E(G_n))$ denote a sequence of
(possibly disconnected) random graphs. Then,
$G_n$ converges \emph{locally in probability} to $(G, o)$ having law $\mu$
when, for every bounded and continuous function
$h : \mathcal{G}_\star \to \mathbb{R}$,
\$
\mathbb{E}\bigl[h(G_n, o_n)\mid G_n\bigr]
\;\xrightarrow{\mathbb{P}}\;
\mathbb{E}_{\mu}\bigl[h(G, o)\bigr],
\$
where the expectation on the left-hand side
is with respect to the random vertex $o_n$, and the random variable $\mathbb{E}\bigl[h(G_n, o_n)\mid G_n\bigr]$, which depends on the random graph $G_n$, converges in probability to $\mathbb{E}_{\mu}\bigl[h(G, o)\bigr]$. %and the random graph $G_n$. \nbr{This does not look correct...}
\end{definition}
\fi
Finally, we introduce unimodular Galton--Watson trees, which arise as local limits of many classical sparse random graphs.
\begin{definition}[Unimodular Galton--Watson Tree]
Fix a probability distribution $(\pi_k)_{k \ge 0}$, where
$\pi_k = \mathbb{P}(D = k)$ for some integer-valued random variable $D$.
The \emph{unimodular Galton--Watson tree} with root-offspring
distribution $(\pi_k)_{k \ge 0}$ is the branching process where the root
has offspring distribution $(\pi_k)_{k \ge 0}$, while all vertices in other
generations have offspring distribution $(\widehat\pi_k)_{k \ge 0}$, given by
\#
\widehat\pi_k
= \mathbb{P}(\widehat D - 1 = k)
%= \frac{k+1}{\mathbb{E}[D]} \, \mathbb{P}(D = k+1) 
= \frac{(k+1)\pi_{k+1}}{\sum_{k\ge 1}k\pi_k},
\label{eq:def-GW}
\#
where $\widehat D$ denotes the size-biased version of $D$
as defined in \eqref{eq:def-GW}.
\end{definition}

\subsubsection{Local Weak Convergence}

We establish the
local weak convergence of $\widetilde{G}_n$ in Lemma \ref{lmm:lwc_rrg} by verifying the criterion stated in \prettyref{lmm:lwc}. Recall that $\widetilde{G}_n$ is the subgraph of a random $d$-regular graph induced by the random vertex set $\supp(D).$ 
\begin{proof}[Proof of \prettyref{lmm:lwc_rrg}]
It is well established that random $d$-regular graph $G_n$ converges locally in probability to the rooted $d$-regular tree $(T_d, o)$; see, for example, \citet[Theorem 2.17]{van2024random}. In particular, let $o_n$ be a vertex chosen from $V(G_n)$ uniformly at random. We have
\# \label{eq:local-wk-regular}
\pr\left( B_r^{ G_n}(o_n) \cong T_d \right) \to 1.
\#
\iffalse
In particular, for any $\epsilon>0$, there exists $N(\epsilon) >0$ such that for any $n>N$ and $v\in V(G_n)$, we have
\# \label{eq:local-wk-regular}
\left|\pr(B_r^{G_n}(v) \cong T_d) - 1\right| < \epsilon.
\#
\xn{I rewrite the proof a little bit by emphasizing the conditional event $\{B_r^{G_n}(v) \cong T_d\}$ holds for any $v$ uniformly (with the same $N$ for all $v$ in the $\epsilon, \delta$ statement). This helps to establish the convergence at a uar $\widetilde{o}_n$ from $\widetilde{G}_n$ uniformly for all $\widetilde{G}_n$.}
\fi

We now consider $\widetilde G_n$, obtained by independently deleting each vertex in $G_n$ with probability $1-p$.
\iffalse
For each surviving vertex $v$ in $\widetilde G_n$, its local neighborhood in $G_n$ satisfies $\pr(B_r^{G_n}(v)\cong B_r^{T_d}(o)) \to 1$ for any integer $r\ge 1$. Since the vertex deletions are independent of the graph, the local neighborhood $B_r^{\widetilde G_n}(v)$ converges to $(T_d, o)$ after independent vertex deletion. 

It is well-known that the latter is given by the unimodular Galton--Watson tree with offspring distribution $\text{Bin}(d, p)$ \citep{lyons2017probability}. Thus, we conclude the proof of the lemma.
\fi
In view of~\prettyref{lmm:lwc}, it suffices to show that for any rooted tree $H_\star$ and any radius $r\ge 1$,
\begin{align}
p^{\widetilde{G}_n}(H_\star) = \frac{1}{|V(\widetilde G_n)|} \sum_{v\in V( \widetilde G_n)}\indc{\{B_r^{\widetilde G_n}(v)\cong H_\star\}} \xlongrightarrow{\mathbb{P}} \nu(B_r^{ G}(o)\cong H_\star),
\label{eq:lwc_d_regular_tree}
\end{align}
where $(G, o) \sim \nu$ is a unimodular Galton--Watson tree with offspring distribution $\Binom(d,p)$. We will apply a standard second-moment method. First, we show that
$$
\E\big[p^{\widetilde{G}_n}(H_\star)\big] \to \nu(B_r^{ G}(o)\cong H_\star).
$$
Let $\widetilde o_n$ be a vertex chosen from $V(\widetilde G_n)$ uniformly at random. 
By the definition of $p^{\widetilde{G}_n}(H_\star)$, we have
\$
\E\big[p^{\widetilde{G}_n}(H_\star)\big] & = \pr\left(B_r^{\widetilde G_n}(\widetilde o_n)\cong H_\star\right) \\
& = \pr\left(B_r^{\widetilde G_n}(\widetilde o_n)\cong H_\star \Biggiven B_r^{ G_n}(\widetilde o_n) \cong T_d \right) \pr\left( B_r^{ G_n}(\widetilde o_n) \cong T_d \right)  \\
&\qquad + \pr\left(B_r^{\widetilde G_n}(\widetilde o_n)\cong H_\star, B_r^{ G_n}(\widetilde o_n) \not\cong T_d \right) \\
& = \nu\left(B_r^{G}(o)\cong H_\star \right) \pr\left( B_r^{ G_n}(\widetilde o_n) \cong T_d \right) + \pr\left(B_r^{\widetilde G_n}(\widetilde o_n)\cong H_\star, B_r^{ G_n}(\widetilde o_n) \not\cong T_d \right) \\
& \to \nu\left(B_r^{G}(o)\cong H_\star \right),
\$
\iffalse
\$ 
\E\big[p^{\widetilde{G}_n}(H_\star)\big] & = \E_{V(\widetilde G_n)}\Bigg[\frac{1}{|V(\widetilde G_n)|} \sum_{v\in V( \widetilde G_n)}\pr\left(B_r^{\widetilde G_n}(v)\cong H_\star\right) \Bigg] \\
& = \E_{V(\widetilde G_n)}\Bigg[\frac{1}{|V(\widetilde G_n)|} \sum_{v\in V( \widetilde G_n)}\pr\left(B_r^{\widetilde G_n}(v)\cong H_\star \Biggiven B_r^{ G_n}(v) \cong T_d \right) \pr\left( B_r^{ G_n}(v) \cong T_d \right) \Bigg] \\
&\qquad + \E_{V(\widetilde G_n)}\Bigg[\frac{1}{|V(\widetilde G_n)|} \sum_{v\in V( \widetilde G_n)}\pr\left(B_r^{\widetilde G_n}(v)\cong H_\star, B_r^{ G_n}(v) \not\cong T_d \right) \Bigg] \\
& = \nu\left(B_r^{G}(o)\cong H_\star \right) \cdot \E_{V(\widetilde G_n)}\Bigg[\frac{1}{|V(\widetilde G_n)|} \sum_{v\in V( \widetilde G_n)} \pr\left( B_r^{ G_n}(v) \cong T_d \right) \Bigg]  \\
&\qquad + \E_{V(\widetilde G_n)}\Bigg[\frac{1}{|V(\widetilde G_n)|} \sum_{v\in V( \widetilde G_n)}\pr\left(B_r^{\widetilde G_n}(v)\cong H_\star, B_r^{ G_n}(v) \not\cong T_d \right) \Bigg] \\
& \to \nu\left(B_r^{G}(o)\cong H_\star \right),
\$
\fi
%\xn{Subtlety here on the second line where $o_n$ is uar from $V(\widetilde{G}_n)$. But it should be very similar.} 
where the third equality follows because $B_r^{ G_n}(\widetilde o_n) \cong T_d$ is a tree and the vertex deletions are independent of the graph, the local neighborhood $B_r^{\widetilde G_n}(\widetilde o_n)$ converges to $(T_d, o)$ after independent vertex deletions. This limit is the unimodular Galton--Watson tree with degree distribution $\text{Bin}(d, p)$ \citep{lyons2017probability}. % \nbr{I am confused. Isn't this your definition?} 
The final convergence follows from \eqref{eq:local-wk-regular}.

\iffalse
\begin{align*}
\expect{p^{\widetilde{G}_n}(H_\star)} 
%&=\frac{1}{|V(\widetilde G_n)|} \sum_{v\in V( \widetilde G_n)} \prob{B_r^{\widetilde G_n}(v)\cong H_\star} \\
& = \prob{B_r^{\widetilde G_n}(o_n)\cong H_\star} \\
& = \prob{B_r^{\widetilde G_n}(o_n)\cong H_\star \mid   
B_r^{ G_n}(o_n) \cong T_d} 
\prob{B_r^{ G_n}(o_n) \cong T_d} \\
 & \quad + \prob{B_r^{\widetilde G_n}(o_n)\cong H_\star,  
B_r^{ G_n}(o_n)     \not\cong T_d} \\
& = \nu(B_r^{ G}(o)\cong H_\star) \prob{B_r^{ G_n}(o_n) \text{ is a tree} }
+\prob{B_r^{\widetilde G_n}(o_n)\cong H_\star,  
B_r^{ G_n}(o_n) \text{ is not a tree}},
\end{align*}
$B_r^{ \widetilde{G}_n}(o_n)$ is obtained
from $B_r^{ G_n}(o_n) $  by deleting each non-root vertex in the tree independently with probability $p$.
It follows that 
\begin{align*}
\expect{p^{\widetilde{G}_n}(H_\star)}
- \nu(B_r^{ G}(o)\cong H_\star)
& = 
-\nu(B_r^{ G}(o)\cong H_\star)
\prob{B_r^{ G_n}(o_n) \text{ is not a tree} } \\
& \quad + \prob{B_r^{\widetilde G_n}(o_n)\cong H_\star,  
B_r^{ G_n}(o_n) \text{ is not a tree}} \\
& \to 0,
\end{align*}
where the last line follows from 
$\prob{B_r^{ G_n}(o_n) \text{ is not a tree}}\to 0.$
\fi

Next, we show $\var(p^{\widetilde{G}_n}(H_\star)) \to 0$. By definition, we have 
\begin{align*}
\E\big[\big(p^{\widetilde{G}_n}(H_\star)\big)^2\big] &= \prob{B_r^{\widetilde G_n}(o_n^{(1)})\cong H_\star, B_r^{\widetilde G_n}(o_n^{(2)})\cong H_\star},
\end{align*}
where $o_n^{(1)}$
and $o_n^{(2)}$  are two vertices chosen from $V(\widetilde G_n)$   independently and uniformly at random. We define the event $\calE$ such that  
$B_r^{G_n}(o_n^{(1)}) \cong T_d$
and $B_r^{G_n}(o_n^{(2)}) \cong T_d$ are vertex-disjoint trees. Then 
\begin{align*}
\prob{B_r^{\widetilde G_n}(o_n^{(1)})\cong H_\star, B_r^{\widetilde G_n}(o_n^{(2)})\cong H_\star} 
& \le  \prob{B_r^{\widetilde G_n}(o_n^{(1)})\cong H_\star, B_r^{\widetilde G_n}(o_n^{(2)})\cong H_\star \Biggiven
\calE} \prob{\calE} 
+ \prob{ \calE^c} \\
& = \nu^2(B_r^{ G}(o)\cong H_\star) \prob{\calE} 
+ \prob{ \calE^c},  
\end{align*}
where the last equality holds because when $B_r^{G_n}(o_n^{(1)})$
and $B_r^{G_n}(o_n^{(2)})$ are vertex-disjoint  trees,
$B_r^{ \widetilde{G}_n}(o_n^{(1)})$ 
and $B_r^{\widetilde G_n}(o_n^{(2)})$ are obtained
from $B_r^{G_n}(o_n^{(1)})$
and $
B_r^{G_n}(o_n^{(2)})$ by deleting each non-root vertex in the two trees independently with probability $1-p$.

Next, we prove that $\prob{\calE^c} \to 0$. We define event $\calE_1$
(resp.\ $\calE_2$)
such that $B_{2r}^{ {G}_n}(o_n^{(1)})$ (resp.\ $B_{2r}^{ {G}_n}(o_n^{(2)})$) is a tree. Then 
\begin{align*}
\prob{\calE^c}
\le  \prob{ \calE_1^c }
+ \prob{ \calE_2^c} 
+ \prob{\calE^c \mid \calE_1 \cap \calE_2 }.
\end{align*}
Note that $\prob{\calE_1^c} \to 0$ and $\prob{\calE_2^c} \to 0$ due to \eqref{eq:local-wk-regular}. Moreover, 
$$
\prob{\calE^c \mid \calE_1 \cap \calE_2 }
= 
\prob{o_n^{(2)} \in V\left(B_{2r}^{ {G}_n}(o_n^{(1)})\right)}
= \frac{\Big|V\left(B_{2r}^{ {G}_n}(o_n^{(1)})\right)\Big|}{n}
\le \frac{d^{2r}}{n}  \to 0.
$$
Therefore, we have proved that $\prob{\calE^c} \to 0$. It follows that 
$$
\E\big[\big(p^{\widetilde{G}_n}(H_\star)\big)^2\big]=\prob{B_r^{\widetilde G_n}(o_n^{(1)})\cong H_\star, B_r^{\widetilde G_n}(o_n^{(2)})\cong H_\star}
\to \nu^2(B_r^{ G}(o)\cong H_\star),
$$
and hence
$$
\var\big(p^{\widetilde{G}_n}(H_\star)\big) =\E\left[\big(p^{\widetilde{G}_n}(H_\star)\big)^2\right]
-
\left(\E\big[p^{\widetilde{G}_n}(H_\star)\big]\right)^2 \to 0.
$$
Finally, applying Chebyshev's inequality leads to the desired~\prettyref{eq:lwc_d_regular_tree}.
\end{proof}

\subsubsection{Maximum Value of $F_{d,p}$}
We now prove $\max_{t \in [0,1]} F_{d,p}(t)=(1+o_d(1)) (1-p)^d$ for large constant $d$, as shown in \prettyref{lmm:fdddd}. The following lemma characterizes local extrema of $F$ for a generic degree distribution $\pi.$ We say $t$ is a local extremum of $F$, if $F'(t)=0.$

\begin{lemma}\label{lmm:F_extrema}
Given a degree distribution $\pi$, let $\phi(t)=\sum_k \pi_k t^k$ and  $F$ be defined as in~\prettyref{eq:def_F}. If $\pi_k>0$ for $k=1,2,3$
and $\log \phi''(t)$ is strictly concave, then
$F$ has either one or three local extrema, with the first extremum always being a global maximum.
\end{lemma}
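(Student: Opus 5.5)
Differentiating \eqref{eq:def_F} gives a clean form for $F'$, and the plan is to locate and count its zeros through a fixed-point equation. Set $g(t)\triangleq \phi'(1-t)/\phi'(1)$ on $[0,1]$. The term-by-term derivatives are:
\[
\tfrac{d}{dt}\bigl[t\phi'(1-t)\bigr]=\phi'(1-t)-t\phi''(1-t),\qquad
\tfrac{d}{dt}\,\phi(1-t)=-\phi'(1-t),
\]
and the derivative of $\phi(1-g(t))$ is $\phi'(1-g(t))\,\phi''(1-t)/\phi'(1)$. Combining them,
\[
F'(t)=\phi''(1-t)\bigl[g(g(t))-t\bigr].
\]
Since $\pi_2>0$, we have $\phi''>0$ on $(0,1]$. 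Hence the local extrema of $F$ are exactly the fixed points of $h\triangleq g\circ g$, and $\operatorname{sign}F'(t)=\operatorname{sign}(h(t)-t)$.

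\textbf{Step 1: the endpoint signs.} The map $g$ is strictly decreasing, with $g(0)=1$ and $g(1)=\pi_1/\phi'(1)\in(0,1)$, using $\pi_1>0$. Therefore $h(0)=g(1)>0$ and $h(1)=g(g(1))<g(0)=1$. So $h(t)-t$ is positive at $0$ and negative at $1$, and has at least one zero. Moreover, $g$ has a unique fixed point $t_0$, and non-symmetric fixed points of $h$ come in pairs $(t,g(t))$ with $t\neq g(t)$.

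\textbf{Step 2: at most three zeros (the main step).} I will show that $h'(t)=1$ has at most two solutions. By Rolle's theorem, $h(t)-t$ then has at most three zeros, and combined with Step 1 this gives exactly one or three extrema.

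Write $\psi(t)\triangleq \log(-g'(t))=\log\phi''(1-t)-\log\phi'(1)$. Then
\[
\log h'(t)=\psi(g(t))+\psi(t).
\]
Three facts drive the argument:
\begin{itemize}
\item $\psi$ is strictly concave, because $\log\phi''$ is strictly concave by hypothesis.
\item $\psi$ is nonincreasing, because $\phi''$ is increasing.
\item $g$ is convex, because $g''(t)=\phi'''(1-t)/\phi'(1)\ge 0$.
\end{itemize}
A concave nonincreasing function composed with a convex function is concave, so $\psi\circ g$ is concave. Hence $\psi+\psi\circ g$ is strictly concave and takes the value $0$ at most twice, as claimed. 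I expect this to be the crux. The only delicate points are tangential zeros of $h(t)-t$, which count as extrema but not as sign changes, and checking that $\pi_3>0$ supplies whatever strictness is needed (for instance $\phi'''>0$). I would handle both by counting zeros with multiplicity in the Rolle argument.

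\textbf{Step 3: the first extremum is a global maximum.} If there is a single extremum $t_1$, then $F$ increases on $[0,t_1]$ and decreases afterwards, so $t_1$ is the global maximizer.

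If there are three extrema $t_1<t_0<t_3$, the pairing from Step 1 forces $t_3=g(t_1)$ and $t_1=g(t_3)$. The sign pattern of $F'$ is $+,-,+,-$ on the four intervals, so $t_1$ and $t_3$ are the only candidates for the maximum. At a pair with $s=g(t)$ and $t=g(s)$, we have $\phi'(1-t)=\phi'(1)s$. Substituting,
\[
F(t)=\phi'(1)\,ts+\phi(1-t)+\phi(1-s)-1,
\]
which is symmetric in $(t,s)$. Therefore $F(t_1)=F(t_3)$, and the first extremum $t_1$ attains the global maximum.
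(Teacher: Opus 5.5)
Your proof is correct. Its outer structure matches the paper's: reducing $F'=0$ to fixed points of $h=g\circ g$, the endpoint signs, the unique fixed point of $g$ with the remaining fixed points of $h$ paired by $t\mapsto g(t)$, and the symmetric expression $\phi'(1)ts+\phi(1-t)+\phi(1-s)-1$ giving $F(t_1)=F(t_3)$.

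\textbf{The counting step.} This is where you and the paper differ, though both end with ``$h'-1$ has at most two zeros, then Rolle.''
\begin{itemize}
\item The paper factors $h''(t)=g'(g(t))\,(g'(t))^2\,[A(g(t))+B(t)]$, with $A=g''/g'$ and $B=g''/(g')^2$. It then checks separately that $A\circ g$ and $B$ are increasing, so $h''$ has at most one zero.
\item You observe that $\log h'=\psi\circ g+\psi$. The composition rule (concave nonincreasing after convex) then makes this sum strictly concave in one line.
\end{itemize}
Both routes use the same hypotheses: log-concavity of $\phi''$ and $\phi'''\ge 0$. Your route avoids computing $h''$ and the two auxiliary monotonicity checks. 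It also gives a stronger conclusion: $h'$ is strictly log-concave, whereas the paper only shows $(\log h')'=g'(t)[A(g(t))+B(t)]$ has one sign change.

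\textbf{Sign pattern.} You are more careful here than the paper, which just asserts that $F'$ changes sign at all three extrema. Your sketch closes easily. If $h(t)-t$ has zeros $a<b<c$, Rolle places zeros of $h'-1$ in $(a,b)$ and in $(b,c)$. These exhaust its at most two zeros, so $h'\neq 1$ at $a,b,c$, and each zero is a transversal crossing. The count ``one or three'' needs no tangency discussion at all, since the involution $t\mapsto g(t)$ already forces the number of fixed points of $h$ to be odd.
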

\begin{proof}
To study local extrema of $F(t)$, we take the derivative of $F(t)$ and get
\begin{align}
F'(t)=\phi''(1-t)\left[\frac{\phi'\Big(1-\frac{\phi'(1-t)}{\phi'(1)}\Big)}{\phi'(1)}-t\right].
\label{eq:F_derivative}
\end{align}
By definition, $\phi(t)=\expect{t^X}$ where $X$ follows the degree distribution $\pi.$ Thus, $\phi'(t)=\expect{X t^{X-1}}>0$, $\phi''(t)=\expect{X(X-1)t^{X-2}}>0$, and $\phi'''(t)=\expect{X(X-1)(X-2)t^{X-3}}>0$ by the assumptions that
$\pi_k>0$ for $k=1,2,3.$ In particular, $\phi'(0)=\pi_1>0$, and $\phi'(1)=\expect{X}>0$.
Therefore, $t$ is a local extremum of $F$ if and only if
$$
\frac{\phi'\Big(1-\frac{\phi'(1-t)}{\phi'(1)}\Big)}{\phi'(1)}=t,
$$
or equivalently 
$f(f(t))=t$, where $f(t)=\frac{\phi'(1-t)}{\phi'(1)}$. Let $g(t)=f(f(t))$. Next, we prove that $g$ has either one or three fixed points. Note that 
$$
f'(t)=- \frac{\phi''(1-t)}{\phi'(1)}<0, \quad  \text{ and } f''(t)=\frac{\phi'''(1-t)}{\phi'(1)}>0,
$$
Thus, $f$ is a strictly decreasing, strictly convex, differentiable function in $[0,1].$ Moreover,
$f(0)=1$, $f(1)=\phi'(0)/\phi'(1) \in (0,1).$ Thus, $f$ has a unique fixed point, denoted by $t_1$, that is $f(t_1)=t_1.$ Clearly, 
$t_1$ is also a fixed point for $g.$

Now, suppose $g$ has an additional fixed point $t_0 \neq t_1$.  Let $t_2=f(t_0) \neq t_0.$ Then $f(t_2)=f(f(t_0))=t_0$ and
$f(f(t_2)) = f(t_0)=t_2$, so that $t_2$ is also a fixed point for $g.$ Without loss of generality, let us assume $t_0<t_2$; otherwise, we just switch the roles of $t_0$ and $t_2.$ %Note that $f(t)<t$ for $t<t_1$ and $f(t)>t$ for $t>t_2$. 
Note that $f(t_0)= t_2 > t_0$, $f(t_2)= t_0 < t_2$, and $f(t_1) = t_1$, where $f$ is strictly decreasing. Thus, $t_0<t_1<t_2.$
%Moreover, the secant line on $[t_0,t_1]$ connecting $f(t_0)$ and $f(t_1)$ has slope 
%$$
%\frac{f(t_2)-f(t_0)}{t_2-t_0} =-1.
%$$
%Since $f''(t)>0$, it follows that $[t_0,t_2]$ is the unique subinterval in $[0,1]$ where the secant slope equals $-1.$ Thus, such pair of fixed points $\{t_0,t_2\}$ must be unique. \nb{WJ: I don't think this can be deduced that this interval is unique.}
Next, we show such a pair of fixed points $\{t_0,t_2\}$ must be unique. Direct computation yields that 
$g'(t)=f'(f(t))f'(t)>0$ and 
$$
g''(t)=f''(f(t))\left(f'(t)\right)^2
+f'(f(t))f''(t)
={f'(f(t))\left(f'(t)\right)^2}
\left[ \frac{f''(f(t))}{f'(f(t))} + \frac{f''(t)}{\left(f'(t)\right)^2}\right].
$$
%\nb{$g''(t)=f'(f(t))(f'(t))^2[...]$ }\nbr{I think this should be correct.} 
Define
$$
A(t)\triangleq \frac{f''(t)}{f'(t)} = - \frac{\phi'''(1-t)}{\phi''(1-t)}
\quad \text{ and }  \quad 
B(t) \triangleq \frac{f''(t)}{(f'(t))^2} = \phi'(1) \frac{\phi'''(1-t) }{
\left(\phi''(1-t)\right)^2}.
$$
Therefore, $g''(t)=0$ if and only if 
$A(f(t))+B(t)=0$. 
By the assumption that $\log (\phi''(t))$ is strictly concave, we have
$\phi'''(t)/\phi''(t)$ is decreasing. 
Therefore, $A(t)$ is decreasing and hence 
$A(f(t))$ is increasing. Moreover, since $1/\phi''(1-t)$ is increasing, $B(t)$ is also increasing. Thus, $A(f(t))+B(t)$ is increasing. 
Therefore, there exists at most one point at which $A(f(t))+B(t)=0$, implying that $g''(t)$ also has at most one zero.
By Rolle's theorem, it follows that there are at most two distinct points at which $g'(t)-1=0$, which further implies that there are at most three distinct points at which $g(t)-t=0$.

% $$
% \left(\log (\phi''(t))\right)'' = 
% \left(\frac{\phi'''(t)}{\phi''(t)}\right)'
% =\frac{1}{\left(\phi''(t)\right)^2}
% \left( \phi''''(t)\phi''(t) - \left(\phi'''(t)\right)^2 \right) \le 0.
% $$
% Hence $\phi''''(t)\phi''(t) \le \left(\phi'''(t)\right)^2.$ It follows that 
% $$
% \left(A(f(t))\right)'= A'(f(t)) f'(t) = 

% $$

% Therefore, $f''(t)/f'(t)=-\phi'''(1-t)/\phi''(1-t)$ is also non-increasing.

% Note that 
% $$
% [A(f(t))]'= A'(f(t))f'(t) = 
% $$

% We can check that $g''(t)$ is non-increasing. To see this, 
%  Since $f(t)$ is decreasing,
% it follows that $f''(f(t))/f'(f(t))$ is non-decreasing. Moreover, $f'(t)$ is increasing, so that 
% $f''(t)/(f'(t))^2$ is non-increasing. 

Finally, we prove that the smallest fixed point of $g$, \ie, the smallest local extremum of $F,$ denoted by $t_*$, must be a global maximum of $F$. 
Note that 
$F'(0)=\phi''(1) \phi'(0)/\phi'(1)>0$
and $F'(1)<0$. Since $t_*$ is the first point at which $F'(t)$ crosses zero, it follows that $F'(t)>0$ for $t<t_*$. Therefore, 
$t_*$ must be a local maximum of $F$. 
Now, if $t_*=t_1$ is the unique fixed point of $g$, then $t_*$ is also the unique extremum point of $F$, so clearly $t_*$ is the global maximum of $F$. Otherwise, $g$ has three fixed points, $t_*=t_0<t_1<t_2$, which are the three local extrema of $F$. Thus $F'(t)$ crosses zero three times at $t_0, t_1, t_2,$ so that $F'(t)>0$ for $[0,t_0)\cup (t_1,t_2)$ and $F'(t)<0$
for $t \in (t_0,t_1)\cup (t_2,1]$. 
Thus, $t_0,t_2$ are local maxima and $t_1$ is the local minimum.  Finally, 
\begin{align*}
   F(t_0)= &t_0\phi'(1-t_0) + \phi(1-t_0) + \phi\left(1 - \frac{\phi'(1-t_0)}{\phi'(1)}\right) - 1\\
  % = & f(t_2) \phi'\left(1-f(t_2)\right) + \phi 
  % \left(1-f(t_2) \right) + \phi(1-t_2) -1 \\
   =&\frac{\phi'(1-t_2)}{\phi'(1)}\phi'(1-\frac{\phi'(1-t_2)}{\phi'(1)})+ \phi\left(1 - \frac{\phi'(1-t_2)}{\phi'(1)}\right)+\phi(1-t_2) - 1\\
   =&\phi'(1-t_2) t_2 + \phi\left(1 - \frac{\phi'(1-t_2)}{\phi'(1)}\right)  + \phi(1-t_2)- 1 
   =F(t_2),
\end{align*}
where the second equality holds by plugging in $t_0=f(t_2)$ and $f(t_0)=t_2$, and the third equality holds because $f(f(t_2))=t_2.$ 
Therefore, $t_0$ and $t_2$ are global maxima. 
\end{proof}

With~\prettyref{lmm:F_extrema}, we show \prettyref{lmm:fdddd} by analyzing the smallest local extremum of $F.$
\begin{proof}[Proof of \prettyref{lmm:fdddd}]
For ease of notation, we abbreviate $F_{d,p}$ as $F$. 
Recall that the probability generating function of $\text{Bin}(d,p)$ is $\phi(t)=(pt+q)^d$. Thus,
$\phi'(t)=dp(pt+q)^{d-1}$ and $\phi''(t)=d(d-1)p^2(pt+q)^{d-2}.$ 
Then $\log \phi''(t)$ is strictly concave. It follows from~\prettyref{lmm:F_extrema} that the smallest local extremum, denoted by $t_*$, is the global maximum. Thus, it suffices to bound $F(t_*).$
We first bound $t_*.$
Recall that $t_*$ is the smallest fixed point of $g(t)=f(f(t))$, where 
$$
f(t)=(1-pt)^{d-1}, \quad \text{ and } \quad g(t)=
\left(1-p(1-pt)^{d-1} \right)^{d-1}.
$$
Note that 
$$
g(t) \le \left( 1 - p \left( 1-pt(d-1)\right)  \right)^{d-1}
= q^{d-1} \left(1 + \frac{p^2(d-1)t}{q} \right)^{d-1}
\le q^{d-1} \left( 1+  \frac{2p^2(d-1)^2t}{q} \right),  
$$
where the first inequality follows from $(1-x)^a \ge 1-ax$ and the last inequality holds for all $t$
such that $p^2(d-1)^2 t/q \le 1$ because $(1+x)^a
\le 1+2ax$ for all $x \le 1/a$. Therefore,
$$
g(t)-t
\le q^{d-1} - \left( 1 - 2p^2(d-1)^2 q^{d-2} \right) t.
$$
Choose 
$$
t_0 \triangleq  \frac{q^{d-1}}{1 -2p^2(d-1)^2 q^{d-2} }. 
$$
For all sufficiently large $d$, we have $p^2(d-1)^2q^{d-2}\leq 1/3$, which implies that $p^2(d-1)^2t_0/q\leq 1$ and hence $g(t_0)-t_0\leq 0$. On the other hand, $g(0)=q^{d-1}>0$. Since $t_*$ is the first point at which $g(t)-t$ crosses zero, it follows that $t_*\leq t_0$.

Next, we bound $F(t_*).$ By the mean-value theorem,
$
F(t_*) = F(0) + F'(\xi) t_*,
$
for some $\xi \in [0, t_*].$ Note that $F(0)=q^d$,
so it remains to bound $F'(\xi)$. Since $F'(0)=\phi''(1) \phi'(0)/\phi'(1)>0$, $t_*$ is the first point at which $F'$ crosses zero, it follows that
$F'(\xi)\ge 0.$ Recall the expression of $F'(t)$ as given in~\prettyref{eq:F_derivative}, we have
$$
F'(\xi) \le \phi''(1-\xi) g(\xi) \le 
 \phi''(1) g(t_*) \le \phi''(1) t_*
 \le d(d-1)p^2 t_0.
$$
where the second inequality holds because $\phi''$ and
$g$ are increasing functions. 
Therefore, 
$$
q^d \le F(t_*) \le q^d + d(d-1)p^2 t_0^2
\le q^d \left[ 1 + \frac{d(d-1)p^2 q^{d-2}}{
\left(1 -2p^2(d-1)^2 q^{d-2}\right)^2} \right]
\le q^d \left( 1+ 9 d^2 p^2 q^{d-2} \right).
$$
It follows that $F(t_*)/q^d \to 1$ as $d \to \infty.$

Finally, we show that $F_{d,p}^* \le \epsilon$ for $d$ in \eqref{eq:match-const-d} with $c = [6p/(q\log(1/q))]^2$ and sufficiently small $\epsilon$. Applying
\prettyref{lmm:fdddd} with $d \ge (1+c\epsilon\log(1/\epsilon))\log(1/\epsilon)/\log(1/q)$, we have
\iffalse
\$F_{d,p}^* & \le q^d ( 1+ 9d^2 p^2 q^{d-2}) \le \epsilon^{1+c\epsilon\log(1/\epsilon)}\left[1+c\epsilon^{1+c\epsilon\log(1/\epsilon)}\log^2(1/\epsilon) \right] \\
& = \epsilon \left[1 - c\epsilon\log^2(1/\epsilon) + c^2\epsilon^2\log^4(1/\epsilon)/2 +  o\left(\epsilon^2\log^4(1/\epsilon)\right) \right] \left[1+c\epsilon^{1+c\epsilon\log(1/\epsilon)}\log^2(1/\epsilon) \right] \\
& \le \epsilon \left[1 - c\epsilon\left(1 -  \epsilon^{c\epsilon\log(1/\epsilon)}\right)\log^2(1/\epsilon) + c^2\epsilon^2\log^4(1/\epsilon)/2 +  o\left(\epsilon^2\log^4(1/\epsilon)\right) \right] \\
& = \epsilon \left[1 - c^2\epsilon^2\log^4(1/\epsilon)/2 +  o\left(\epsilon^2\log^4(1/\epsilon)\right) \right] \le \epsilon\$
\fi
%\nbr{The above can be simplified as follows: 
\$
F_{d,p}^*  \le q^d ( 1+ 9d^2 p^2 q^{d-2}) \le \epsilon^{1+c\epsilon\log(1/\epsilon)}\left[1+c\epsilon^{1+c\epsilon\log(1/\epsilon)}\log^2(1/\epsilon) \right] 
 = \epsilon e^{-x} \left(1+x e^{-x} \right)  \le \epsilon,
\$
where the second inequality holds when $c\epsilon \log(1/\epsilon)<1$; the equality holds by taking 
$x=c\epsilon \log^2(1/\epsilon)$
so that 
$\epsilon^{c\epsilon\log(1/\epsilon)} = e^{-x}$; the last inequality holds because $1+xe^{-x} \le 1+ x \le e^{x}$ for all $x\ge 0.$ 
%}
%where  the equalities follow from the Taylor expansion of $e^x$ at $x=0$ applied to $\epsilon^{c\epsilon\log(1/\epsilon)} = e^{-c\epsilon\log^2(1/\epsilon)}$; and the last inequality holds for sufficiently small $\epsilon$.
\end{proof}

\subsubsection{Proof of Theorem \ref{thm:constant d} via \prettyref{prop:match-const}} Finally, we prove Theorem \ref{thm:constant d}.
\begin{proof}[Proof of Theorem \ref{thm:constant d}]
\prettyref{prop:match-const} yields that
\$ 
\lim_{n\to\infty} \expects{|L(G,D)/(np) - F^*_{d,p}|}{G,D} = 0.
\$
Recall that $F_{d,p}^* \le \epsilon$ from \prettyref{lmm:fdddd}. Thus, by Markov's inequality, we have
\$ 
\probs{L(G) \ge \epsilon n}{G} \le \probs{|L(G)/(np) - F^*_{d,p}| \ge q\epsilon/p }{G} \le \frac{p\E_{G}\big[|L(G)/(np) - F^*_{d,p}|\big]}{q\epsilon}.
\$
Recall that $L(G) = \E_D[L(G,D)]$. Then by Jensen's inequality, we have $|L(G)/(np) - F^*_{d,p}| \le \E_D[|L(G,D)/(np) - F^*_{d,p}|]$ and
\$ 
\expects{|L(G)/(np) - F^*_{d,p}| }{G} \le \expects{|L(G,D)/(np) - F^*_{d,p}| }{G,D}. 
\$
Combining these three equations, we conclude that $ 
\lim_{n\to\infty} \probs{L(G) \ge \epsilon n}{G}= 0$.
\iffalse
In other words, for any $\epsilon,\kappa>0$, there exists $N>0$ such that for any $n\ge N$,
\$
\probs{\left|{L(G_{n,d}, D)}/(np) - F_{d,p}^*\right| \ge \epsilon(1-p)/(2p)}{G,D} \le \kappa\epsilon(1-p)/2.
\$
Note that $F_{d,p}^* \le q^d ( 1+ 9d^2 p^2 q^{d-2}) \le \epsilon^{1+\epsilon}(1+o_\epsilon(1))\le \epsilon$ for $d$ in \eqref{eq:match-const-d} and sufficiently small $\epsilon$. Thus, $L(G_{n,d}, D) \ge \epsilon n(1+p)/2 \ge [F_{d,p}^* + \epsilon(1-p)/(2p)]
np$ implies that $|{L(G_{n,d}, D)}/(np) - F_{d,p}^* | \ge \epsilon(1-p)/(2p)$. Consequently,
\#\label{eq:prob-gd-loss}
\probs{L(G_{n,d}, D)\ge \epsilon n(1+p)/2}{G,D} \le \kappa\epsilon(1-p)/2.
\#
Moreover, since $L(G,D)\le n$ always holds, conditioned on $G$, we have
\$
\E_{D}[L(G, D)] & = \E_{D}\big[L(G, D)\indc{L(G,D)\ge \epsilon n(1+p)/2}\big] 
+ \E_{D}\big[L(G, D)\indc{L(G,D)< \epsilon n(1+p)/2}\big] \\
& \le  n \probs{L(G,D)\ge \epsilon n(1+p)/2}{D} + \epsilon n(1+p)/2.
\$  
Thus, $\E_{D}[L(G, D)] \ge \epsilon n$ implies that $\probs{L(G,D)\ge \epsilon n(1+p)/2}{D} \ge \epsilon (1-p)/2$.  Therefore, we have
\$ 
& \probs{\E_{D}[ L(G, D)] \ge \epsilon n}{G} 
\le \probs{\probs{L(G,D)\ge \epsilon n(1+p)/2}{D} \ge \epsilon (1-p)/2}{G} \\
& \qquad \qquad \le \frac{2\E_G\left[\probs{L(G,D)\ge \epsilon n(1+p)/2}{D}\right] }{\epsilon(1-p)}  = \frac{2\probs{L(G,D)\ge \epsilon n(1+p)/2}{G,D}}{\epsilon(1-p)} \le \kappa,
\$
where the second inequality holds by Markov's inequality, the equality by the law of total probability, and the third inequality by \eqref{eq:prob-gd-loss}.
\fi
\end{proof}

\section{Tight Bounds on Matching Loss for $K$-Chains}\label{sect:app-chain}

In this section, we provide a tight characterization for the matching loss of $K$-chain graphs. 
%\begin{definition}
A $K$-chain graph on $n$ vertices, denoted by $C_{n,K}$, is defined as follows: place the vertices $1,\cdots, n$ on a circle in sequential order. Two distinct vertices $i, j\in [n]$ are connected if and only if their circular distance satisfies $d(i,j) \triangleq \min\{|i-j|, n-|i-j|\} \le K$.
%\end{definition}
These graphs are also known as $2K$-nearest neighbor graphs on a circle.

The motivation for studying the $K$-chain designs comes from the process flexibility design literature \citep{jordan1995principles}. 
%The $K$-chain graphs are regular graphs. 
%K-chain: benefits and definition %\nbr{move it to the appendix} \nbr{check with Yehua on the terminology of $k$-chains???} To address the location information in the graph, we consider the KNN(K Nearest Neighbour) designs. We denote the graph with $n$ nodes and each node connected with its surrounding $k$ nodes, left and right separately as $C_{n,k}$. 
The existing work ~\citep[Corollary~1]{feng2024designing} shows that the matching loss of $K$-chains satisfies $L(C_{n,K}) \leq 4nq^{d/4}/d$. We improve this upper bound and further derive a nearly matching lower bound, obtaining a tight characterization of $L(C_{n,K})$.
\begin{theorem}\label{thm:KNN}
%Fix any constant $c>0$ and sufficiently large $n$ \nbr{where do you need this???}. For any $K$ satisfying $n pq^K \ge c$, 
For any $K$ and $n$, we have
\$
%npq^{K}\bigg(\frac{(1-q^K)^2}{2-q^K} - e^{-c/2}\bigg) - 1
npq^{K}\bigg(\frac{(1-q^K)^2}{2-q^K} - \exp(-(n-3K-2)pq^K/2)\bigg) - 1 \le L(C_{n,K}) \le n p q^K \bigg(  \frac{(1-q^K)^2}{2-q^K} + q^{K}+ 1 \bigg) + 1.
\$
%For any constant $K$, there exists $N>0$ such that for any $n \ge N$, \$L(C_{n,K})=(1/2 \pm cq^K) npq^K =(1/2 \pm cq^{d/2}) npq^{d/2},\$ where $c$ is a constant independent of $n$ and $K$. \xn{It also works for $K = \log n/\log(1/q)$.}
\end{theorem}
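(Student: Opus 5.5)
The plan is to reduce the matching loss of $C_{n,K}$ under $\cD_p$ to counting \emph{odd runs} of surviving vertices around the circle, then compute the expected number of such runs exactly up to boundary effects.

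\textbf{Structural reduction.} Let $V=\supp(D)$. Call a maximal set of survivors a \emph{segment} if, walking around the circle, any two consecutive survivors in it are separated by at most $K-1$ deleted vertices. Consecutive survivors within circular distance $K$ are adjacent in $C_{n,K}$, so each segment contains a Hamiltonian path. Hence the subgraph induced by a segment of size $s$ has a matching of size $\lfloor s/2\rfloor$. Conversely, a run of at least $K$ consecutive deleted vertices separates two segments, since no edge has length more than $K$. So the segments are exactly the connected components of $C_{n,K}[V]$, except in the degenerate case where no run of $K$ deletions exists and all of $V$ is one component. Therefore
\[
\mu(C_{n,K}[V])=\sum_i\lfloor s_i/2\rfloor,
\]
and
\[
L(C_{n,K},D)=2\lfloor |V|/2\rfloor-2\sum_i\lfloor s_i/2\rfloor=\#\{\text{odd segments}\}-\indc{|V|\text{ odd}}.
\]
This accounts for the $\pm1$ terms in the statement, so it remains to bound $\E[\#\text{odd segments}]$ from above and below by $npq^K\cdot(\cdot)$.

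\textbf{Counting via segment starts.} A vertex $i$ \emph{starts} a segment if $i\in V$ and the $K$ vertices preceding it are deleted. This has probability $pq^K$, so there are $npq^K$ starts in expectation. Given a start at $i$, scan forward. After each survivor, the segment ends exactly when the next $K$ vertices are all deleted, which happens with probability $q^K$. Otherwise a further survivor appears within $K$ steps. On an infinite line these continuation events are independent across survivors, so the segment size is $\mathrm{Geom}(q^K)$, and
\[
\prob{\text{size odd}}=\sum_{k\ \mathrm{odd}}(1-q^K)^{k-1}q^K=\frac{1}{2-q^K}.
\]
Two corrections arise on the circle.

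First, the event that $i$ starts a segment uses the $K$ deletions before $i$, and these overlap the ending run of the previous segment. Also, the segment must terminate before wrapping back to $i$. I expect these overlaps and truncation to cost an additive $O(q^K)$ relative error in the odd-probability. Since $\frac{1}{2-q^K}-q^K=\frac{(1-q^K)^2}{2-q^K}$, this produces the main term $\frac{(1-q^K)^2}{2-q^K}$ in the lower bound. In the upper bound I would crudely add back $q^K+1$ to absorb both the overlap and the single wrap-around segment.

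Second, if no start exists at all, the count of odd segments is at most $1$. For the lower bound I must subtract the probability of this event times $npq^K$. To bound it, I pick roughly $(n-3K-2)/(K+1)$-separated, or otherwise disjoint, blocks. Either these give independent events "$i$ is a start", or a Janson/Harris-type inequality applies to the increasing–decreasing structure. Either route yields $\prob{\text{no start}}\le\exp(-(n-3K-2)pq^K/2)$.

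\textbf{Main obstacle.} The delicate step is making the lower bound rigorous under circular dependence. I would work with indicators $Y_i=\indc{i\text{ starts a segment whose size is odd}}$. I would lower bound $\E Y_i$ by conditioning on the first $2K+1$ vertices after $i$ and using the line computation. Truncation is handled by intersecting with the event that some other start exists, which is where the exponential term and the $n-3K-2$ offset enter. The upper bound is easier: I bound $\E[\#\text{odd segments}]$ by $\E[\#\text{starts}]\cdot\sup\prob{\text{odd}\mid\text{start}}$ plus $1$ for the possibly start-free configuration.
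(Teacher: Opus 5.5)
Your reduction $L(C_{n,K},D)=\#\{\text{odd segments}\}-\indc{|V|\text{ odd}}$ is exactly the paper's Lemma~\ref{lem:x-odd}. Your upper bound is also fine. In fact, $\#\{\text{odd segments}\}\le\#\{\text{starts}\}+1$ already gives $npq^K+1$, which lies below the stated right-hand side.

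Your main-term heuristic is off, but harmlessly. Conditional on a start at $i$, the vertices $i+1,\dots,i+n-K-1$ are untouched i.i.d. vertices, so there is no overlap cost at all. The per-start odd probability is therefore $1/(2-q^K)$ up to truncation. The paper's smaller constant $(1-q^K)^2/(2-q^K)=1/(2-q^K)-q^K$ arises only because it discards the isolated-vertex term $pq^{2K}$.

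\textbf{The gap is the exponential truncation term.} Two things go wrong.
\begin{itemize}
\item The relevant event is not "no start at all." It is the event that, given a start at $i$, the $n-K-1$ fresh vertices after $i$ contain no run of $K$ deletions. Off this event, the circular segment of $i$ coincides with the segment of the same forward process on a line.
\item The tools you name do not produce the stated exponent. Disjoint (or separated) windows of length $K+1$ give only $(1-pq^K)^{\lfloor n/(K+1)\rfloor}\approx e^{-npq^K/(K+1)}$, losing a factor of order $K$ in the exponent. Harris/Janson does not apply to start events, which mix increasing and decreasing coordinates, and nearby starts are mutually exclusive. Janson applied to the increasing events $\{j,\dots,j+K-1\text{ all deleted}\}$ has $\Delta\approx 2(q/p)\mu$ with $\mu=nq^K$. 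It yields exponents of roughly $nq^K(1-q/p)$ or $npq^K/(4q)$, both short of $npq^K/2$ once $q>1/2$.
\end{itemize}
The $q^K$ slack from your better main term cannot absorb this loss. For example, take $q=1/2$, $K=20$ and $npq^K=40$. The block bound leaves a relative error of about $e^{-40/21}\approx 0.15$, whereas the theorem tolerates only about $q^K+e^{-20}\approx 10^{-6}$.

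\textbf{The missing ingredient} is a recursion for $u_m$, the probability that $m$ i.i.d. vertices contain no run of $K$ deletions. For $m\ge K+1$,
\[
u_m=u_{m-1}-pq^K u_{m-K-1}.
\]
This holds because a string whose first $m-1$ entries have no $K$-run but whose last $K$ entries are deletions must have a survivor at position $m-K$ and no $K$-run among its first $m-K-1$ entries. Monotonicity gives $u_{m-K-1}\ge u_m$, hence $u_m\le u_{m-1}/(1+pq^K)$. Iterating yields $u_m\le(1+pq^K)^{-(m-K)}$, and $1+x\ge e^{x/2}$ on $[0,1]$ gives $u_m\le e^{-(m-K)pq^K/2}$.

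This is precisely what the paper's remainder bound for $R_{K,n}$ in Lemma~\ref{lem:no-k-cons} encodes. The paper reaches the main term differently: it enumerates components through $\alpha(i,j,K)$ and evaluates $F_K$ at $(p,q/p)$ and $(-p,-q/p)$, where you use the geometric law instead.

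With $m=n-K-1$ your route gives
\[
\mathbb{E}[\mathrm{odd}]\ge npq^K\Bigl(\tfrac{1}{2-q^K}-e^{-(n-2K-1)pq^K/2}\Bigr).
\]
This implies the stated lower bound and is slightly stronger than it.
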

By Theorem \ref{thm:KNN}, when $K$ satisfying $npq^K \ge 2$, the matching loss of $K$-chains is of order $npq^K = npq^{d/2}$. To ensure the loss bounded by an $\epsilon$-fraction $\epsilon n$, it is necessary that $d = 2K \ge 2\log(1/\epsilon)/\log(1/q)$. Similarly, to ensure a loss of at most a constant (independent of $n$), one must have $d =2K\ge 2\log(n)/\log(1/q)$.

%\nbr{need to add some intuition on the suboptimality of $k$-chains compared to random regular graphs.}
Recall that optimal matching loss is achieved by random-$d$ regular graphs of order $npq^d$.
Although $K$-chains are also regular graphs, the connections are restricted to the predetermined sets of neighboring vertices, which limits design flexibility and leads to a suboptimal matching loss.

To build intuition, observe that the matching loss is at least the expected number of odd components, since each odd component gives at least one unmatched vertex in any matching. In fact, as shown in Lemma \ref{lem:x-odd}, these two quantities coincide for $K$-chains. Moreover, the number of odd components is of the same order as the total number of connected components, which in turn is approximately the expected number of fully deleted $K$-segments. The latter is of order $npq^K$, as each vertex marks the end of a fully deleted $K$-segment if it lies in $\supp(D)$ while all of its $K$ predecessors lie outside $\supp(D)$, an event that occurs with probability $pq^K$. This order nearly matches the loss bounds established in \prettyref{thm:KNN}.
%easy to compute, as it corresponds to the expected number of fully deleted $K$-segments. Specifically, a new component starts at $i$ when $i\in\supp(D)$ while its $K$ predecessors are outside $\supp(D)$. This occurs with probability $pq^K$. Thus, we have \$ \E[c(C_{n,k}[\supp(D)])] = \sum_{i=1}^n \prob{\text{A component starts at vertex }i} = npq^K.\$
%Accordingly, the expected number of odd components, which determines the loss, is approximately half the expected number of components, namely about $npq^K/2$. 

In comparison, the lower bound in \prettyref{thm:lower_bound} comes from isolated vertices. In $K$-chain, a vertex is isolated when its $K$ predecessors and $K$ successors are all outside $\supp(D)$, which occurs with probability $pq^{2K}$. Thus, the expected number of isolated vertices in $C_{n,K}[\supp(D)]$ is $npq^{2K}$. %given by $\sum_{i=1}^n \prob{ \text{Vertex } i \text{ is isolated}} = npq^{2K}$. 
Hence, the expected number of odd components in $K$-chains is much larger than the number of isolated vertices, which explains the larger matching loss. In contrast, for random $d$-regular graphs, which achieve the optimal loss, these two quantities are asymptotically equal. %This is also consistent with Lemma \ref{eq:no_small_component}, which shows that random  $d$ regular graphs have no small components.} \nb{Probably delete the last sentence...}

\subsection{Proof of Theorem \ref{thm:KNN}}
Theorem \ref{thm:KNN} follows by observing that, for any induced subgraph $C_{n,K}[V]$, the number of unmatched vertices in a maximum matching equals the number of odd components, due to the one-dimensional geometric structure. %Recall that $o(G)$ denotes the number of odd components in the graph $G$. 
A similar result is provided for the one-dimensional random geometric graph in \citet[Proposition 1]{sentenac2025online}.

%For the upper bound, we study the residual graph resulted from randomly deleting nodes from KNN graph. The residual graph consists of a number of connected components. We apply a greedy matching rule to each connected component, in which we show that even component can be perfectly matched and odd component will have one unmatched. 

%More precisely, given a KNN graph $C_{n,k}$, let $\widetilde{C}_{n,k}$ denote the residual random subgraph after random node deletion. Let $X$ be the number of connected components in $\widetilde{C}_{n,k}$ and $X_{\odd}$ be the number of odd-sized components.

%Combining Lemmas \ref{lem:x-odd} and \ref{lem:exp-x-odd} completes the proof of Theorem \ref{thm:KNN}. the number of unmatched vertices in a maximum matching equals the number of its odd components. \cite{sentenac2025online}

\begin{lemma}\label{lem:x-odd}
For any $V \subset [n]$, %the number of unmatched vertices in a maximum matching equals the number of odd components  
we have $|V| - 2\mu(C_{n,K}[V]) = \odd(C_{n,K}[V])$.
\end{lemma}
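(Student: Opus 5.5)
We prove the two inequalities $|V|-2\mu(C_{n,K}[V])\ge \odd(C_{n,K}[V])$ and $\le$ separately. The first is immediate and holds for any graph: in any matching, each odd component leaves at least one of its vertices unmatched, because edges never cross components. This is also the case $S=\emptyset$ of the Tutte--Berge formula in Lemma~\ref{lmm:tutte}. For the reverse inequality, it suffices to show that every connected component $H$ of $C_{n,K}[V]$ has a matching leaving at most one vertex unmatched, i.e.\ a near-perfect matching. Summing over components then gives exactly $\odd(C_{n,K}[V])$ unmatched vertices.

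The key structural fact I would use is that a component behaves like a path in the circular order. First I dispose of the degenerate case in which $H$ wraps around the whole circle. In that case I cut the circle at some point between two consecutive vertices of $V$, obtaining a linear order on $V$. Then I list the vertices of $H$ in their order around the circle, $v_1,\dots,v_m$, choosing the starting point so that no "gap" inside $H$ is traversed.

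The claim is that consecutive vertices $v_i,v_{i+1}$ are adjacent, i.e.\ have circular distance at most $K$. Suppose instead that some consecutive pair is at distance $>K$. Every vertex of $V$ strictly between them is absent, and we chose the cut so that this gap separates $H$ into a left part and a right part. Any edge of $C_{n,K}$ has length at most $K$, so it cannot jump over a window of more than $K$ consecutive vertices that contains no vertex of $V$. Hence no edge of $C_{n,K}[V]$ crosses the gap, contradicting the connectivity of $H$. Consequently $v_1v_2\cdots v_m$ is a Hamiltonian path in $H$.

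Given the Hamiltonian path, I match $v_1v_2, v_3v_4,\dots$, which leaves zero vertices unmatched if $m$ is even and one if $m$ is odd. This completes the proof.

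The main obstacle is making the "no crossing over a gap" argument rigorous on the circle, in particular the wrap-around case. There, $V$ may have gaps of length $\le K$ everywhere and $H$ may be all of $V$. I would handle this case by cutting at an arbitrary gap. If all gaps are at most $K$, then consecutive vertices of $V$ in circular order are always adjacent, so $V$ itself has a Hamiltonian path (even a cycle), and the same matching argument applies. Otherwise there is a gap of length $>K$, and I cut there. In either case, every component is a contiguous run of $V$ in which consecutive elements are adjacent.
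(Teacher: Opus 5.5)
Your proposal is correct and follows essentially the paper's route: the paper also orders $V$ around the circle, notes that consecutive vertices of a component are at circular distance at most $K$ (so each component has a Hamiltonian path), and pairs consecutive vertices. The paper leaves implicit the trivial bound that every odd component leaves a vertex unmatched, and your handling of the wrap-around case is more careful than its proof; once you cut at a gap of length greater than $K$, the separation step follows from arc length alone (an edge joining the two sides would have to span either the arc from $v_i$ to $v_{i+1}$ or the cut gap, each longer than $K$), so you can drop the unproven remark that no vertex of $V$ lies strictly between $v_i$ and $v_{i+1}$, and this also fixes the off-by-one in ``a window of more than $K$ vertices.''
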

%\nbr{let's not use $o$ notation. use $\odd$}
\begin{proof}
%(1) Each connected component contains a path (2) Greedy matching two consecutive vertices.
List the vertices of $V$ in clockwise order on the circle as $v_1, \cdots, v_m$. 
%Whenever two consecutive vertices have circular distance $>K$ in $C_{n,K}$, a new component is formed, since no edges are crossing between the two sides. 
Suppose that~$v_a, v_{a+1}, \cdots, \allowbreak v_{a+\ell}$ form a connected component. Then each pair of consecutive vertices has a circular distance at most $K$ and is thus connected. 
%Within each component, consecutive vertices always have circular distance $\le K$ in $C_{n,K}$ and are thus connected. 
%Therefore, each connected component in $C_{n,K}[V]$ contains a Hamiltonian path.
%If there are two consecutive vertices whose circular distance is greater than $K$ in $C_{n,K}$, we refer to them as endpoints. We start from one endpoint, denote it by $ v_1$, and then traverse the component sequentially to the other endpoint, labeling the vertices as $v_1, v_2, \cdots, v_{\ell}$. By definition, we have $d(v_1,v_{\ell})>K$. Then two consecutive vertices, $v_{i}$ and $v_{i+1}$, must be connected. Otherwise, within $C_{n,K}$, it must be the case that $d(v_i,v_{i+1})>K$. Then for all $v_{j}$ with $j\ge i+2$, we have $d(v_i,v_{j})\ge \min\{d(v_i,v_{i+1}), d(v_1,v_{\ell})\}>K$. If there is no such endpoints, there is only one connected component of the entire graph $C_{n,K}[V]$. 
%Since each pair of consecutive vertices is connected, there exists a Hamiltonian path from $v_a$ to $v_{a+\ell}$. Along this path 
Hence, we can construct a maximum matching by pairing consecutive vertices: $\{v_a, v_{a+1}\}, \{v_{a+2}, v_{a+3}\}, \cdots$. In this construction, every vertex in an even component is matched, while in an odd component exactly one vertex remains unmatched.
%we traverse the component clockwise: starting from the first vertex, we match it with the second, then match the third with the fourth, and continue in this manner until the last vertex of the component. This matching is valid because, in $C_{n,K}$, each vertex is connected to its $2K$ nearest neighbors. Hence in the induced subgraph $C_{n,K}[V]$, vertices remain connected to their nearest available neighbors within the component.\nbr{This sentence reads unclear. I feel that we need to show that each component contains a Hamiltonian path???}  In this way, in an even component, all vertices are matched, whereas in an odd component, exactly one vertex remains unmatched. % Due to our linkage mechanism, %\nbr{need some further clarification here. By the way, is this result known in the literature? If yes, we should add the reference} there will be one remaining unmatched point for odd components and even components get a perfect match.
\end{proof}

The matching loss differs from the expected number of odd components by at most one. Indeed, Lemma \ref{lem:x-odd} yields that $\expect{\odd(C_{n,K}[\supp(D)])} = \expect{|\supp(D)| - 2 \mu \left(C_{n,K}[\supp(D)] \right)}$. Recall that $L(C_{n,K})=  2 \expect{\floor{|\supp(D)|/2} - \mu\left(C_{n,K}[\supp(D)] \right)}$ and note that $\expect{|\supp(D)|}-1\le 2 \expect{\floor{|\supp(D)|/2}} \le \expect{|\supp(D)|}$. From these, we have
\$
\expect{\odd(C_{n,K}[\supp(D)])} -1 \le L(C_{n,K}) 
& \le \expect{\odd(C_{n,K}[\supp(D)])}. 
\$
\iffalse
Recall that the matching loss $L(C_{n,K})=  2\E[\floor{|\supp(D)|/2} - \mu(C_{n,K}[\supp(D)])]$ differs from the expected number of unmatched vertices in a maximum matching, $\E[|\supp(D)| - \mu(C_{n,K}[\supp(D)])]$, by at most one. \nbr{I do not understand the above reasoning. Isn't just 
\begin{align*}
L(C_{n,K}) & =  2 \expect{\floor{|\supp(D)|/2} - \mu\left(C_{n,K}[\supp(D)] \right)} \\
& \le \expect{|\supp(D)| - 2 \mu \left(C_{n,K}[\supp(D)] \right)} \\
& =\expect{\odd(C_{n,K}[\supp(D)])}
\end{align*}} 
\fi
It therefore remains to compute the expected number of odd components %$\E[\odd(C_{n,K}[\supp(D)])]$ 
in order to complete the proof of Theorem \ref{thm:KNN}. %This result will be shown in Lemma \ref{lem:exp-x-odd}. 
%The expected number of connected components is comparatively easy to compute, as it reduces to counting the number of gaps of length $\geq K$ on the circle. \nbr{If this is easy, maybe you could briefly explain how this is done...} \xn{intuition on $pq^k$} 
As discussed, the expected number of connected components is roughly of the order $npq^K$, which provides an immediate upper bound on the odd components. However, a more careful analysis is required to derive a nearly matching lower bound.

To this end, we adopt a generating-function approach and establish Lemma~\ref{lem:exp-x-odd}. Each odd component can be represented as a binary sequence, where a $1$ denotes a vertex outside $\supp(D)$. Such a sequence contains no $k$ consecutive $1$’s and has an odd number of $0$’s. The generating function will provide a convenient tool for summing over the odd terms only.
%By contrast, analyzing the expected number of odd components is more delicate, and 
%We adopt the generating-function approach to compute the expected number of odd components. 
To prepare, we present the following auxiliary lemma on the closed form of the generating function needed later. %\xn{To do: How does the generating function appear.} %The lemma also provides the solution to an exercise in \cite{wilf2005generatingfunctionology}. %\nbr{I thought this is known...}
Let $\alpha(i,j,k)$ denote the number of binary sequences of length $i$ that contain exactly $j$ $1$'s and no $k$ consecutive $1$'s. We define the bivariate generating function and the corresponding finite sum as %\xn{To do: notation $F_{k,n}$.}
\#\label{eq:def-gen-f-finite-remaining}
F_k(x,y) = \sum_{i= 0}^\infty \sum_{j = 0}^i \alpha(i,j,k)\, x^i y^j, \quad F_{k,n}(x,y) = \sum_{i= 0}^n \sum_{j = 0}^i \alpha(i,j,k)\, x^i y^j,
\#
and the truncation error $R_{k,n}(x,y) = F_k(x,y) - F_{k,n}(x,y)$. The following lemma gives a closed-form expression for $F_k(x,y)$ and characterizes the rate of $R_{k,n}(x,y)$. Its proof follows from the recurrence relation for $\alpha(i,j,k)$.
\begin{lemma}\label{lem:no-k-cons}
For $x,y$ such that $|x|(1+ |y|) \le 1$, the generating function admits the closed form 
\#\label{eq:gen-f-close}
F_k(x,y) = \frac{1 - x^k y^k}{1 - x - xy + x^{k+1} y^k}. 
\#
In addition, the convergence rate at $(x,y)$ satisfying $|x| = p$ and $|y|=q/p$ satisfies,
\$
|R_{k,n}(x,y)| \le \frac{1-q^k}{pq^k(1+pq^k)^{n-k}}.
\$
\end{lemma}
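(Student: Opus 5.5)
Both parts of Lemma~\ref{lem:no-k-cons} come from a transfer-matrix (run-length) recursion for binary strings with no $k$ consecutive $1$'s. For the closed form, every admissible string factors uniquely as a block of at most $k-1$ ones, followed by a sequence of blocks each consisting of a single $0$ and then at most $k-1$ ones. A block of $m$ ones contributes $(xy)^m$, so a run of length $<k$ has generating function
\[
P(x,y)=\sum_{m=0}^{k-1}(xy)^m=\frac{1-x^ky^k}{1-xy}.
\]
It follows that
\[
F_k(x,y)=P\sum_{r\ge0}(xP)^r=\frac{P}{1-xP}.
\]
Substituting $P$ and clearing the common factor $1-xy$ gives
\[
\frac{1-x^ky^k}{1-xy-x+x^{k+1}y^k},
\]
which is \eqref{eq:gen-f-close}. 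One could equally derive the recurrence $\alpha(i,j,k)=\alpha(i-1,j,k)+\alpha(i-1,j-1,k)-[\text{correction for runs of length }k]$ and multiply by $x^iy^j$, as the paper suggests; I will use the factorization since it is cleaner. Convergence for $|x|(1+|y|)\le1$ (with strict inequality, or at the boundary after the bound below) follows by comparing with all binary strings. Indeed,
\[
\sum_{i,j}\alpha(i,j,k)|x|^i|y|^j\le\sum_i(|x|(1+|y|))^i,
\]
and the refined estimate below shows absolute convergence even at $|x|(1+|y|)=1$. This justifies the formal manipulations as identities of analytic functions.

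For the remainder bound, take $|x|=p$ and $|y|=q/p$, so that $|x|^i|y|^j=p^{i-j}q^j$. Let $a_i=\sum_j\alpha(i,j,k)p^{i-j}q^j$. Then $a_i$ is exactly the probability that an i.i.d. string of length $i$, with $\Pr(1)=q$, has no run of $k$ ones. Hence
\[
|R_{k,n}|\le\sum_{i>n}a_i.
\]
The task is to bound the tail of these probabilities geometrically. Cut the string into disjoint consecutive windows of length $k$, or use the renewal structure directly: a run of $k$ ones starting right after any $0$ (or at the start) occurs with probability $q^k$. A cleaner route is to read off the rate from the denominator. On the positive real slice, $a_i$ is the coefficient sequence of the function
\[
\frac{1-q^ks^k}{1-s+pq^ks^{k+1}}
\]
in the variable $s$ (after substituting $x=ps$, $xy=qs$). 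This function has a dominant real pole at some $s_0>1$. Checking the denominator at $s=1+pq^k$ and comparing signs should show $s_0\ge1+pq^k$, giving $a_i\lesssim(1+pq^k)^{-i}$.

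I expect the main obstacle to be making this decay rigorous with the precise constant $(1-q^k)/(pq^k(1+pq^k)^{n-k})$. My first attempt will be an elementary induction that avoids complex analysis. Conditioning on the position of the first $0$ (or on the last $0$) gives the recursion
\[
a_i=\sum_{m=0}^{k-1}q^m\,p\,a_{i-m-1}
\]
for $i\ge k$, with $a_i=1$ for $i<k$. From this I will prove $a_i\le C(1+pq^k)^{-(i-k)}$ by induction, checking that the ratio $\rho=(1+pq^k)^{-1}$ satisfies
\[
\sum_{m=0}^{k-1}pq^m\rho^{-m-1}\le1.
\]
This inequality is the fixed-point condition for the denominator, and I will verify it using $\rho^{-m-1}\le1+(m+1)pq^k$ along with elementary bounds. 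Summing the geometric tail over $i>n$ then produces the factor $1/(pq^k)$ from
\[
\sum_{i>n}\rho^{i-k}=\frac{\rho^{n-k}}{pq^k},
\]
and the factor $1-q^k$ comes from a sharper base case. If the constants do not line up exactly, I will adjust where the induction starts.
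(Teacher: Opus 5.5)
Your closed-form argument is correct and takes a cleaner route than the paper's. The paper derives the recurrence for $\alpha(i,j,k)$ by conditioning on the last symbol, pushes it through the auxiliary partial sums $A_n,B_n,D_n$, and then passes to the limit. Your factorization gives $F_k=P/(1-xP)$ at once.

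It also gives absolute convergence directly, so you need not defer the boundary case to the remainder estimate; comparing with all binary strings gives nothing on the boundary. Write $a=|x|$ and $b=|x||y|$, so $a+b\le 1$. Then $|x|P(|x|,|y|)=a\sum_{m<k}b^m\le(1-b)\sum_{m<k}b^m=1-b^k$, which is $<1$ unless $b=0$. For $b=0$ the left side is just $a$. So the geometric series in $xP$ converges absolutely everywhere except at the degenerate point $|x|=1$, $y=0$. There the series genuinely diverges, and the statement overlooks this point as well.

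For the remainder, two pieces are right: the reduction $|R_{k,n}|\le T_n\triangleq\sum_{i>n}a_i$, and the first-zero recursion $a_i=\sum_{m=0}^{k-1}pq^m a_{i-m-1}$ for $i\ge k$. Two steps, however, fail as written.

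First, with $\rho=(1+pq^k)^{-1}$, the bound $\rho^{-m-1}\le 1+(m+1)pq^k$ is backwards: Bernoulli's inequality gives $(1+pq^k)^{m+1}\ge 1+(m+1)pq^k$. The inequality you actually need, $\sum_{m=0}^{k-1}pq^m(1+pq^k)^{m+1}\le 1$, is true. The clean way to see it is the identity $(1-qs)\bigl(1-\sum_{m<k}pq^ms^{m+1}\bigr)=1-s+pq^ks^{k+1}$. At $s=1+pq^k$ one has $qs=q+pq^{k+1}<1$, and the right side equals $pq^k\bigl((1+pq^k)^{k+1}-1\bigr)>0$.

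Second, a pointwise induction $a_i\le C\rho^{i-k}$ seeded by $a_i=1$ for $i<k$ forces $C\ge\rho$ (look at $i=k-1$). The target constant needs $C\le 1-q^k$. But $(1-q^k)(1+pq^k)=1-q^k(q+pq^k)<1$, so $\rho>1-q^k$, and this seeding gives at best $T_n\le\rho^{n-k+1}/(pq^k)$. No sharper base case within that seeding fixes this.

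The simplest repair is to run your induction on the tails instead. For $n\ge k-1$, every $i>n$ satisfies $i\ge k$, so $T_n=\sum_{m=0}^{k-1}pq^mT_{n-m-1}$. The base values satisfy $T_n\le T_{-1}=F_k(p,q/p)=(1-q^k)/(pq^k)\le T_{-1}\rho^{n-k}$ for $-1\le n\le k-2$. The same inductive step then yields exactly $T_n\le(1-q^k)/\bigl(pq^k(1+pq^k)^{n-k}\bigr)$. Alternatively, you can seed the pointwise induction at $i=k,\dots,2k-1$, using $a_{k+j}=1-q^k-jpq^k$ and checking $(1-q^k-jpq^k)(1+pq^k)^j\le 1-q^k$. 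That is what ``adjusting where the induction starts'' would have to mean.

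The paper's rate argument avoids locating the dominant root altogether. It establishes $R_n-R_{n-1}+pq^kR_{n-k-1}=0$ and uses only the monotonicity $R_{n-k-1}\ge R_n$. This gives the one-step contraction $R_n\le R_{n-1}/(1+pq^k)$, seeded by $R_k\le F_k(p,q/p)$.

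Given your closed form, that identity is immediate. Compare coefficients of $s^N$ in $(1-s+pq^ks^{k+1})\sum_i a_is^i=1-q^ks^k$ and sum over $N>n$. Pairing your factorization with the paper's monotonicity trick would give the shortest complete proof. Your repaired renewal induction instead makes explicit that $1+pq^k$ is just a convenient point below the dominant root of $\sum_{m<k}pq^ms^{m+1}=1$.
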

\begin{proof}
We prove the statement step by step. 
\paragraph{Recurrence formula for $\alpha(i,j,k)$.}
We first present formulas for $\alpha(i,j,k)$ for different ranges of $i$ and $j$.
For any $i \ge 0$ and $0\le j\le \min\{i, k-1\}$, the condition that the sequence contains no $k$ consecutive $1$'s is automatically satisfied, since there are at most $j\le k-1$ $1$'s in total. Thus, we have
\#\label{eq:gen-fn-rec1} 
\alpha(i,j,k) = \binom{i}{j}.
\#
For $i=j \ge k$, we have $\alpha(i,j,k) = 0$, since any such sequence always contains $k$ consecutive $1$'s.

For $i\ge k+1$ and $k \le j\le i-1$, we provide a recurrence formula for $\alpha(i,j,k)$ by considering the last position of a valid sequence: 
\begin{enumerate}
    \item If the last position is $0$, then the first $i-1$ positions can be any valid sequence of length $i-1$ having $j$ $1$'s. There are $\alpha(i-1,j,k)$ such sequences.
    \item If the last position is $1$, then the first $i-1$ positions must be a valid sequence of length $i-1$ having $j-1$ $1$'s. This contributes $\alpha(i-1,j-1,k)$ sequences.
    
    However, among these, we must exclude those sequences that end with $k-1$ $1$'s, since adding another $1$ at the last position will violate the condition. Indeed, any such sequence must end with $01\cdots 1$ (a $0$ followed by $k-1$ $1$'s). Thus, the first $i-k-1$ positions can be any valid sequence having $j-k$ $1$'s. Hence, there are $\alpha(i-k-1,j-k,k)$ such sequences.
\end{enumerate}
Putting them together, we have for $i\ge k+1$ and $k \le j\le i-1$ that
\#\label{eq:gen-fn-rec} 
\alpha(i,j,k) = \alpha(i-1,j,k) + \alpha(i-1,j-1,k) - \alpha(i-k-1,j-k,k).
\#

\paragraph{Recurrence formula for the finite sum.}
We split $F_n(x,y)$ by noting that $\alpha(i,j,k) = 0$ for $i=j\ge k$,
\#\label{eq:gen-f}
F_n(x,y) = \sum_{i = 0}^n \sum_{j = 0}^i \alpha(i,j,k)\, x^i y^j & = \sum_{i = 0}^n  \sum_{j = 0}^{\min\{i, k-1\}} \alpha(i,j,k)\, x^i y^j + \sum_{i = k+1}^n  \sum_{j = k}^{i-1} \alpha(i,j,k)\, x^i y^j. 
%\\& = \sum_{j = 0}^{k-1} \sum_{i \ge j}  \binom{i}{j}\, x^i y^j + \sum_{i \ge k+1}  \sum_{j = k}^{i-1} \alpha(i,j,k)\, x^i y^j.
\#
Next, we bound the two terms separately using the above formulas for $\alpha(i,j,k)$. Applying \eqref{eq:gen-fn-rec1} to the first term,  we have
\$
A_n(x,y) &\triangleq \sum_{i = 0}^n  \sum_{j = 0}^{\min\{i, k-1\}} \alpha(i,j,k)\, x^i y^j = \sum_{j = 0}^{k-1} \sum_{i = j}^n  \binom{i}{j}\, x^i y^j.
\$
When $|x|<1$, taking $n\to \infty$ yields
\$
A(x,y)& \triangleq \lim_{n\to \infty} A_n(x,y) = \sum_{j = 0}^{k-1} x^jy^j \sum_{m = 0}^\infty  \binom{m+j}{j}\, x^{m} = \sum_{j = 0}^{k-1} \frac{x^jy^j}{(1-x)^{j+1}} = %\frac{1-\big(\frac{xy}{1-x}\big)^k}{(1-x)\big(1-\frac{xy}{1-x}\big)} 
\frac{1-\frac{x^ky^k}{(1-x)^k}}{1-x - xy},
\$
%\nbr{We need to be careful about the last equality when $1=x+xy$. The expression is still correct with the understanding that the ratio is equal to $k$ when $1=x+xy.$ }
where the second-to-last inequality follows from the generalized binomial identity that $(1+x)^{-j} %= \sum_{m \ge 0}  \binom{-m}{j}\, x^{j} 
= \sum_{m \ge 0}  \binom{j + m-1}{m}\, (-x)^{m}$ for $|x|<1$, and note that the last equality still holds when $1=x+xy$ with the understanding that the ratio equals $k$. Moreover, following from the recurrence formula in \eqref{eq:gen-fn-rec}, we decompose the second term in \eqref{eq:gen-f} into three parts, and compute each of them in turn,
\#\label{eq:b-sum}
& B_n(x,y) \triangleq \sum_{i = k+1}^n  \sum_{j = k}^{i-1} \alpha(i,j,k)\, x^i y^j \\
& = \sum_{i = k+1}^n \sum_{j = k}^{i-1} \alpha(i-1,j,k) \, x^i y^j + \sum_{i = k+1}^n \sum_{j = k}^{i-1} \alpha(i-1,j-1,k) \, x^i y^j - \sum_{i = k+1}^n \sum_{j = k}^{i-1} \alpha(i-k-1,j-k,k) \, x^i y^j. \notag
\#
First, by reindexing, we have
\#\label{eq:b-sum-1}
\sum_{i = k+1}^n \sum_{j = k}^{i-1} \alpha(i-1,j,k) \, x^i y^j = x\sum_{i = k}^{n-1} \sum_{j = k}^{i} \alpha(i,j,k) \, x^i y^j = x\sum_{i = k+1}^{n-1} \sum_{j = k}^{i-1} \alpha(i,j,k) \, x^i y^j = x B_{n-1}(x,y).
\#
In addition, for the second part, we obtain 
\#\label{eq:b-sum-2}
\sum_{i = k+1}^n \sum_{j = k}^{i-1} \alpha(i-1,j-1,k) \, x^i y^j & = xy\sum_{i = k}^{n-1} \sum_{j = k-1}^{i-1} \alpha(i,j,k) \, x^i y^j \notag\\
& = xy\sum_{i = k}^{n-1} \binom{i}{k-1} \, x^i y^{k-1} + xy\sum_{i = k+1}^{n-1} \sum_{j = k}^{i-1} \alpha(i,j,k) \, x^i y^j \notag\\
& = x^{k}y^{k} \sum_{r = 1}^{n-k} \binom{r + k-1 }{k-1} \, x^r + xyB_{n-1}(x,y) \notag\\
& = xyB_{n-1}(x,y) + D_{n-k}(x,y),
\#
where $D_{n-k}(x,y) \triangleq x^{k}y^{k} \sum_{r = 1}^{n-k} \binom{r + k-1 }{k-1} \, x^r$. When $|x|<1$, taking $n\to \infty$ gives
\$ 
D(x,y) \triangleq \lim_{n\to \infty} D_{n-k}(x,y) = x^{k}y^{k} \sum_{r = 1}^{\infty} \binom{r + k-1 }{k-1} \, x^r = x^ky^k\bigg[\frac{1}{(1-x)^{k}} - 1\bigg] = \frac{x^ky^k}{(1-x)^{k}} - x^ky^k,
\$
where the last equality follows from the generalized binomial identity. Finally, the last part equals
\#\label{eq:b-sum-3}
\sum_{i = k+1}^n \sum_{j = k}^{i-1} \alpha(i-k-1,j-k,k) \, x^i y^j = x^{k+1} y^k \sum_{i = 0}^{n-k-1} \sum_{j = 0}^{i} \alpha(i,j,k) \, x^i y^j = x^{k+1} y^k F_{n-k-1}(x,y).
\#
Substituting \eqref{eq:b-sum-1}, \eqref{eq:b-sum-2}, and \eqref{eq:b-sum-3} into \eqref{eq:b-sum}, we have for $n\ge k+1$,
\#\label{eq:recurrence-bn} 
B_n(x,y) = (x+xy)B_{n-1}(x,y) + D_{n-k}(x,y) - x^{k+1} y^k F_{n-k-1}(x,y).
\#

\paragraph{Convergence and the limit.}
We now prove that both $B_n(x,y)$ and $F_n(x,y)$ converge for fixed $x,y$ satisfying $|x|(1+ |y|) \le 1$. Since $|B_n(x,y)|\le B_n(|x|,|y|)$ and $|F_n(x,y)|\le F_n(|x|,|y|)$, we only need to consider $x,y>0$; otherwise we take the absolute value of $x,y$. When $x,y>0$, since $B_n(x,y)$ and $F_n(x,y)$ are partial sum of nonnegative terms, it suffices to show they are bounded. When $x,y>0$ and $x(1+ y) < 1$, from \eqref{eq:recurrence-bn}, we have
\$ 
(1-x-xy)B_n(x,y) \le D_{n-k}(x,y) \le D(x,y),
\$
noting that $D_{n-k}(x,y)$ is non-decreasing in $n-k$. Thus, $B_n(x,y)$ converges and so does $F_n(x,y)$. 
When $x,y>0$ and $x+xy = 1$, from \eqref{eq:recurrence-bn}, we have
\$ 
x^{k+1} y^k F_{n-k-1}(x,y) \le D_{n-k}(x,y) \le D(x,y).
\$
Thus, it holds that $F_{n-k-1}(x,y) \le D(x,y)/(x^{k+1} y^k)$ is bounded and converges, and so does $B_n(x,y)$. 

When both $B_n(x,y)$ and $F_n(x,y)$ converge, taking $n\to\infty$ on both sides of \eqref{eq:recurrence-bn}, and substituting $B(x,y) = F(x,y) - A(x,y)$, we have
\$ 
F(x,y) - \frac{1-\frac{x^ky^k}{(1-x)^k}}{1-x - xy} = (x+xy) \bigg[F(x,y) - \frac{1-\frac{x^ky^k}{(1-x)^k}}{1-x - xy}\bigg] + \frac{x^ky^k}{(1-x)^{k}} - x^ky^k - x^{k+1} y^k F(x,y).
\$
Rearranging the terms gives the closed form of $F(x,y)$ in \eqref{eq:gen-f-close}.

%\$ F_k(x,y) = \frac{1-\frac{x^ky^k}{(1-x)^k}}{1-x - xy} + \frac{1}{1-x-xy} \Big[\frac{x^ky^k}{(1-x)^{k}} - x^ky^k - x^{k+1} y^k F_k(x,y) \Big]\$

\paragraph{Convergence rate.}
%For any $(x,y)$, we define the remainder term $R_n(x,y) \triangleq F(x,y) - F_n(x,y)$.
For $(x,y)$ such that $|x| = p$ and $|y| = q/p$, we have $%|F(x,y) - F_n(x,y)| = 
|R_n(x,y)| \le R_n(|x|, |y|) = R_n(p, q/p)$. Thus, it suffices to bound $R_n(p, q/p)$. For simplicity, we write $R_n = R_n(p, q/p)$, and similarly for other functions at $(x,y) =(p,q/p)$. Rearranging \eqref{eq:recurrence-bn} and noting $p+q=1$, we have
\$
F_n - F_{n-1} + p q^k F_{n-k-1}= A_n - A_{n-1} + D_{n-k}.
\$
Note that $F = 1-q^k/(pq^k)$ and $D= 1-q^k$.
Let $c\triangleq pq^k >0$. Then $c F = D$ and we have
\#\label{eq:delta-rec}
\Delta_n \triangleq - A_n + A_{n-1} +D - D_{n-k} = R_n - R_{n-1} + c R_{n-k-1}\ge R_n - R_{n-1} + cR_n,
\#
where the inequality follows from the fact that $R_n \le R_{n-k-1}$, since $F_n$ is non-decreasing in $n$. We now show that $\Delta_n = 0$. First, observe that
\$ 
A_n - A_{n-1} = \sum_{j = 0}^{k-1} \binom{n}{j}\, p^{n-j} q^j.
\$ 
Next, we bound the difference
\$ 
D - D_{n-k} = q^{k} \sum_{r = n-k+1}^{\infty} \binom{r + k-1 }{k-1} \, p^r = q^{k} p^{n-k+1} \sum_{r = 0}^{\infty} \binom{n + r}{k-1} \, p^r.
\$
We denote by $[x^n]f(x)$ the coefficient of $x^n$ in the series $f(x)$, and have the following identity,
\$
\sum_{r=0}^\infty \binom{n+r}{k-1} p^r
&=[t^{k-1}] \sum_{r=0}^\infty (1+t)^{n+r} p^r 
= [t^{k-1}] 
\frac{(1+t)^n}{1-(1+t)p} \\
&= [t^{k-1}] (1+t)^n \frac{1}{q}  \sum_{m=0}^\infty 
\left( \frac{tp}{q}\right)^m
=
\frac{1}{q} \sum_{j=0}^{k-1}\binom{n}{j} \left( \frac{p}{q} \right)^{k-1-j}.
\$
Thus, we have
$$
D-D_{n-k}
= q^{k-1} p^{n-k+1}
\sum_{j=0}^{k-1} 
\binom{n}{j} \left( \frac{p}{q} \right)^{k-1-j} = 
\sum_{j=0}^{k-1} 
\binom{n}{j} p^{n-j} q^{j}
= A_n - A_{n-1}.
$$
Hence, $\Delta_n = 0$ in \eqref{eq:delta-rec}, which implies the recursion $R_n \le R_{n-1}/(1+c)$ for any $n\ge k+1$. Recalling that $R_k = F - F_k \le F = (1-q^k)/(pq^k)$, we obtain the final convergence rate:
\$
R_n & \le \frac{1}{(1+c)^{n-k}} R_k \le \frac{1-q^k}{pq^k(1+c)^{n-k}}.
\$

\end{proof}

We now compute the expected number of odd components, thereby proving Theorem \ref{thm:KNN}.
\begin{lemma}\label{lem:exp-x-odd}
%Fix any constants $c,\delta>0$.
%Fix any constant $c>0$. %and sufficiently large $n$. 
For any $K$ and $n$, %satisfying $n pq^K \ge c$, %\nbr{I do not understand why $c$ cannot be 0???} 
we have
\$
npq^{K}\bigg(\frac{(1-q^K)^2}{2-q^K} - \exp(-(n-3K-2)pq^K/2)\bigg) & \le \E[\odd(C_{n,K}[\supp(D)])] \\
&\le n p q^K \bigg(q^{K} + \frac{(1-q^K)^2}{2-q^K} + 1  \bigg) + 1.
\$
\end{lemma}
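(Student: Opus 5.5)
We will count odd components via their ``starting vertices''. Encode the demand realization as a circular binary string, with a $1$ at vertex $j$ if $j\notin\supp(D)$ and a $0$ otherwise. A connected component of $C_{n,K}[\supp(D)]$ is a maximal circular run of $\supp(D)$-vertices in which consecutive members are within distance $K$. Equivalently, it is a block starting at some $i\in\supp(D)$ whose $K$ predecessors are all deleted, and continuing through a string with no $K$ consecutive $1$'s until the next run of at least $K$ consecutive $1$'s. So, except in the degenerate event that no run of $K$ deleted vertices exists (at most one component, contributing at most $1$), we can write
\[
\odd(C_{n,K}[\supp(D)])=\sum_{i=1}^n \mathbf{1}\{\text{an odd component starts at } i\}.
\]
A component starts at $i$ with probability $pq^K$. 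Given this, the component is determined by the string following $i$ up to the next $K$ consecutive $1$'s.

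\emph{Upper bound.} The number of components is at most $\sum_i \mathbf 1\{\text{a component starts at } i\}$ plus $1$, which gives the crude bound $npq^K+1$. The extra terms $q^K+(1-q^K)^2/(2-q^K)$ in the statement come from bounding, on the infinite line, the conditional probability that the component is odd, plus a boundary correction. So the real task is the exact infinite-line computation, which we also need for the lower bound.

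\emph{Infinite-line computation.} Condition on a start at $i$, and consider the string $s$ occupying positions $i+1,\dots,i+\ell$. The component is the root vertex followed by $s$, where $s$ has no $K$ consecutive $1$'s and is followed by $K$ ones. The component size is $1+\#\{0\text{'s in } s\}$, and a given pattern $s$ of length $\ell$ with $j$ ones has probability $p^{\ell-j}q^{j}\cdot q^K$. The probability that the component is odd is therefore
\[
q^K\sum_{\ell,j}\alpha(\ell,j,K)\,p^{\ell-j}q^j\,\mathbf 1\{\ell-j \text{ even}\}=\frac{q^K}{2}\bigl[F_K(p,q/p)+F_K(-p,-q/p)\bigr],
\]
where the sign trick $x\to-x$, $y\to -y$ flips the sign exactly by $(-1)^{\ell-j}$. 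By Lemma~\ref{lem:no-k-cons} both points satisfy $|x|(1+|y|)=1$, so the closed form \eqref{eq:gen-f-close} applies:
\[
F_K(p,q/p)=\frac{1-q^K}{pq^K},\qquad F_K(-p,-q/p)=\frac{1-q^K}{2-pq^K\cdot\text{(sign term)}},
\]
where the second value is to be simplified explicitly. Multiplying by $pq^K$, we expect the per-vertex odd-start probability to be $pq^K$ times a constant involving $(1-q^K)^2/(2-q^K)$, and we will check the algebra against the stated constants.

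\emph{Finite circle and lower bound.} On the circle of length $n$, the component starting at $i$ is only determined by the string up to length about $n-2K-1$ before it wraps around. We therefore replace $F_K$ by the truncation $F_{K,m}$ with $m\approx n-3K-2$. The truncation error at $(\pm p,\pm q/p)$ is controlled by the $R_{K,m}$ bound of Lemma~\ref{lem:no-k-cons}, namely $|R|\le (1-q^K)/\bigl(pq^K(1+pq^K)^{m-K}\bigr)$. Multiplying by $q^K/2$ and using $(1+c)^{-m}\le e^{-cm/2}$ for small $c$ (or a cleaner inequality) yields the subtracted term $\exp(-(n-3K-2)pq^K/2)$ in the lower bound. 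For the lower bound we count only odd components that close before wrapping around. For the upper bound we add the contribution of the wrap-around or unterminated event, which gives the terms $q^K$ and $+1$.

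The main obstacle is bookkeeping. We must make the finite-circle event correspond exactly to a truncated generating-function sum with one-sided error of the right sign, and we must evaluate $F_K(-p,-q/p)$ cleanly to produce the constant $(1-q^K)^2/(2-q^K)$.
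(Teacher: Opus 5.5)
Your plan is the paper's: sum over start vertices, enumerate the internal pattern with $\alpha(\cdot,j,K)$, extract parity from $F_K(\pm p,\pm q/p)$, and control the circle via $R_{K,m}$. However, your central enumeration is wrong. The description ``root, then an admissible $s$, then $K$ ones'' does not determine $s$: the event for $s1$ (namely $s$ followed by $1^{K+1}$) is contained in the event for $s$, so the events you sum over overlap. Write $s=s'1^a$ with $s'$ empty or ending in $0$ and $0\le a\le K-1$. Each genuine component $s'$ is then counted with weight $P(s')q^K\sum_{a<K}q^a=P(s')q^K(1-q^K)/p$, so it is inflated by the factor $(1-q^K)/p$. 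Appending ones does not change the parity, so the odd part is inflated by the same factor. You can see this already from $q^KF_K(p,q/p)=(1-q^K)/p\neq 1$. Your parity expression evaluates to $\frac{q^K}{2}[F_K(p,q/p)+F_K(-p,-q/p)]=\frac{1-q^K}{p(2-q^K)}$, which exceeds $1$ when $p<1/2$ and $q^K$ is small. Used in the lower bound, this gives a false inequality: the number of odd components is at most the number of start vertices plus one, so its mean is at most $npq^K+1$.

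The fix, which the paper implements by requiring $v_s,v_{s+i-1}\in\supp(D)$, is to make the component end at a support vertex. Conditional on a start, on the line there are two cases. Either the next $K$ vertices are all deleted (isolated vertex, probability $q^K$, odd), or $s=t0$ with $t$ admissible; the latter has probability $P(t)\,p\,q^K$ and gives a component of size $2+\#\{0\text{'s in } t\}$. These events are disjoint and their probabilities sum to $q^K+pq^KF_K(p,q/p)=1$. The component is odd iff $t$ has an odd number of zeros, so the sign flips: the conditional odd probability is $q^K+\tfrac{pq^K}{2}[F_K(p,q/p)-F_K(-p,-q/p)]$. At both points $xy=q$, so the closed form gives $F_K(p,q/p)=\frac{1-q^K}{pq^K}$ and $F_K(-p,-q/p)=\frac{1-q^K}{p(2-q^K)}$, the latter because the denominator is $1+p-q-pq^K$. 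The bracket equals $\frac{2(1-q^K)^2}{pq^K(2-q^K)}$, so the per-vertex probability of an odd start is $pq^K[q^K+(1-q^K)^2/(2-q^K)]$.

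This also corrects your attribution of the upper-bound terms:
\begin{itemize}
\item The $q^K$ is the isolated-vertex term, not a wrap-around effect.
\item Starts whose segment from first to last vertex has length between $n-2K+1$ and $n-K$ contribute at most $\sum_{r=K}^{2K-1}p^2q^r\le pq^K$ each; this is the ``$+1$'' inside the parentheses.
\item The outer $+1$ is the single-component event with no run of $K$ deletions.
\end{itemize}
For the lower bound, drop these terms and truncate $t$ at length $m=n-2K-2$. This is the largest length for which the component and its $2K$ flanking deletions fit on the circle. The remainder bound then has exponent $m-K=n-3K-2$, as stated; your $m\approx n-3K-2$ would give $n-4K-2$. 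Finally, combine $|R_{K,m}(-p,-q/p)|\le R_{K,m}(p,q/p)$, the inequality $1+x\ge e^{x/2}$ on $[0,1]$, and the prefactor $p^2q^{2K}$. This yields exactly the subtracted term $npq^Ke^{-(n-3K-2)pq^K/2}$.
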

\begin{proof}
%Let $C_{n,K}$ be a K-chain graph on $[n]$. %, and let $D$ be a demand realization with $\supp(D)\subset [n]$. \nbr{What does this mean? Do you need to condition on $\supp(D)?$} 
We fix one vertex in $C_{n,K}$ and denote it by $v_1$, and then label the remaining vertices clockwise as $v_2, v_3, \cdots, v_n$. For convenience, the indexing is taken modulo $n$, so that $v_{n+i} = v_i$ for all $i$. We define a maximal connected component of $\supp(D)$ as a sequence of vertices $(v_s, v_{s+1}, \cdots, v_{s+i-1}) \subset V(C_{n,K})$, for $1\le i\le n-K$, satisfying the following conditions:
\begin{enumerate}
\item Endpoints in the support: $v_s, v_{s+i-1} \in \supp(D)$;
\item No $K$ consecutive gaps: Within $(v_s, \cdots, v_{s+i-1})$, there do not exist $K$ consecutive vertices that lie outside $\supp(D)$;
\item Maximality at the boundary: If $i \le n- 2K$, then the $K$ predecessors and $K$ successors are outside $\supp(D)$, that is, $(v_{s-K}, \cdots, v_{s-1}), (v_{s+i}, \cdots, v_{s+i-1+K})\subset [n] \setminus \supp(D)$. %\nbr{$v_{s+i}$ should be $v_{s+K}?$}
Otherwise, $[n]\setminus (v_s, \cdots, v_{s+i-1})\subset [n] \setminus \supp(D)$.
\end{enumerate}
Let $I_{s, i, j}$ be the indicator variable that equals $1$ if the length-$i$ sequence starting at $v_s$, $(v_s, \cdots, v_{s+i-1})$,  is a maximal component of $\supp(D)$ with $j$ vertices outside $\supp(D)$, and $0$ otherwise. 
Similarly, let $I_{n,j}$ be the indicator that equals $1$ if the entire vertex set $[n]$ forms a single connected component of $\supp(D)$, with $j$ vertices outside $\supp(D)$, and it does not contain $K$ consecutive vertices outside $\supp(D)$. Then the total number of odd components is given by
\$
\odd(C_{n,K}[\supp(D)]) =  \sum_{s=1}^{n} \sum_{i=1}^{n-K} \sum_{\substack{j=0\\j\colon i-j \text{ is odd}}}^{i} I_{s, i,j} + \sum_{\substack{j=0\\j\colon n-j \text{ is odd}}}^n I_{n,j}.
\$
%\nb{Why you need to give a formula for the number of connected components??? Shoundn't you give a formula for the number of odd components???}
For any component not equal to the entire circle (hence of length $i\le n-K$), there is a unique starting vertex $s$, a unique length $i$, and a unique $j$ (the number of vertices outside the support) such that exactly one indicator $I_{s,i,j}=1$, %$(v_s, \cdots, v_{s+i-1})$ is a maximal component; exactly one indicator $I_{s,i,j}=1$ corresponds to this component, 
ensuring no double counting. The only remaining case is when the entire vertex set $[n]$ forms a single component with no $K$ consecutive vertices outside $\supp(D)$ anywhere; this is captured by $\sum_j I_{n,j}$.

To compute $\E[I_{s,i, j}]$, we distinguish cases.
If $i=1$, the event $I_{s,1,0} = 1$ occurs when $v_s$ is an isolated node: $v_s\in \supp(D)$ and $(v_{s-K},\cdots, v_{s-1}), (v_{s+1}, \cdots, v_{s+K}) \subset [n]\setminus \supp(D)$. Thus, $\E[I_{s,1,0}] = pq^{2K}$. Observe that $I_{s,1, 1} = 0$. %\nbr{This computes the expected isolated nodes, which is a lot smaller than the expected number of (odd) components. Maybe somewhere it is worthwhile to discuss the intuition behind the bound $npq^K$ and contrast this with the expected number of isolated vertices. I think this is roughly the expected number of fully deleted $K$-segment... } 
When $2\le i\le n-K$, for $I_{s,i,j} = 1$, both endpoints must belong to $\supp(D)$. Among the remaining $i-2$ internal vertices, there are $j$ that lie outside $\supp(D)$. %, with $0\le j \le i-2$. 
We encode these internal vertices as a binary sequence of length $i-2$, where a one denotes a vertex outside the support. %The condition that no $K$ consecutive vertices lie outside $\supp(D)$ is equivalent to requiring that the binary sequence has no $K$ consecutive ones. 
Thus, by definition, $\alpha(i-2,j,K)$ counts the number of admissible realizations of $\supp(D)$ with $j$ outside the support and no $K$ consecutive outside the support. For $2\le i\le n-2K$, the probability for each such realization is $p^{i-j} q^{2K+j}$, since there are $i-j$ vertices in $\supp(D)$, and $2K+j$ vertices outside $\supp(D)$ (the $j$ internal ones plus the $K$ predecessors and $K$ successors required by the maximality). Summing over all  $i$ and $j$, we obtain
\$
\sum_{i=2}^{n-2K} \sum_{\substack{j=0\\j\colon i-j \text{ is odd}}}^{i} \E[I_{s, i,j}] = \sum_{i=2}^{n-2K} \sum_{\substack{j=0\\j\colon i-j \text{ is odd}}}^{i-2} 
\alpha(i-2,j,K)p^{i-j}q^{2K+j}.
\$
For $n-2K < i\le n-K$, the probability for each such realization is $p^{i-j} q^{n-i+j}$, since there are $n-i+j$ vertices outside $\supp(D)$ (the $j$ internal vertices plus the $n-i$ vertices not in the sequence required by the maximality). 
Thus, summing over $i$ and $j$, we obtain
\$
\sum_{i=n-2K+1}^{n-K} \sum_{\substack{j=0\\j\colon i-j \text{ is odd}}}^{i} \E[I_{s, i, j}] = \sum_{i=n-2K+1}^{n-K} \sum_{\substack{j=0\\j\colon i-j \text{ is odd}}}^{i-2} 
\alpha(i-2,j,K)p^{i-j}q^{n-i+j}.
\$
Finally, we consider $I_{n,j}$. For each $j$, $\alpha(n,j,K)$ gives the number of admissible realizations of $D$ in which $j$ vertices lie outside the support and no $K$ consecutive vertices lie outside, with each such realization having probability $p^{n-j}q^j$. Thus, we have
\$
\E[I_{n,j}] = %\sum_{j=0}^{n} 
\alpha(n,j,K)p^{n-j}q^j.
\$
By combining the results above and reindexing, we have
\#\label{eq:Xodd}
& \E[\odd(C_{n,K}[\supp(D)])] = n \bigg[ pq^{2K} + \sum_{i=0}^{n-2K-2} \sum_{ \substack{j=0 \\ j\colon {i-j} \text{ is odd}}}^{i} \alpha(i,j,K)p^{i-j+2}q^{2K+j}  \\
& \qquad \qquad \qquad \qquad + \sum_{i=n-2K-1}^{n-K-2} \sum_{\substack{j=0 \\ j\colon {i-j} \text{ is odd}}}^{i}\alpha(i,j,K)p^{i-j+2}q^{n-i + j -2} \bigg] + \sum_{\substack{j=0 \\ j\colon {n-j} \text{ is odd}}}^{n} \alpha(n,j,K)p^{n-j}q^j. \notag   
\#

\paragraph{Lower bound.}
Since all terms are non-negative, \eqref{eq:Xodd} can be lower bounded by
\#\label{eq:exp-odd-lb}
\E[\odd(C_{n,K}[\supp(D)])] & \ge n \sum_{i=0}^{n-2K-2} \sum_{ \substack{j=0 \\ j\colon {i-j} \text{ is odd}}}^{i} \alpha(i,j,K)p^{i-j+2}q^{2K+j} \notag\\
& = np^2q^{2K} (F_{K,n-2K-2}(p,q/p) - F_{K,n-2K-2}(-p,-q/p))/2,
\#
recalling the definition of $F_{k,n}(x,y)$ from \eqref{eq:def-gen-f-finite-remaining}. Here the generating function provides a convenient way to compute the sum over odd terms. Substituting the closed forms of $F_{K}(p,q/p)$ and $F_{K}(-p,-q/p)$ from Lemma \ref{lem:no-k-cons}, we have
\$
& (F_{K,n-2K-2}(p,q/p) - F_{K,n-2K-2}(-p,-q/p))/2 \\ 
& \qquad \ge (F_{K}(p,q/p) - F_{K}(-p,-q/p) - 2 R_{K,n-2K-2}(p,q/p))/2\\
& \qquad = \frac{(1-q^K)^2}{pq^{K}(2-q^K)} - R_{K,n-2K-2}(p,q/p).
\$
It remains to bound the remainder term $R_{K,n-2K-2}(p,q/p)$. Lemma \ref{lem:no-k-cons} yields that
\#\label{eq:bound-R}
R_{K,n-2K-2}(p,q/p) %& \le  \frac{(1-q^K)}{pq^{K}(1+pq^K)^{n-3K-2}} +  2n^{K}\left( \max\left\{ p,  {1}/({1+pq^K}) \right\}\right)^{n-3K-3} \\
& \le \frac{1-q^K}{pq^K(1+pq^K)^{n-3K-2}} 
\le \frac{1}{pq^K\exp((n-3K-2)pq^K/2)}, %\frac{1}{(1+c/n)^{n-3K-2}} %\notag \\& \le e^{-c/2}/pq^K,
\#
%\nbr{I think we do not need any assumption. Just use the fact that $1+x \ge \exp(x/2)$ for $x \in [0,1]$. We get $$(1+pq^k)^{n-3K-2} \ge \exp(- (n-3K-2)pq^k/2).$$}
where the second inequality holds since $1+x \ge \exp(x/2)$ for any $x \in [0,1]$. %from $pq^K \ge c/n$, and the last inequality holds since $(1+x)^{n} \ge e^{n(x-x^2/2)}$  for any $x\in(0,1)$ and sufficiently large $n$. %\nbr{There are only three inequalitites...}. 
Substituting \eqref{eq:bound-R} into \eqref{eq:exp-odd-lb}, we have
\$ 
\E[\odd(C_{n,K}[\supp(D)])] \ge npq^{K}\bigg(\frac{(1-q^K)^2}{2-q^K} - \exp(-(n-3K-2)pq^K/2)\bigg).
\$

\paragraph{Upper Bound.} For the upper bound of \eqref{eq:Xodd}, we first note that, 
\$ 
& \sum_{i=0}^{n-2K-2} \sum_{ \substack{j=0 \\ j\colon {i-j} \text{ is odd}}}^{i} \alpha(i,j,K)p^{i-j+2}q^{2K+j} \le \sum_{i=0}^{\infty} \sum_{ \substack{j=0 \\ j\colon {i-j} \text{ is odd}}}^{i} \alpha(i,j,K)p^{i-j+2}q^{2K+j} \\
&\qquad \qquad= p^2q^{2K} (F_{K}(p,q/p) - F_{K}(-p,-q/p))/2 = \frac{pq^{K}(1-q^K)^2}{2-q^K}.
\$
In addition, the remaining terms satisfy
\$
\sum_{i=n-2K-1}^{n-K-2} \sum_{\substack{j=0 \\ j\colon {i-j} \text{ is odd}}}^{i}\alpha(i,j,K)p^{i-j+2}q^{n-i + j -2} &\le   \sum_{i=n-2K-1}^{n-K-2} p^2q^{n-i-2}\sum_{\substack{j=0}}^{i}\binom{i}{j}p^{i-j}q^{j} \\
& \le \sum_{i=n-2K-1}^{n-K-2} p^2q^{n-i-2}\le pq^K,  \$
and
\$
\sum_{\substack{j=0 \\ j\colon {n-j} \text{ is odd}}}^{n} \alpha(n,j,K)p^{n-j}q^j &\le \sum_{j=0}^{n} \binom{n}{j} p^{n-j}q^j = (p+q)^n = 1.
\$ 
%where the first line follows from \eqref{eq:bound-R} for sufficiently large $n$. \nbr{where did you use this???}
Putting all these together, we have
\$
\E[\odd(C_{n,K}[\supp(D)])] & \le n p q^K \bigg(q^{K} + \frac{(1-q^K)^2}{2-q^K} + 1  \bigg) + 1. \qedhere
\$
\end{proof}

\end{appendix}

\end{document}